\documentclass[11pt]{article}
\usepackage[margin=1in]{geometry}
\usepackage{amsmath,amssymb,amsthm,mathtools}
\usepackage{dsfont}
\usepackage[vlined,linesnumbered,ruled]{algorithm2e}
\SetKw{KwReturn}{return}
\usepackage{booktabs}
\usepackage{tikz}
\usepackage{array}

\usepackage{soul}
\usepackage{xcolor}
\usepackage[most]{tcolorbox}
\definecolor{chatlightgreen}{rgb}{0.86,1.00,0.86}
\definecolor{chatpurple}{rgb}{0.45,0.00,0.65}

\newtheorem{theorem}{Theorem}[section]
\newtheorem{lemma}[theorem]{Lemma}
\newtheorem{corollary}[theorem]{Corollary}
\newtheorem{proposition}[theorem]{Proposition}
\newtheorem{fact}[theorem]{Fact}
\newtheorem{observation}[theorem]{Observation}

\theoremstyle{definition}
\newtheorem{definition}[theorem]{Definition}
\theoremstyle{remark}
\newtheorem{remark}[theorem]{Remark}

\def\inline#1:{\par\vskip 7pt\noindent{\bf #1:}\hskip 10pt}
\def\midinline#1:{\par\noindent{\bf #1:}\hskip 10pt}
\def\dnsinline#1:{\par\vskip -7pt\noindent{\bf #1:}\hskip 10pt}
\def\ddnsinline#1:{\newline{\bf #1:}\hskip 10pt}
\def\largeinline#1:{\par\vskip 7pt\noindent{\large\bf #1:}\hskip 10pt}

\def\dnsparagraph{\vspace{-5pt}\paragraph}

\def\BalSeg{\mbox{\sf Bal\_Segments}}
\def\BalBlocks{\mbox{\sf Bal\_Blocks}}
\def\SymColor{\mbox{\sf Sym\_Color}}
\def\SplitSym{\mbox{\sf Split\_Sym}}
\def\LogCompact{\mbox{\sf LogCap\_Compact}}
\def\ShiftIn{\mbox{\sf Shift\_In}}
\def\Ruling{\mbox{\sf Ruling}}

\def\cA{\mathcal{A}}
\def\cC{\mathcal{C}}
\def\cE{\mathcal{E}}
\def\cG{\mathcal{G}}
\def\cI{\mathcal{I}}
\def\cL{\mathcal{L}}
\def\cP{\mathcal{P}}
\def\cS{\mathcal{S}}
\def\cT{\mathcal{T}}
\def\dBCC{d\mathcal{BCC}}
\def\const{C}

\def\CONGEST{\mbox{\tt CONGEST}}
\def\LOCAL{\mbox{\tt LOCAL}}

\def\clr{\phi}
\def\numclr{\chi}
\def\freq{{\tt f}}
\def\imb{{\tt g}}
\newcommand{\E}{\mathbb{E}}
\renewcommand{\Pr}{\mathbb{P}}
\newcommand{\dist}{\mathrm{dist}}
\def\Zthree{\mathbb{Z}_3}
\def\Rand{R}
\def\ID{\mbox{\tt ID}}
\def\cliqsize{\kappa}
\def\ksplit{\kappa_{\mathrm{split}}}
\def\Vmid{V_{\mathrm{mid}}}
\def\lmax{\ell_{\max}}
\def\mumax{\mu_{\max}}

\title{Distributed Near-Equitable Coloring \\
in the $\LOCAL$ Model
}

\author{Amit Nir\thanks{Weizmann Institute of Science. E-mail: {amit.nir,david.peleg@weizmann.ac.il}}
\and
David Peleg$^*$}
\date{\today}

\begin{document}
\maketitle

\begin{abstract}
For an $n$-vertex graph of maximum degree $\Delta$, an \emph{equitable} $(\Delta+1)$-coloring is a proper coloring all of whose color classes have size $\sigma = n/(\Delta+1)$ up to rounding. Known distributed algorithms for relaxations of this target rely on aggregation along a spanning structure, at a cost that scales with the diameter $D$. We study \emph{near-equitable} coloring in the $\LOCAL$ model and prove that, for deterministic algorithms, exact balance is inherently \emph{global}: already on the cycle $\cC_n$, exact balanced free $3$-coloring requires $\Omega(n)$ rounds. More generally, additive imbalance $\imb$ requires $\Omega(n/\imb)$ rounds, matching a deterministic $O((n/\imb)\log^* n)$-round algorithm up to the $\log^* n$ factor, and the
resulting balance problems realize a dense family of intermediate deterministic
$\LOCAL$ complexities $\tilde\Theta(n^\alpha)$, for every
$\alpha \in (0,1)$, on cycles. The exact lower bound is sharp in two further
senses: the identifier-universe threshold is exactly $N=n+1$, and the
$\log^* n$ factor separating the deterministic bounds cannot be removed by
computing the ruling-set anchors faster. 
The bound is then extended
to \emph{twisted fiber products}, an explicit family realizing, for every maximum
degree $\Delta \ge 3$, every diameter scale from the Moore-bound minimum
$\Theta(\log n/\log\Delta)$ to the connectivity-imposed maximum
$\Theta(n/\Delta)$. Exact equitable $(\Delta+1)$-coloring requires
$\Theta(D)$ rounds deterministically on this family, so exactness costs diameter time already on bounded-degree graphs of logarithmic diameter.

In contrast, coarse balance admits \emph{local} algorithms. On cycles, 
constant multiplicative equity
costs $\Theta(\log^* n)$ rounds. On general graphs with large color classes,
all class sizes can be kept within $(1\pm\eta)$ times the average class size in time independent of the diameter. This can be achieved with palette exactly $\Delta+1$ for moderate
degrees ($\Delta \le 2^{\sqrt{\log n}/C}$), and with palette
$(1+\eta)(\Delta+1)$ for every $\Delta \le n^{1-o(1)}$, the latter in
$O(\log n\,(\log\log n)^2)$ rounds.
\end{abstract}


\section{Introduction}
\label{sec:intro}

\subsection{Background}

Distributed graph coloring is a canonical symmetry-breaking task of distributed computing: a proper coloring partitions the network into independent sets that may act concurrently, and the number of colors governs the length of the induced schedule. For scheduling and load-balancing applications one often needs more: the color classes should have (nearly) equal sizes, so that no time slot is overloaded and none is wasted. For a graph $G$ of maximum degree $\Delta$, a proper coloring with palette $\Delta+1$ whose \emph{color frequencies} (or class sizes) all equal $\sigma = n/(\Delta+1)$ up to rounding is \emph{equitable}. The Hajnal--Szemer\'edi theorem~\cite{HS70} guarantees its existence for every graph, and it can be computed using the $O(\Delta n^2)$-time sequential construction by Kierstead, Kostochka, Mydlarz and Szemer\'edi~\cite{KKMS10}.

A recent line of work~\cite{NP25} developed \emph{near-equitable} coloring algorithms, relaxing the palette to $(1+\rho)(\Delta+1)$ colors and/or the class sizes to a range $[\freq_{min}, \freq_{max}]$ around $\sigma$, in the sequential, $\CONGEST$, and Congested Clique models. All of these algorithms are driven by global coordination primitives (\emph{color accounting} and \emph{quota assignment}), which aggregate per-color statistics across the entire graph; in $\CONGEST$, each such aggregation costs $\Theta(D + \Delta)$ rounds, where $D$ is the diameter, and this term appears in every $\CONGEST$ bound of~\cite{NP25}. Table~\ref{tab:compare} summarizes some of the relevant prior bounds.

This paper studies the problem in the $\LOCAL$ model~\cite{Linial92,Peleg00}, where messages are unbounded and the only resource is locality. Proper $(\Delta+1)$-coloring is well known to be local, requiring only $\Theta(\log^* n)$ rounds on bounded-degree graphs~\cite{CV86,Linial92,Naor91} and $\mathrm{poly}(\log\log n)$ randomized rounds in general~\cite{CLP18,RG20}. The color frequency (or class size) constraint, by contrast, couples all $n$ output decisions through a single cardinality requirement, hence the required coloring is no longer a \emph{locally checkable labeling (LCL)}~\cite{NS95}. The natural question, raised in~\cite{NP25}, 
is whether this global constraint forces diameter-type lower bounds even on rings and trees, or whether locality, randomness and slack can substitute for global accounting. Our results show that both answers are correct, in two cleanly separated parameter regimes.

Hereafter we use the terms \emph{local} and \emph{global} in this operational sense. A guarantee is local when its round complexity is bounded independently of the diameter (or, specifically on cycles, is $o(n)$ and typically polylogarithmic or $\log^* n$). It is global when every deterministic implementation needs $\Omega(D)$ rounds (on cycles -- $\Omega(n)$ rounds). Thus the terminology refers to the amount of network-scale coordination forced by the balance requirement, not to local checkability of the output.

\subsection{Related work}
\label{sec:related}

Equitable coloring originates in~\cite{HS70}; see~\cite{KKMS10} for the algorithmic state of the art. The distributed near-equitable suite of~\cite{NP25} is the closest prior work. 
Table~\ref{tab:compare} restates 
some of its results 
for comparison; 
none of our current results depends on~\cite{NP25}. We are unaware of other treatments of globally frequency-constrained coloring in the $\LOCAL$ model. Distributed proper coloring is a vast subject~\cite{CV86,Linial92,Naor91,BEPS16,HSS18,CLP18,RG20}; we use the modern toolbox (ruling sets, shattering, network decomposition, $(\mathrm{deg}{+}1)$-list coloring) as black boxes where possible. The LCL classification program~\cite{NS95,BHKLOS18,CKP19,BBOS18} provides the complexity-gap backdrop for Corollary~\ref{cor:continuum}; the intermediate complexities constructed in~\cite{BBOS18} are for locally \emph{checkable} problems on trees, whereas our candidate continuum concerns a natural non-checkable constraint on cycles. Distributed degree splitting~\cite{GS17,GHKMSU17} studies per-vertex (locally checkable up to slack) balance constraints, a fundamentally different regime from a single global cardinality constraint. Lower bounds for global aggregation-type tasks in bandwidth-limited models~\cite{SHKKNPPW12} rely on communication bottlenecks that do not exist in $\LOCAL$, which is precisely why a different, locality-based obstruction is needed here.

\subsection{Our results}
\label{sec:results}

A coloring $\clr$ with palette size $\numclr$ is \emph{$\imb$-additively balanced} if every color frequency (class size) is in $[n/\numclr - \imb,\, n/\numclr + \imb]$. A coloring is \emph{proper} if adjacent vertices receive distinct colors; unless otherwise specified, any coloring problem mentioned below requires properness. Throughout, $\sigma = n/(\Delta+1)$, and a proper coloring is \emph{$\eta$-multiplicatively equitable} if $\numclr = \Delta + 1$ and every class size is in $[(1-\eta)\sigma, (1+\eta)\sigma]$. Some lower bounds below apply even when this properness condition is dropped; we call such an output a \emph{free coloring}. Precise model definitions, including the treatment of identifiers and of the knowledge of $n$ and $\Delta$, appear in Section~\ref{sec:prelim}.

\clearpage
The results are organized around one message:
\begin{center}
\emph{Exact and near-exact balance are global, deterministically; \\
coarse multiplicative balance is local.}
\end{center}
Concretely, we prove four theorem-level results. On rings, Section~\ref{sec:det-lb} gives deterministic lower bounds for exact and near-exact balance (Theorem~\ref{thm:intro-ring-lb}), while Section~\ref{sec:ub} gives the cycle upper bounds, including the matching deterministic tradeoff up to a $\log^* n$ factor and the faster randomized algorithm (Theorem~\ref{thm:intro-cycle-ub}). A black-box reduction turns the exact endpoint into a deterministic $\Theta(D)$ bound for exact equitable $(\Delta+1)$-coloring on an explicit graph family realizing every maximum degree $\Delta \ge 3$ and every diameter scale permitted by the Moore bound (Theorem~\ref{thm:intro-diam}). 
(Note that in the $\LOCAL$ model, exact equitable coloring is solvable in $D$ rounds\footnote{Every vertex learns the entire graph, applies the algorithm of~\cite{KKMS10} locally, and outputs its color. All vertices compute the same coloring, so the output is consistent, proper, and exactly balanced.}.)
Finally, on general graphs with large color classes, $(1\pm\eta)$-equity is achievable in rounds independent of $D$ (Theorem~\ref{thm:intro-general}). All results are unconditional.

\subsubsection{Rings: 
lower and upper bounds}

The deterministic lower bound separates exact from coarsely approximate balance on rings by an exponential ($n$ vs.\ $\log^* n$) gap, and its stability version pins down the lower-bound curve between the two endpoints. 

Two key notions of the analysis are \emph{rigidity} and \emph{stability}.
Balancing algorithms employ counting on various graph segments (which in the context of rings are referred to as \emph{windows}). The first ingredient of the proof is a rigidity statement addressing \emph{exact} balances: 
if a deterministic local window rule has the same global count under every identifier assignment, then that count is forced to be an integer multiple of the cycle length. Stability concerns \emph{relaxed} balances. The stability version of the above statement allows the count to vary in an interval of radius $\imb$ and still forces this interval to meet such an integer multiple. These are global cardinality constraints on a local rule; they are unrelated to topological rigidity or probabilistic stability. 

We get the following results (proved by purely combinatorial arguments).

\begin{theorem}[Section~\ref{sec:det-lb}]

\label{thm:intro-ring-lb}
Let $N$ denote the size of the identifier universe.
\begin{itemize}
\item[(a)] 
Let $3 \mid n$, $1 \le T \le n/40 - 1$ and $N \ge 2n+2$. No deterministic $T$-round $\LOCAL$ algorithm on $\cC_n$ outputs, for \emph{every} assignment of distinct identifiers, a free $3$-coloring whose three color classes each have size exactly $n/3$. In particular, exact equitable $3$-coloring of rings has deterministic $\LOCAL$ complexity $\Theta(n) = \Theta(D)$.
\item[(b)] 
Let $\imb \ge 0$ be a real number with $\imb < n/3$, let $1 \le T \le \frac{n}{40(\imb+1)} - 1$ and $N \ge 3n+2$. No deterministic $T$-round $\LOCAL$ algorithm on $\cC_n$ outputs, for every assignment, a free $3$-coloring whose three color classes all have size $n/3 \pm \imb$.
\end{itemize}
\end{theorem}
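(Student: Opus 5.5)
A $T$-round deterministic algorithm on $\cC_n$ is a \emph{window rule}: vertex $v$'s color is $F(w_v)$, where $w_v$ is the sequence of $2T+1$ identifiers on the window centred at $v$ (oriented by some fixed convention; we may first fix an orientation using the identifier order inside the window, or simply note that on a cycle with arbitrary ID placement $F$ can be taken to depend on the ordered tuple). For a color $c$, let $N_c(\pi)=\sum_{v}\mathbf 1[F(w_v)=c]$ be the class size under assignment $\pi$. The strategy is the rigidity/stability principle described before the theorem: if $N_c(\pi)$ is constant (or confined to $[n/3-\imb,n/3+\imb]$) for all $\pi$, then this constant (interval) must meet an integer multiple of $n$ — that is, $0$ or $n$. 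Since $0<n/3<n$ and $\imb<n/3$, the interval $[n/3-\imb,n/3+\imb]$ avoids both, giving the contradiction.

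For part (a), I would prove rigidity through a gluing/surgery argument. Because $N\ge 2n+2$, there are enough spare identifiers to build a \emph{long} cycle on which the same windows appear. Concretely, take two assignments $\pi_1,\pi_2$ on $\cC_n$, and splice them at two places where they share a common $2T$-length identifier segment (which we can arrange by construction, using distinct fresh IDs for everything else). Cutting both cycles at the shared segment and reconnecting exchanges arcs. Every window that does not straddle the cut is preserved, and the windows that straddle it are identical on the two sides by the shared segment. This yields additivity relations of the form $N_c(\pi_1)+N_c(\pi_2)=N_c(\pi_3)+N_c(\pi_4)$, where $\pi_3,\pi_4$ are again assignments on $n$-cycles. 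The key move is to realize, via such swaps, a \emph{cyclic shift} relation. Consider a word $u$ of length $m\approx n/20$ of fresh IDs, and count the colors it contributes in its interior as a function $h(u)$. Comparing the cycle containing arcs $A u B$ with the one containing $A B'$, where arc lengths are traded so the total stays $n$, gives a linear equation stating that a local density equals $N_c/n$ per vertex. Summing the densities of $n$ vertices from disjoint blocks then gives $N_c\cdot k = $ (integer count on $k$ vertices) $\cdot\, n/k$-type identities. With the length budget $T\le n/40-1$ leaving room for about $20$ disjoint blocks, these force $N_c\in n\mathbb Z$. The cleanest implementation I would try is a ``two cycles versus one cycle of double length'' comparison. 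However, since only $n$-cycles are allowed, I would emulate this by exchanging arcs between two $n$-cycles that use disjoint ID sets — hence the need for $N\ge 2n+2$.

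Part (b) runs the same surgery, but each equation now holds only up to error $2\imb$ per application. I would chain only $O(1)$ swaps on blocks of length $\Theta(n/(\imb+1))$, which is why $T\le n/(40(\imb+1))-1$ is needed. The extra spare IDs ($N\ge 3n+2$) accommodate a third auxiliary cycle. The resulting conclusion is that the interval must contain a multiple of $n$, which again contradicts $\imb<n/3$.

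The main obstacle is designing the exact surgery so that every equation involves only valid $n$-cycles with distinct identifiers, while still producing a nontrivial divisibility conclusion. Naive arc swaps preserve totals trivially and yield no information. I would first try transplanting a single block between two cycles of different ``phase'', and track how the class count changes. The upper bound $\Theta(n)$ in (a) is immediate from the $D$-round footnote algorithm.
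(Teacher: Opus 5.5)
\textbf{Part (a): the mechanism that forces divisibility is missing.} Your reduction to a window rule and the closing arithmetic are fine. As you note yourself, swapping equal-length arcs between $n$-cycles only gives $K+K=K+K$. The ``cyclic shift relation'', the ``density $N_c/n$ per vertex'' and the ``$N_c\cdot k=\dots$'' identities are never derived. Since every legal instance has length exactly $n$, your plan has no object of any other length whose weight is controlled, so there is nothing from which an integer slope could be read off.

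The paper manufactures such objects:
\begin{itemize}
\item Replacing the centre of a fixed $4T$-identifier context by a fresh identifier cannot change the count, so $\Psi(\cI;x)=\Psi(\cI;y)$ (Lemma~\ref{lemma:exchange}).
\item Repeated interior replacements then show that the inner sum of an arrangement depends only on its first and last $2T$ identifiers (Lemma~\ref{lemma:boundary}). Your surgery could also give this, by embedding two arrangements with equal heads and tails into $n$-cycles sharing one complementary arc. But it is this boundary dependence, not arc exchange as such, that carries the argument.
\item Telescoping it step by step, with a fixed head $A$ disjoint from the loop, gives a well-defined weight $\Lambda_{\cT}(\mu)$ to every abstract loop of $2T$-tuples. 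Identifiers may repeat along a loop and $\mu\neq n$ is allowed. The weight depends only on $\mu$ and the basepoint, and is additive under concatenation.
\item Integrality is then automatic: $\lambda=\Lambda_{\cT}(M+1)-\Lambda_{\cT}(M)$ is a difference of integers, and additivity forces $\Lambda_{\cT}(\mu)=\lambda\mu$.
\item $\ShiftIn$ paths make $\lambda$ independent of the basepoint.
\item Splitting the $n$-window loop of a real assignment into two halves closed by $\ShiftIn$ paths gives $K=\lambda n$. The split is needed because linearity is only available for lengths well below $n$.
\end{itemize}
None of these steps appears in your plan.

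\textbf{Part (b): approximate surgery cannot work.} Running the same surgery ``with error $2\imb$ per application'' destroys exactly the integrality step. An integer-valued function such as $\mu\mapsto\lfloor\mu/3\rfloor$ is additive up to $1$, yet has slope $1/3$. So chaining approximate equations cannot force the density to an integer. Moreover, loop weights come from one comparison per step, so errors would accumulate to order $\imb\mu$ along a loop, not $O(\imb)$.

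The paper instead restores \emph{exact} equations on a sub-universe, via a dichotomy on flexible gadgets. A gadget is a context $\cI$ with two centres $x,y$ such that $\Psi(\cI;x)\neq\Psi(\cI;y)$.
\begin{itemize}
\item If $\lfloor 2\imb\rfloor+1$ pairwise disjoint gadgets exist, place them all in one assignment and flip every centre at once. The count moves by more than $2\imb$, contradicting confinement. This is where $T\le n/(40(\imb+1))-1$ is used: it makes the gadgets fit.
\item Otherwise a maximal family occupies at most $n/5$ identifiers. On the complement $U$, which still has size at least $2n+2$ (this is what $N\ge 3n+2$ pays for), the exchange identity holds exactly. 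The exact machinery then gives $f(\ID)=\lambda n$ for all assignments from $U$ (Lemma~\ref{lemma:restricted-rigidity}). Confinement yields $|n/3-\lambda n|\le\imb$, which is impossible for $\imb<n/3$.
\end{itemize}
Your stated uses of the two hypotheses (``$O(1)$ swaps on blocks of length $\Theta(n/(\imb+1))$'' and ``a third auxiliary cycle'') do not correspond to a working argument.
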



The cycle upper bounds are proved in Section~\ref{sec:ub}. They give the matching deterministic tradeoff up to a $\log^* n$ factor, and show how much randomization can help. Throughout, ``with high probability'' (w.h.p.) means with probability at least $1 - n^{-\gamma}$ for an arbitrarily large constant $\gamma$ fixed in advance\footnote{All randomized algorithms in this paper are parameterized by the failure exponent $\gamma$, and the constants in their bounds may depend on it.}.
We get the following result.

\begin{theorem}[Section~\ref{sec:ub}]
\label{thm:intro-cycle-ub}
\mbox{}\par\nobreak
\begin{itemize}
\item[(a)] 
Let $L = \max\{1, \log^* n\}$. On $\cC_n$, a proper $3$-coloring with imbalance $O\big(\sqrt{(n/T)\,L\log n}\big)$ is computable w.h.p.\ in $O(T)$ rounds, for any $T \ge CL$; equivalently, for \emph{every} $\imb \ge 1$, imbalance $\imb$ is achievable w.h.p.\ in $O\big(\min\{n,\ (n/\imb^2)\,L\log n + L\}\big)$ rounds. 
\item[(b)] 
For $3 \le \imb < n/3$, there is a deterministic $O\big((n/\imb)\log^* n\big)$-round $\LOCAL$ algorithm for $\imb$-additively balanced proper $3$-coloring of $\cC_n$. Hence, by Theorem~\ref{thm:intro-ring-lb}(b), the deterministic $\LOCAL$ complexity of the problem is $\Theta(n/\imb)$, up to a $\log^* n$ factor, throughout this range.
\item[(c)] 
For every constant $\eta > 0$, an $\eta$-multiplicatively equitable $3$-coloring of $\cC_n$ is computable deterministically in $O(\log^* n)$ rounds. This is tight: $\Omega(\log^* n)$ rounds are required already for plain proper $3$-coloring, deterministically~\cite{Linial92} and randomized~\cite{Naor91}.
\end{itemize}
\end{theorem}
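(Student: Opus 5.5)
The three parts need different tools: part (c) uses a local fixed-pattern rewriting after Cole--Vishkin; part (b) uses ruling-set anchors with per-segment exact balancing; part (a) uses a random-shift version of (b) whose imbalance concentrates.

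\emph{Part (c).} Compute a ruling set of $\cC_n$ with spacing between $k$ and $2k$ for a constant $k=k(\eta)$, in $O(\log^* n)$ rounds via Cole--Vishkin~\cite{CV86}. This partitions the ring into oriented segments of length $\ell\in[k,2k]$ between consecutive anchors; the orientation is induced by identifiers along each segment. Inside each segment, color with the pattern $1,2,3,1,2,3,\dots$ and then repair the last one or two vertices so the coloring stays proper at the anchor junction. A segment of length $\ell$ then deviates from $\ell/3$ in each class by at most $2$. Summing over the $\le n/k$ segments gives total deviation $\le 2n/k$ per class, which is at most $\eta n/3$ once $k\ge 6/\eta$. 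Tightness is cited.

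\emph{Part (b).} Use the same scheme with spacing $k=\Theta(n/\imb)$, built by iterating ruling-set computations in $O(k\log^* n)=O((n/\imb)\log^* n)$ rounds. Each vertex gathers its whole segment within $O(k)$ rounds. The number of segments is $m\le n/k=O(\imb)$, and each contributes $O(1)$ deviation, so the total deviation is $O(\imb)$. Choosing the constant in $k$ makes this $\le \imb$.

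To get the constant down I would cancel deviations rather than merely bound them. For instance, rotate the pattern offset by segment index parity, or choose each segment's color offset by a local greedy rule based on its neighbors' residues. Along a path of segments, this makes the running surplus alternate and stay bounded. I expect the boundary effects near the junctions of the rotation to be the main obstacle, with the risk of a loss of a factor of $2$ or $3$ that the $\imb\ge3$ hypothesis should absorb. Combined with Theorem~\ref{thm:intro-ring-lb}(b), this gives the $\Theta(n/\imb)$ claim up to $\log^* n$.

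\emph{Part (a).} Take segments of length $\Theta(T)$ from a ruling set, computed in $O(L)$ rounds plus $O(T)$. Each segment picks its cyclic offset in $\{0,1,2\}$ uniformly at random and colors its interior $1,2,3$ shifted by that offset, with local repair at the junctions. The per-segment deviation of each class is then a bounded random variable with mean $O(1)$ bias. The bias comes from repairs that depend only on $\ell \bmod 3$ and junctions, and I would symmetrize it by the random offset so that the mean becomes exactly zero. Segment choices are independent, so Hoeffding over $m=O(n/T)$ segments gives deviation $O(\sqrt{(n/T)\log n})$ w.h.p. The extra $\sqrt{L}$ factor accommodates ruling-set spacing irregularity of ratio up to $L$, as per-segment bounded range becomes $O(L)$-weighted.

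Inverting the bound gives $T\approx (n/\imb^2)L\log n$. The case $T\ge n$ is handled by the trivial $D$-round exact algorithm, which yields the minimum with $n$. The main obstacle in (a) is ensuring zero-mean per-segment contributions despite deterministic junction repairs. I would first try making the repair vertex's color itself uniformly random among admissible colors, using the $\Zthree$ shift symmetry.
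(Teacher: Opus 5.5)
Parts (b) and (c) follow the paper's route (\BalBlocks). You compute anchors with spacing $w$ by Fact~\ref{fact:ruling}, color each block within $O(1)$ of balanced, and get total per-class deviation $O(n/w)$. Your cancellation paragraph in (b) is unnecessary. The per-class deviation is at most $2n/(w+1)$, so spacing $w+1\ge 6n/\imb$ already gives deviation at most $\imb/3$. That is exactly the paper's choice, with the trivial $n$-round fallback when this spacing would exceed $n$. What you do need to specify is the junction rule. Two adjacent segments oriented by identifiers can both end at the same anchor and ``repair'' simultaneously into a conflict. The paper avoids this by $3$-coloring the cycle of blocks and coloring the blocks in three phases via Lemma~\ref{lemma:balanced-path}. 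Keeping anchors as separate vertices, colored last so as to avoid both neighbors, would also work.

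The genuine error is in (a), in the cost of the anchors and hence the source of the $\sqrt{L}$ factor.
\begin{itemize}
\item You take segments of length $\Theta(T)$ and charge $O(T)+O(L)$ rounds for them. A ruling set with spacing $w$ costs $O(w\log^* n)$ rounds, as you yourself charge in (b). Proposition~\ref{prop:ruling-lb} shows the $\log^* n$ factor cannot be removed for deterministic anchors, and random marks instead lose a $\log n$ factor in gap control.
\item Your explanation of $\sqrt{L}$ via ``spacing irregularity'' is therefore wrong too. The gaps of Fact~\ref{fact:ruling} lie within a factor $2$ of one another. A periodically colored segment deviates by at most $2/3$ per class regardless of its length (Lemma~\ref{lemma:segment-error}), so irregular spacing costs nothing. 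If $\Theta(T)$-length segments were affordable, your own computation would give $O(\sqrt{(n/T)\log n})$ with no $L$ at all.
\item The $L$ actually comes from the anchor budget. In $O(T)$ rounds one can afford only $w=\Theta(T/L)$; the paper sets $w=2+\lceil T/(3C_1L)\rceil$. This gives $M\le n/(w+1)=O(nL/T)$ segments and imbalance $O(\sqrt{M\log n})$.
\end{itemize}

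The concentration step also needs two further fixes.
\begin{itemize}
\item Each junction term reads two adjacent phases, so the summands are $1$-dependent, not independent, and ``Hoeffding over independent segment choices'' does not apply as stated. The paper treats the segment errors as one independent family and splits the cut errors into three independent families via a proper $3$-coloring of the cycle of cuts. McDiarmid with $O(1)$ bounded differences per phase would also do.
\item Your proposed zero-mean fix is exactly the paper's rule: the junction vertex picks uniformly among the colors avoiding both neighbors. It yields an exactly uniform color (Lemma~\ref{lemma:junction}) only when the junction is a separate cut vertex whose two neighbors lie in two distinct, independently phased segments. The paper's cut structure guarantees this, with cuts pairwise at distance at least $3$ and at least two segments. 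A repair performed inside a segment next to a simultaneously repairing neighbor does not have this property.
\end{itemize}
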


As a corollary of the tight tradeoff, for $\imb = n^\beta$ with \emph{any} fixed $\beta \in (0, 1)$ the balance problems realize a dense family of deterministic $\LOCAL$ complexities $\tilde\Theta(n^{1-\beta})$ on cycles (Corollary~\ref{cor:continuum}). This covers the full exponent range $(0,1)$, not only $(1/2, 1)$; in contrast, LCLs on cycles admit only the three classes $O(1)$, $\Theta(\log^* n)$ and $\Theta(n)$. Moreover, at every such $\beta$ there is a polynomial deterministic-vs-randomized separation, of factor $n^{\min\{\beta, 1-\beta\}}/\mathrm{poly}\log n$.

\subsubsection{Exactness costs diameter time, deterministically, at every graph scale}

On the cycle, $D = \Theta(n)$, so Theorem~\ref{thm:intro-ring-lb} alone cannot distinguish a \emph{diameter} barrier from a \emph{size} barrier, and it concerns only $\Delta = 2$. Our next result removes both limitations by a black-box reduction; for exact target frequencies and deterministic algorithms, it answers the question of whether the $\Theta(D+\Delta)$-per-aggregation cost of the global $\CONGEST$ primitives of~\cite{NP25} reflects a genuine barrier. We get the following result.

\begin{theorem}
[Sections~\ref{sec:mcc} and~\ref{sec:twisted}]

\label{thm:intro-diam}
For every fixed $\Delta \ge 3$ there is an explicit family $\dBCC$ of maximum-degree-$\Delta$ graphs (the \emph{de Bruijn clique cycles} of Section~\ref{sec:twisted}) realizing, up to constant factors and divisibility adjustments, every pair $(n, D)$ with
\[ C\,\frac{\log n}{\log\Delta} \;\le\; D \;\le\; \frac{n}{C\Delta}, \]
namely, every diameter scale from the Moore-bound minimum to the connectivity-imposed maximum, on every member of which exact equitable $(\Delta+1)$-coloring has deterministic $\LOCAL$ complexity $\Theta(D)$. The lower bound holds for free colorings, and its stability version gives $\Omega(D/(\imb+1))$ rounds for additive imbalance $\imb$ up to $\Theta(D)$. In particular, exactness costs $\Theta(D)$ time already on bounded-degree graphs of logarithmic diameter, where 
color frequencies are $\sigma = \Theta(n)$. 
\end{theorem}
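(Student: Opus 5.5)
The plan is to reduce exact equitable $(\Delta+1)$-coloring on a suitable graph family to exact balanced free $3$-coloring of a cycle, as in Theorem~\ref{thm:intro-ring-lb}, and to use the cycle length as the diameter scale.

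The construction is a ``clique cycle'' $\mathcal{MCC}$. Take a cycle $\cC_m$ and replace each vertex by a clique $K_{\Delta+1}$ with a few edges removed. Consecutive cliques are joined by a perfect matching on the freed ports, so the maximum degree stays $\Delta$. In any proper coloring with palette $\Delta+1$, each clique (or near-clique, after the removed edges are accounted for) must use every color essentially once. So the coloring of each block is a permutation of $[\Delta+1]$, and all the freedom lies in which color sits on a small number of distinguished ``port'' vertices of each block.

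I would then design the gadget so that some function of each block's permutation, for instance the colour at a fixed port vertex reduced to $\Zthree$, behaves like a free $3$-colouring of the underlying cycle. Exact equitability of the whole graph should then force the $\Zthree$ classes to have exactly equal counts. More precisely, I would arrange that the deviation of the class counts from $\sigma$ is a fixed linear image of the imbalance of the derived cycle labeling.

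The simulation step goes as follows. Suppose $A$ is a $T$-round algorithm on the block graph. Then cycle nodes simulate $A$ with $O(T)$ rounds on $\cC_m$, assigning identifiers to block vertices injectively from the cycle identifiers; this is why the universe bound $N \ge 2n+2$ appears scaled. Feeding the result into Theorem~\ref{thm:intro-ring-lb}(a) gives $T = \Omega(m)$, and part (b) gives $\Omega(m/(\imb+1))$ for additive imbalance $\imb$.

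The $\dBCC$ family handles the full diameter range. To shrink $D$ to $\Theta(\log n/\log\Delta)$ while keeping an $\Omega(D)$ lower bound, I would take a ``twisted fiber product'': each cycle position carries a fiber that is a de Bruijn-like graph of degree $\Theta(\Delta)$ and size $n/m$. The fibers are joined along the cycle with a twist, so that the diameter is $\Theta(m + \log(n/m)/\log\Delta)$.

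The key point is that a $T$-round view with $T \ll m$ still sees only an arc of the cycle direction. Exploiting this needs a per-fiber invariant, such as a color count mod 3 forced by the clique structure, that makes the fiber-to-cycle reduction go through. This is a symmetrization or averaging over fibers: the cycle node simulates its entire fiber. The fiber has diameter $O(\log n/\log\Delta)$, so the simulation costs only an additive overhead.

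The upper bound $O(D)$ is the trivial gather-and-compute argument from the footnote (using~\cite{KKMS10}). Parameter counting, meaning the choice of $m$ and fiber size for a given $(n,D)$ together with the divisibility adjustments, then gives the stated range.

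I expect the main obstacle to be designing the gadget and the twisted gluing so that two things hold at once. First, exact equitability must translate exactly, without slack, into the balanced $3$-labeling condition; this is a linear-algebra or counting check on the block permutations. Second, the fiber must not leak global information faster than the cycle permits. My first attempt would use $\Zthree$-valued twists on the matching edges and verify rigidity directly by reproving the window-count argument for the fibered graph, rather than insisting on a clean black-box reduction.
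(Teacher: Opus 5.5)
There is a genuine gap, located at the step you flag yourself: the gadget that should turn exact equitability into an exactly balanced $\Zthree$-labeling of the ring. With blocks of $\Delta+1$ vertices this cannot work, because exact equitability then becomes a \emph{local} requirement.

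Suppose, as you assert, that every proper $(\Delta+1)$-coloring colors each block by a permutation of the palette. Then every proper coloring is already exactly equitable, so exactness constrains nothing, in particular not the port colors.

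Suppose instead that the removed edges allow repeated colors. One can still \emph{choose} to color every block by a permutation. Consecutive blocks are joined by matchings, so once the neighboring blocks are permutation-colored, each color is forbidden at no more than two vertices of the current block. Each vertex also keeps at least $\Delta-1\ge 2$ allowed colors, so Hall's condition holds for $\Delta+1\ge 4$. Processing the blocks in the three phases of a proper $3$-coloring of the block cycle then yields, in $O(\log^* n)$ rounds for fixed $\Delta$, a proper coloring with every class of size exactly $m=\sigma$. So on your family the $\Omega(D)$ bound is false, and no $\Zthree$ readout of the port colors can repair this. The root cause is that $\sigma=m$ is a multiple of the ring length, which is exactly the kind of count a local rule can hit surely.

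What is missing is that no gadget is needed, because rigidity is far more general than its $n/3$ instance. By Theorem~\ref{thm:det-exact} in its integer-valued form (Corollary~\ref{cor:integer-rules}), every surely constant window count on $\cC_m$ lies in $m\mathbb{Z}$. By Theorem~\ref{thm:det-stab} with Remark~\ref{rem:det-stab}(a), every count confined to radius $\imb$ lies within $\imb$ of $m\mathbb{Z}$.

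The paper's reduction runs as follows. It fixes an identifier-independent port numbering; identifier-sorted ports would leak identifiers from outside the window through the boundary of the ball. It gives all of layer $i$ the identifier block $\psi(\ID(i))$. For $T<m/2$, the $T$-ball of $(i,a)$ lies in layers $i-T,\dots,i+T$, and the layer shift is an automorphism of the ported graph. Hence the number of color-$1$ vertices in layer $i$ is an integer-valued window rule $F(W(i,\ID,T))$ of radius exactly $T$, so the fiber diameter costs nothing, contrary to your additive-overhead remark. Its sum over $i$ is the class size $\sigma=mp/q$, where $p$ is the fiber size and $q=\Delta+1$.

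The one design requirement is therefore $q\nmid p$, the opposite of your blocks. Then $\min_{\lambda\in\mathbb{Z}}|\sigma-\lambda m|=m\,\dist(p/q,\mathbb{Z})=m\delta$, and any $\imb<m\delta$ costs $\Omega(m/(\imb+1))$ rounds (Theorem~\ref{thm:twisted-lift}).

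In $\dBCC$ the fiber is $r$ disjoint $\cliqsize$-cliques with $\cliqsize=\Delta-1$ and $r\equiv 1\pmod{\mathrm{lcm}(\cliqsize,\cliqsize+2)}$. This gives $\delta=2/(\cliqsize+2)$ and the imbalance range $\imb<2m/(\cliqsize+2)=\Theta(D)$. The disconnected fiber is connected by the de Bruijn twist $\pi(s,a)=(\cliqsize s+a \bmod r,\,a)$, giving diameter $\Theta(m+\log_\cliqsize r)$, and choosing $m\ge\log_\cliqsize r$ makes $D=\Theta(m)$. A connected low-diameter fiber, as you suggest, would also fit this reduction, provided the maximum degree stays $\Delta$ and its size is not a multiple of $\Delta+1$.
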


The simplest member of the graph family $\dBCC$ ($r = 1$ component per layer) is the \emph{matched clique cycle} $B_{m,K}$, on which the reduction is presented first and where $\sigma = \Theta(D)$; a complete-tripartite variant (see Remark \ref{rem:fiber-compare}) covers $D = o(\sqrt n)$ with $\sigma = \Theta(D)$ as well, at degree $\Theta(n/D)$.

By contrast, on the same family, $(1\pm\eta)$-equity with the optimal palette is achievable in time independent of $D$ (Theorem~\ref{thm:intro-general}(a), for $\Delta \le 2^{\sqrt{\log n}/C}$): with unbounded bandwidth, the diameter-scale cost of global accounting is necessary for exact balance targets and avoidable under two-sided slack. This is the general-graph analogue of the $\Theta(\log^* n)$-vs-$\Theta(n)$ separation on rings. The stability theorem lifts as well: additive imbalance $\imb < 2\sigma/K$ still costs $\Omega(D/\imb)$ rounds on $B_{m,K}$, and the divisibility assumption $(K+2) \mid m$ can be dropped (Corollary~\ref{cor:diam-stab}, Remark~\ref{rem:diam-div}). Note that $\sigma = \Theta(D)$, so the bound holds although every class is as large as the lower bound itself.

\subsubsection{Coarse balance is local}

The general-graph algorithmic theorem shows that above the appropriate slack scale, balance requires no global coordination. We get the following result.

\begin{theorem}[Section~\ref{sec:concentration}]
\label{thm:intro-general}\mbox{}\par\nobreak
\begin{itemize}
\item[(a)] \emph{Exact palette.} There is a constant $C$ such that for every $n$-vertex graph $G$ with $\Delta \le 2^{\sqrt{\log n}/C}$ and every $\eta \ge 2^{-\sqrt{\log n}/C}$, it is possible to compute a proper coloring of $G$ with palette exactly $\Delta+1$ and every class size in $[(1-\eta)\sigma, (1+\eta)\sigma]$ w.h.p.\ in $O\big(\log(\Delta/\eta)\big) + O(\log^3\log n)$ rounds, independent of $D$. For $\Delta = O(1)$ the bound improves to $O(\log(1/\eta) + \log^* n)$.
\item[(b)] \emph{Palette slack.} For every fixed $\eta \in (0,1]$ and every graph with $\Delta \le n^{1-o(1)}$, a proper coloring with palette size $\chi \le (1+\eta)(\Delta+1)$ and all
color frequencies 
in $(1\pm\eta)\,n/\chi$ is computable w.h.p.\ in $O\big(\log n\,(\log\log n)^2\big)$ rounds, again independent of $D$, and with no polynomial dependence on $1/\eta$ in the general form $O\big((\log\log n + \log(2/\eta))^2\log n\big)$. (The precise parameter range appears in Corollary~\ref{cor:split-all}.)
\end{itemize}
\end{theorem}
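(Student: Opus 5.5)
The plan is to prove the two parts separately. Both rest on one primitive: a local randomized coloring whose class sizes concentrate around their expectation, followed by a local repair step that fixes residual conflicts without moving class sizes much. The guiding idea is that, once each class has $\sigma \gg \mathrm{poly}\log n$ members, random choices of independent local units give deviations of order $\sqrt{\sigma\log n}$ by Chernoff or bounded-differences bounds. This is at most $\eta\sigma$ exactly when $\sigma$ exceeds roughly $\eta^{-2}\log n$. The hypotheses $\Delta\le 2^{\sqrt{\log n}/C}$ and $\eta\ge 2^{-\sqrt{\log n}/C}$ in (a), and $\Delta\le n^{1-o(1)}$ in (b), are exactly what guarantees this. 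No step aggregates information over the graph, so the round complexity is independent of $D$.

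For (a), with palette exactly $\Delta+1$, I would start from any proper $(\Delta+1)$-coloring $\phi$. This can be computed in $O(\log^3\log n)$ rounds by the randomized algorithms cited in the introduction, or in $O(\log^* n)$ rounds for $\Delta=O(1)$. I would then symmetrize it. Each vertex $v$ draws a uniformly random permutation $\pi_v$ of $[\Delta+1]$, but permutations must be shared within clusters so that properness is preserved. I would compute a network decomposition, or a ruling-set clustering with clusters of radius $O(\log(\Delta/\eta))$. Each cluster then applies one random permutation $\pi_C$ to its colors, so each class receives a random contribution from every cluster. Properness can break only on edges between clusters, and I would fix those edges by recoloring boundary vertices with a $(\deg+1)$-list coloring. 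The balance analysis treats the final size of class $c$ as a sum over clusters of independent bounded contributions. The expectation of each contribution is cluster size$/(\Delta+1)$ because $\pi_C$ is uniform. Since each contribution is at most the cluster size, Hoeffding gives deviation $\sqrt{\sum |C|^2 \log n}$, which is small relative to $\sigma$ if clusters have size $\Delta^{O(\log(\Delta/\eta))}\le 2^{O(\sqrt{\log n})}$. This is where the parameter constraints enter. The boundary repair must be shown to affect only an $\eta/2$ fraction of vertices. To do this I would take the clusters to be separated by layers that are thin relative to their radius, via a ball-growing argument, so that boundary vertices make up at most an $\eta/(2(\Delta+1))$ fraction of the graph. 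Recoloring them then changes each class by at most $\eta\sigma/2$. The ball-growing radius $O(\log(\Delta/\eta))$ produces the first term of the round bound.

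For (b), the extra $\eta(\Delta+1)$ colors of slack remove the need for exact symmetry. Each vertex picks a uniformly random color from $[\chi]$, and conflicting vertices retry, using a shattering-style analysis. Alternatively, and this is what the $\SplitSym$/$\LogCompact$ macros suggest, I would recursively split the vertex set into two halves of balanced degree for $O(\log(\Delta/\log n))$ levels, until each part has degree $O(\log n/\eta^2)$. Each part then receives its own sub-palette, and within a part a $(\deg+1)$-coloring with $\eta$-slack is balanced by random palette permutation as in (a). The degree-splitting steps cost $O(\log n(\log\log n+\log(2/\eta))^2)$ rounds in total, via the known deterministic and randomized splitting algorithms. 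Corollary~\ref{cor:split-all} records the precise range.

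The main obstacle is the repair step in (a). Fixing conflicts must not bias the classes, since with palette exactly $\Delta+1$ there is no spare color. My first attempt would be to perform the repair with a second, independent random permutation applied to the boundary layer. This keeps every color equally likely there as well, so the repair shifts only variance and not expectation, and then concentration applies again.
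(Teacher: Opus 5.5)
There is a genuine gap in your part (a), at the repair step you flag yourself, and it cannot be patched inside the cluster-permutation framework. A cut edge $uv$ with $u\in C$, $v\in C'$ becomes monochromatic under independent uniform $\pi_C,\pi_{C'}$ with probability exactly $1/(\Delta+1)$. Your accounting bounds the repair's effect by the number of repaired vertices, so on bounded-degree graphs you need only $O(\eta n)$ cut edges. At the same time, every cluster must have size $o(n)$ for the term $\sqrt{\sum_C|C|^2\log n}$ to stay below $\eta\sigma$.

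Such clusterings do not exist on expanding graphs, which satisfy every hypothesis of (a). On a $3$-regular expander with edge expansion $h$, any partition into parts of size at most $n/2$ cuts at least $hn/2$ edges. This gives an expected $hn/8$ monochromatic edges, hence $\Omega(hn)$ repaired vertices, which exceeds $\eta\sigma=\eta n/4$ once $\eta<h/6$. Conversely, a part of size above $n/2$ carries a single permutation of an arbitrary, possibly very unbalanced, restriction of $\phi$.

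Ball growing does not rescue this. It guarantees growth factor $1+\epsilon$ only at radius $O(\log n/\epsilon)$, here $O(\Delta\log n/\eta)$, not $O(\log(\Delta/\eta))$, and on an expander that radius exceeds the diameter. Your fallback of permuting the boundary layer also fails. The boundary must be recolored from lists dictated by its already-colored interior neighbors, so a random permutation of its colors generally breaks properness. A repair that is both complete and palette-symmetric is just the original problem again.

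The missing idea is to make the coloring process palette-symmetric at every vertex from the start, so that no boundaries arise. In $\SymColor$, for $\tau_0=O(\log(\Delta/\eta))$ rounds each uncolored vertex, with probability $1/2$, proposes a uniform color from its current list and keeps it unless a neighbor proposes the same color. Exchangeability of the class sizes gives $\E[f_1(c)]=(n-\E[\freq(\bot)])/(\Delta+1)$ exactly (Lemma~\ref{lemma:transfer}). McDiarmid with influence radius $\tau_0$ then gives concentration at scale $(\Delta+1)^{\tau_0}\sqrt{n\log n}$. Keeping $(\Delta+1)^{\tau_0}\le n^{1/4}$ is the actual source of the constraints on $\Delta$ and $1/\eta$; the threshold $\sigma\approx\eta^{-2}\log n$ you cite is only conjectured to suffice.

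The second ingredient is a \emph{global} residual decay (Lemma~\ref{lemma:decay}). Per-vertex success probability can be exponentially small, but with $S(v)=\sum_{u\in\Gamma(v)\cap U}1/|L_u|$ one has $\Pr[v\mbox{ colored}]\ge\frac12 e^{-S(v)}$ and $\sum_v S(v)\le|U|$. Jensen then yields an expected $|U|/6$ newly colored vertices per round, leaving a residue of at most $\eta\sigma/16$. The asymmetric $(\mathrm{deg}+1)$-list completion adds at most that residue to any class, so it needs no symmetry at all.

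Part (b) delegates within-part balancing to the mechanism of (a), so it inherits the gap. Recursive degree splitting would also have to control part \emph{sizes}, so that $n_\ell/p\approx n/\chi$, not only part degrees, and stop palette discrepancy from accumulating over the levels; your quoted round bound for it is not derived.

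The paper instead uses one random hash into $L=\lceil\Delta/\ksplit\rceil$ parts, $\ksplit=\Theta(\eta^{-2}\log n)$, with disjoint sub-palettes of size $p=\lceil(1+\epsilon_0)\ksplit\rceil+1$. A single Chernoff argument controls all part degrees and all part sizes, and $\chi\le(1+\eta/4)(\Delta+1)$ is checked directly (Lemma~\ref{lemma:split-good}). Inside each part the symmetric proposal phase runs with influence balls of size $2^{O((\log\log n+\log(2/\eta))^2)}$. The completion is deterministic (Fact~\ref{fact:completion}(b)), because per-part guarantees $1-1/\mathrm{poly}(n_\ell)$ would not survive a union bound over the parts. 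That completion is what yields the $O((\log\log n+\log(2/\eta))^2\log n)$ bound.
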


The proof of (a) has two ingredients: a \emph{transfer lemma} showing that any \emph{palette-symmetric} partial-coloring procedure with influence radius $\tau$ colors, in expectation, exactly $(n - \E[\freq(\bot)])/(\Delta+1)$ vertices with each color (where $\freq(\bot)$ is the number of vertices it leaves uncolored), with all class sizes concentrated within $\pm\,b\sqrt{n\log n}$ for $b$ the maximal ball size at radius $\tau$; and a two-phase algorithm whose symmetric phase is kept short ($O(\log(\Delta/\eta))$ rounds suffice to color all but an $\eta\sigma/4$-sized residue, by an averaging argument we prove from scratch in Lemma~\ref{lemma:decay}), so that the asymmetric completion phase, which finishes the residue by $(\mathrm{deg}+1)$-list coloring~\cite{HKNT22}, is too small to disturb any class. The degree restriction is a limitation of this proof, which controls influence via worst-case ball growth; a growth-sensitive restatement (Theorem~\ref{thm:symcolor-growth}) keeps the palette exactly $\Delta+1$ for all $\Delta \le n^{1/4-o(1)}$ on graphs of mild ball growth, including the matched clique cycles of Theorem~\ref{thm:intro-diam}. For (b), the mechanism is a one-level random split of the vertices and of the palette into private sub-instances of polylogarithmic degree, inside which the same machinery runs with polylogarithmic influence balls; the palette slack is inherent to this route, since splitting conserves the palette-to-degree ratio (Remark~\ref{rem:split-barrier}).

Table~\ref{tab:compare} compares Theorem~\ref{thm:intro-general}(a) with the prior distributed bounds. In its regime, it achieves the optimal palette $\Delta+1$ (previously reached only with upper frequency threshold $O(\sigma\log\Delta)$), two-sided multiplicative control $(1\pm\eta)\sigma$ centered at the target frequency, and a running time with no $D$ or $\Delta$ term. The range $(1\pm\eta)\sigma$ lies strictly inside the prior two-sided ranges $[\sigma/2, 2k\sigma]$ and $[\sigma/2, O(\sigma\log\Delta)]$, and is incomparable with the $[\sigma/2, \sigma]$ row, whose upper cap is 
stronger but which uses up to $2(\Delta+1)$ colors. Conversely, our lower bounds show that some 
restriction 
on the target frequencies is necessary: in the small-slack (near-exact) regime, already on cycles, balance is genuinely global. We conjecture that the true boundary of the concentration regime is $\sigma = \Theta(\eta^{-2}\log n)$, the threshold below which even an idealized multinomial allocation is not $(1\pm\eta)$-balanced w.h.p.; this matches the condition $\sigma = \Omega(\log n)$ that algorithm $\LogCompact$ of~\cite{NP25} had to assume.

\begin{table}[t]
\centering
\scriptsize
\setlength{\tabcolsep}{2pt}
\resizebox{\textwidth}{!}{%
\begin{tabular}{@{}llll@{}}
\toprule
Guarantee (palette; frequencies) & Model; rounds & Type & Source \\
\midrule
$\Delta+1$; $[\lfloor\sigma\rfloor,\lceil\sigma\rceil]$ & seq.\ $O(\Delta n^2)$ & det. & \cite{KKMS10} \\
$\le 2(\Delta+1)$; $[\sigma/2, \sigma]$ & $\CONGEST$; $O(\log\Delta(D{+}\Delta{+}\log^5\log n))$ & rand. & \cite{NP25} \\
$\le (1{+}\frac1k)(\Delta{+}1)$; $[\sigma/2, 2\sigma]$ & $\CONGEST$; $O((\log\Delta{+}k)(D{+}\Delta{+}\log^5\log n))$ & rand. & \cite{NP25} \\
$\Delta+1$; $[\sigma/2, O(\sigma\log\Delta)]$, $\sigma=\Omega(\log n)$ & $\CONGEST$; $O((D{+}\Delta)\log n\log\Delta)$ & rand. & \cite{NP25} \\
\midrule
$\Delta+1$; $(1{\pm}\eta)\sigma$, $\Delta\le 2^{\sqrt{\log n}/C}$ & $\LOCAL$; $O(\log\frac{\Delta}{\eta}) + O(\log^3\log n)$ & rand. & Thm.~\ref{thm:intro-general}(a) \\
$\le(1{+}\eta)(\Delta{+}1)$; $(1{\pm}\eta)\frac{n}{\numclr}$, $\Delta \le n^{1-o(1)}$ & $\LOCAL$; $O(\log n(\log\log n)^2)$ & rand. & Thm.~\ref{thm:intro-general}(b) \\
$3$ (cycles); $n/3 \pm \imb$ & $\LOCAL$; $O((n/\imb^2)\log n\log^* n + \log^* n)$ & rand. & Thm.~\ref{thm:intro-cycle-ub}(a) \\
$3$ (cycles); $n/3 \pm \imb$ & $\LOCAL$; $O((n/\imb)\log^* n)$ & det. & Thm.~\ref{thm:intro-cycle-ub}(b) \\
$3$ (cycles); $n/3 \pm \imb$ needs $T = \Omega(\frac{n}{\imb+1})$ & $\LOCAL$ & det.\ LB & Thm.~\ref{thm:intro-ring-lb}(b) \\
$3$ (cycles); exact: impossible, $T \le \frac{n}{40}{-}1$ & $\LOCAL$ & det.\ LB & Thm.~\ref{thm:intro-ring-lb}(a) \\
$\Delta{+}1$ (twisted fiber prod.); exact: $\Theta(D)$, all $\Delta \ge 3$, all $D$ scales & $\LOCAL$ & det. & Thm.~\ref{thm:intro-diam} \\
\bottomrule
\end{tabular}
}
\caption{Near-equitable coloring: prior distributed bounds and this paper. All randomized guarantees hold w.h.p. In the 
last row (twisted-fiber-products), both $\Delta \ge 3$ and the diameter scale $D$ are arbitrary, and for fixed $\Delta$ the frequencies are $\Theta(n)$. 
}
\label{tab:compare}
\end{table}

We remark that we do not prove a randomized lower bound on cycles: we conjecture that the randomized upper bound of Theorem~\ref{thm:intro-cycle-ub}(a) is optimal, and Section~\ref{sec:open} states the conjecture together with the proved partial evidence.

\subsection{Discussion}
\label{sec:overview}

%
%
%

\paragraph{Relation to standard lower-bound techniques.}
Classical $\LOCAL$ lower bounds proceed by indistinguishability, namely, by exhibiting instances whose radius-$T$ views coincide at some vertex while correctness forces different outputs; they are often supported by Ramsey-type arguments~\cite{Linial92,Naor91}, by round elimination, or, for aggregation tasks in bandwidth-limited models, by communication bottlenecks~\cite{SHKKNPPW12}. None of these applies here as is: the balance constraint relates all $n$ outputs at once and admits exponentially many valid outputs on every input, so no two-instance comparison witnesses a violation locally; messages are unbounded, so there is no communication bottleneck; and the problem is not an LCL, so the automatic-speedup machinery~\cite{CKP19} does not apply. The rigidity argument works in the opposite, global-to-local direction: it characterizes the values of a global linear statistic that a local rule can fix surely on all inputs (the integer multiples of $n$), and the stability version characterizes the short intervals to which such a statistic can be confined.

\paragraph{Where the boundary lies}
Four points situate the results. 
\\
(a) 
The exact-vs-approximate distinction is not cosmetic: on rings, the complexity jumps from $\Theta(\log^* n)$ for constant multiplicative equity (Theorem~\ref{thm:intro-cycle-ub}(c)) to $\Theta(n)$ for exact balance (Theorem~\ref{thm:intro-ring-lb}(a)), and the stability theorem maps the entire curve between the two endpoints, $\tilde\Theta(n/\imb)$ at imbalance $\imb$. \\
(b) 
The obstruction to exactness is coordination, not information: messages are unbounded, every vertex knows $n$, $\Delta$ and the target frequencies, and the lower bounds hold for free colorings; what is impossible is coordinating, from local views, membership in classes of exactly prescribed sizes. 
\\
(c)
Two-sided slack removes the diameter from the picture: in the concentration regime the round complexity depends on neither $D$ nor, beyond the palette size, on global aggregation, so the diameter-scale cost of the accounting primitives of~\cite{NP25} is necessary precisely for exact target frequencies (Theorem~\ref{thm:intro-diam}) and avoidable under $(1\pm\eta)$ slack (Theorem~\ref{thm:intro-general}(a)). 
\\
(d)
A single global cardinality constraint changes the complexity landscape of distributed coloring: it produces a dense family of intermediate deterministic complexities on cycles (Corollary~\ref{cor:continuum}), escaping the three-class LCL gap structure, together with a polynomial gap between deterministic and randomized complexity; both phenomena are impossible for locally checkable labelings on cycles, and both are unconditional here.

\subsection{Organization}

The rest of the paper is organized as follows. Section~\ref{sec:prelim} defines the model, the balance notions and the tools. 
Section~\ref{sec:det-lb} proves the rigidity theorem (Theorem~\ref{thm:intro-ring-lb}(a), proved as Theorem~\ref{thm:det-exact}), the sharp identifier-universe threshold (Theorem~\ref{thm:universe}), its stability version (Theorem~\ref{thm:intro-ring-lb}(b), proved as Theorem~\ref{thm:det-stab}) and the complexity continuum (Corollary~\ref{cor:continuum}); Section~\ref{sec:diameter-lb} applies both, as black boxes, first to matched clique cycles and then, via the twisted fiber lift (Section~\ref{sec:twisted}), at every degree and every diameter scale (Theorem~\ref{thm:intro-diam}). 
Section~\ref{sec:ub} proves the cycle upper bounds (Theorem~\ref{thm:intro-cycle-ub}) together with the anchor barrier (Proposition~\ref{prop:ruling-lb}), and Section~\ref{sec:concentration} proves the general-graph algorithms (Theorem~\ref{thm:intro-general}), including the growth-sensitive exact-palette strengthening and the palette-splitting proof. Section~\ref{sec:open} discusses limitations and open problems. 

\section{Preliminaries}
\label{sec:prelim}

\subsection{Model}
\label{sec:model}

We use the standard $\LOCAL$ model~\cite{Linial92,Peleg00}.
The network is modeled as an undirected graph $G = (V,E)$, and the input graph is the network itself.
Computation proceeds in synchronous rounds, driven by a global clock. In each round, every vertex sends an arbitrary message to each neighbor, and local computation is unbounded.
Initially, every vertex knows only its own identity and incident edges, as well as the values of $n$ and $\Delta$. In each round, it may send all the information it has accumulated so far to its neighbors. We denote by $\Gamma(v)$ the set of neighbors of $v$, and by $\Gamma_T(v)$ the radius-$T$ ball around $v$, including $v$ itself and all inputs in the ball. Therefore, after $T$ rounds, the output of $v$ is a function of $\Gamma_T(v)$.
All algorithms are \emph{uniform} (every vertex runs the same procedure). Each vertex of a cycle distinguishes its two ports arbitrarily; no globally consistent orientation is assumed, and none of our algorithms requires one (orientations appearing in the proofs are fixed for convenience of analysis, not computed). Algorithms are required to be correct for every legal port numbering; our lower bound proofs are therefore free to restrict attention to a convenient one. The parameters $n$ and $\Delta$ are known to all vertices; this strengthens all our lower bounds, and our upper bounds use $n$ only through the target frequencies $\sigma$ and $n/3$.

\emph{Identifiers and randomness.} In the identifier model, vertices carry distinct identifiers from a universe $[N]$; our lower-bound statements make the required $N$ explicit. All algorithmic upper bounds assume the standard polynomial universe $N = n^{O(1)}$, i.e.\ $O(\log n)$-bit identifiers; with a larger universe, the deterministic $O(\log^* n)$ terms (Linial/Cole--Vishkin color reduction, the ruling set of Fact~\ref{fact:ruling}, the completion of Fact~\ref{fact:completion}) read $O(\log^* N)$. In the randomized setting every vertex additionally holds a private, independent, unbounded random string $\Rand_v$. All logarithms are base $2$ unless stated otherwise ($\ln$ denotes the natural logarithm); the base affects only the constants, which we make no attempt to optimize. 
All randomized algorithms are parameterized by the desired failure exponent $\gamma$, and the constants in their time bounds and parameter thresholds may depend on $\gamma$. All lower bounds concern deterministic algorithms, which must be correct on every assignment of identifiers.

\subsection{Balance notions}


A coloring $\clr: V \to \cP$, $|\cP| = \numclr$, maps each vertex to a color. It is \emph{proper} if adjacent vertices receive distinct colors. In problem statements, an unqualified coloring requirement is proper; a \emph{free coloring} may assign the same color to adjacent vertices. Denote the \emph{frequency} of color $c$ by $\freq(c) = |\clr^{-1}(c)|$. The coloring is \emph{$\imb$-additively balanced} if $|\freq(c) - n/\numclr| \le \imb$ for every $c \in \cP$. A proper coloring is \emph{$\eta$-multiplicatively equitable} if $\numclr = \Delta+1$ and $\freq(c) \in [(1-\eta)\sigma, (1+\eta)\sigma]$ for all $c$. \emph{Exact balance} means $\freq(c) \in \{\lfloor n/\numclr\rfloor, \lceil n/\numclr\rceil\}$ for all $c$; when $\numclr \mid n$ this forces $\freq(c) = n/\numclr$ exactly.

\paragraph{Divisibility convention.}
To avoid carrying floors, lower-bound statements assume $\numclr \mid n$ where they say so (e.g., $3 \mid n$ in Theorem~\ref{thm:intro-ring-lb}(a)); Remark~\ref{rem:diam-div} shows how the stability theorem removes such assumptions. Upper-bound statements hold for all $n$, with rounding absorbed into the $O(\cdot)$.

\subsection{Probabilistic tools}

We use the following standard concentration bounds.

\begin{theorem}[Hoeffding; McDiarmid]
\label{thm:hoeffding}
(a) If $Y_1, \dots, Y_M$ are independent with $\E[Y_j] = 0$ and $|Y_j| \le 1$, then $\Pr[|\sum_j Y_j| \ge t] \le 2e^{-t^2/(2M)}$.
(b) If $h$ is a function of independent variables $Z_1, \dots, Z_M$ and changing any one $Z_j$ changes $h$ by at most $b_j$, then $\Pr[|h - \E[h]| \ge t] \le 2\exp(-2t^2/\sum_j b_j^2)$.
\end{theorem}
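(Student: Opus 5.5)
Both parts follow from the exponential-moment (Chernoff) method, with Hoeffding's lemma as the common core. First I would establish Hoeffding's lemma: if $X$ is a random variable with $\E[X]=0$ and $a \le X \le b$ almost surely, then $\E[e^{\lambda X}] \le e^{\lambda^2 (b-a)^2/8}$ for every real $\lambda$. The argument is the standard one.
\begin{itemize}
\item By convexity of $x\mapsto e^{\lambda x}$ on $[a,b]$, we have $e^{\lambda X} \le \frac{b-X}{b-a}e^{\lambda a} + \frac{X-a}{b-a}e^{\lambda b}$.
\item Taking expectations and using $\E[X]=0$ gives $\E[e^{\lambda X}] \le e^{L(h)}$, where $h=\lambda(b-a)$, $p=-a/(b-a)$, and $L(h) = -hp + \ln(1-p+pe^{h})$.
\item One checks that $L(0)=L'(0)=0$ and $L''(h) = q(1-q) \le 1/4$, where $q = pe^h/(1-p+pe^h)$.
\item Taylor's theorem with remainder then yields $L(h)\le h^2/8$.
\end{itemize}

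For part (a), set $S=\sum_j Y_j$. By Markov's inequality and independence, for every $\lambda>0$,
\[ \Pr[S\ge t] \le e^{-\lambda t}\prod_j \E[e^{\lambda Y_j}] \le \exp\bigl(-\lambda t + \lambda^2\cdot 4M/8\bigr), \]
where the second inequality applies Hoeffding's lemma with $b-a = 2$. Choosing $\lambda = t/M$ gives $e^{-t^2/(2M)}$. Applying the same bound to $-S$ and taking a union bound over the two tails gives the factor $2$.

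For part (b), I would use the Doob martingale $X_k = \E[h \mid Z_1,\dots,Z_k]$, so that $X_0=\E[h]$, $X_M = h$, and the differences are $D_k = X_k - X_{k-1}$. The key step, and the only place needing care, is to show that conditionally on $Z_1,\dots,Z_{k-1}$ the increment $D_k$ lies in an interval of length at most $b_k$, and not merely in $[-b_k,b_k]$, since the latter would lose a factor $4$ in the exponent.
\begin{itemize}
\item Write $g(z) = \E[h \mid Z_1,\dots,Z_{k-1}, Z_k=z]$. By independence of the $Z_j$, $g(z)$ equals the average of $h(Z_1,\dots,Z_{k-1},z,Z'_{k+1},\dots,Z'_M)$ over fresh copies $Z'_{k+1},\dots,Z'_M$.
\item The bounded-difference hypothesis then gives $|g(z)-g(z')|\le b_k$ for all $z,z'$.
\item Hence $D_k$ lies in $[\inf_z g(z) - X_{k-1},\, \sup_z g(z) - X_{k-1}]$, an interval of length at most $b_k$ whose endpoints depend only on $Z_1,\dots,Z_{k-1}$.
\end{itemize}

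Since $\E[D_k\mid Z_1,\dots,Z_{k-1}]=0$, the conditional form of Hoeffding's lemma gives $\E[e^{\lambda D_k}\mid Z_1,\dots,Z_{k-1}]\le e^{\lambda^2 b_k^2/8}$. Peeling off the increments one at a time by the tower property yields $\E[e^{\lambda(h-\E h)}]\le e^{\lambda^2\sum_k b_k^2/8}$. Markov's inequality with $\lambda = 4t/\sum_k b_k^2$ gives the one-sided bound $\exp(-2t^2/\sum_k b_k^2)$, and the symmetric argument for $-h$ supplies the factor $2$.

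If the $Z_j$ take values in general measurable spaces, the infimum and supremum over $z$ should be replaced by essential ones. This affects no step of the argument.
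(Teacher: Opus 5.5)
Your proof is correct. The paper gives no proof of this statement: it quotes Hoeffding's inequality and McDiarmid's bounded-differences inequality as standard tools in the preliminaries. Your argument is the textbook proof behind that citation.
\begin{itemize}
\item For (a), you use the Chernoff bound with Hoeffding's lemma.
\item For (b), you use the Doob martingale and note that each increment is conditionally confined to an interval of length $b_k$, not merely to $[-b_k,b_k]$, which would lose a factor $4$ in the exponent.
\end{itemize}

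Two small remarks. First, (a) is the special case of (b) with $h=\sum_j Y_j$ and $b_j=2$, since $2\exp(-2t^2/(4M)) = 2e^{-t^2/(2M)}$. So the separate Chernoff computation for (a) can be skipped once (b) is proved. Second, both bounds should be read for $t>0$. Your choices $\lambda=t/M$ and $\lambda=4t/\sum_k b_k^2$ use this, and for large negative $t$ the displayed inequalities are false as written. The paper only applies them with $t>0$.
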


A family of random variables indexed along a cycle is \emph{$1$-dependent} if every subfamily whose indices are pairwise at cyclic distance at least $2$ is mutually independent. Splitting the index set into the color classes of a proper $3$-coloring of the cycle (which exists for every cycle length, whereas a $2$-coloring does not exist on odd cycles) partitions a sum of $1$-dependent variables into at most three sums, each of mutually independent variables.

\subsection{Graphs and cycles 
}
\label{ssec:walks}

Throughout the paper, let $\cC_n$ denote the $n$-vertex cycle, written as $\cC_n=(V_n,E_n)$ with $V_n=\{v_0,\ldots,v_{n-1}\}$ and $E_n=\{(v_i,v_{i+1 \bmod n}) \mid i\in[0,n-1]\}$ in the fixed cyclic order. For every $i\in[0,n-1]$, define the difference $\delta_i=\clr(v_{i+1 \bmod n})-\clr(v_i)\bmod 3$. The following observation is easily verified.

\begin{observation}
\label{obs:walk}
A $3$-coloring $\clr: V_n \to \{0,1,2\}$ is proper if and only if every difference satisfies $\delta_i \in \{+1, -1\}$. In a segment of length
$\ell \equiv 0 \pmod 3$ that is colored periodically (i.e., all
differences are $+1$), the three color frequencies in the segment equal exactly $(\ell/3, \ell/3, \ell/3)$, regardless of its starting color.
\end{observation}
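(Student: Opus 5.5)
The statement just before the cut is Observation~\ref{obs:walk}, and I will prove its two parts separately. Both are direct unwindings of the definition $\delta_i=\clr(v_{i+1 \bmod n})-\clr(v_i)\bmod 3$.

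\emph{Properness.} I will argue edge by edge. The coloring $\clr$ is proper exactly when $\clr(v_i)\ne\clr(v_{i+1\bmod n})$ for every $i\in[0,n-1]$, since these pairs are precisely the edges of $E_n$. Both colors lie in $\{0,1,2\}$, so they are equal if and only if their difference modulo $3$ is $0$. The only residues modulo $3$ other than $0$ are $1$ and $2$, and $2\equiv -1 \pmod 3$. Hence each edge is properly colored if and only if $\delta_i\in\{+1,-1\}$, and quantifying over all $i$ gives the stated equivalence.

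\emph{Periodic segments.} Let a segment have vertices $u_0,\dots,u_{\ell-1}$ in cyclic order, with all internal differences equal to $+1$. By induction on $j$, $\clr(u_j)\equiv \clr(u_0)+j \pmod 3$. As $j$ runs over $0,\dots,\ell-1$ with $\ell=3k$, the index $j$ meets each residue class modulo $3$ exactly $k$ times. Shifting by the constant $\clr(u_0)$ permutes the residue classes, so each color $c$ receives exactly $k=\ell/3$ vertices, whatever the starting color is.

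Neither part has a real obstacle. The only point needing care is the identification $2\equiv-1$, so that "nonzero difference" means exactly "$\pm1$". I will also note that the segment claim uses only the $\ell-1$ differences internal to the segment, and not the difference across its endpoint edge.
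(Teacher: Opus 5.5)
Your proof is correct and is exactly the direct verification the paper leaves to the reader; the paper calls the observation ``easily verified'' and gives no proof. Each edge is proper iff $\delta_i \not\equiv 0 \pmod 3$, i.e.\ $\delta_i \in \{1,2\} = \{+1,-1\}$, and a run of length $3k$ with all differences $+1$ meets every residue exactly $k$ times. Your note that only the $\ell-1$ internal differences matter is apt. The same count also holds verbatim when all internal differences are $-1$, which covers segments whose local orientation runs against the fixed cyclic order.
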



Throughout the cycle arguments, color arithmetic is in $\Zthree=\{0,1,2\}$. We say that a segment is colored \emph{periodically from $s$} if, along the locally specified direction, it receives the pattern $s,s+1,s+2,s,\dots$ modulo $3$.

We also use the standard degree--diameter terminology. The \emph{Moore bound}
says that an $n$-vertex graph of maximum degree $\Delta \ge 3$ and diameter
$D$ satisfies
$n \le 1+\Delta\sum_{h=0}^{D-1}(\Delta-1)^h$,
and hence $D=\Omega(\log n/\log\Delta)$; see the survey of Miller and
Sir\'an~\cite{MS05}. We call this lower limit on $D$ the
\emph{Moore-bound minimum}. At the other end, connected maximum-degree-$\Delta$
graphs with $n$ vertices can have diameter $\Theta(n/\Delta)$, and we refer to
this as the \emph{connectivity-imposed maximum}. In the product constructions
of Section~\ref{sec:diameter-lb}, a \emph{fiber} is the graph copied over each
vertex of a base cycle, and a \emph{twist} is the fixed permutation matching one
fiber copy to the next. The phrase \emph{de Bruijn} refers to the standard
base-$\cliqsize$ shift map on residues, following de Bruijn's classical
construction~\cite{deBruijn46}.


For a graph $G$ and an integer $k\ge 1$, the \emph{power graph} $G^k$ has vertex set $V(G)$ and an edge between two distinct vertices whose distance in $G$ is at most $k$. An \emph{independent set} is a set of pairwise non-adjacent vertices. A \emph{maximal independent set (MIS)} is an independent set to which no further vertex can be added while preserving independence. A \emph{distance-$w$ ruling set} in a graph $G$ is a maximal independent set of $G^w$; equivalently, its vertices are pairwise at distance at least $w+1$ in $G$, and every vertex of $G$ is within distance $w$ of the set. For a set $S\subseteq V(\cC_n)$, the \emph{consecutive gaps} are the cyclic distances, in the fixed cyclic order of $\cC_n$, between consecutive vertices of $S$. A graph is \emph{growth-bounded} if, for every radius $r$, every radius-$r$ ball contains only a bounded number (as a function of $r$) of pairwise non-adjacent vertices; below we use this only for $\cC_n^w$, where the bound is at most $2r+1$. We use the following fact.

\begin{fact}
\label{fact:ruling}
For every $w \ge 1$ and every $n \ge w+1$, there is a deterministic $O(w\log^* n)$-round $\LOCAL$ algorithm $\Ruling(w)$ computing a distance-$w$ ruling set $S \subseteq V(\cC_n)$ whose consecutive gaps all lie in $[w+1,\, 2w+1]$.
\end{fact}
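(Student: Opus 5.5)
The plan is to reduce to a maximal independent set computation on the power graph $\cC_n^w$, and then read off the gap bounds from independence and maximality. The vertices of $\cC_n^w$ are those of $\cC_n$, and two of them are adjacent when their cyclic distance is at most $w$. This graph has maximum degree at most $2w$, and it is growth-bounded: a radius-$r$ ball of $\cC_n^w$ is a contiguous arc of the cycle of length at most $2rw+1$, and an independent set of $\cC_n^w$ inside such an arc has pairwise distances at least $w+1$, so it has at most $2r+1$ elements. One round of $\cC_n^w$ is simulated by $w$ rounds on $\cC_n$, since every vertex can gather its radius-$w$ neighbourhood in $w$ rounds and then relay whole views. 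So an $R$-round algorithm on $\cC_n^w$ costs $wR$ rounds on $\cC_n$.

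The main step is to compute an MIS of $\cC_n^w$ in $O(\log^* n)$ rounds of $\cC_n^w$. I would do this as follows.
\begin{enumerate}
\item Run Linial's color reduction on $\cC_n^w$, starting from the identifiers. Since the degree is $2w$, this gives in $O(\log^* n)$ rounds a proper coloring with $O(w^2)$ colors.
\item Convert the coloring into an MIS.
\end{enumerate}
The conversion in step 2 is the obstacle, because the naive approach of iterating over color classes costs $O(w^2)$ rounds, which is too many. I would instead use the cycle structure directly. Compute a distance-$w$-respecting orientation-free decomposition by running Cole--Vishkin on $\cC_n$ itself: in $O(\log^* n)$ rounds this gives a $3$-coloring, and from it an MIS $M_0$ of $\cC_n$, whose gaps are $2$ or $3$. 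Then sparsify $M_0$ in $O(\log w)$ phases. Phase $j$ computes an MIS on the graph of surviving anchors that are within distance $2^{j}$ of one another along the cycle. That graph is a path or cycle among consecutive anchors, so each phase costs $O(\log^* n)$ rounds on a virtual cycle of hop length $O(2^j)$.

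Before the final step, the surviving set must have gaps in $[w+1, O(w)]$. The last phase handles this by greedily keeping, inside each gap, anchors spaced to within $[w+1,2w+1]$. If the geometric sum of the simulation costs is dominated by the last phase, the total is $O(w\log^* n)$. If that bookkeeping proves delicate, I would fall back on the known fact that on growth-bounded graphs an MIS is computable deterministically in $O(\log^* n)$ rounds (Schneider--Wattenhofer). Applied to $\cC_n^w$, whose growth bound $2r+1$ is independent of $w$, this gives the claimed $O(w\log^* n)$ rounds on $\cC_n$ directly.

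Finally, verify the gaps. Let $S$ be the MIS of $\cC_n^w$ and let $u,u'$ be consecutive vertices of $S$. Independence gives $\dist(u,u')\ge w+1$. If $\dist(u,u')\ge 2w+2$, then the vertex at distance $w+1$ from $u$ in the direction of $u'$ is at distance at least $w+1$ from both $u$ and $u'$. Because $u$ and $u'$ are consecutive, it is also at distance at least $w+1$ from every other vertex of $S$, which contradicts maximality. Hence every gap lies in $[w+1,2w+1]$. The hypothesis $n\ge w+1$ guarantees that the set is nonempty and that it does not collapse to a single vertex with a wrap-around gap outside this range. For $n$ just above $w+1$, a single vertex is possible, with gap $n$; when $n\le 2w+1$ this is still within range, so this boundary case is consistent with the statement.
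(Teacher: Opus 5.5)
Your proof is correct, and your fallback is exactly the paper's argument: simulate $\cC_n^w$ at $w$ rounds per round, note that it is growth-bounded with bound $2r+1$, run the Schneider--Wattenhofer MIS algorithm, and read the gap bounds off independence and maximality. The degenerate case is $n \le 2w+1$, where $\cC_n^w$ is complete and $S$ is a single vertex with gap $n$.

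Your primary route is genuinely different. The paper's version takes two lines once one accepts that the growth-bounded MIS algorithm applies to the abstract graph $\cC_n^w$; the paper adds a parenthetical to justify exactly this. Maximality then gives the gap bounds for free. Your doubling sparsification avoids that black box and uses only $O(\log^* n)$-round coloring of graphs of maximum degree $2$. It is the kind of ``standard construction'' the paper says could replace its citation. However, it needs the bookkeeping you left open:

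(i) The phase-$j$ conflict graph has degree at most $2$. This holds because phase-$(j-1)$ survivors are more than $2^{j-1}$ apart, so at most one of them lies within distance $2^j$ on each side of an anchor.

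(ii) The gaps obey $g_j \le g_{j-1} + 2^{j+1}$ with $g_0 \le 3$. So after phase $J = \lceil \log_2 w\rceil$ they lie in $[w+1, 8w]$, and the phase costs $O(2^j \log^* n)$ sum to $O(w\log^* n)$.

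(iii) The final insertion must place points at spacing $w+1$ only while the remaining length is at least $2w+2$; naive greedy spacing can leave a last piece shorter than $w+1$. Each gap must be oriented canonically by identifiers. This includes the wrap-around gap of a lone survivor, which can occur when $2w+2 \le n < 8w$.

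(iv) Classical Cole--Vishkin needs an orientation. On the unoriented cycle, and on the virtual anchor paths, use Linial's reduction to $O(1)$ colors followed by a constant-round reduction to three.

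One small precision point in the gap verification: what must be bounded is the arc length of each gap, not $\dist(u,u')$. When $|S| = 2$, the cyclic distance is the shorter of the two gaps, so the longer gap escapes your case split as literally written. Read with arc lengths, and with $u = u'$ when $|S| = 1$ (which also rules out a single vertex for $n \ge 2w+2$), your maximality argument is the paper's.
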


\begin{proof}
Consider the power graph $\cC_n^w$.
A single communication round on $\cC_n^w$ can be simulated in $w$ rounds on $\cC_n$. Note that $\cC_n^w$ contains cliques of size $w+1$, so no coloring with a number of colors independent of $w$ exists; instead, observe that $\cC_n^w$ is growth-bounded.
By the deterministic maximal-independent-set algorithm of Schneider and Wattenhofer for growth-bounded graphs~\cite{SW08}, an MIS $S$ of $\cC_n^w$ can be computed in $O(\log^* n)$ communication rounds of $\cC_n^w$, hence in $O(w\log^* n)$ rounds of $\cC_n$. (The algorithm of~\cite{SW08} is stated for abstract bounded-independence graphs: it uses only the combinatorial bound on the number of pairwise non-adjacent vertices in each ball, with no geometric representation or degree bound, so it applies to $\cC_n^w$ without change.)

The set $S$ is a distance-$w$ ruling set of $\cC_n$, its vertices being pairwise at distance at least $w+1$ in $\cC_n$, and its consecutive gaps obey the stated bounds, as follows. If $n \le 2w+1$, then $\cC_n^w$ is complete, so $S$ is a single vertex, and its unique cyclic gap is $n$, which lies in $[w+1, 2w+1]$ by the hypothesis $n \ge w+1$. If $n \ge 2w+2$, then $S$ is not a single vertex (a vertex at cyclic distance at least $w+1$ from it exists and could be added), every gap is at least $w+1$ by independence, and no gap exceeds $2w+1$: in a gap of length at least $2w+2$ between consecutive elements $u, u'$ of $S$, the interior vertex at arc distance $w+1$ from $u$ is at distance at least $w+1$ from $u'$ as well, hence from all of $S$, and could be added, contradicting maximality. (Any standard construction of distance-$w$ ruling sets on cycles could replace this one; we invoke~\cite{SW08} only for brevity.)

The hypothesis $n \ge w+1$ cannot be dropped: for $n \le w$, every nonempty subset of $\cC_n$ has a cyclic gap of length at most $n \le w$, so no set with the stated gap guarantee exists. All invocations of the fact in this paper are well inside the admissible regime: the algorithms of Section~\ref{sec:ub} call $\Ruling(w)$ with $w = O(T/\log^* n)$ and $T \le n$, so $w + 1 \le n$ for all sufficiently large $n$.
\end{proof}

%
%
%

\section{Deterministic lower bound for exact or $g$-additive balance}
\label{sec:det-lb}

Section~\ref{sec:det-exact-LB} proves Theorem~\ref{thm:intro-ring-lb}(a) in the following stronger form, stated explicitly as Theorem~\ref{thm:det-exact} below. The theorem says, essentially, that the \emph{only} global counts that a local deterministic rule can guarantee exactly, on every input, are the two extreme ones, namely, $0$ and the maximum count $n$.
Section~\ref{sec:det-stab} then proves a stability version (Theorem~\ref{thm:det-stab}), claiming that the count cannot even be \emph{confined to a short interval} away from the multiples of $n$. This makes the deterministic upper bound of Section~\ref{sec:ub} tight up to a $\log^* n$ factor (Theorems~\ref{thm:intro-ring-lb}(b) and~\ref{thm:intro-cycle-ub}(b)). 

\subsection{The lower bound for exact balance}
\label{sec:det-exact-LB}

We start with an overview of the proof structure.
We say that $K$ is a \emph{sure count} of the local rule $F$ if the induced global count satisfies $f(\ID)=K$ for every identifier assignment $\ID$; equivalently, $f$ is \emph{surely constant} with value $K$.
First, Lemma~\ref{lemma:exchange} shows that, under the sure-count hypothesis, replacing the identifier at the center of any fixed context cannot change the sum of the $2T+1$ window values that see the center. 
Consequently, we get Lemma~\ref{lemma:boundary}, stating that the inner window sum of a linear arrangement of identifiers depends only on its first and last $2T$ identifiers. 
This enables showing (in Lemma~\ref{lemma:loops}) that all closed chains of windows of the same length from the same basepoint carry the same weight, and that this weight is additive in the length. 
Moreover, additivity makes the weight linear in the length, with a single \emph{integer} slope $\lambda$ independent of the basepoint, as shown in Lemma~\ref{lemma:linear}. 
This, in turn, leads to contradiction, since the $n$ windows of an actual cyclic assignment form a loop of length $n$, so the guaranteed count equals $\lambda\, n$ (Figure~\ref{fig:rigidity}); for the count $n/3$, no integer $\lambda$ exists.

\begin{figure}[t]
\centering
\begin{tikzpicture}[scale=0.65]
\draw[thick] (0,0) circle (1.7);
\draw[line width=2.5pt] (75:1.7) arc (75:105:1.7);
\node at (90:2.1) {$\cT^*_0$};
\draw[line width=2.5pt] (255:1.7) arc (255:285:1.7);
\node at (270:2.2) {$\cT^*_j$};
\node at (0:2.05) {$p_1$};
\node at (180:2.1) {$p_2$};
\draw[dashed,->] (279:1.5) .. controls (0.55,0) .. (81:1.5);
\draw[dashed,->] (99:1.5) .. controls (-0.55,0) .. (261:1.5);
\node at (0.85,0) {$q$};
\node at (-0.85,0) {$r$};
\end{tikzpicture}
\caption{The splitting step in the proof of Theorem~\ref{thm:det-exact}. The $n$ windows of a cyclic assignment form a loop of length $n$ and weight $K$ through the boundary tuples $\cT^*_0$ and $\cT^*_j$; the length-$2T$ shift-in paths $q, r$ close the two halves $p_1, p_2$ into loops. Integer-slope linearity gives weights $\lambda(j+2T)$, $\lambda(n-j+2T)$ and $\lambda\cdot 4T$ for the loops $p_1q$, $p_2r$ and $qr$, and adding up yields $\lambda\, n = K$.}
\label{fig:rigidity}
\end{figure}

We require the following notation.
For an assignment $\ID$ of distinct identifiers to $\cC_n$ and a vertex $v_i$, define the \emph{radius-$T$ identifier window} $W(v_i,\ID,T)$, also written $W(i,\ID,T)$, as the ordered tuple $(\ID(i-T), \ID(i-T+1), \dots, \ID(i+T))$, with indices taken modulo $n$ in the fixed cyclic order. The main result of this section is the following.

\begin{theorem}
\label{thm:det-exact}
Let $1 \le T \le n/40 - 1$ and $N \ge 2n+2$. Let $F$ be any function mapping $(2T+1)$-tuples of distinct identifiers from $[N]$ to $\{0,1\}$, and for an assignment $\ID$ of distinct identifiers to $\cC_n$ let $f(\ID) = \sum_{v} F\big(W(v,\ID,T)\big)$. If $K$ is a sure count of $F$, namely, if $f(\ID) = K$ for every assignment $\ID$, then $K \in \{0, n\}$.
\end{theorem}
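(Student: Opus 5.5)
The plan follows the roadmap sketched before the statement: show that a sure count forces $F$ to behave like a ``potential difference'' along identifier sequences, so that the total over any closed chain of windows is linear in its length with an integer slope. Throughout, view a linear arrangement $x_1,\dots,x_m$ of distinct identifiers from $[N]$ as carrying \emph{inner windows} $(x_{i-T},\dots,x_{i+T})$ for $T<i\le m-T$. Its \emph{weight} is the sum of $F$ over these inner windows.

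\emph{Step 1 (exchange).} Fix a cyclic assignment $\ID$ and a vertex $v$. Replace $\ID(v)$ by an identifier $y$ not used anywhere else; this is possible since $N\ge 2n+2$ leaves at least $n+2$ unused identifiers. Only the $2T+1$ windows containing $v$ change, so $f(\ID)=f(\ID')=K$ gives the following: the sum of $F$ over the $2T+1$ windows seeing the center of a fixed $(4T+1)$-context is independent of the center identifier, among identifiers absent from the rest of the assignment. Chaining such single-position replacements, I will prove Lemma~\ref{lemma:boundary}. For any linear arrangement of length $m\ge 4T+1$ that embeds into some cyclic assignment (it suffices that $m$ is at most about $n/2$, so the complement can be filled with fresh identifiers), the weight depends only on the first $2T$ and last $2T$ identifiers. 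Interior identifiers can be swapped one at a time. Each swap changes only windows that see the swapped position, and these all lie inside the arrangement, since the position is at distance at least $2T$ from both ends. When a desired identifier is still in use elsewhere, I route through a fresh identifier in two steps; the spare universe makes this possible. Changing the length also requires care: inserting or deleting interior positions must be handled by comparing two cyclic assignments that differ on one arc. This is done by the same exchange trick applied to the complementary arc, whose contribution cancels.

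\emph{Step 2 (loops and linearity).} Define a weighted graph on $2T$-tuples of distinct identifiers. A path of length $\ell$ from tuple $A$ to tuple $B$ is an arrangement that starts with $A$ and ends with $B$, and its weight is its inner-window sum, which is well defined by Step 1. Concatenating paths at a shared tuple adds weights. This yields Lemma~\ref{lemma:loops}: the weight of a closed chain from basepoint $A$ depends only on its length $\ell$, call it $w_A(\ell)$, and is additive, $w_A(\ell+\ell')=w_A(\ell)+w_A(\ell')$, in the range where $\ell$ is large enough ($\ge 4T$) and small enough to embed. Additivity on an interval of integers forces $w_A(\ell)=\lambda_A\ell$ for a rational $\lambda_A$. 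Comparing two basepoints through a connecting path and its reverse shows that $\lambda_A=\lambda$ is independent of $A$. To see that $\lambda$ is an integer, I use that each weight is an integer and exploit coprime lengths: $w(\ell)$ and $w(\ell+1)$ are both integers, hence their difference $\lambda$ is an integer (Lemma~\ref{lemma:linear}).

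\emph{Step 3 (splitting the cycle).} Take an actual cyclic assignment and cut it at two boundary tuples $\cT^*_0$ and $\cT^*_j$ with $j\approx n/2$, giving arcs $p_1,p_2$ of lengths $j$ and $n-j$. Close each arc with a fresh shift-in path of length $2T$, namely $q$ from $\cT^*_j$ to $\cT^*_0$ and $r$ back, built from unused identifiers. Lengths stay below the embedding limit because $T\le n/40-1$. Then $w(p_1q)+w(p_2r)-w(qr)=K$, while linearity gives $\lambda(j+2T)+\lambda(n-j+2T)-4T\lambda=\lambda n$. Hence $K=\lambda n$, and $0\le K\le n$ forces $K\in\{0,n\}$.

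The main obstacle is Step 1 together with well-definedness in Step 2. One must show that weights of linear arrangements are independent of interior content \emph{and} behave consistently across different lengths, using only cyclic assignments of the fixed size $n$ with distinct identifiers. The required bookkeeping includes ensuring that replacement identifiers are available (this is where $N\ge 2n+2$ enters) and that all auxiliary paths fit within an $n$-cycle alongside a padding arc (this is where the constant $40$ enters). I would first set up the exchange lemma carefully and derive the boundary lemma by induction on the number of differing interior positions.
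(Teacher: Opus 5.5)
Your architecture is the paper's: exchange lemma, boundary lemma, loop weights that depend only on length and are additive, an integer basepoint-free slope, and the split at $j\approx n/2$ closed by $\ShiftIn$ paths. Your Step~3 is the paper's final computation essentially verbatim.

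\textbf{The gap.} The problem is exactly the step you flag as the main obstacle: getting from arrangement weights to well-defined \emph{loop} weights (Lemma~\ref{lemma:loops}). A loop is not an arrangement, because it repeats its basepoint tuple; for example, $qr$ reads $\cT^*_j,\cT^*_0,\cT^*_j$. So ``concatenating paths at a shared tuple adds weights'' only expresses a loop $A\to B\to A$ as $W(A,B,\ell_1)+W(B,A,\ell_2)$. Nothing you state shows this is independent of $B$ or of how $\ell_1+\ell_2$ is split.

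Your proposed remedy for changing lengths does not work as stated. That remedy compares two cyclic assignments that differ on one arc and lets the complementary arc cancel. But if the two arcs differ in length, so do their complements. Their contributions then differ by $H_{n-m}(B,A)-H_{n-m-1}(B,A)$, which is a length change of the same kind you were trying to control, so nothing cancels.

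\textbf{The missing device: an external anchor.}
\begin{itemize}
\item Fix a $2T$-tuple $A$ disjoint from all identifiers of the loop. Let $H_\ell(A,\cT)$ be the inner sum of any length-$\ell$ arrangement with head $A$ and tail $\cT$; this is well defined by the boundary lemma.
\item For each step $\cT\to\cT'$ via $\hat{\cT}$, build a fresh arrangement with head $A$ ending in $\hat{\cT}$. Its inner sum is that of its prefix plus the last window, so $H_{\ell+1}(A,\cT')=H_\ell(A,\cT)+F(\hat{\cT})$.
\item This identity is purely definitional. Different lengths are linked with no further use of the sure-count hypothesis, which enters before the final step only through the exchange lemma.
\item Telescoping gives $\omega(\cL)=H_{4T+2+\mu}(A,\cT_0)-H_{4T+2}(A,\cT_0)$.
\item Two loops of length $\mu$ from $\cT_0$ together use at most $2(\mu+2T)$ identifiers. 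So $N\ge 2n+2$ leaves room for a \emph{common} anchor, which gives the two loops equal weight. This is where the universe hypothesis really bites, not in embeddability.
\end{itemize}

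\textbf{Smaller points.} Arrangements of any length up to $n$ are admissible. The anchored arrangements you actually need reach length about $n/2+6T$, which exceeds your stated limit of ``about $n/2$''. Also, $q$ and $r$ use no fresh identifiers; a length-$2T$ path between disjoint tuples is forced to be $\ShiftIn$.

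With this lemma supplied, your additivity, integer-slope and basepoint-linking arguments go through as in the paper.
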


Theorem~\ref{thm:det-exact} yields the ring lower bound.

\begin{corollary}[Theorem~\ref{thm:intro-ring-lb}(a), restated]
\label{cor:det-exact}
Let $3 \mid n$ and $1 \le T \le n/40 - 1$. There is no deterministic $T$-round $\LOCAL$ algorithm on $\cC_n$ with identifiers from $[N]$, for $N \ge 2n+2$, that outputs on every assignment a free $3$-coloring with all frequencies exactly $n/3$. Consequently, the deterministic complexity on $\cC_n$ of exact equitable $3$-coloring, and already of exact balanced free $3$-coloring, is $\Theta(n)$.
\end{corollary}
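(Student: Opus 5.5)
The corollary reduces to Theorem~\ref{thm:det-exact}. Suppose, for contradiction, that a deterministic $T$-round algorithm $\cA$ on $\cC_n$ outputs, for every assignment of distinct identifiers from $[N]$, a free $3$-coloring with all three frequencies equal to $n/3$. We may fix a convenient port numbering, as the model allows: orient every vertex's ports consistently with the fixed cyclic order $v_0,\dots,v_{n-1}$. Since every vertex knows $n$ and $\Delta=2$, and the topology is the fixed cycle, the output of $v_i$ after $T$ rounds is a function only of the ordered identifier tuple $W(i,\ID,T)$. So there is a function $\Phi$ from $(2T+1)$-tuples of distinct identifiers to $\{0,1,2\}$ such that $\clr(v_i)=\Phi(W(i,\ID,T))$. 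If $T\ge n/2$ the algorithm is not in the regime considered, and the hypothesis $T\le n/40-1$ guarantees that the window has $2T+1<n$ distinct positions, so it really is a tuple of distinct identifiers.

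Next define the indicator rule $F(\,\cdot\,)=\mathds{1}[\Phi(\cdot)=0]$, mapping tuples to $\{0,1\}$. Then $f(\ID)=\sum_v F(W(v,\ID,T))$ is exactly the frequency of color $0$. By correctness of $\cA$, $f(\ID)=n/3$ for every assignment, so $K=n/3$ is a sure count of $F$. Theorem~\ref{thm:det-exact} then forces $n/3\in\{0,n\}$, which is impossible for $n\ge 1$ (note $n\ge 80$ since $T\ge1$). This proves the impossibility for all $1\le T\le n/40-1$. Only one color class is needed; properness is never used, so the bound holds for free colorings and hence also for proper equitable colorings.

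For the $\Theta(n)$ statement, the lower bound follows because any correct algorithm needs more than $n/40-1$ rounds, that is, $\Omega(n)$. The upper bound follows from the footnote's observation. In $O(D)=O(n)$ rounds every vertex learns the whole labeled cycle and computes the same canonical equitable proper $3$-coloring; for example, it colors periodically from the minimum identifier in a canonical direction determined by the identifiers, which is exactly balanced when $3\mid n$ by Observation~\ref{obs:walk}. Since $D=\lfloor n/2\rfloor$, this gives $\Theta(n)=\Theta(D)$.

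The only delicate step is the claim that the output depends solely on the ordered window tuple. This requires the fixed consistent port numbering, which is legitimate because algorithms must be correct for every port numbering. It also requires knowing that the radius-$T$ view on a cycle with $2T+1<n$ is a path whose only varying data are the identifiers. Everything else is direct substitution into Theorem~\ref{thm:det-exact}.
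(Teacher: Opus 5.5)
Your proof is correct and follows essentially the same route as the paper. You restrict to consistently oriented port numberings, so that the output is a fixed function of the oriented window $W(v,\ID,T)$; you take the indicator of a single color to obtain a sure count $K=n/3\notin\{0,n\}$, contradicting Theorem~\ref{thm:det-exact}; and you get the matching upper bound by collecting the whole cycle. Your extra details are sound: the check that $2T+1<n$ makes each window a tuple of distinct identifiers, and the explicit canonical balanced coloring via Observation~\ref{obs:walk}, are points the paper leaves implicit.
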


\begin{proof}
For the lower bound, restrict attention to the instances whose port numbering is consistently oriented, with every vertex's port $1$ pointing to its clockwise neighbor; the algorithm must in particular be correct on these instances. On such instances, since the algorithm is uniform, the output at $v$ is a fixed function of the \emph{oriented} radius-$T$ identifier window $W(v,\ID,T)$. The indicator rule of any single color therefore satisfies the hypothesis of Theorem~\ref{thm:det-exact} with $K = n/3 \notin \{0,n\}$, a contradiction.
This lower bound applies to free colorings and hence also to proper colorings. 

\noindent
The matching $n$-round upper bound is obtained by collecting global information 
to all vertices.
\end{proof}

The proof of Theorem~\ref{thm:det-exact} is purely combinatorial. Let us first set up the necessary machinery.
Throughout, a \emph{$k$-tuple} $\cT=(I_1,\ldots,I_k)$ is a sequence of $k$ distinct identifiers $I_i$ from $[N]$; for a $(2T+1)$-tuple $\hat{\cT}$ we write $\mathrm{head}(\hat{\cT})$ and $\mathrm{tail}(\hat{\cT})$ for its first and last $2T$ entries. A \emph{step} $\cT \to \cT'$ between $2T$-tuples $\cT$ and $\cT'$ is a $(2T+1)$-tuple $\hat{\cT}$ with $\mathrm{head}(\hat{\cT}) = \cT$ and $\mathrm{tail}(\hat{\cT}) = \cT'$ (so if $\cT=(I_1,\ldots,I_{2T})$ then $\cT' = (I_2, \dots, I_{2T}, z)$ for some identifier $z$ not in $\cT$).
A \emph{chain} is a sequence of consecutive steps; a \emph{loop $\cL$ of length $\mu$} from $\cT_0$ is a chain $\cT_0 \to \cT_1 \to \dots \to \cT_\mu = \cT_0$, and its \emph{weight} is $\omega(\cL) = \sum_{k=0}^{\mu-1} F(\hat{\cT}_k)$, the sum of $F$ over its steps.
Note that the $n$ windows of any identifier assignment $\ID$ for $\cC_n$ form a loop of length $n$ and weight $f(\ID)$, and that for every $2T$-tuple $\cT$ and every $\mu$ with $2T + 1 \le \mu \le n$, loops of length $\mu$ from $\cT$ exist (arrange $\cT$ followed in a cycle by $\mu - 2T$ fresh identifiers; those are available since $\mu \le n < N$, and only lengths $\le n$ are used below).

An \emph{arrangement} is a tuple $a = (a_1, \dots, a_\ell)$ with $4T + 2 \le \ell \le n$; its \emph{inner sum} is
\[P(a) \;=\; \sum_{i = T+1}^{\ell - T} F\big(a_{i-T}, \dots, a_{i+T}\big).\]
For arrangements, $\mathrm{head}(a)$ and $\mathrm{tail}(a)$ denote the first and last $2T$ entries.

Consider some $K \notin \{0,n\}$, and assume, towards contradiction, that $f(\ID)=K$ on all assignments $\ID$. The following four lemmas are proved under this standing assumption.

A \emph{context} is a sequence $\cI = (I_{-2T}, \dots, I_{-1};\, I_1, \dots, I_{2T})$ of $4T$ distinct identifiers from $[N]$. For an identifier $x$ avoiding $\cI$, we denote
\[
\cI[x] \;=\; (I_{-2T}, \dots, I_{-1},\, x,\, I_1, \dots, I_{2T}) ,
\]
and we say that an assignment $\ID$ \emph{conforms to} $\cI[x]$ if it places the sequence $\cI[x]$ on the consecutive vertices $v_{-2T}, \dots, v_{2T}$ (its values outside this block are unconstrained). Define
\[
\Psi(\cI; x) \;=\; \sum_{i=-T}^{T} F\big(W(v_i,\ID,T)\big),
\]
where $\ID$ is any assignment conforming to $\cI[x]$.
Note that the value is well defined, namely, independent of the choice of the conforming assignment $\ID$: every window in the sum is contained in the block $v_{-2T}, \dots, v_{2T}$, on which $\ID$ reads $\cI[x]$. Our first lemma is the following.

\begin{lemma}
\label{lemma:exchange}
$\Psi(\cI; x) = \Psi(\cI; y)$ for every context $\cI$ and all identifiers $x, y$ avoiding $\cI$.
\end{lemma}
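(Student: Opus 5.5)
Embed the context $\cI[x]$ into a full identifier assignment of $\cC_n$, and compare it with the assignment obtained by replacing $x$ by $y$. Everything outside the block stays fixed. The sure-count hypothesis then makes the two global counts equal, and the windows that differ are exactly those that see the center.

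\textbf{Step 1: build the two assignments.} Place $\cI[x]$ on $v_{-2T},\dots,v_{2T}$, a block of $4T+1 \le n$ vertices. On the remaining $n-4T-1$ vertices, place distinct identifiers that avoid $\cI$, $x$ and $y$. The universe suffices: we need at most $(n-4T-1) + 4T + 2 = n+1$ distinct identifiers, and $N \ge 2n+2$. Call the result $\ID_x$. Let $\ID_y$ be $\ID_x$ with $x$ at $v_0$ replaced by $y$. Since $y$ avoids $\cI$ and was not used outside the block, $\ID_y$ has distinct identifiers and conforms to $\cI[y]$. (If $x=y$ there is nothing to prove.)

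\textbf{Step 2: cancel the unchanged windows.} By the standing assumption, $f(\ID_x)=f(\ID_y)=K$. The window $W(v_i,\cdot,T)$ contains $v_0$ exactly when the cyclic distance from $v_i$ to $v_0$ is at most $T$, that is, for $i\in\{-T,\dots,T\}$. Here we use $n\ge 2T+1$, so that these $2T+1$ indices are distinct mod $n$ and no other window wraps around to $v_0$. All other windows read identical tuples under $\ID_x$ and $\ID_y$, so their $F$-values cancel in $f(\ID_x)-f(\ID_y)=0$. What remains is
\[
\sum_{i=-T}^{T} F\big(W(v_i,\ID_x,T)\big) = \sum_{i=-T}^{T} F\big(W(v_i,\ID_y,T)\big).
\]
By the well-definedness of $\Psi$ noted above, this equation is exactly $\Psi(\cI;x)=\Psi(\cI;y)$.

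\textbf{Main obstacle.} The only delicate points are counting: having enough fresh identifiers so that both assignments are legal, and having $n$ large enough that the windows seeing $v_0$ are exactly the $2T+1$ listed ones. Both follow from $N\ge 2n+2$ and $T\le n/40-1$.
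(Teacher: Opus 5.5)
Your proposal is correct and follows the paper's proof essentially verbatim: embed $\cI[x]$ into a full assignment with fresh identifiers avoiding $\cI\cup\{x,y\}$, swap $x$ for $y$, cancel the windows that do not contain $v_0$ using the sure-count hypothesis, and identify the remaining $2T+1$ windows with $\Psi$. Your identifier count ($n+1$ identifiers needed in total) is in fact tighter than the paper's $N\ge 2n\ge n+4T+3$; this slack is exactly what Appendix~\ref{app:universe} later exploits to run the same argument at $N\ge n+1$.
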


\begin{proof}
Build an assignment $\ID$ of $\cC_n$ conforming to $\cI[x]$, filling the other $n - 4T - 1$ positions with distinct identifiers avoiding $\cI \cup \{x, y\}$ (which is possible since $N \ge 2n \ge n + 4T + 3$). Let $\ID'$ be $\ID$ with $x$ replaced by $y$; then $\ID'$ conforms to $\cI[y]$.
For an identifier assignment $\ID^*$ and a vertex $z$, set
\begin{align*}
f^+(\ID^*,z) &\;=\; \sum_{i:\, z\in \Gamma_T(v_i)} F\big(W(v_i,\ID^*,T)\big),\\
f^-(\ID^*,z) &\;=\; f(\ID^*)-f^+(\ID^*,z).
\end{align*}
By the contradiction hypothesis,
\begin{equation}
\label{eq: f+ f- equalities}
f^+(\ID,v_0)+f^-(\ID,v_0)
\;=\; f(\ID)
\;=\; f(\ID')
\;=\; f^+(\ID',v_0)+f^-(\ID',v_0).
\end{equation}
Every window that does not contain the center vertex $v_0$ has the same ordered identifier window under $\ID$ and $\ID'$, hence $f^-(\ID,v_0)=f^-(\ID',v_0)$. The windows containing $v_0$ are exactly $W(v_i,\ID,T)$ and $W(v_i,\ID',T)$ for $-T \le i \le T$, and therefore $f^+(\ID,v_0)=\Psi(\cI;x)$ and $f^+(\ID',v_0)=\Psi(\cI;y)$. 
Combining with Eq. \eqref{eq: f+ f- equalities}
gives $\Psi(\cI;x)=\Psi(\cI;y)$.
\end{proof}

The 
lemma implies that inner sums are determined by their boundaries, as shown next.

\begin{lemma}
\label{lemma:boundary}
For every $\ell \in [4T+2, n]$ there is a function $H_\ell$ such that $P(a) = H_\ell(\mathrm{head}(a), \mathrm{tail}(a))$ for every arrangement $a$ of length $\ell$.

\end{lemma}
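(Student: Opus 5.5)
The plan is to show that $P(a)$ is unchanged when we replace any single interior entry, meaning an entry not among the first or last $2T$ positions, by an identifier not already present in $a$. Lemma~\ref{lemma:exchange} supplies exactly this. Then a sequence of such single-entry replacements carries any arrangement into any other with the same head and tail.

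\textbf{Step 1: single replacements.} Let $a=(a_1,\dots,a_\ell)$ and fix an interior position $j$ with $2T+1 \le j \le \ell-2T$. The windows that contain position $j$ are centered at $j-T,\dots,j+T$. All of these centers lie in the inner range $[T+1,\ell-T]$, and all of these windows lie inside the block $a_{j-2T},\dots,a_{j+2T}$. Let $\cI$ be the context formed by that block with $a_j$ removed. The contribution of these $2T+1$ windows to $P(a)$ is then exactly $\Psi(\cI;a_j)$. Every other window counted in $P(a)$ avoids position $j$.

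Now let $a'$ be $a$ with $a_j$ replaced by some $y$ that does not occur in $a$. Then
\[
P(a')-P(a)=\Psi(\cI;y)-\Psi(\cI;a_j)=0
\]
by Lemma~\ref{lemma:exchange}.

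\textbf{Step 2: connecting arrangements.} Let $a,b$ be arrangements of length $\ell$ with $\mathrm{head}(a)=\mathrm{head}(b)$ and $\mathrm{tail}(a)=\mathrm{tail}(b)$. We move from $a$ to $b$ in two phases.

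First, replace the interior entries of $a$ one by one with fresh identifiers. These must avoid every identifier occurring in $a$ and in $b$ (in particular the common head and tail), and each other. This gives an arrangement $c$ whose interior consists entirely of new identifiers.

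Second, replace the interior entries of $c$ one by one with the interior entries of $b$. Each value $b_j$ is absent from $c$ at the moment it is inserted. It differs from the head and tail entries because $b$ has distinct entries, and it differs from the fresh identifiers by construction. It also differs from the $b$-values already placed, again by distinctness in $b$.

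Every step preserves $P$ by Step 1, so $P(a)=P(c)=P(b)$. Defining $H_\ell(\mathrm{head}(a),\mathrm{tail}(a))=P(a)$ is therefore well defined.

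\textbf{Main obstacle.} The delicate point is the supply of identifiers.
\begin{itemize}
\item The fresh interior requires up to $\ell-4T$ identifiers outside the at most $2\ell-4T$ identifiers of $a\cup b$. So $N \ge 2\ell \ge 2n$ suffices, and $N\ge 2n+2$ holds.
\item Each application of Lemma~\ref{lemma:exchange} needs its own filling of the cycle. Its proof builds a cyclic assignment from the $4T+1$ block entries and $y$, padded with $n-4T-1$ distinct extra identifiers avoiding them. That is available since $N\ge n+1$.
\end{itemize}
So the count works out. The only thing to check carefully is that the interior position $j$ has its full $2T$-neighborhood on both sides inside $a$, which is guaranteed by $\ell\ge 4T+2$.
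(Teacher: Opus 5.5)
Your Step 1 is the paper's single-replacement argument, and your overall plan (connect two arrangements with equal head and tail by interior replacements) is also the paper's. The gap is in Step 2, at exactly the identifier count you flagged as delicate.

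Phase 1 asks for $\ell-4T$ identifiers that avoid all of $a\cup b$ and each other. Since $|a\cup b|$ can be as large as $2\ell-4T$, this needs
\[
N \;\ge\; (2\ell-4T)+(\ell-4T) \;=\; 3\ell-8T ,
\]
not $N\ge 2\ell$. (Also, $2\ell\le 2n$, not $\ge$.) Under the actual hypothesis the construction fails. Take $\ell=n$, $N=2n+2$, and $a,b$ with disjoint interiors. Then only $N-(2n-4T)=4T+2$ identifiers lie outside $a\cup b$. Phase 1 demands $n-4T\ge 36T+40$ of them, since $n\ge 40(T+1)$. The lemma is needed at every level up to $n$ (Lemma~\ref{lemma:loops} evaluates $H_\ell$ up to $\ell=n$), so Phase 1 as written cannot be carried out where it matters. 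It is also wasteful in that case: when the interiors are disjoint, no evacuation is needed at all.

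There are two repairs. The first is to evacuate only the entries that actually block. In Phase 1, move to identifiers outside $a\cup b$ only those interior entries of $a$ whose values occur in the interior of $b$. If there are $k$ of them, then $|a\cup b|=2\ell-4T-k$, so at least $N-2\ell+4T+k\ge k$ identifiers are available. Afterwards every remaining value in the current arrangement is outside the interior of $b$, so each $b_j$ is absent when you insert it, and your Phase 2 goes through unchanged.

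The second repair is the paper's procedure. Fix mismatched interior positions one at a time. When the target value $b_i$ already sits at another interior position $j$ of the current arrangement, first move that entry to a single identifier lying outside both the current arrangement and $b$, then write $b_i$. This needs only one spare identifier at a time, i.e.\ $N>2\ell-4T$. Each round strictly increases the number of agreeing positions: position $i$ becomes correct, and position $j$ disagreed before and still disagrees.
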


\begin{proof}
First, changing a single \emph{interior} entry $a_i$ ($2T+1 \le i \le \ell-2T$) to any fresh value $z \notin a$ leaves $P(a)$ unchanged: the affected terms of $P$ are exactly the windows centered at $v \in [i-T, i+T] \subseteq [T+1, \ell-T]$, and their sum is $\Psi(\cI;\, a_i) = \Psi(\cI;\, z)$ by Lemma~\ref{lemma:exchange}, where $\cI$ is the context formed by the $2T$ entries of $a$ on each side of position $i$.

Now let $a, a'$ be two arrangements of length $\ell$ with equal heads and tails; we transform $a$ into $a'$ by single interior replacements, each preserving $P$ by the previous paragraph. Maintain a current arrangement $b$, initially $a$, always agreeing with $a'$ on head and tail. While $b \ne a'$, pick an interior position $i$ with $b_i \ne a'_i$ and do the following. If $a'_i$ does not occur in $b$, replace $b_i \leftarrow a'_i$. Otherwise $a'_i$ occurs in $b$ at some position $j \ne i$, and $j$ is interior (the head and tail of $b$ equal those of $a'$, which are disjoint from the interior value $a'_i$); first replace $b_j \leftarrow z$ for some identifier $z \notin b \cup a'$ (such $z$ exists because $|b \cup a'| \le 2\ell - 4T \le 2n < N$, using $N \ge 2n+2$), and then replace $b_i \leftarrow a'_i$. Every replacement keeps all entries distinct and is an interior exchange, hence preserves $P$; and each iteration strictly increases the number of positions at which $b$ agrees with $a'$ (position $i$ becomes correct, and position $j$, which disagreed before, still disagrees). The process therefore terminates with $b = a'$, proving $P(a) = P(a')$. The lemma follows.
\end{proof}

Telescoping the boundary functions yields the following.

\begin{lemma}
\label{lemma:loops}
Fix a $2T$-tuple $A$. 
\\ (a)
For every step $\cT \to \cT'$ via window $\hat{\cT}$ with entries disjoint from $A$, and every $\ell$ with $4T+2 \le \ell \le n-1$,
\[ H_{\ell+1}(A, \cT') \;=\; H_\ell(A, \cT) \;+\; F(\hat{\cT}). \]
(b)
For every loop $\cL$ of length $\mu \le n - 4T - 2$ from basepoint $\cT_0$, with all window entries disjoint from $A$,
\[ \omega(\cL) \;=\; H_{4T+2+\mu}(A, \cT_0) - H_{4T+2}(A, \cT_0). \]
(c)
In particular all loops of the same length $\mu$ from the same basepoint $\cT_0$ have the same weight $\omega$, denoted $\Lambda_{\cT_0}(\mu)$, 
which is an integer in $[0, \mu]$.
\par\smallskip\noindent (d)
$\Lambda_{\cT_0}$ is additive, namely, $\Lambda_{\cT_0}(\mu_1 + \mu_2) = \Lambda_{\cT_0}(\mu_1) + \Lambda_{\cT_0}(\mu_2)$, whenever $\mu_1 + \mu_2 \le n - 4T - 2$.
\end{lemma}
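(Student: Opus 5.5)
The plan is to realize every chain as the sequence of windows of a single long arrangement that begins with the fixed tuple $A$, and to read off the weights from Lemma~\ref{lemma:boundary}.

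\emph{Part (a).} Take an arrangement $a$ of length $\ell$ with $\mathrm{head}(a)=A$ and $\mathrm{tail}(a)=\cT$. Its interior consists of $\ell-4T$ fresh identifiers avoiding $A$, $\cT$ and the new entry $z$ of $\hat{\cT}$; such identifiers exist because $\ell\le n$ and $N\ge 2n+2$. Since $\hat{\cT}$ is disjoint from $A$, appending $z$ gives a valid arrangement $a'$ of length $\ell+1$ with head $A$ and tail $\cT'$. Passing from $a$ to $a'$ adds exactly one new inner window, centered at position $\ell-T+1$, and this window is $(\mathrm{tail}(a),z)=\hat{\cT}$. Hence $P(a')=P(a)+F(\hat{\cT})$, and Lemma~\ref{lemma:boundary} converts this into the claimed identity for $H$. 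The only subtlety is distinctness: $A$ must not meet $\cT$ (true, since the window entries avoid $A$), and the interior fillers must avoid everything else.

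\emph{Part (b).} Apply (a) once to each step of the loop $\cT_0\to\cT_1\to\dots\to\cT_\mu=\cT_0$. Starting from $\ell=4T+2$ and telescoping over the $\mu$ steps gives
\[H_{4T+2+\mu}(A,\cT_\mu)=H_{4T+2}(A,\cT_0)+\omega(\cL).\]
Each application needs $\ell\le n-1$, and the largest length used is $4T+1+\mu\le n-1$, which is exactly the hypothesis $\mu\le n-4T-2$. Using $\cT_\mu=\cT_0$ yields (b).

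\emph{Part (c).} Most of the work lies in choosing $A$. The right side of (b) does not depend on $\cL$, so we need a single $A$ that is disjoint from the windows of both loops being compared. Two loops of length $\mu$ use at most $2T+2\mu\le 2n$ identifiers, so with $N\ge 2n+2$ a disjoint $2T$-tuple $A$ may seem to require $N\ge 2n+2T$, which is not available.

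I would avoid this tightness by comparing each loop with a reference loop built from fresh identifiers that avoid it. Fix $\cT_0$ and a reference loop $\cL^*$ made of $\cT_0$ followed by $\mu-2T$ fresh identifiers. For any loop $\cL$, choose $A$ disjoint from $\cL$ and $\cL^*$. If identifiers are scarce, pass through an intermediate reference loop: choose the fresh parts of $\cL^*$ and of $A$ from the complement of $\cL$, and note that each loop uses at most $\mu\le n-4T-2$ identifiers, which leaves room. Since any two loops are then linked through a chain of such comparisons, all loops of length $\mu$ from $\cT_0$ have a common weight $\Lambda_{\cT_0}(\mu)$. It is an integer in $[0,\mu]$ because it is a sum of $\mu$ values in $\{0,1\}$.

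\emph{Part (d).} Concatenate a loop of length $\mu_1$ from $\cT_0$ with a loop of length $\mu_2$ from $\cT_0$, choosing their fresh identifiers disjoint. The result is a loop of length $\mu_1+\mu_2$ from $\cT_0$ whose weight is the sum of the two weights. By (c), this gives additivity within the range where (c) applies.
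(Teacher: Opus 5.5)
Your parts (a), (b) and (d) follow the paper's proof. In (a) the paper builds the length-$(\ell+1)$ arrangement and takes its prefix, while you build the prefix and append $z$; these are the same argument. In (d), disjoint fresh identifiers are unnecessary, since any two loops from $\cT_0$ concatenate to a loop.

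The problem is in (c), where the obstacle you describe does not exist. You bounded $2T+2\mu\le 2n$, which discards the hypothesis $\mu\le n-4T-2$, and that hypothesis is already needed to invoke (b). With it, two loops of length $\mu$ from $\cT_0$ use at most $2T+2\mu\le 2n-6T-4$ identifiers. A $2T$-tuple $A$ avoiding both therefore exists as soon as $2n-4T-4\le N$, which $N\ge 2n+2$ gives with room to spare. Applying (b) to both loops with this common $A$ gives $\omega(\cL)=\omega(\cL')$ directly; this is exactly the paper's argument.

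Your reference-loop detour would also not help if the difficulty were real. The reference loop $\cL^*$ itself carries $\mu$ identifiers, so finding $A$ disjoint from $\cL\cup\cL^*$ costs as much as finding $A$ disjoint from $\cL\cup\cL'$. If instead $\cL^*$ is chosen depending on $\cL$, you still have to compare two reference loops, at the same cost. The ``chain of comparisons'' linking arbitrary loops is never actually specified. So, as written, (c) either silently uses the count you declared unavailable or does not close.

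Your own remark that each loop uses at most $\mu\le n-4T-2$ identifiers is precisely what makes the direct comparison work. Two loops sharing $\cT_0$ use at most $2\mu-2T$ identifiers together, so $2\mu\le N$ suffices. Replace the detour by this direct count and the proof is complete.
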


\begin{proof}
Let us first prove (a). Take any arrangement $a'$ of length $\ell+1$ with head $A$ and last $2T+1$ entries equal to $\hat{\cT}$, filling the interior with fresh identifiers avoiding $A \cup \hat{\cT}$. Here we use $N \ge 2n+2$ and $\ell + 1 \le n$. The length-$\ell$ prefix of $a'$ is an arrangement with head $A$ and tail $\cT$, and by definition of the inner sum,
\begin{equation}
\label{eq:P+F}
P(a') = P(\mbox{prefix}) + F(\hat{\cT}).
\end{equation}
By Lemma~\ref{lemma:boundary},
\[
P(\mbox{prefix}) = H_\ell(A,\cT)
\qquad\mbox{and}\qquad
P(a') = H_{\ell+1}(A,\cT').
\]
Substituting these two identities in Eq. \eqref{eq:P+F} gives the recursion in (a). The heads and tails are disjoint as sets since $\ell \ge 4T+2$, and $A$ is disjoint from $\hat{\cT} \supseteq \cT \cup \cT'$ by assumption.

Turning to (b), let $\cL = (\cT_0 \to \dots \to \cT_\mu = \cT_0)$. Apply the recursion of (a) along the loop starting from level $\ell^* = 4T+2$:
\[
H_{\ell^*+k+1}(A,\cT_{k+1})
  = H_{\ell^*+k}(A,\cT_k) + F(\hat{\cT}_k),
\qquad 0 \le k \le \mu-1.
\]
Each application uses its own freshly built arrangement, and the levels stay at most $\ell^*+\mu \le n$. Summing these equalities over 
all $k$
telescopes the $H$-terms, giving the loop-weight formula in (b).

For (c), note that the weight $\omega(\cL)$ is a sum of $\mu$ values of $F \in \{0,1\}$, hence an integer in $[0,\mu]$. It depends only on $(\mu,\cT_0)$: given two loops $\cL,\cL'$ of the same length $\mu$ from the same basepoint $\cT_0$, they together use at most $2(\mu+2T) \le 2n-4T-4 \le N-2T$ distinct identifiers, using $\mu \le n-4T-2$ and $N \ge 2n+2$. Hence some $2T$-tuple $A$ is disjoint from the identifiers of both loops. Applying the formula of (b) with this common $A$ gives (c), as
\[
\omega(\cL) = H_{4T+2+\mu}(A,\cT_0)-H_{4T+2}(A,\cT_0) = \omega(\cL').
\]
To prove (d), let $\cL_1$ and $\cL_2$ be loops from $\cT_0$ of lengths $\mu_1$ and $\mu_2$, respectively, where $\mu_1+\mu_2 \le n-4T-2$. Their concatenation $\cL_1\cL_2$ is a loop from $\cT_0$ of length $\mu_1+\mu_2$, and its weight is the sum of the two weights:
\[
\omega(\cL_1\cL_2)=\omega(\cL_1)+\omega(\cL_2).
\]
By (c), these weights depend only on the corresponding lengths. This yields (d), as
\[
\Lambda_{\cT_0}(\mu_1+\mu_2)
  = \omega(\cL_1\cL_2)
  = \omega(\cL_1)+\omega(\cL_2)
  = \Lambda_{\cT_0}(\mu_1)+\Lambda_{\cT_0}(\mu_2).
\qedhere\]
\end{proof}

We shall use the following path operation several times. Let $\cT=(I_1,\ldots,I_{2T})$ and $\cT'=(J_1,\ldots,J_{2T})$ be disjoint $2T$-tuples. The path $\ShiftIn(\cT,\cT')$ is the length-$2T$ chain
\[
\cS_0,\cS_1,\ldots,\cS_{2T},
\qquad
\cS_h=(I_{h+1},\ldots,I_{2T},J_1,\ldots,J_h),
\]
where empty segments are suppressed. Thus $\cS_0=\cT$ and $\cS_{2T}=\cT'$, and the $h$-th step appends $J_{h+1}$ to $\cS_h$. Since the two endpoint tuples are disjoint, all intermediate windows have distinct identifiers.

\paragraph{A range ledger.}
Before proving that the loop weights are linear, we collect the numerical facts that the next two proofs consume; all follow from $T \ge 1$ and $n \ge 40(T+1)$, and this is also where the hypothesis $N \ge 2n+2$ is spent. Denote $M = 4T+2$ for the reference length used below.
\begin{itemize}
\item[(R1)] Loops of a given length from a given basepoint exist for every length in $[2T+1,\, n]$; the loop weight $\Lambda_{\cT}(\mu)$ is defined for $\mu \in [2T+1,\, n-4T-2]$ (Lemma~\ref{lemma:loops}(c)); and additivity (Lemma~\ref{lemma:loops}(d)) is legal whenever both summands are at least $2T+1$ and their sum is at most $n - 4T - 2$.
\item[(R2)] The particular lengths evaluated below all lie in the relevant ranges: $4T$, $M$ and $M+1$ are at least $2T+1$ (as $T \ge 1$); $3M = 12T+6 \le n - 8T - 6$ (as $n \ge 20T + 12$); and $4T + \mu \le n - 8T - 6$ whenever $\mu \le n - 12T - 6$. The conservative range $[2T+1,\, n-16T-10]$ in Lemma~\ref{lemma:linear} is contained in all of these.
\item[(R3)] The universe hypothesis $N \ge 2n+2$ enters only through fresh-identifier counts: the interior replacement of Lemma~\ref{lemma:boundary} needs one identifier outside a set of size at most $2n$; the common disjoint tuple in Lemma~\ref{lemma:loops}(c) needs $2(\mu+2T) + 2T \le N$; and the linking of basepoints in Lemma~\ref{lemma:linear} needs $6T \le N$.
\end{itemize}

We are now ready to show that the loop weights are linear.

\begin{lemma}
\label{lemma:linear}
Suppose $n \ge 40(T+1)$.
There is an integer $\lambda$ such that 
$\Lambda_{\cT}(\mu) = \lambda\,\mu$
for every $2T$-tuple $\cT$ and every $\mu \in [2T+1,\, n - 16T - 10]$.
\end{lemma}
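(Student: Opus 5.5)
The plan has two parts: first show that each $\Lambda_{\cT}$ is linear with an integer slope, then show that the slope does not depend on the basepoint.

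\emph{Linearity for a fixed basepoint.} Fix a $2T$-tuple $\cT$. Lemma~\ref{lemma:loops}(d) gives additivity of $\Lambda_{\cT}$, but only for summands of length at least $2T+1$ (by (R1)). So we cannot simply telescope from $\mu=1$. Instead we use the two consecutive reference lengths $M=4T+2$ and $M+1$, and set
\[
\lambda_{\cT} \;=\; \Lambda_{\cT}(M+1)-\Lambda_{\cT}(M)\in\mathbb{Z}.
\]
Now take any $\mu$ in the range $[2T+1,\, n-16T-10]$. We compare $\mu$ with $\mu+1$ using the reference lengths:
\begin{align*}
\Lambda_{\cT}(\mu+M+1) &= \Lambda_{\cT}(\mu+1)+\Lambda_{\cT}(M) = \Lambda_{\cT}(\mu)+\Lambda_{\cT}(M+1).
\end{align*}
Both splits are legal because all summands are at least $2T+1$ and the total stays at most $n-4T-2$, by (R2). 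It follows that
\[
\Lambda_{\cT}(\mu+1)-\Lambda_{\cT}(\mu)=\lambda_{\cT}
\]
throughout the range. Hence $\Lambda_{\cT}(\mu)=\lambda_{\cT}\mu+c_{\cT}$ for some constant $c_{\cT}$.

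To show $c_{\cT}=0$, we use additivity again:
\[
\Lambda_{\cT}(2M)=2\Lambda_{\cT}(M), \qquad \Lambda_{\cT}(3M)=\Lambda_{\cT}(2M)+\Lambda_{\cT}(M).
\]
Substituting the affine form gives $2\lambda_{\cT}M+c_{\cT}=2\lambda_{\cT}M+2c_{\cT}$, so $c_{\cT}=0$. The length $3M$ is admissible by (R2).

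\emph{Uniformity of the slope.} Take two basepoints $\cT$ and $\cT'$. The plan is to link them by $\ShiftIn$ paths. Suppose first that $\cT$ and $\cT'$ are disjoint, and let $q=\ShiftIn(\cT,\cT')$ and $r=\ShiftIn(\cT',\cT)$. Then $qr$ is a loop of length $4T$ from $\cT$. For any loop $\cL'$ from $\cT'$ of length $\mu$, the closed chain $q\,\cL'\,r$ is a loop from $\cT$ of length $4T+\mu$. Its weight is $\omega(q)+\omega(r)+\Lambda_{\cT'}(\mu)$. Comparing with $\Lambda_{\cT}(4T+\mu)=\lambda_{\cT}(4T+\mu)$ at two different values of $\mu$ eliminates the constant $\omega(q)+\omega(r)$ and yields $\lambda_{\cT'}=\lambda_{\cT}$. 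The needed lengths are in range by (R2).

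If $\cT$ and $\cT'$ overlap, we route through a third $2T$-tuple disjoint from both; this needs $6T\le N$ identifiers, which is available by (R3).

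\emph{Main obstacle.} The delicate point is the distinctness bookkeeping. In the concatenated chain $q\,\cL'\,r$, every window must consist of distinct identifiers. The windows of $q$ and $r$ mix entries of $\cT$ and $\cT'$, and these are distinct because the two tuples are disjoint. The windows of $\cL'$ involve only the identifiers of $\cL'$ itself. A loop is only a chain of steps, so it is not required to be globally injective; therefore the concatenation is legitimate. With that settled, the range checks reduce to (R1)--(R3).
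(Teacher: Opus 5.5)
Your proposal is correct and follows the paper's proof essentially step for step: the same slope $\Lambda_{\cT}(M+1)-\Lambda_{\cT}(M)$ with $M=4T+2$, additivity to obtain a constant increment, a multiple of $M$ to kill the intercept, and $\ShiftIn$ paths between disjoint basepoints plus a third disjoint tuple to make the slope global. The differences are cosmetic. You use $2M$ instead of $3M$, and you cancel $\omega(q)+\omega(r)$ by differencing two loop lengths rather than evaluating it as $\Lambda_{\cT}(4T)=4T\lambda_{\cT}$. Your remark that $\cL'$ need not avoid the identifiers of $\cT$ is also right, since Lemma~\ref{lemma:loops}(c) applies to arbitrary loops.
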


\begin{proof}
Fix a basepoint $\cT$. Let $M = 4T+2$, and set
\[
\lambda(\cT)=\Lambda_{\cT}(M+1)-\Lambda_{\cT}(M).
\]
Note that $\lambda(\cT) \in \mathbb{Z}$, being a difference of integers. Additivity (Lemma~\ref{lemma:loops}(d)) gives, for all $\mu$ with $\mu + 1 + M \le n - 4T - 2$, i.e.\ $\mu \le n - 8T - 5$,
\[
\Lambda_{\cT}(\mu+1) + \Lambda_{\cT}(M)
  = \Lambda_{\cT}(\mu + 1 + M)
  = \Lambda_{\cT}(\mu) + \Lambda_{\cT}(M+1).
\]
Hence the difference
\[
\Lambda_{\cT}(\mu+1)-\Lambda_{\cT}(\mu)=\lambda(\cT)
\]
is constant for $\mu \in [2T+1, n-8T-6]$.
Hence $\Lambda_{\cT}$ is \emph{affine} on this range, namely, 
\begin{equation}
\label{eq:affinity}
\Lambda_{\cT}(\mu) = \Lambda_{\cT}(M) + (\mu - M)\,\lambda(\cT).
\end{equation}
We next show that the affine $\mu$-intercept is zero. By additivity, $\Lambda_{\cT}(3M) = 3\Lambda_{\cT}(M)$. On the other hand, Eq.~\eqref{eq:affinity} may be applied at $\mu=3M$, since $3M = 12T+6$ lies in the range $[2T+1,n-8T-6]$ by $n \ge 40(T+1)$. Hence Eq.~\eqref{eq:affinity} gives
\[
\Lambda_{\cT}(3M)=\Lambda_{\cT}(M)+2M\,\lambda(\cT).
\]
Comparing the two expressions for $\Lambda_{\cT}(3M)$ yields $\Lambda_{\cT}(M) = M \lambda(\cT)$, and thus
\begin{equation}
\label{eq:linear-no-intercept}
\Lambda_{\cT}(\mu) = \lambda(\cT)\,\mu
\end{equation}
for all $\mu \in [2T+1, n - 8T - 6]$.

We next argue that the slope of Eq.~\eqref{eq:linear-no-intercept} is independent of the basepoint. Let $\cT$ and $\cT'$ be disjoint $2T$-tuples, and set
\[
\alpha=\ShiftIn(\cT,\cT')
\qquad\mbox{and}\qquad
\beta=\ShiftIn(\cT',\cT).
\]
Let $L$ be a loop of length $\mu$ from $\cT'$ whose identifiers, outside $\cT'$, are disjoint from $\cT$. Such a loop exists: choose its $\mu-2T$ fresh identifiers from $[N]\setminus(\cT\cup\cT')$, whose size is at least $2n+2-4T \ge \mu-2T$ in the range used below. The concatenations $\alpha L\beta$ and $\alpha\beta$ are loops from $\cT$ of lengths $4T+\mu$ and $4T$, respectively. Hence
\begin{align*}
\Lambda_{\cT'}(\mu)
  &=~ \omega(L)
  ~=~ \omega(\alpha L\beta)-\omega(\alpha\beta)
  ~=~ \Lambda_{\cT}(4T+\mu)-\Lambda_{\cT}(4T)
  \\
  &=~ \lambda(\cT)(4T+\mu)-\lambda(\cT)\cdot 4T
  ~=~ \lambda(\cT)\,\mu .
\end{align*}
The evaluations $\Lambda_{\cT}(4T+\mu)$ and $\Lambda_{\cT}(4T)$ in the third equality are covered by Eq.~\eqref{eq:linear-no-intercept} whenever $4T+\mu \le n-8T-6$, i.e.\ $\mu \le n-12T-6$. It follows that $\lambda(\cT')=\lambda(\cT)$ for disjoint pairs. Finally, any two $2T$-tuples are linked through a third tuple disjoint from both, since they occupy together at most $4T$ identifiers and $N \ge 2n \ge 6T$. Therefore a single integer $\lambda$ satisfies $\Lambda_{\cT}(\mu)=\lambda\mu$ for every basepoint $\cT$ throughout $[2T+1,n-16T-10]$, as required.
\end{proof}

\begin{proof}[Proof of Theorem~\ref{thm:det-exact}]
Assume $n \ge 40(T+1)$, i.e.\ $T \le n/40 - 1$. The proof has three states.

\midinline Constructing the loops:
Take any assignment $\ID$ of $\cC_n$ and let $\cT^*_0, \cT^*_1, \dots, \cT^*_{n-1}$ be the cyclic sequence of its boundary $2T$-tuples, so that the windows of $\ID$ form a loop of length $n$ from $\cT^*_0$ of weight $f(\ID) = K$. Set $j = \lfloor n/2 \rfloor$; the loop splits into the path $p_1: \cT^*_0 \to \cT^*_j$ ($j$ steps) and the path $p_2 : \cT^*_j \to \cT^*_0$ ($n - j$ steps). The tuples $\cT^*_0$ and $\cT^*_j$ occupy disjoint position sets (as $2T \le j$ and $2T \le n - j$, both by $n \ge 40(T+1)$), hence are disjoint as identifier sets. Define
\[
q=\ShiftIn(\cT^*_j,\cT^*_0)
\qquad\mbox{and}\qquad
r=\ShiftIn(\cT^*_0,\cT^*_j).
\]
Then $p_1 q$ is a loop from $\cT^*_0$ of length $j + 2T$, $p_2 r$ is a loop from $\cT^*_j$ of length $(n-j) + 2T$, and $q r$ is a loop from $\cT^*_j$ of length $4T$.

\midinline Checking the lengths:
All three loop lengths are at least $2T+1$ (indeed, $j \ge 1$, $n - j \ge 1$, and $4T \ge 2T+1$ for $T \ge 1$) and at most $\lceil n/2\rceil + 2T \le n/2 + 2T + 1 \le n - 16T - 10$, the last inequality because $n \ge 40(T+1) \ge 36T + 22$. Hence every $\Lambda$-value below is covered by Lemma~\ref{lemma:linear}, and --- this is the point of the basepoint-independence part of that lemma --- all three are evaluated with one and the same integer slope $\lambda$, although the loops use the two different basepoints $\cT^*_0$ and $\cT^*_j$.

\midinline Evaluating and cancelling:
Summing weights along the two decompositions,
\begin{align*}
\Lambda_{\cT^*_0}(j + 2T)
  + \Lambda_{\cT^*_j}\big((n-j) + 2T\big)
= \omega(p_1) + \omega(p_2) + \omega(q) + \omega(r)
= K + \Lambda_{\cT^*_j}(4T).
\end{align*}
Here $\omega(p_1) + \omega(p_2) = f(\ID) = K$ is the guaranteed count, and $\Lambda_{\cT^*_j}(4T) = \omega(q) + \omega(r) = \omega(qr)$ is the closing correction contributed by the two shift-in paths. Evaluating all three $\Lambda$-terms by Lemma~\ref{lemma:linear}, with the common slope $\lambda$, and rearranging,
\begin{align*}
K
~=~ \lambda\,(j + 2T) + \lambda\,(n - j + 2T) - \lambda\cdot 4T
  ~=~ \lambda\, n .
\end{align*}
Since $\lambda$ is an integer and $0 \le K \le n$, we conclude $K \in \{0, n\}$. The theorem follows.
\end{proof}

The proof also yields the following robustness observation.

\begin{remark}
\label{rem:det-robust}
Theorem~\ref{thm:det-exact} uses no properness, no palette structure, and no probabilistic argument; it applies as is to free colorings with any constant number of colors (via any single color's indicator), to any exactly-guaranteed count $K \notin \{0,n\}$ (not only $n/3$), and it is insensitive to all vertices knowing $n$, $\Delta$, the target frequencies, and the identifier assignment's distribution. The restriction $T \ge 1$ is immaterial: a $0$-round rule is a special case of a $1$-round rule, so the case $T = 0$ follows by applying the theorem with $T = 1$ whenever $n \ge 80$.
\end{remark}

It is natural to ask whether the restrictive identifier-universe hypothesis $N \ge 2n+2$ is essential or is an artifact of the proof. Some hypothesis is essential, but its true form is sharper than one might expect: the threshold sits at $N = n + 1$, and $N = n$ is the \emph{unique} universe size at which rigidity fails. We establish the following.

\begin{theorem}
\label{thm:universe}
(a) For $N = n$ with $3 \mid n$, rigidity fails outright, already for $0$-round rules: the rule ``output $(\mathrm{ID} \bmod 3)$'' is a free $3$-coloring all of whose classes surely have size exactly $n/3$ on every assignment; more generally, every $K \in \{0, 1, \dots, n\}$ is the sure count of some $0$-round rule. 
\\
(b) For every $N \ge n+1$ and every $1 \le T \le n/50 - 1$, the only sure counts are $0$ and $n$.
\end{theorem}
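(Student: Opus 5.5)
For part~(a), I would use a direct construction. When $N = n$, every assignment of distinct identifiers to $\cC_n$ is a bijection onto $[N]$. So every identifier occurs exactly once, and for any fixed set $S \subseteq [N]$ with $|S| = K$, the $0$-round rule $F(x) = \mathds{1}[x \in S]$ has sure count exactly $K$. Taking $S$ to be the identifiers congruent to $c \bmod 3$ shows that the rule ``output $\mathrm{ID} \bmod 3$'' gives every class size exactly $n/3$ when $3 \mid n$. This is all of~(a).

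For part~(b), I would rerun the chain of Lemmas~\ref{lemma:exchange}--\ref{lemma:linear} and the proof of Theorem~\ref{thm:det-exact}. The only change is the identifier bookkeeping: I would recheck every place the range ledger (R3) consumed $N \ge 2n+2$, now with $N \ge n+1$.

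\begin{itemize}
\item \emph{Exchange lemma.} This step survives unchanged. Given a context $\cI$ and $x, y$ avoiding it, I fill the other $n-4T-1$ positions with identifiers avoiding $\cI \cup \{x,y\}$. This needs only $N \ge n+1$: $x$ is used and $y$ is the single missing identifier.
\item \emph{Local validity.} Lemma~\ref{lemma:exchange} is a statement about windows only. Hence an interior replacement in an arrangement preserves the inner sum whenever the result still has distinct entries; no global freshness is needed.
\item \emph{Boundary lemma.} The delicate step is the auxiliary fresh $z \notin b \cup a'$ in Lemma~\ref{lemma:boundary}, and the common disjoint tuple $A$ in Lemma~\ref{lemma:loops}(c). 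Both are available once all arrangements and loops involved have length at most roughly $n/2 - 3T$, because then $|b \cup a'| \le 2\ell - 4T < N$ and $2(\mu + 2T) + 2T \le n < N$.
\item \emph{Consequence.} I would therefore restate Lemmas~\ref{lemma:boundary}--\ref{lemma:linear} for lengths $\ell, \mu \le n/2 - c\,T$. Additivity, affinity, zero intercept via $3M$, and basepoint independence via $\ShiftIn$ all go through on this shorter range, which needs $n \ge 50(T+1)$ or so. Linking basepoints needs only $6T \le N$.
\end{itemize}

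The main obstacle is the final splitting step. Halving the cycle loop produces loops of length about $n/2 + 2T$, which now exceeds the certified range. I would instead cut the length-$n$ loop of an actual assignment at $k = 4$ (or any constant $k \ge 3$) equally spaced boundary tuples $\cT^*_{j_0}, \dots, \cT^*_{j_{k-1}}$. These are pairwise disjoint because the positions are far apart. Each arc $p_i$ from $\cT^*_{j_i}$ to $\cT^*_{j_{i+1}}$ is closed by $r_i = \ShiftIn(\cT^*_{j_{i+1}}, \cT^*_{j_i})$, giving loops of length $n/k + 2T$, which is within range. The remaining correction $\sum_i \omega(r_i)$ must also be expressed through $\Lambda$. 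For this I would pair each backward path $r_i$ with the forward path $s_i = \ShiftIn(\cT^*_{j_i}, \cT^*_{j_{i+1}})$. Then:
\begin{itemize}
\item each $s_i r_i$ is a loop of length $4T$;
\item $s_0 s_1 \cdots s_{k-1}$ is a loop of length $2kT$;
\item all of these are short, so each has weight $\lambda$ times its length.
\end{itemize}
Adding up gives
\[
K + \sum_i \omega(r_i) = \lambda\,(n + 2kT)
\]
together with
\[
\sum_i \omega(r_i) = 4kT\lambda - 2kT\lambda,
\]
hence $K = \lambda n$ and $K \in \{0, n\}$. I expect the remaining work to be checking that every loop used (including the $s$-cycle) has length and identifier footprint within the halved range. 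That check is where the constant $50$ comes from.
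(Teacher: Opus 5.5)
Your proposal is correct and follows the paper's own proof. Part (a) is exactly the bijection argument of Proposition~\ref{prop:app-modthree}. Part (b) re-accounts Lemmas~\ref{lemma:exchange}--\ref{lemma:linear} on lengths below roughly $n/2 - O(T)$ and then splits the window loop into short arcs closed by $\ShiftIn$ paths (the paper takes $k=3$, cutting at $0,\lfloor n/3\rfloor,\lfloor 2n/3\rfloor$). The only difference is that the backward paths already concatenate into the loop $r_{k-1}\cdots r_1 r_0$ of length $2kT$ (the paper's $q_3q_2q_1$ of length $6T$), so your auxiliary forward paths $s_i$ are valid but unnecessary.
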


\begin{proof}[Proof sketch]
Part (a): with universe exactly $[n]$, the identifier multiset is \emph{known} to be a permutation of $[n]$, so identifiers carry global cardinality information; the class of $(\mathrm{ID}\bmod 3)$-preimages has size exactly $n/3$ regardless of the assignment. Part (b) is the proof of Theorem~\ref{thm:det-exact} with two changes of accounting: all loops evaluated are kept of length below $\approx n/2$ (the identifier expenditure of Lemmas~\ref{lemma:boundary} and~\ref{lemma:loops} is proportional to the loop length, and this is the only place the hypothesis $N \ge 2n+2$ was used), and the final splitting of the $n$-window loop of an actual assignment is done into \emph{three} arcs of length $\approx n/3$ rather than two of length $\approx n/2$, so that every loop that must be evaluated fits inside the restricted budget; the exchange lemma itself needs only $N \ge n+1$, where the count is exactly tight. The price is only in the constant ($n/50$ for $n/40$). The complete accounting appears in Appendix~\ref{app:universe}.
\end{proof}

Thus, in the local regime, the ``$\mathrm{ID}\bmod 3$'' loophole is not merely closed by large universes: it is the only loophole there is, and one extra identifier already closes it. The hypothesis $N \ge 2n+2$ of Theorem~\ref{thm:det-exact} is retained above only for the sake of its cleaner constant.

\subsection{Stability: imbalance $\imb$ costs $\Omega(n/\imb)$ rounds}
\label{sec:det-stab}

Theorem~\ref{thm:det-exact} characterizes the global counts that a local rule can guarantee \emph{exactly}. We now prove a stability version, essentially stating that a local rule cannot even ensure that its count is confined to a short interval of possible values, unless that interval surrounds an integer multiple of $n$. This settles the deterministic side of the tradeoff on cycles (establishing that imbalance $\imb$ costs $\Omega(n/(\imb+1))$ rounds, which matches the upper bound of Theorem~\ref{thm:balblocks} in Section~\ref{sec:ub} up to the $\log^* n$ factor) and shows that the improvement achieved by the randomized algorithm (Theorem~\ref{thm:balseg}) is unavailable deterministically.

Recall from Section~\ref{sec:det-lb} the local window sum $\Psi(\cI; x)$, defined for a context $\cI$ of $4T$ distinct identifiers and a center $x \notin \cI$. It is the sum of the $2T+1$ values of $F$ on the windows of $\cI[x]$ that contain the center, and it is determined by $F$ alone, not by any identifier assignment. A \emph{flexible gadget} is a pair $\cG=(\cI, \{x, y\})$ with $x, y \notin \cI$ and $\Psi(\cI; x) \ne \Psi(\cI; y)$; it \emph{occupies} the $4T + 2$ identifiers of $\cI^+=\cI \cup \{x, y\}$, and two gadgets $\cG_i,\cG_j$ are \emph{disjoint} if their occupied sets $\cI^+_i,\cI^+_j$ are. Lemma~\ref{lemma:exchange} says that under the surely constant count hypothesis no flexible gadget exists. With slack $\imb$ gadgets may exist, but (this is the dichotomy below) either there are few enough to delete, or many enough to splice (Figure~\ref{fig:splice}). The deletion route rests on the following lemma, which isolates the rigidity content that Case B of the proof below extracts from Section~\ref{sec:det-exact-LB}: exchange-neutrality over a large sub-universe already pins every count to an integer multiple of $n$.

\begin{figure}[t]
\centering
\begin{tikzpicture}[scale=0.65]
\draw[thick] (0,0) circle (1.7);
\foreach \a in {90, 210, 330} {
  \draw[line width=2.5pt] (\a-20:1.7) arc (\a-20:\a+20:1.7);
  \fill (\a:1.7) circle (1.7pt);
}
\node[anchor=south] at (90:1.85) {$\cI_1;\ x_1{\to}y_1$};
\node[anchor=east] at (210:1.85) {$\cI_2;\ x_2{\to}y_2$};
\node[anchor=west] at (330:1.85) {$\cI_3;\ x_3{\to}y_3$};
\node at (30:2.40) {fresh};
\node at (150:2.40) {fresh};
\node at (270:2.05) {fresh};
\end{tikzpicture}
\caption{Case A of the proof of Theorem~\ref{thm:det-stab} (splice), drawn for $\lfloor 2\imb\rfloor+1 = 3$: pairwise disjoint flexible gadgets (contexts $\cI_j$ with centers $x_j$, drawn as dots) are placed, well separated, in a single assignment whose remaining positions carry fresh identifiers. Flipping every center $x_j \to y_j$ moves the count by at least $\lfloor 2\imb\rfloor + 1 > 2\imb$, so the count cannot be confined to any interval of length $2\imb$.}
\label{fig:splice}
\end{figure}

\begin{lemma}
\label{lemma:restricted-rigidity}
Let $n \ge 40(T+1)$, and let $U \subseteq [N]$ with $|U| \ge 2n+2$ be such that $\Psi(\cI; x) = \Psi(\cI; y)$ for every context $\cI$ with entries in $U$ and all identifiers $x, y \in U$ avoiding $\cI$. Then there is an integer $\lambda$ such that every loop with identifiers from $U$ of length $\mu \in [2T+1,\, n - 16T - 10]$ has weight $\lambda\,\mu$; consequently, $f(\ID) = \lambda\, n$ for every assignment $\ID$ of identifiers from $U$ to $\cC_n$.
\end{lemma}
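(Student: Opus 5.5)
The plan is to rerun the chain of Lemmas~\ref{lemma:boundary}, \ref{lemma:loops} and~\ref{lemma:linear} and the final splitting step of Theorem~\ref{thm:det-exact} with the universe $[N]$ replaced throughout by $U$. The key observation is that the sure-count hypothesis was used in Section~\ref{sec:det-exact-LB} only through Lemma~\ref{lemma:exchange}. Here that conclusion is supplied directly as a hypothesis, for contexts and centers drawn from $U$. All other uses of $N$ were fresh-identifier counts, recorded in the range ledger (R3), and each asked for at most $2n+2$ identifiers. These can therefore all be drawn from $U$, since $|U| \ge 2n+2$.

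\textbf{Boundary and loop weights.} First I would restate Lemma~\ref{lemma:boundary} for arrangements with entries in $U$: there is a function $H^U_\ell$ with $P(a) = H^U_\ell(\mathrm{head}(a),\mathrm{tail}(a))$. The proof is the same sequence of interior replacements. The replacement value $z$ is chosen in $U \setminus (b \cup a')$, which is nonempty since $|b\cup a'| \le 2n < |U|$. Each replacement is exchange-neutral by the hypothesis, because the context and both centers lie in $U$.

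Next, Lemma~\ref{lemma:loops} goes through verbatim for loops with identifiers in $U$. In (a), the filler identifiers come from $U$. In (c), the common disjoint tuple $A$ is taken inside $U$; the count $2(\mu+2T)+2T \le 2n+2 \le |U|$ is exactly the one in (R3). This yields well-defined loop weights $\Lambda^U_{\cT}(\mu)$ for basepoints $\cT \subseteq U$, and they are additive.

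\textbf{Linearity.} Lemma~\ref{lemma:linear} then gives a single integer $\lambda$ with $\Lambda^U_\cT(\mu) = \lambda\mu$. The affine-intercept argument uses only additivity. The basepoint-independence argument needs three things: the loop $L$ with fresh identifiers in $U \setminus (\cT \cup \cT')$, of size at least $2n+2-4T$; the \ShiftIn\ paths, which use only identifiers already present; and a linking third tuple inside $U$, which exists since $|U| \ge 6T$. This gives the first claim of the lemma.

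\textbf{Consequence for assignments.} For an assignment $\ID$ with identifiers from $U$, I would repeat the splitting step of the proof of Theorem~\ref{thm:det-exact}. The loops $p_1q$, $p_2r$ and $qr$ use only identifiers of $\ID$, hence lie in $U$. Their lengths lie in $[2T+1, n-16T-10]$ by the same computation. This yields $f(\ID) = \lambda(j+2T) + \lambda(n-j+2T) - 4T\lambda = \lambda n$.

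\textbf{Main obstacle.} The main obstacle is bookkeeping rather than ideas. I must check that no step silently draws a fresh identifier from outside $U$, or applies exchange-neutrality to a context or center outside $U$. The delicate spots are the interior-replacement loop of Lemma~\ref{lemma:boundary}, where $z$ must avoid both $b$ and $a'$, and the choice of $A$ in Lemma~\ref{lemma:loops}(c). I would verify each against (R3) with $N$ replaced by $|U|$.
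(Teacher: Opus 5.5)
Your proposal is correct and is essentially the paper's own proof. The sure-count hypothesis enters Lemmas~\ref{lemma:boundary}, \ref{lemma:loops} and~\ref{lemma:linear} only through the exchange conclusion, which is now assumed over $U$, and every other use of the universe is a fresh-identifier count from (R3) that $|U| \ge 2n+2$ covers; the $j=\lfloor n/2\rfloor$ split of Theorem~\ref{thm:det-exact} then gives $f(\ID)=\lambda n$, with your bookkeeping for $z$, $A$, the filler of $L$ and the linking tuple matching the paper's accounting point for point.
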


\begin{proof}
Lemmas~\ref{lemma:boundary}, \ref{lemma:loops} and~\ref{lemma:linear} use the hypothesis ``$f$ is surely constant'' \emph{only} through the conclusion of Lemma~\ref{lemma:exchange}, which the present hypothesis supplies over $U$; beyond it, their proofs use only that $F$ is integer-valued and that fresh identifiers can be drawn, and $|U| \ge 2n+2$ supplies the latter exactly as $N \ge 2n+2$ did (cf.\ item (R3) of the range ledger in Section~\ref{sec:det-exact-LB}): the interior-replacement step of Lemma~\ref{lemma:boundary} needs one identifier of $U$ outside a set of size at most $2n$; the common-disjoint-tuple step of Lemma~\ref{lemma:loops} needs $|U| \ge (2n - 4T - 4) + 2T$; and the linking step of Lemma~\ref{lemma:linear} needs $|U| \ge 6T$. Hence the three lemmas hold with $[N]$ replaced by $U$, providing the integer $\lambda$. For the consequence, take an assignment $\ID$ of identifiers from $U$ and split its $n$-window loop at $j = \lfloor n/2\rfloor$ exactly as in the proof of Theorem~\ref{thm:det-exact}; the three auxiliary loops carry identifiers from $U$ and have lengths in the admissible range, so
\[
f(\ID) \;=\; \lambda\,(j + 2T) + \lambda\,(n - j + 2T) - \lambda \cdot 4T \;=\; \lambda\, n .
\]
(The proof of Theorem~\ref{thm:det-exact} used the constancy of $f$ a second time only to call this value $K$; here we simply keep $f(\ID)$.)
\end{proof}

We can now establish the stability theorem.

\begin{theorem}
\label{thm:det-stab}
Let $\imb \ge 0$ be a real number (an integer in all our applications), let $1 \le T \le \frac{n}{40(\imb+1)} - 1$, and let $N \ge 3n + 2$. Let $F$ map $(2T+1)$-tuples of distinct identifiers from $[N]$ to $\{0,1\}$, and suppose that for some real number $K$, every assignment $\ID$ of distinct identifiers to $\cC_n$ satisfies
\[
|f(\ID) - K| \;\le\; \imb,
\qquad
f(\ID) = \sum_{v} F\big(W(v,\ID,T)\big),
\]
where $f$ is as in Theorem~\ref{thm:det-exact}.
Then $|K - \lambda\, n| \le \imb$ for some integer $\lambda$. In particular, no such $F$ exists for $K = n/3$ and $\imb < n/3$.
\end{theorem}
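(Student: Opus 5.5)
The plan is a dichotomy on the number of pairwise disjoint flexible gadgets, with $m=\lfloor 2\imb\rfloor+1$ as the threshold. Greedily choose a maximal family $\cG_1,\dots,\cG_k$ of pairwise disjoint flexible gadgets with identifiers in $[N]$. Case A will cover $k\ge m$, and Case B will cover $k<m$.

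\textbf{Case A (splice).} Suppose $k \ge m$. Build one assignment $\ID$ of $\cC_n$ that places the blocks $\cI_1[x_1],\dots,\cI_m[x_m]$ on $m$ well-separated arcs of length $4T+1$. Consecutive blocks should be at cyclic distance more than $2T$, so that the window sets containing the different centers are disjoint and no window containing a center meets another block. Fill the remaining positions with fresh identifiers avoiding all occupied sets. The room needed is $m(4T+1+2T)\le n$, and this follows from $T\le n/(40(\imb+1))-1$ together with $m\le 2\imb+1$. The identifier count needed is at most $n+m$, which is fine since $N\ge 3n+2$. Now flip the centers one at a time, $x_j\to y_j$. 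Each flip changes only the $2T+1$ windows containing that center, so it changes $f$ by exactly $\Psi(\cI_j;y_j)-\Psi(\cI_j;x_j)$. Orient each gadget (swapping $x_j,y_j$ if needed) so that this change is positive. Then it is an integer of size at least $1$, and all flips together raise $f$ by at least $m>2\imb$. This contradicts $|f-K|\le\imb$ on both assignments. So Case A cannot occur.

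\textbf{Case B (delete).} Suppose $k<m$. Let $X$ be the union of the occupied sets, so $|X|\le (2\imb+1)(4T+2)$, and set $U=[N]\setminus X$. By maximality, every context in $U$ with centers $x,y\in U$ is exchange-neutral; otherwise it would be a flexible gadget disjoint from all chosen ones. The bound $|X|\le (2\imb+1)(4T+2)$ is at most about $n/5$ by the hypothesis on $T$. Hence $|U|\ge 3n+2-|X|\ge 2n+2$. Lemma~\ref{lemma:restricted-rigidity} then applies over $U$ and gives an integer $\lambda$ with $f(\ID)=\lambda n$ for every assignment drawn from $U$. Such assignments exist because $|U|\ge n$. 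By hypothesis $|f(\ID)-K|\le\imb$, so $|K-\lambda n|\le\imb$.

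The final claim follows directly. If $K=n/3$, then $|n/3-\lambda n|\ge n/3>\imb$ for every integer $\lambda$. The main obstacle is the bookkeeping in Case A. There we must check that the separation and length constraints fit inside $n$ under the stated bound on $T$. We must also check that the identifier budget suffices: the fresh fillers in Case A, and the $|U|\ge 2n+2$ requirement in Case B, which is exactly why $N\ge 3n+2$ is needed here instead of $2n+2$.
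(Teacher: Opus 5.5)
Your proposal is correct and follows essentially the same route as the paper: the splice-or-delete dichotomy at threshold $\lfloor 2\imb\rfloor+1$, with Case A refuted by flipping the gadget centers (each flip contributing a positive integer) and Case B reduced to Lemma~\ref{lemma:restricted-rigidity} over $U=[N]\setminus X$ with $|U|\ge 2n+2$. Your deviations are cosmetic: you split on the size of one greedily chosen maximal family, you leave $2T$-gaps between blocks (unnecessary, since every window through a center already lies in its own block), and you flip the centers sequentially. The one point to state explicitly is that the hypothesis on $T$ gives $n\ge 40(T+1)$, which the lemma also requires.
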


\begin{proof}
The hypothesis on $T$ can be rewritten as $(\imb+1)(T+1) \le n/40$; since $T \ge 1$ it gives $\imb + 1 \le n/80$, and since $\imb \ge 0$ it gives $n \ge 40(T+1)$. We distinguish two cases according to the maximum number of pairwise disjoint flexible gadgets.

\inline Case A: \emph{there exist 
pairwise disjoint flexible gadgets} $\cG_j=(\cI_j, \{x_j, y_j\})$, $j = 1, \dots, \lfloor 2\imb\rfloor+1$. 
\\
Orient each pair so that $\Psi(\cI_j; x_j) < \Psi(\cI_j; y_j)$. Build an assignment $\ID_0$: for each $j$, place on the $4T+1$ consecutive positions $(j-1)(4T+1), \dots, j(4T+1) - 1$ the sequence $\cI_j[x_j]$, so that the center $x_j$ sits at position $i_j = (j-1)(4T+1) + 2T$; fill the remaining positions with fresh distinct identifiers avoiding all occupied ones. This is possible, as (a) the blocks fit, since $(\lfloor 2\imb\rfloor+1)(4T+1) \le (2\imb+1)(4T+1) \le 8(\imb+1)(T+1) \le n/5$, and (b) the identifiers suffice, since the cycle carries $n$ identifiers and the $\lfloor 2\imb\rfloor+1$ centers $y_j$ are reserved, with $n + 2\imb + 1 \le n + n/40 \le N$. Let $\ID_1$ agree with $\ID_0$ except that $\ID_1(i_j) = y_j$ for every $j$.
We claim that
\begin{equation}
\label{eq:ID0-ID1}
f(\ID_1) - f(\ID_0) \;=\; \sum_{j=1}^{\lfloor 2\imb\rfloor+1} \big( \Psi(\cI_j; y_j) - \Psi(\cI_j; x_j) \big).
\end{equation}
Indeed, denote $W_s(\ID)=W(v_s,\ID,T)$, with indices taken modulo $n$. For each gadget center $i_j$, set
\[
A_j=\{i_j-T,\ldots,i_j+T\},
\qquad
B_j=\{i_j-2T,\ldots,i_j+2T\},
\]
again as cyclic intervals. The assignments $\ID_0$ and $\ID_1$ differ exactly at the positions $i_j$, and therefore $W_s(\ID_0) \ne W_s(\ID_1)$ can occur only when $s \in \bigcup_j A_j$. The sets $A_j$ are pairwise disjoint: consecutive centers are $4T+1>2T$ apart, and the cyclic gap between the last center and the first is at least
\[
n-(2\imb+1)(4T+1) \ge 4n/5 > 2T.
\]
Moreover, for every $s \in A_j$, the window $W_s(\ID_b)$ is contained in the block $B_j$, for $b \in \{0,1\}$. On this block, $\ID_0$ reads $\cI_j[x_j]$ and $\ID_1$ reads $\cI_j[y_j]$. Hence
\[
\sum_{s\in A_j}
\big(F(W_s(\ID_1))-F(W_s(\ID_0))\big)
  = \Psi(\cI_j;y_j)-\Psi(\cI_j;x_j).
\]
Summing this identity over $j$ proves Eq.~\eqref{eq:ID0-ID1}.

Each summand in the right-hand side of Eq.~\eqref{eq:ID0-ID1} is a positive integer, hence at least $1$, so $f(\ID_1) - f(\ID_0) \ge \lfloor 2\imb\rfloor + 1 > 2\imb$. But the hypothesis gives $f(\ID_0) \ge K - \imb$ and $f(\ID_1) \le K + \imb$, so $f(\ID_1) - f(\ID_0) \le 2\imb$, contradiction. Case A is therefore impossible under the hypothesis.

\inline Case B: \emph{every family of pairwise disjoint flexible gadgets has size at most $\lfloor 2\imb\rfloor$.} 
\\
Fix a maximal such family $\{\cG_1,\ldots,\cG_\varphi\}$, $\varphi\le \lfloor 2\imb\rfloor$, and denote by $X=\bigcup_{j=1}^\varphi \cI^+_j \subseteq [N]$ the union of its occupied sets, so $|X| \le \lfloor 2\imb\rfloor(4T+2) \le 2\imb\,(4T+2) \le 8(\imb+1)(T+1) \le n/5$; set $U = [N] \setminus X$, so $|U| \ge 3n + 2 - n/5 \ge 2n+2$. By maximality, every flexible gadget $\cG$ intersects $X$.
Consequently, for every context $\cI$ with entries in $U$ and all centers $x, y \in U$ avoiding $\cI$,
$\Psi(\cI; x) \;=\; \Psi(\cI; y)$,
for otherwise, $(\cI, \{x, y\})$ would be a flexible gadget disjoint from $X$, extending the family.

Thus $U$ satisfies the hypothesis of Lemma~\ref{lemma:restricted-rigidity} (and $n \ge 40(T+1)$, as noted at the start of the proof), which yields an integer $\lambda$ with $f(\ID) = \lambda\, n$ for every assignment $\ID$ of identifiers from $U$ to $\cC_n$; such assignments exist, as $|U| \ge n$. The confinement hypothesis applied to one of them gives $|K - \lambda n| = |K - f(\ID)| \le \imb$, as claimed. Finally, for $K = n/3$ we have $|n/3 - \lambda n| \ge n/3$ for every integer $\lambda$, so the conclusion forces $\imb \ge n/3$. The theorem follows.
\end{proof}

Combining the stability (Theorem \ref{thm:det-stab}) with the upper bound yields the following.

\begin{corollary}[Theorems~\ref{thm:intro-ring-lb}(b) and~\ref{thm:intro-cycle-ub}(b), restated: the deterministic tradeoff is tight]
\label{cor:det-stab}
Let $\imb$ be an integer with $0 \le \imb < n/3$ and let $N \ge 3n+2$. Every deterministic $\LOCAL$ algorithm on $\cC_n$ with identifiers from $[N]$ that outputs, on every assignment, a free $3$-coloring with all three color classes of size $n/3 \pm \imb$ requires more than $\frac{n}{40(\imb+1)} - 1$ rounds. For $3 \le \imb < n/3$, the deterministic $\LOCAL$ complexity of $\imb$-additively balanced proper $3$-coloring of $\cC_n$ is $\Theta(n/\imb)$ up to a $\log^* n$ factor.
\end{corollary}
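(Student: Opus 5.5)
The corollary has two parts, the lower bound and the two-sided complexity statement, and I will prove them in that order.

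\emph{Lower bound.} I follow the proof of Corollary~\ref{cor:det-exact} almost verbatim. Suppose a deterministic $T$-round algorithm with $1 \le T \le \frac{n}{40(\imb+1)} - 1$ always outputs a free $3$-coloring with all classes of size $n/3 \pm \imb$.

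First, restrict attention to consistently oriented port numberings, where port $1$ always points clockwise. On these instances, uniformity makes the output at $v$ a fixed function of the oriented window $W(v,\ID,T)$.

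Next, fix one color and let $F$ be its indicator as a function of the window. Then $f(\ID)$ is that color's frequency, so $|f(\ID) - n/3| \le \imb$ on every assignment. Theorem~\ref{thm:det-stab} applies with $K = n/3$ and $N \ge 3n+2$. It yields an integer $\lambda$ with $|n/3 - \lambda n| \le \imb$, which forces $\imb \ge n/3$ and contradicts $\imb < n/3$.

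Two edge cases remain:
\begin{itemize}
\item If $\frac{n}{40(\imb+1)} - 1 < 1$, the claim "more than $\frac{n}{40(\imb+1)}-1$ rounds" is vacuous except possibly for $T=0$.
\item For $T = 0$, note that a $0$-round rule is also a $1$-round rule, as in Remark~\ref{rem:det-robust}. So $T=0$ is excluded whenever $1$ lies in the admissible range, and otherwise the bound is trivially at most $0$ rounds.
\end{itemize}
Properness is never used, so the bound covers free colorings and a fortiori proper ones.

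\emph{Tightness.} For $3 \le \imb < n/3$, the upper bound is the deterministic $O((n/\imb)\log^* n)$-round algorithm of Theorem~\ref{thm:intro-cycle-ub}(b), proved in Section~\ref{sec:ub} as Theorem~\ref{thm:balblocks}; I take it as given.

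The lower bound gives $\Omega(n/(\imb+1)) = \Omega(n/\imb)$ rounds for proper colorings, since $\imb \ge 3$. Here I should check that $\frac{n}{40(\imb+1)} - 1 = \Omega(n/\imb)$ in this regime. When $\imb$ is close to $n/3$ this quantity can be negative, but the trivial $\Omega(1)$ bound then matches $n/\imb = O(1)$, so $\Theta(n/\imb)$ up to a $\log^* n$ factor holds throughout.

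The only delicate point is this bookkeeping at the ends of the range: the $T \ge 1$ hypothesis of Theorem~\ref{thm:det-stab}, and the regime where $n/\imb$ is constant. Everything else is a direct invocation.
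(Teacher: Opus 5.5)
Your main line is exactly the paper's proof: restrict to consistently oriented instances, take the indicator of one color as the window rule $F$, invoke Theorem~\ref{thm:det-stab} with $K=n/3$, and cite Theorem~\ref{thm:balblocks} for the upper bound. The gap is in your handling of $T=0$.

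\textbf{The case you miss.} The corollary asserts $T > \frac{n}{40(\imb+1)}-1$. When $1 \le \frac{n}{40(\imb+1)} < 2$, this threshold lies in $[0,1)$. The claim then says precisely that a $0$-round algorithm cannot succeed, which is not vacuous. Your phrase ``otherwise the bound is trivially at most $0$ rounds'' gets this backwards.

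\textbf{Why your trick fails there.} Viewing a $0$-round rule as a $T$-round rule helps only if some integer $T\ge 1$ satisfies $T \le \frac{n}{40(\imb+1)}-1$. In this regime none does, so Theorem~\ref{thm:det-stab} cannot be invoked at all.

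\textbf{The paper's fix.} It excludes $0$ rounds directly by pigeonhole, for every $\imb<n/3$ and independently of the range of $T$.
\begin{itemize}
\item On oriented instances, a $0$-round output is a function of the vertex's own identifier alone.
\item Hence some color $b$ is output on at least $N/3 \ge n$ identifier values, using $N \ge 3n+2$.
\item An assignment drawing all $n$ identifiers from these values gives class $b$ size $n > n/3+\imb$.
\end{itemize}
Substituting this for your second bullet completes the proof. The same argument also justifies the ``trivial $\Omega(1)$'' you invoke at the end of the tightness part, when $n/\imb$ is constant.
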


\begin{proof}
As in Corollary~\ref{cor:det-exact}, restrict attention to consistently oriented instances, on which the output at $v_i$ is a fixed function of the oriented radius-$T$ identifier window $W(v_i,\ID,T)$; let $F$ be the indicator of color $1$, so that the guarantee gives $|f(\ID) - n/3| \le \imb$ on every assignment. A $0$-round algorithm is impossible outright: by pigeonhole, some color $b$ is output by at least $N/3 \ge n$ of the radius-$0$ identifier values, and an assignment drawn from those gives class $b$ size $n > n/3 + \imb$. And for $1 \le T \le \frac{n}{40(\imb+1)} - 1$, Theorem~\ref{thm:det-stab} with $K = n/3$ yields an integer $\lambda$ with $|n/3 - \lambda n| \le \imb < n/3$, which no integer satisfies. Hence every correct algorithm has $T > \frac{n}{40(\imb+1)} - 1$.

For $3 \le \imb < n/3$, Theorem~\ref{thm:balblocks} gives a deterministic $O\big((n/\imb)\log^* n\big)$-round algorithm for $\imb$-additively balanced proper $3$-coloring of $\cC_n$. Together with the lower bound above, this gives the stated deterministic tradeoff up to the $\log^* n$ factor.
\end{proof}


\begin{remark}
\label{rem:det-stab}
(a) 
Theorem~\ref{thm:det-stab} holds as is for non-binary rules $F$, with values in $\{0, 1, \dots, \vartheta\}$. The flip effects in Case A are still nonzero integers, and Case B uses only integrality of $F$. Likewise, the robustness of Theorem~\ref{thm:det-exact} recorded in Remark~\ref{rem:det-robust} carries over: free colorings, any constant number of colors, any real target $K$ (note the theorem does not require $K$ to be an integer; this is used for two-valued exact targets in Remark~\ref{rem:diam-div}), and full knowledge of $n$ and the targets.
\\
(b) 
The constants are not optimized; the universe requirement $N \ge 3n+2$, versus $N \ge 2n+2$ for exactness, pays for the deleted hitting set of Case B. The identifier-universe constant can be improved (to $N \ge 2n+2+4\imb(2T+1)$, by re-accounting the deleted gadgets); details are omitted.
\\
(c)
Imbalance $\imb$ costs $\tilde\Theta(n/\imb)$ time deterministically, but only $\tilde O(n/\imb^2)$ randomized (Theorem~\ref{thm:balseg}): an unconditional polynomial deterministic-vs-randomized separation (e.g., of factor $\tilde\Theta(n^{1/3})$ at $\imb = n^{1/3}$) for a natural problem on the cycle, whereas for LCLs on cycles the deterministic and randomized complexity classes coincide~\cite{NS95,Naor91,CKP19}. In this regime, cancellation is available to randomized algorithms but not to deterministic ones; whether the randomized upper bound is optimal is an open problem (see Section~\ref{sec:open}).
\end{remark}

The tight tradeoff 
yields a complexity continuum.

\begin{corollary}
\label{cor:continuum}
For every fixed $\beta \in (0, 1)$ let $\Pi_\beta$ denote the problem of computing a proper $3$-coloring of $\cC_n$ with $|\freq(c) - n/3| \le n^\beta$ for all $c$.
\\
(a)
The deterministic $\LOCAL$ complexity of $\Pi_\beta$ is $\tilde\Theta(n^{1-\beta})$, tight up to a $\log^* n$ factor: $O(n^{1-\beta}\log^* n)$ rounds suffice by Theorem~\ref{thm:balblocks}, and $\Omega(n^{1-\beta})$ rounds are required by Corollary~\ref{cor:det-stab} (both apply throughout $3 \le \imb < n/3$, hence for every fixed $\beta \in (0,1)$ and all large $n$).
\\
(b)
$\Pi_\beta$ is solvable w.h.p.\ in $O(n^{1-2\beta}\log n\log^* n + \log^* n)$ rounds (Theorem~\ref{thm:balseg}).
\end{corollary}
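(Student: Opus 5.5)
The statement to prove is Corollary~\ref{cor:continuum}. I would prove it by specializing the already-established tradeoffs to $\imb = n^\beta$. The only real work is checking that this choice of $\imb$ lies in the ranges where those results apply.

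\textbf{Part (a).} Fix $\beta\in(0,1)$ and set $\imb = n^\beta$. Corollary~\ref{cor:det-stab} is stated for integer $\imb$, so I would use $\imb' = \lfloor n^\beta\rfloor$ and apply monotonicity of the problem in the slack. For all sufficiently large $n$ we have $3 \le \imb' < n/3$, because $n^\beta \to \infty$ and $n^\beta = o(n)$.

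For the upper bound, Theorem~\ref{thm:balblocks}, in the form of Theorem~\ref{thm:intro-cycle-ub}(b), gives a deterministic algorithm running in
\[
O\big((n/\imb')\log^* n\big) = O\big(n^{1-\beta}\log^* n\big)
\]
rounds. Its output has imbalance at most $\imb' \le n^\beta$, so it solves $\Pi_\beta$.

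For the lower bound, a proper coloring is in particular a free coloring. Hence any algorithm for $\Pi_\beta$ is an algorithm for free $3$-coloring with classes of size $n/3 \pm n^\beta$. Theorem~\ref{thm:det-stab} allows a real $\imb$, so I can apply it directly with $\imb = n^\beta$ and $K = n/3$, under the universe hypothesis $N \ge 3n+2$ stated there. This forces
\[
T > \frac{n}{40(n^\beta+1)} - 1 = \Omega(n^{1-\beta}).
\]
The upper and lower bounds together give $\tilde\Theta(n^{1-\beta})$, and the gap between them is exactly the $\log^* n$ factor.

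\textbf{Part (b).} I would invoke Theorem~\ref{thm:balseg}, in the form of Theorem~\ref{thm:intro-cycle-ub}(a), with $\imb = n^\beta \ge 1$. The bound is
\[
O\big(\min\{n,\ (n/\imb^2)L\log n + L\}\big), \qquad L = \max\{1,\log^* n\}.
\]
Substituting $n/\imb^2 = n^{1-2\beta}$ gives
\[
O\big(n^{1-2\beta}\log n\log^* n + \log^* n\big)
\]
rounds w.h.p., where the $\min$ with $n$ only helps. When $\beta > 1/2$ the first term is $o(1)$ and the bound collapses to $O(\log^* n)$, which is consistent with the statement.

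\textbf{Main obstacle.} There is little difficulty here. The only care points are the integrality and range conditions: $3 \le \imb < n/3$ for large $n$, $T \ge 1$ in the lower bound (handled via the $0$-round pigeonhole case already covered in Corollary~\ref{cor:det-stab}), and the universe assumption $N \ge 3n+2$. The upper bounds use polynomial $N$, so the lower bound should be read under $N = \Theta(n)$ or any larger polynomial universe. Since the lower bound holds for every $N \ge 3n+2$, this causes no conflict.
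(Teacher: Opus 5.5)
Your proposal is correct and follows the paper's route: you set $\imb = n^\beta$, check that $3 \le \imb < n/3$ for large $n$, and read off Theorem~\ref{thm:balblocks}, Theorem~\ref{thm:det-stab}/Corollary~\ref{cor:det-stab}, and Theorem~\ref{thm:balseg}, handling integrality in the correct direction for each bound. One step is left implicit: before invoking the real-$\imb$ form of Theorem~\ref{thm:det-stab}, you must turn the $\LOCAL$ algorithm into a window rule $F$ by restricting to consistently oriented instances, as Corollary~\ref{cor:det-stab} does. Alternatively, applying Corollary~\ref{cor:det-stab} directly with the integer $\lceil n^\beta\rceil$ gives the same $\Omega(n^{1-\beta})$ bound.
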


By part (a) of the corollary, the problems $\{\Pi_\beta\}$ realize a dense family of deterministic $\LOCAL$ complexities $\tilde\Theta (n^{\alpha})$ for \emph{every} $\alpha \in (0, 1)$ on cycles. 
In contrast, locally checkable labelings (LCLs) admit only the three complexity classes $O(1)$, $\Theta(\log^* n)$ and $\Theta(n)$~\cite{NS95,CKP19,BHKLOS18}. 

Moreover, contrasting parts (a) and (b) of the corollary, for every fixed $\beta \in (0,1)$, the problems $\{\Pi_\beta\}$ exhibit an unconditional polynomial deterministic-vs-randomized separation on cycles, of factor $n^{\min\{\beta,\,1-\beta\}}/\mathrm{poly}\log n$, 
whereas for LCLs on cycles the deterministic and randomized complexities coincide~\cite{NS95,Naor91,CKP19}.

Near-equitable coloring is a locally checkable labeling \emph{plus one global cardinality constraint}; correctness cannot be verified from constant-radius views, and Corollary~\ref{cor:continuum} shows that such problems escape the LCL complexity-gap structure on cycles. This suggests a broader classification program for globally cardinality-constrained LCLs, which we leave for future work.

\section{Exact balance on general graphs: a deterministic $\Omega(D)$ lower bound}
\label{sec:diameter-lb}

Theorem~\ref{thm:det-exact} shows that exact balance costs $\Omega(n)$ rounds on the cycle. On the cycle, however, $D = \Theta(n)$, so the theorem cannot distinguish a \emph{diameter} barrier from a \emph{size} barrier, and it says nothing about graphs of larger degree. The distributed near-equitable suite of~\cite{NP25} pays $\Theta(D + \Delta)$ rounds per invocation of its global accounting primitives in $\CONGEST$, and its concluding section asks whether a cost at this scale is inherent. In this section we apply Theorem~\ref{thm:det-exact} \emph{as a black box} to show that, for \emph{exact} equity, a diameter-scale lower bound is necessary even in $\LOCAL$, hence with unbounded bandwidth, for every maximum degree $\Delta \ge 3$, and at every diameter scale from the Moore-bound minimum $\Theta(\log n/\log\Delta)$~\cite{MS05} up to the connectivity-imposed maximum $\Theta(n/\Delta)$. For $(1\pm\eta)$-equity, by contrast, such a diameter-scale cost provably is not necessary (Remark~\ref{rem:diam-sep}). Section~\ref{sec:mcc} first presents the reduction on a simple graph family with $D=\Omega(\sqrt{n})$, the matched clique cycle; Section~\ref{sec:twisted} then generalizes this family to \emph{twisted fiber products}, which is what removes the $D = \Omega(\sqrt n)$ restriction.

\subsection{Lower bound on matched clique cycles}
\label{sec:mcc}


We start with a definition of the graph family we work with.
For integers $m \ge 3$ and $\cliqsize \ge 2$, the \emph{matched clique cycle} $B_{m,\cliqsize}$ has vertex set $\mathbb{Z}_m \times [\cliqsize]$ and two kinds of edges: \emph{clique edges} $\{(i,a),(i,b)\}$ for all $i \in \mathbb{Z}_m$ and $a \ne b$, and \emph{matching edges} $\{(i,a),(i+1,a)\}$ for all $i \in \mathbb{Z}_m$ and $a \in [\cliqsize]$ (ring arithmetic modulo $m$). We denote the $i$-th \emph{clique} by $Q_i = \{i\}\times[\cliqsize]$. Equivalently, $B_{m,\cliqsize}$ is the Cartesian product $\cC_m \square K_{\cliqsize}$; contracting every clique $Q_i$ gives the quotient cycle $\cC_m$.
See Figure~\ref{fig:cliquecycle}.

\begin{figure}[t]
\centering
\begin{tikzpicture}[scale=0.55]
\foreach \i in {0,...,7} {
  \foreach \r in {1,2,3} {
    \node[circle,fill,inner sep=1.3pt] (v\i\r) at ({45*\i}:{1.1+0.45*\r}) {};
  }
}
\foreach \i in {0,...,7} {
  \draw (v\i1) -- (v\i2) -- (v\i3);
  \draw (v\i1) to[bend right=40] (v\i3);
}
\foreach \i [evaluate=\i as \j using {int(mod(\i+1,8))}] in {0,...,7} {
  \foreach \r in {1,2,3} { \draw (v\i\r) -- (v\j\r); }
}
\end{tikzpicture}
\caption{The matched clique cycle $B_{8,3}$, composed of
8 triangles (drawn radially) arranged in a ring, with consecutive triangles joined by perfect matchings. Every edge changes the ring coordinate by at most one, so radius-$T$ balls project onto radius-$T$ windows of the quotient cycle $\cC_8$.}
\label{fig:cliquecycle}
\end{figure}

The basic parameters of the family are summarized in the following straightforward observation.

\begin{observation}
\label{obs:clique-cycle}
$B_{m,\cliqsize}$ is $(\cliqsize+1)$-regular on $n = m\cliqsize$ vertices, so $\Delta + 1 = \cliqsize+2$ and $\sigma = m\cliqsize/(\cliqsize+2)$, which is an integer whenever $(\cliqsize+2) \mid m$. Its distances are $\dist\big((i,a),(j,b)\big) = d_{\cC_m}(i,j) + \mathds{1}[a \ne b]$ for $(i,a) \ne (j,b)$, where $d_{\cC_m}$ is the cyclic distance on $\mathbb{Z}_m$; hence $\mathrm{diam}(B_{m,\cliqsize}) = \lfloor m/2 \rfloor + 1 =: D$.
\end{observation}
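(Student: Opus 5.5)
The observation has four parts: regularity, the vertex count and $\sigma$, the distance formula, and the diameter. I will verify them in that order.

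\emph{Regularity and the count.} Each vertex $(i,a)$ has $\cliqsize-1$ clique neighbours $(i,b)$ with $b\ne a$, and two matching neighbours $(i\pm1,a)$. Since $m\ge 3$, these two matching neighbours are distinct from each other, and they lie outside $Q_i$. Hence the degree is $\cliqsize+1$ and $\Delta+1=\cliqsize+2$. The vertex set is $\mathbb{Z}_m\times[\cliqsize]$, so $n=m\cliqsize$. Therefore
\[
\sigma=\frac{n}{\Delta+1}=\frac{m\cliqsize}{\cliqsize+2},
\]
which is an integer whenever $(\cliqsize+2)\mid m$.

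\emph{Distance formula.} I first note that the projection $\pi(i,a)=i$ sends every edge to either a cycle edge or a single point. A path therefore has length at least the number of its matching edges, which is at least $d_{\cC_m}(i,j)$. Matching edges preserve the second coordinate. So if $a\ne b$, the path also needs at least one clique edge, giving length at least $d_{\cC_m}(i,j)+1$. For the matching upper bound, I walk along matching edges in the shorter direction around the ring from $(i,a)$ to $(j,a)$, and then, if $a\ne b$, take the single clique edge $(j,a)(j,b)$. This path has length $d_{\cC_m}(i,j)+\mathds{1}[a\ne b]$.

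\emph{Diameter.} By the distance formula, the maximum distance is $\max_{i,j} d_{\cC_m}(i,j)+1=\lfloor m/2\rfloor+1$. This maximum is attained by choosing $d_{\cC_m}(i,j)=\lfloor m/2\rfloor$ and $a\ne b$, which is possible because $\cliqsize\ge 2$.

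None of these steps poses a real obstacle. The only delicate point is the lower bound in the distance formula: I rely on the fact that every clique edge keeps the ring coordinate fixed and every matching edge keeps the clique coordinate fixed.
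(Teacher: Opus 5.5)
Your proposal is correct and matches the paper, which states this observation without proof and relies on the same fact you use (also invoked in its figure caption): clique edges fix the ring coordinate and matching edges fix the clique coordinate. Your projection argument for the lower bound, the explicit path for the upper bound, and the use of $m\ge 3$ and $\cliqsize\ge 2$ for the degree count and for attaining the diameter are exactly the routine checks the paper leaves implicit.
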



We get the following result.

\begin{theorem}
\label{thm:diam-lb}
Let $\cliqsize \ge 2$, let $m \ge 80$ be a multiple of $\cliqsize+2$, and set $\sigma = m\cliqsize/(\cliqsize+2)$ and $n = m\cliqsize$. Let $1 \le T \le m/40 - 1$, and let the identifier universe be $[N_B]$ for any $N_B \ge 2n + 2\cliqsize$. No deterministic $T$-round $\LOCAL$ algorithm on the family $B_{m,\cliqsize}$ outputs, for \emph{every} assignment of distinct identifiers from $[N_B]$, a proper coloring with palette $[\cliqsize+2]$ all of whose color classes have size exactly $\sigma$. 
\end{theorem}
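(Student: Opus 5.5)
We argue by contradiction, reducing to Theorem~\ref{thm:det-exact} applied to the quotient cycle $\cC_m$. Suppose a deterministic $T$-round algorithm $\cA$ on $B_{m,\cliqsize}$ always outputs a proper coloring with palette $[\cliqsize+2]$ and all classes of size exactly $\sigma$. We restrict attention to a structured subfamily of instances, on which $\cA$ must in particular be correct. Set $N = 2m+2$, so that $N_B \ge 2n + 2\cliqsize = \cliqsize N$. Given an assignment $x:\mathbb{Z}_m \to [N]$ of distinct identifiers to the quotient cycle, we assign to vertex $(i,a)$ the identifier $\cliqsize(x_i - 1) + a \in [\cliqsize N] \subseteq [N_B]$. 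Distinct $x_i$ give disjoint blocks of $\cliqsize$ identifiers, so all identifiers are distinct. We also fix a canonical port numbering, determined only by the identifiers and the local structure: matching ports are ordered by ring direction, and clique ports by the index $a$. Thus $a$ and the block value $x_i$ are recoverable from a vertex's own identifier.

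\textbf{Locality of the view.} By Observation~\ref{obs:clique-cycle}, the radius-$T$ ball of $(i,a)$ is contained in the cliques $Q_{i-T},\dots,Q_{i+T}$. These are distinct cliques, since $2T+1 < m$. Under the canonical encoding, the whole labeled and port-numbered ball is a fixed function of the window $(x_{i-T},\dots,x_{i+T})$ and of $a$. The ball's isomorphism type is the same everywhere, and $n$ and $\Delta$ are fixed. Hence the output of $(i,a)$ equals $G_a(W(i,x,T))$ for fixed functions $G_a$ that do not depend on $i$.

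\textbf{The induced rule.} Define $F(W) = \sum_{a\in[\cliqsize]} \mathds{1}[G_a(W) = 1]$, the number of vertices of the center clique that receive color $1$. This is the key point where properness is used. Since $Q_i$ is a clique, a proper coloring uses color $1$ at most once in it, so on every tuple $W$ that actually arises as a window, $F(W)\in\{0,1\}$. Every tuple of $2T+1$ distinct identifiers of $[N]$ arises as a window of some assignment, because $m \ge 2T+2$, so $F$ maps all such tuples into $\{0,1\}$.

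\textbf{Contradiction.} The induced cycle count is $f(x) = \sum_i F(W(i,x,T)) = \freq(1) = \sigma$ for every assignment $x$ of distinct identifiers from $[N]$ to $\cC_m$. The hypotheses of Theorem~\ref{thm:det-exact} hold with $n$ replaced by $m$: we have $1 \le T \le m/40-1$ and $N = 2m+2$. Hence $\sigma \in \{0,m\}$. But $\sigma = m\cliqsize/(\cliqsize+2)$ lies strictly between $0$ and $m$, a contradiction.

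The main obstacle is the locality step: we must check that the canonical encoding really makes the $T$-round view a function of the quotient window alone, so that the same $F$ is applied at every position, with no leakage of the absolute index $i$ through ports or identifiers. Once that holds, the counting step is immediate.
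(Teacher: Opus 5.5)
Your proof is correct and follows the paper's argument essentially step for step. You block-embed a cycle assignment into $B_{m,\cliqsize}$ under an identifier-independent port numbering, note that the output at $(i,a)$ is a fixed function of the quotient window, use properness on each clique $Q_i$ to obtain a $\{0,1\}$-valued window rule whose count is surely $\sigma$, and then invoke Theorem~\ref{thm:det-exact} on $\cC_m$. Your deviations are only cosmetic: you take $N=2m+2$ instead of $\lfloor N_B/\cliqsize\rfloor$, and you explicitly check that every tuple of distinct identifiers arises as a window. Keep the matching ports fixed by ring direction, as you in fact chose, rather than ranked by neighbor identifiers, since the latter would leak identifiers from outside the window at the ball boundary (the paper's footnote).
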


\begin{proof}
Suppose such an algorithm $\cA$ exists for $B_{m,\cliqsize}$ and $N_B$; we extract from it a window rule on the cycle family $\cC_m$ violating Theorem~\ref{thm:det-exact}. Since the algorithm must be correct for every port numbering, fix the following \emph{structural} one, independent of identifiers\footnote{Identifier-dependence here would be unsafe: a radius-$T$ view contains, at each vertex on the boundary of the ball, the port numbers of the edges entering the ball, and under, say, identifier-sorted ports the rank at $(i+T,a)$ of its edge to $(i+T-1,a)$ depends on the identifiers in $Q_{i+T+1}$, which lie outside the window used below.}. 
At $(i,a)$, ports $1, \dots, \cliqsize-1$ lead to the clique neighbors $(i,b)$ in increasing order of $b$, port $\cliqsize$ to the matching neighbor $(i+1,a)$, and port $\cliqsize+1$ to $(i-1,a)$. 

The reduction works as follows. Given a cycle instance $\langle \cC_m, \ID \rangle$, we embed it as a matched clique cycle instance, color that instance using algorithm $\cA$, and read from the resulting coloring a coloring rule for $\cC_m$.
Set $N = \lfloor N_B/\cliqsize \rfloor \ge 2m+2$. For $x \in [N]$ let $\psi(x) = \big((x-1)\cliqsize+1, \dots, (x-1)\cliqsize + \cliqsize\big)$; the blocks $\psi(x)$, $x \in [N]$, are pairwise disjoint $\cliqsize$-tuples of identifiers in $[\cliqsize N] \subseteq [N_B]$. 

For an identifier assignment $\ID$ of distinct identifiers from $[N]$ to the vertices of $\cC_m$, let $\widehat{\ID}$ be the assignment to the vertices of $B_{m,\cliqsize}$ that gives vertex $(i,a)$ the identifier $(\ID(i)-1)\cliqsize + a$. Distinct ring identifiers occupy disjoint blocks, so $\widehat{\ID}$ is an assignment of distinct identifiers from $[N_B]$.

Every edge of $B_{m,\cliqsize}$ changes the ring coordinate by at most one, so the radius-$T$ ball of $(i,a)$ is contained in $Q_{i-T} \cup \dots \cup Q_{i+T}$ (these cliques are distinct, as $T < m/2$; containment, rather than equality, is all that is used below).
The subgraph induced by $Q_{i-T} \cup \dots \cup Q_{i+T}$, rooted at $(i,a)$ and carrying its structural port numbers (including those of the boundary matching edges, which are always $\cliqsize$ and $\cliqsize+1$), is the same for every $i$ (it is determined by $m, \cliqsize, T$ alone), and under $\widehat{\ID}$ its identifier labeling is determined by the identifier window $W(i,\ID,T)$ via $\psi$. Since the output of a deterministic $T$-round algorithm, in particular $\cA$, at a vertex is a function of its identifier-labeled, rooted radius-$T$ ball (the global parameters $n, \Delta, m, \cliqsize$ being fixed), there are functions $A_1, \dots, A_\cliqsize$ such that, under every embedded assignment $\widehat{\ID}$, the color assigned by $\cA$ to $(i,a)$ equals $A_a\big(W(i,\ID,T)\big)$.

This observation enables us to define an induced coloring rule for $\cC_m$. For every $(2T+1)$-tuple $w$ of distinct identifiers from $[N]$ define $F(w) = \sum_{a=1}^{\cliqsize} \mathds{1}[A_a(w) = 1]$.
Fix any assignment $\ID$ on $\cC_m$. Since $Q_i$ is a clique of $B_{m,\cliqsize}$ and $\cA$ outputs a proper coloring on $\widehat{\ID}$, the colors $A_1\big(W(i,\ID,T)\big), \dots, A_\cliqsize\big(W(i,\ID,T)\big)$ of its $\cliqsize$ vertices are pairwise distinct, so $F\big(W(i,\ID,T)\big) \in \{0,1\}$: it is the indicator that some vertex of $Q_i$ receives color $1$. Summing over the cliques and using the exactness guarantee of $\cA$ on $\widehat{\ID}$,
\[ \sum_{i \in \mathbb{Z}_m} F\big(W(i,\ID,T)\big) \;=\; \big|\clr^{-1}(1)\big| \;=\; \sigma \qquad\mbox{for \emph{every} assignment } \ID .\]
This leads to contradiction. $F$ maps $(2T+1)$-tuples of distinct identifiers from $[N]$, $N \ge 2m+2$, to $\{0,1\}$, with $1 \le T \le m/40 - 1$, and its window sum equals $\sigma$ on every assignment of $\cC_m$. Theorem~\ref{thm:det-exact} (applied with $n = m$) forces $\sigma \in \{0, m\}$. But $\sigma = m\cdot\frac{\cliqsize}{\cliqsize+2}$ and $0 < \frac{\cliqsize}{\cliqsize+2} < 1$, so $0 < \sigma < m$, contradiction. 
The theorem follows.
\end{proof}

Note that for $B_{m,k}$, $D = \lfloor m/2\rfloor + 1$, so $m \ge 2D - 2$ and hence $m/40 - 1 \ge D/20 - 2$. This implies that exact equitable coloring of $B_{m,k}$ requires $\Omega(D)$ rounds. Since the problem is solvable in $D$ rounds, we get the following.

\begin{corollary} 
The deterministic $\LOCAL$ complexity of exact equitable $(\Delta+1)$-coloring of $B_{m,\cliqsize}$ is $\Theta(D)$.
\end{corollary}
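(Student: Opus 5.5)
The corollary has a lower-bound half and an upper-bound half, and I will establish them separately.

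\emph{Lower bound.} The plan is to apply Theorem~\ref{thm:diam-lb} directly and then rewrite its round bound in terms of $D$ using Observation~\ref{obs:clique-cycle}.
\begin{itemize}
\item Observation~\ref{obs:clique-cycle} gives $D = \lfloor m/2\rfloor + 1$. Hence $m \ge 2D-2$.
\item Theorem~\ref{thm:diam-lb} rules out every deterministic algorithm running in $T$ rounds with $1 \le T \le m/40-1$. This requires $m \ge 80$, $(\cliqsize+2)\mid m$, and identifier universe $N_B \ge 2n+2\cliqsize$.
\item Any algorithm correct for every identifier assignment over a larger universe is in particular correct when identifiers are drawn from $[N_B]$. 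So the theorem applies to it.
\item A $0$-round algorithm is a special case of a $1$-round algorithm. So if a correct algorithm exists at all, its round count must exceed $m/40 - 1 \ge D/20 - 2$, which is $\Omega(D)$ once $m \ge 80$.
\end{itemize}
Small $m$ contribute only constants, so the bound $\Omega(D)$ holds for the family. This lower bound is already stated for proper colorings with palette $[\cliqsize+2] = [\Delta+1]$, which is exactly exact equitable $(\Delta+1)$-coloring.

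\emph{Upper bound.} I will use the full-information algorithm from the footnote in Section~\ref{sec:intro}.
\begin{itemize}
\item In $D$ rounds every vertex collects the entire identifier-labeled graph, since the radius-$D$ ball is all of $B_{m,\cliqsize}$.
\item Every vertex then runs the same deterministic sequential procedure of~\cite{KKMS10}, breaking any ties canonically, for example by identifiers.
\item Each vertex outputs its own color.
\end{itemize}
Because all vertices compute the identical coloring, the output is consistent. It is proper and equitable with palette $\Delta+1 = \cliqsize+2$, and the class sizes are exactly $\sigma$ when $(\cliqsize+2)\mid m$.

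\emph{Expected obstacle.} The only delicate point is the bookkeeping in the lower bound: the divisibility and size assumptions of Theorem~\ref{thm:diam-lb} and the universe size. I will state the corollary for the members of the family meeting these hypotheses, with $m \ge 80$ and $(\cliqsize+2)\mid m$. Outside that range, the non-divisible case is deferred to the stability extension in Remark~\ref{rem:diam-div}. Combining the two halves gives complexity $\Theta(D)$.
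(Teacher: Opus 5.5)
Your proposal is correct and follows the paper's argument. The lower bound is Theorem~\ref{thm:diam-lb} combined with $m \ge 2D-2$ from Observation~\ref{obs:clique-cycle}, which gives $m/40-1 \ge D/20-2 = \Omega(D)$. The upper bound is the full-information algorithm, in which every vertex learns the whole graph in about $D$ rounds and computes the same exact equitable coloring. Your handling of the universe size, of divisibility, and of the case $T=0$ is consistent with the paper's hypotheses and with Remark~\ref{rem:diam-div}.
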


The reduction extends to free colorings, via the following generalization of Theorem~\ref{thm:det-exact} to integer-valued rules.

\begin{corollary}
\label{cor:integer-rules}
(a) Theorem~\ref{thm:det-exact} holds as is for window rules $F$ with values in $\{0, 1, \dots, \vartheta\}$, $\vartheta \ge 1$, i.e., if the window sum equals a constant $K$ on every assignment, then $K \in \{0, n, 2n, \dots, \vartheta n\}$. 
\\
(b)
Theorem~\ref{thm:diam-lb} also holds for free colorings, i.e., no deterministic $(m/40-1)$-round algorithm outputs, on every assignment, a free $(\cliqsize+2)$-coloring of $B_{m,\cliqsize}$  
with all frequencies exactly $\sigma$.
\end{corollary}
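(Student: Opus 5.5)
For part (a), I will rerun the proof of Theorem~\ref{thm:det-exact} with $F$ taking values in $\{0,1,\dots,\vartheta\}$ and track where the binary range was used. Lemma~\ref{lemma:exchange} compares two global counts that are both equal to $K$, so it needs no binarity. Lemmas~\ref{lemma:boundary} and~\ref{lemma:loops}(a),(b),(d) are purely additive identities in $F$. In Lemma~\ref{lemma:loops}(c) the loop weight is a sum of $\mu$ values in $\{0,\dots,\vartheta\}$, hence an integer in $[0,\vartheta\mu]$. Lemma~\ref{lemma:linear} uses only that $\Lambda_{\cT}$ is integer-valued and additive, so the slope $\lambda$ is again an integer. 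The final splitting argument then gives $K=\lambda n$ with the same integer $\lambda$. Since $0\le f(\ID)\le \vartheta n$, we get $0\le\lambda\le\vartheta$, and so $K\in\{0,n,\dots,\vartheta n\}$. The identifier-universe bookkeeping (R1)--(R3) does not depend on the value range at all.

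For part (b), I will repeat the reduction in the proof of Theorem~\ref{thm:diam-lb} verbatim. I keep the same structural port numbering, the same block embedding $\psi$ with $N=\lfloor N_B/\cliqsize\rfloor\ge 2m+2$, and the same window functions $A_1,\dots,A_\cliqsize$, now produced by an algorithm that is only required to output a free coloring. The only place properness was used was to conclude $F(w)=\sum_a \mathds 1[A_a(w)=1]\in\{0,1\}$. Without properness, $F$ takes values in $\{0,\dots,\cliqsize\}$. I therefore apply part (a) with $\vartheta=\cliqsize$ and $n=m$. The exactness guarantee gives window sum $\sigma$ on every assignment, so $\sigma=\lambda m$ for some integer $\lambda$. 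This is impossible because $0<\sigma=m\cliqsize/(\cliqsize+2)<m$.

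One point needs care. Part (a) is applied to an $F$ defined on all $(2T+1)$-tuples, including tuples that never arise as windows of an actual assignment. This is harmless, because the loops in Lemmas~\ref{lemma:boundary}--\ref{lemma:linear} consist only of windows of genuine assignments of distinct identifiers from $[N]$, so $F$ is defined on them exactly as before. The main obstacle is simply checking that no step of Lemmas~\ref{lemma:loops} and~\ref{lemma:linear} implicitly used $F\le 1$, for instance in the range bounds. I expect none does, since only integrality and additivity enter there.
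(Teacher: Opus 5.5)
Your proposal is correct and follows the paper's proof essentially step for step. For (a), you observe that the range of $F$ enters only through integrality of the loop weights (now in $[0,\vartheta\mu]$), so the final identity $K=\lambda n$ with $\lambda\in\mathbb{Z}$ and $0\le K\le\vartheta n$ gives the claim. For (b), you rerun the reduction of Theorem~\ref{thm:diam-lb} without the properness step, obtaining a $\{0,\dots,\cliqsize\}$-valued rule whose sure count $\sigma$ would have to be a multiple of $m$, which contradicts $0<\sigma<m$. Your side concern about tuples that never arise as windows is moot: every $(2T+1)$-tuple of distinct identifiers from $[N]$ is the window of some assignment, since $2T+1\le m\le N$.
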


\begin{proof}
Inspecting the proof of Theorem~\ref{thm:det-exact}, the range of $F$ enters only through the statement that loop weights are integers (in Lemma~\ref{lemma:loops}, where $[0,\mu]$ becomes $[0,\vartheta\mu]$); Lemmas~\ref{lemma:exchange}, \ref{lemma:boundary} and~\ref{lemma:linear} use only that $F$ is integer-valued. The final identity $\lambda n = K$ with $\lambda \in \mathbb{Z}$ and $0 \le K \le \vartheta n$ yields $K \in \{0, n, \dots, \vartheta n\}$. Claim (a) follows. 
For claim (b),
apply the reduction of Theorem~\ref{thm:diam-lb} without the properness step: $F(w) = \sum_a \mathds{1}[A_a(w) = 1] \in \{0, \dots, \cliqsize\}$ and $\sum_i F\big(W(i,\ID,T)\big) = \sigma$ for every assignment $\ID$ of $\cC_m$, so $\sigma$ would be a multiple of $m$; but $0 < \sigma < m$.
\end{proof}

The stability claim, Theorem \ref{thm:det-stab}, lifts as well.

\begin{corollary}
\label{cor:diam-stab}
Let $\cliqsize \ge 2$ and $m \ge 3$, with $\sigma = m\cliqsize/(\cliqsize+2)$ \emph{not} necessarily an integer, let $\imb$ be an integer with $0 \le \imb < \frac{2m}{\cliqsize+2}$, let $1 \le T \le \frac{m}{40(\imb+1)} - 1$, and let $N_B \ge 3n + 2\cliqsize$.
No deterministic $T$-round $\LOCAL$ algorithm on $B_{m,\cliqsize}$ outputs, on every assignment of distinct identifiers from $[N_B]$, a proper $(\cliqsize+2)$-coloring all of whose class sizes lie in $[\sigma - \imb,\, \sigma + \imb]$; the same holds for free colorings. 
Hence on $B_{m,\cliqsize}$, additive imbalance $\imb < 2\sigma/\cliqsize$ costs $\Omega\big(D/(\imb+1)\big)$ rounds deterministically, while $(1\pm\eta)$-equity at any fixed $\eta$ is diameter-free (Remark~\ref{rem:diam-sep}). 
Equivalently, since $\sigma = m\cliqsize/(\cliqsize+2)$, the lower bound
applies up to relative additive error
$\imb/\sigma < 2/\cliqsize$.
Thus the corollary captures the near-exact regime on this family. This should
be contrasted with any fixed coarse multiplicative slack $\eta>0$ in the
concentration regime, where Remark~\ref{rem:diam-sep} gives algorithms whose
round complexity is independent of $D$.
\end{corollary}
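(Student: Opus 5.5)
The plan is to reuse the reduction of Theorem~\ref{thm:diam-lb} verbatim and replace the final appeal to Theorem~\ref{thm:det-exact} by an appeal to Theorem~\ref{thm:det-stab}, in its integer-valued form from Remark~\ref{rem:det-stab}(a).

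\textbf{The reduction.} Suppose a deterministic $T$-round algorithm $\cA$ exists on $B_{m,\cliqsize}$. Fix the structural port numbering, set $N = \lfloor N_B/\cliqsize\rfloor$, and embed each ring assignment $\ID$ on $\cC_m$ as $\widehat{\ID}$ through the blocks $\psi(x)$. Exactly as before, we obtain functions $A_1,\dots,A_\cliqsize$ such that the color of $(i,a)$ is $A_a(W(i,\ID,T))$.

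Define the window rule $F(w)=\sum_a \mathds{1}[A_a(w)=1]$. For proper colorings its values lie in $\{0,1\}$; for free colorings they lie in $\{0,\dots,\cliqsize\}$, which Remark~\ref{rem:det-stab}(a) allows. The guarantee of $\cA$ gives $|f(\ID)-\sigma|\le\imb$ for every assignment $\ID$ of $\cC_m$.

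\textbf{The bookkeeping.} Several hypotheses of Theorem~\ref{thm:det-stab}, applied with $n=m$ and $K=\sigma$, must be checked.
\begin{itemize}
\item The universe: we need $N\ge 3m+2$. This follows from $N_B\ge 3n+2\cliqsize = \cliqsize(3m+2)$.
\item The round bound: $1\le T\le m/(40(\imb+1))-1$ is assumed directly. It also forces $m\ge 80$, so the small case $m\ge3$ is vacuous.
\item The target: $\sigma$ need not be an integer, and the theorem explicitly allows a real $K$.
\end{itemize}

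\textbf{The contradiction.} Theorem~\ref{thm:det-stab} then gives an integer $\lambda$ with $|\sigma-\lambda m|\le \imb$. Since $\sigma = m - 2m/(\cliqsize+2)$, the nearest multiples of $m$ are $m$ and $0$:
\[
|\sigma-m| = \frac{2m}{\cliqsize+2}, \qquad |\sigma-0| = \frac{m\cliqsize}{\cliqsize+2} \ge \frac{2m}{\cliqsize+2},
\]
because $\cliqsize\ge2$. Every other integer multiple is even farther away. Hence $\imb\ge 2m/(\cliqsize+2)$, which contradicts the hypothesis $\imb<2m/(\cliqsize+2) = 2\sigma/\cliqsize$. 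The same argument covers free colorings.

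\textbf{The asymptotic form.} Since $D=\lfloor m/2\rfloor+1$, the threshold $m/(40(\imb+1))-1$ is $\Omega(D/(\imb+1))$. The relative form $\imb/\sigma<2/\cliqsize$ is just the identity $2m/(\cliqsize+2)=2\sigma/\cliqsize$.

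The diameter-free contrast is not proved here; it is deferred to Remark~\ref{rem:diam-sep}. The only delicate point is confirming that Theorem~\ref{thm:det-stab} tolerates both non-binary $F$ and non-integer $K$; both are recorded in Remark~\ref{rem:det-stab}(a).
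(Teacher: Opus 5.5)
Your proposal is correct and matches the paper's proof essentially step for step. You reuse the reduction of Theorem~\ref{thm:diam-lb} (none of its steps needs $(\cliqsize+2)\mid m$), check the universe bound $N=\lfloor N_B/\cliqsize\rfloor\ge 3m+2$, and invoke Theorem~\ref{thm:det-stab} in its integer-valued, real-target form via Remark~\ref{rem:det-stab}(a); the same arithmetic $\min(\sigma,m-\sigma)=m-\sigma=2m/(\cliqsize+2)>\imb$ then gives the contradiction, and $D=\lfloor m/2\rfloor+1$ converts the round threshold into $\Omega(D/(\imb+1))$.
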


\begin{proof}
The proof consists of four stages.

\midinline (1) extracting the window rule:
Apply the reduction in the proof of Theorem~\ref{thm:diam-lb} as is (note that none of its steps uses the divisibility of $m$). With the structural port numbering and the block embedding (now with $N = \lfloor N_B/\cliqsize \rfloor \ge 3m+2$, using $N_B \ge \cliqsize(3m+2) = 3n + 2\cliqsize$), the rule $F(w) = \sum_a \mathds{1}[A_a(w) = 1]$ is $\{0,1\}$-valued by properness, and the size guarantee gives $\big|\sum_i F\big(W(i,\ID,T)\big) - \sigma\big| \le \imb$ for every assignment $\ID$ of $\cC_m$.

\midinline (2) applying stability:
Theorem~\ref{thm:det-stab}, applied with cycle length $m$ and target frequency $\sigma$, yields an integer $\lambda$ with $|\sigma - \lambda m| \le \imb$. In the free-coloring case, $F$ is $\{0, \dots, \cliqsize\}$-valued and Remark~\ref{rem:det-stab}(a) applies instead, with the same conclusion.

\midinline (3) the arithmetic separation:
Since $0 < \sigma < m$, and since $\cliqsize \ge 2$ gives $\sigma = m\cliqsize/(\cliqsize+2) \ge m/2 \ge m - \sigma$, every integer $\lambda$ satisfies
\[
|\sigma - \lambda m|
\;\ge\; \min(\sigma,\; m - \sigma)
\;=\; m - \sigma
\;=\; \frac{2m}{\cliqsize+2}
\;>\; \imb ,
\]
contradicting Step 2. This proves the unsolvability claims.

\midinline (4) from $m$ to $D$:
Finally, $\frac{2m}{\cliqsize+2} = 2\sigma/\cliqsize$, so the imbalance range of the corollary is as stated, and $m \ge 2D - 2$ turns the round threshold $\frac{m}{40(\imb+1)} - 1$ into $\Omega\big(D/(\imb+1)\big)$.
\end{proof}


\begin{remark}
\label{rem:diam-div}
Stability also disposes of the divisibility assumption in Theorem~\ref{thm:diam-lb}. For every $m \ge \cliqsize + 2$ (not necessarily a multiple of $\cliqsize+2$), the \emph{exact} equitability requirement, namely, all class sizes in $\{\lfloor\sigma\rfloor, \lceil\sigma\rceil\}$, confines the class-$1$ count to $[\sigma - 1, \sigma + 1]$, and $\frac{2m}{\cliqsize+2} \ge 2 > 1$, so Corollary~\ref{cor:diam-stab} with $\imb = 1$ (and a real, non-integer target, permitted by Remark~\ref{rem:det-stab}(a)) shows deterministic unsolvability in $m/80 - 1$ rounds. 
Thus exact equitable $(\Delta+1)$-coloring of $B_{m,\cliqsize}$ has deterministic $\LOCAL$ complexity $\Theta(D)$ for \emph{every} $m \ge \cliqsize+2$.
\end{remark}

Summarizing, we note the following.

\begin{remark}
\label{rem:diam-scope}
\mbox{~}
\\
(a) 
For every $\cliqsize \ge 2$ and every multiple $m$ of $\cliqsize+2$, the family member $B_{m,\cliqsize}$ has $n = m\cliqsize$ and $D = \Theta(m) = \Theta(n/\cliqsize)$; since $m \ge \cliqsize+2$ forces $m = \Omega(\sqrt n)$, the clique cycles alone realize the diameter scales from $\Theta(\sqrt n)$ to $\Theta(n)$. (Section~\ref{sec:twisted} decreases 
the lower restriction
down to $\Theta(\log n/\log\Delta)$ for every fixed degree.) 
\\
(b) 
The hardness is not an artifact of small color classes. In this family, the
target frequency is
$\sigma = \frac{m\cliqsize}{\cliqsize+2} = \Omega(D)$.
Thus, the lower bound does not arise because a class is too small to be sampled
locally; even frequencies as high as the diameter require global coordination when
prescribed exactly.
\\
(c) 
As in Remark~\ref{rem:det-robust},
the bound is insensitive to all vertices knowing $n$, $\Delta$, $m$, $\cliqsize$, the target frequencies, and the graph structure; only the identifiers are unknown. 
\\
(d) 
The assumption $(\cliqsize+2) \mid m$ is used only to make the exact target frequency $\sigma=m\cliqsize/(\cliqsize+2)$ a single integer. If $\sigma$ is not an integer, exact equitability means that each class has one of the two adjacent sizes $\lfloor\sigma\rfloor$ and $\lceil\sigma\rceil$. This is no longer a surely constant count, but it is a count confined to distance less than $1$ from the real target frequency $\sigma$. Remark~\ref{rem:diam-div} applies the stability corollary with $\imb=1$ and removes the divisibility assumption.
\end{remark}


\begin{remark}
\label{rem:diam-sep}
On the same family, near-equity is local: for $\Delta = \cliqsize+1 \le 2^{\sqrt{\log n}/C}$, Theorem~\ref{thm:symcolor} outputs w.h.p.\ a proper coloring with palette exactly $\Delta+1$ and all classes in $(1\pm\eta)\sigma$ within $O(\log(\Delta/\eta)) + O(\log^3\log n)$ rounds, independent of $D$. (For larger $\cliqsize$, Theorem~\ref{thm:split} gives the same conclusion with palette $(1+\eta)(\Delta+1)$ at every $\Delta \le n^{1-o(1)}$, in $\mathrm{poly}\log n$ rounds.) Thus, on matched clique cycles, the transition from two-sided slack $(1\pm\eta)$ to exactness causes the complexity to jump from $\mathrm{poly}(\log\log n)$ (or $\mathrm{poly}\log n$) to $\Theta(D)$ at every realizable diameter scale: the general-graph analogue of the $\Theta(\log^* n)$-vs-$\Theta(n)$ separation on rings (Theorem~\ref{thm:intro-cycle-ub}(c) vs.\ Theorem~\ref{thm:intro-ring-lb}(a)). In particular, the diameter-scale cost of the global accounting primitives of~\cite{NP25} is necessary for exact target frequencies and avoidable for two-sided approximate ones.
\end{remark}

\subsection{Lower bound on twisted fiber products
}
\label{sec:twisted}

The family $B_{m,\cliqsize}$ realizes only diameter scales $D = \Omega(\sqrt n)$, as its cliques must be smaller than its ring. Two observations enable us to remove this limitation. First, the clique structure is unnecessary. Once the integer-valued form of the stability theorem is invoked, any regular graph copied as a fiber over the base cycle works, and choosing a fiber of small diameter and large size (a complete tripartite graph, say) already yields diameters $D = o(\sqrt n)$, though at the price of maximum degree $\Theta(n/D)$ (Remark~\ref{rem:fiber-compare}). Secondly, and more usefully, the reduction never uses the fact that consecutive copies of the fiber are joined by the \emph{identity} matching; any fixed permutation, applied uniformly between every pair of consecutive layers, preserves the two properties that matter. This allows the fiber to be \emph{disconnected}: connectivity is restored by the twist, while bounded component size keeps the degree constant. This leads to the following definition.

\inline Twisted fiber product:
Let $H$ be a $d$-regular graph on vertex set $[p]$, let $\pi$ be a permutation of $[p]$, and let $m \ge 3$. The \emph{twisted fiber product} $\cC_m \ltimes_\pi H$ is the graph on vertex set $\mathbb{Z}_m \times [p]$ with \emph{fiber edges} $\{(i,a), (i,b)\}$ for every $i$ and every $ab \in E(H)$, and \emph{twist edges} $\{(i,a), (i+1, \pi(a))\}$ for every $i \in \mathbb{Z}_m$ and $a \in [p]$.

\begin{figure}[t]
\centering
\begin{tikzpicture}[scale=0.62, pt/.style={circle,fill,inner sep=1.35pt}]
\node[pt] (a0) at (0,1) {};
\node[pt] (b0) at (0,0) {};
\node[pt] (c0) at (0,-1) {};
\node[pt] (a1) at (2.7,1) {};
\node[pt] (b1) at (2.7,0) {};
\node[pt] (c1) at (2.7,-1) {};
\node[pt] (a2) at (5.4,1) {};
\node[pt] (b2) at (5.4,0) {};
\node[pt] (c2) at (5.4,-1) {};
\foreach \x in {0,1,2} {
  \draw (a\x) -- (b\x) -- (c\x);
}
\draw[thick] (a0) -- (b1);
\draw[thick] (b0) -- (c1);
\draw[thick] (c0) -- (a1);
\draw[thick] (a1) -- (b2);
\draw[thick] (b1) -- (c2);
\draw[thick] (c1) -- (a2);
\node[draw=none, fill=none] at (0,-1.65) {$H_i$};
\node[draw=none, fill=none] at (2.7,-1.65) {$H_{i+1}$};
\node[draw=none, fill=none] at (5.4,-1.65) {$H_{i+2}$};
\node[draw=none, fill=none] at (1.35,1.35) {$\pi$};
\node[draw=none, fill=none] at (4.05,1.35) {$\pi$};
\end{tikzpicture}
\caption{A twisted fiber product. Each vertical layer is one copy of the fiber
graph $H$ over a vertex of the base cycle. Consecutive layers are joined by the
same permutation $\pi$; the ring closes 
mod $m$.}
\label{fig:twisted-fiber}
\end{figure}

The graph $\cC_m \ltimes_\pi H$ is $(d+2)$-regular on $n = mp$ vertices, and every edge changes the ring coordinate by at most one. For $\pi = \mathrm{id}$ and $H$ a $\cliqsize$-clique it is the matched clique cycle $B_{m,\cliqsize}$, so the following theorem generalizes Theorem~\ref{thm:diam-lb} and Corollary~\ref{cor:diam-stab} simultaneously; its conclusion is driven by the quantity $\delta = \dist(p/q,\, \mathbb{Z})$, where $q = d + 3 = \Delta + 1$, which measures how far the ideal class size $\sigma = mp/q$ is from being realizable by an integer-per-layer rule.

\begin{theorem}
\label{thm:twisted-lift}
Let $G = \cC_m \ltimes_\pi H$ as above, let $q = d + 3 = \Delta(G) + 1$, $\sigma = mp/q$, and $\delta = \dist(p/q, \mathbb{Z})$. Let $\imb \ge 0$ be an integer with $\imb < m\delta$, let $1 \le T \le \frac{m}{40(\imb+1)} - 1$, and let the identifier universe be $[N_B]$ for any $N_B \ge p(3m+2)$. Then no deterministic $T$-round $\LOCAL$ algorithm outputs, on every assignment of distinct identifiers to $G$, a \emph{free} $q$-coloring all of whose color classes have size in $[\sigma - \imb,\, \sigma + \imb]$; in particular the statement holds for proper colorings. 
\end{theorem}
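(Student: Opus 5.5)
The plan is to repeat the reduction of Theorem~\ref{thm:diam-lb} and Corollary~\ref{cor:diam-stab}, replacing the clique by the fiber $H$ and the identity matching by $\pi$, and then to apply the integer-valued stability theorem (Theorem~\ref{thm:det-stab} with Remark~\ref{rem:det-stab}(a)) to the quotient cycle $\cC_m$.

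\textbf{Step 1: structural port numbering.} Assume an algorithm $\cA$ exists and fix an identifier-independent port numbering. At $(i,a)$, ports $1,\dots,d$ go to the fiber neighbors $(i,b)$, $b\in\Gamma_H(a)$, in increasing order of $b$. Port $d+1$ goes to $(i+1,\pi(a))$ and port $d+2$ to $(i-1,\pi^{-1}(a))$. This numbering is uniform in $i$.

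\textbf{Step 2: block embedding.} Set $N=\lfloor N_B/p\rfloor\ge 3m+2$. A cycle assignment $\ID$ on $\cC_m$ is lifted to $\widehat{\ID}(i,a)=(\ID(i)-1)p+a$, which is injective into $[N_B]$.

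\textbf{Step 3: window rule.} Every edge changes the ring coordinate by at most one. Hence, for $T<m/2$, the radius-$T$ ball of $(i,a)$ lies inside layers $i-T,\dots,i+T$. The induced rooted, port-labeled subgraph on these layers is the same for every $i$: translation $(j,b)\mapsto(j+1,b)$ is an automorphism of $G$ preserving the ports, because the same $\pi$ joins every pair of consecutive layers. Its identifier labeling is determined by $W(i,\ID,T)$. Consequently there are functions $A_1,\dots,A_p$ such that $\cA$ colors $(i,a)$ with $A_a(W(i,\ID,T))$.

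Define $F(w)=\sum_{a=1}^p\mathds 1[A_a(w)=1]\in\{0,\dots,p\}$. Then $\sum_i F(W(i,\ID,T))$ is the size of color class $1$, so it lies within $\imb$ of $\sigma$ for every $\ID$.

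\textbf{Step 4: stability.} By Theorem~\ref{thm:det-stab} in its integer-valued form (Remark~\ref{rem:det-stab}(a)), with cycle length $m$, real target $K=\sigma$, $T\le m/(40(\imb+1))-1$ and universe $N\ge 3m+2$, there is an integer $\lambda$ with $|\sigma-\lambda m|\le\imb$.

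\textbf{Step 5: arithmetic.} We have $|\sigma-\lambda m| = m\,|p/q-\lambda| \ge m\delta > \imb$, a contradiction. This holds for free colorings, and therefore also for proper ones.

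\textbf{Main obstacle.} The delicate point is Step 3. We must verify that the twist does not break translation invariance of the views, including the port numbers at the boundary of the ball. The check is that the layer shift commutes with the twist edges and that the ports are defined structurally, as above. This also requires that layers $i-T,\dots,i+T$ are distinct, which follows from $T<m/2$.

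A secondary check is that Remark~\ref{rem:det-stab}(a) really covers rules with values up to $p$ and a non-integer target. Its Case A uses only that flip effects are nonzero integers, and its Case B uses only integrality of $F$, so both go through unchanged.
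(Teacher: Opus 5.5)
Your proposal is correct and matches the paper's proof essentially step for step. Both use the same identifier-independent structural ports and the no-wraparound observation. Both use the layer-translation automorphism together with the block embedding ($N=\lfloor N_B/p\rfloor\ge 3m+2$) to obtain a $\{0,\dots,p\}$-valued window rule on $\cC_m$, then apply the integer-valued, real-target form of Theorem~\ref{thm:det-stab}, and finish with the estimate $\min_{\lambda\in\mathbb{Z}}|\sigma-\lambda m| = m\delta > \imb$.
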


\begin{proof}
The proof of Theorem~\ref{thm:diam-lb} applies with the changes recorded in the following five steps.

\noindent
\emph{(i) Structural ports.} The structural port numbering is now: at $(i,a)$, ports $1, \dots, d$ lead to the fiber neighbors $(i,b)$ in increasing order of $b$, port $d+1$ leads to $(i+1, \pi(a))$, and port $d+2$ to $(i-1, \pi^{-1}(a))$; this numbering is determined by $(H, \pi)$ and is independent of the identifiers.

\noindent
\emph{(ii) No wraparound.} Every edge of $G$ changes the ring coordinate by at most one, so the radius-$T$ ball of $(i,a)$ is contained in the fibers over the $2T+1$ layers $i-T, \dots, i+T$, and these layers are pairwise distinct, since $T \le m/(40(\imb+1)) - 1 \le m/40 - 1 < m/2$, whence $2T+1 < m$. This is the only consequence of the hypothesis on $T$ that the reduction itself uses; the full strength of the hypothesis is consumed by Theorem~\ref{thm:det-stab} in step (iv).

\noindent
\emph{(iii) The layer-independent window rule.} Since the same permutation $\pi$ joins every pair of consecutive layers, the one-layer translation $(j, a') \mapsto (j+1, a')$ is an automorphism of the ported, unlabeled graph, so the rooted, ported, unlabeled radius-$T$ ball of $(i,a)$ is determined by $a$ and $T$ alone, independently of $i$. Under the block embedding $\psi$ (with blocks of size $p$, using $N = \lfloor N_B/p\rfloor \ge 3m+2$), the identifier labeling of this ball induced by an assignment $\ID$ of $\cC_m$ is determined by the window $W(i,\ID,T)$: a ball vertex lying over layer $i+t$, $-T \le t \le T$, carries an identifier from the $\psi$-block of the window entry at offset $t$, and by (ii) the layer offset $t$ of a ball vertex is unambiguous. (In particular, $W(i,\ID,T)$ reads $2T+1 < m$ pairwise distinct positions of $\cC_m$, hence is a tuple of distinct identifiers, as Theorem~\ref{thm:det-stab} requires.) Hence, exactly as before, there are functions $A_1, \dots, A_p$ with output $A_a(W(i,\ID,T))$ at $(i,a)$. Properness is not assumed, so the induced rule $F(w) = \sum_{a=1}^{p} \mathds{1}[A_a(w) = 1]$ takes values in $\{0, 1, \dots, p\}$ rather than $\{0,1\}$.

\noindent
\emph{(iv) Stability, in integer-valued real-target form.} The balance guarantee confines rather than fixes the window sum: $\big|\sum_{i \in \mathbb{Z}_m} F(W(i,\ID,T)) - \sigma\big| \le \imb$ for every assignment $\ID$ of $\cC_m$. The integer-valued, real-target form of Theorem~\ref{thm:det-stab} applies (integer-valuedness because $F$ may count several vertices of the same fiber; the real target because $\sigma$ need not be integral) and yields an integer $\lambda$ with $|\sigma - \lambda m| \le \imb$.

\noindent
\emph{(v) The $\delta$-contradiction.} On the other hand, we have
\begin{align*}
\min_{\lambda \in \mathbb{Z}} |\sigma - \lambda m|
  ~=~ m \cdot \dist\big(\sigma/m,\ \mathbb{Z}\big)
  ~=~ m\,\dist(p/q, \mathbb{Z})
  ~=~ m\delta
   ~>~ \imb,
\end{align*}
contradiction. The theorem follows.
\end{proof}

\begin{corollary}
\label{cor:twisted-exact}
Let $G=\cC_m\ltimes_\pi H$, let $q=d+3=\Delta(G)+1$, set
$\sigma=mp/q$ and $\delta=\dist(p/q,\mathbb{Z})$, and assume that the
identifier universe is $[N_B]$ with $N_B\ge p(3m+2)$. If $m\delta>1$, then
exact equitable $(\Delta+1)$-coloring of $G$ is deterministically unsolvable in
$m/80-1$ rounds, even for free colorings, and is solvable in
$\mathrm{diam}(G)$ rounds.
\end{corollary}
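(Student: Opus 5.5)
The plan is to derive the corollary from Theorem~\ref{thm:twisted-lift} with $\imb = 1$, exactly as Remark~\ref{rem:diam-div} derives divisibility-free exactness from Corollary~\ref{cor:diam-stab}. An exact equitable $(\Delta+1)$-coloring puts every class size in $\{\lfloor\sigma\rfloor,\lceil\sigma\rceil\}$. Since $\lfloor\sigma\rfloor \ge \sigma-1$ and $\lceil\sigma\rceil \le \sigma+1$, every class size, in particular that of color $1$, lies in $[\sigma-1,\sigma+1]$. So any algorithm for exact equitable coloring (proper or free) is also an algorithm producing a free $q$-coloring with all classes in $[\sigma-\imb,\sigma+\imb]$ for $\imb=1$.

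Next I check the hypotheses of Theorem~\ref{thm:twisted-lift} with $\imb=1$. The integer $\imb=1$ satisfies $\imb < m\delta$ precisely by the assumption $m\delta>1$. The universe bound $N_B \ge p(3m+2)$ is assumed verbatim. For the round range, $\frac{m}{40(\imb+1)}-1 = \frac{m}{80}-1$, so every $T$ with $1\le T\le m/80-1$ is covered. The case $T=0$ (relevant only if $m/80-1\ge 0$) is handled by simulating a $0$-round rule as a $1$-round rule, as in Remark~\ref{rem:det-robust}; this is available whenever $m/80-1\ge1$, and otherwise the claimed range contains no $T\ge 1$ anyway. The theorem then says no deterministic $T$-round algorithm achieves this confinement on every identifier assignment, so exact equitable coloring is unsolvable within $m/80-1$ rounds, even for free colorings.

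For the upper bound I use the footnoted argument from the introduction. In $\mathrm{diam}(G)$ rounds of $\LOCAL$, every vertex learns the whole graph together with all identifiers. Every vertex then runs the same deterministic sequential Hajnal--Szemer\'edi construction of~\cite{KKMS10} with a fixed tie-breaking rule, for instance by processing vertices in identifier order, and outputs its own color. All vertices compute the identical coloring, so the output is consistent, proper, uses $\Delta+1$ colors, and is exactly equitable. Here $G$ is connected or, if it is disconnected, $\mathrm{diam}(G)$ is infinite and the statement is vacuous.

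The only delicate point I anticipate is confirming that the real-target, $\imb=1$ application is legitimate. Theorem~\ref{thm:twisted-lift} already allows a non-integral $\sigma$ through the integer-valued, real-target form of Theorem~\ref{thm:det-stab} (Remark~\ref{rem:det-stab}(a)). Its contradiction requires $\min_\lambda|\sigma-\lambda m| = m\delta > \imb = 1$, which is strictly the hypothesis. Everything else is bookkeeping.
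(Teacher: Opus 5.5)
Your proposal is correct and is essentially the paper's proof. It confines every class to $\{\lfloor\sigma\rfloor,\lceil\sigma\rceil\}\subseteq[\sigma-1,\sigma+1]$, applies Theorem~\ref{thm:twisted-lift} with $\imb=1$ (so $\imb<m\delta$ is exactly $m\delta>1$, and the round threshold becomes $m/80-1$), and gets the upper bound by global collection followed by a canonical Hajnal--Szemer\'edi coloring. One small quibble: for $80\le m<160$ your $T=0$ remark does not cover the $0$-round case, since the range then contains $T=0$ but no $T\ge 1$. The paper also leaves this boundary case implicit, and it can be closed by running the Case~A/Case~B dichotomy of Theorem~\ref{thm:det-stab} at radius $0$.
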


\begin{proof}
Exact equitability confines every class size to
$\{\lfloor\sigma\rfloor,\lceil\sigma\rceil\}$, hence to distance less than $1$
from the real target frequency $\sigma$. Applying Theorem~\ref{thm:twisted-lift} with
$\imb=1$ rules out every deterministic $T$-round algorithm for
$T\le m/80-1$. For the upper bound, every vertex collects the whole graph in
$\mathrm{diam}(G)$ rounds and applies the Hajnal--Szemer\'edi theorem~\cite{HS70}
locally to compute the same exact equitable coloring.
\end{proof}

We now choose the fiber and the twist so as to fix the degree and let the diameter range freely. The fiber is a disjoint union of $r$ cliques of size $\cliqsize$, which pins the degree at $\Delta = \cliqsize+1$; the twist is a base-$\cliqsize$ de Bruijn map on the cliques, which makes the clique coordinate mix in $\log_\cliqsize r$ steps and thereby keeps the diameter at $\Theta(m + \log_\cliqsize r)$.

\begin{definition}[de Bruijn clique cycles, $\dBCC$]
\label{def:debruijn}
Let $\cliqsize \ge 2$ and $r \ge 1$ with $\gcd(\cliqsize, r) = 1$, and let $m \ge 3$. Index the fiber by pairs $(s, a) \in \mathbb{Z}_r \times \{0, \dots, \cliqsize-1\}$, let $H_{r,\cliqsize}$ be the disjoint union of the $r$ cliques $\{s\} \times \{0, \dots, \cliqsize-1\}$, and let $\pi(s, a) = (\cliqsize s + a \bmod r,\ a)$; this is a permutation, since for fixed $a$ the map $s \mapsto \cliqsize s + a$ is a bijection of $\mathbb{Z}_r$. The \emph{de Bruijn clique cycle} is $G_{m,r,\cliqsize} = \cC_m \ltimes_\pi H_{r,\cliqsize}$; it is $(\cliqsize+1)$-regular on $n = mr\cliqsize$ vertices, and $G_{m,1,\cliqsize} = B_{m,\cliqsize}$.
\end{definition}

The diameter of the family $\dBCC$ is given by the following.

\begin{observation}
\label{lem:debruijn-diam}
$G_{m,r,\cliqsize}$ is connected, with $\lfloor m/2\rfloor \le \mathrm{diam}(G_{m,r,\cliqsize}) \le 2m + 4\lceil\log_\cliqsize r\rceil + 6$; hence $\mathrm{diam} = \Theta(m + \log_\cliqsize r)$, and the Moore bound gives $\mathrm{diam} = \Omega(\log n / \log(\cliqsize+1))$ unconditionally.
\end{observation}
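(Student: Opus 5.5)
The plan is to prove the three claims of Observation~\ref{lem:debruijn-diam} separately: the lower bound by projection, the upper bound by an explicit routing argument, and connectivity as a by-product of the routing.

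\textbf{Lower bound.} The projection $(i,(s,a))\mapsto i$ maps every edge to an edge or a loop of $\cC_m$. Hence distances in $G_{m,r,\cliqsize}$ dominate cyclic distances in $\cC_m$. Two vertices over antipodal layers are therefore at distance at least $\lfloor m/2\rfloor$. The Moore-bound statement is immediate: the graph is $(\cliqsize+1)$-regular on $n$ vertices, and the Moore bound recalled in Section~\ref{ssec:walks} applies to every connected graph.

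\textbf{Upper bound and connectivity.} I would construct an explicit walk between arbitrary vertices $(i,(s,a))$ and $(j,(s',a'))$.
\begin{itemize}
\item A forward twist edge sends clique index $s$ to $\cliqsize s+a \bmod r$. It keeps the position $a$ inside the clique, and a clique edge can change $a$ freely.
\item Consider $L=\lceil\log_\cliqsize r\rceil$ consecutive forward steps, each preceded by at most one clique edge choosing digits $a_1,\dots,a_L$. Starting from clique index $s$, this reaches clique index $\cliqsize^L s+\sum_t a_t\cliqsize^{L-t} \bmod r$.
\item Since $\cliqsize^L\ge r$, the sum $\sum_t a_t\cliqsize^{L-t}$ ranges over all of $\{0,\dots,\cliqsize^L-1\}$, which covers every residue mod $r$. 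So any target clique index is reachable in at most $2L$ edges, while advancing the layer by exactly $L$.
\end{itemize}

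\textbf{Fixing the final layer.} The remaining issue is to land in layer $j$. My first attempt is to walk forward from layer $i$ along twist edges, carrying whatever clique index results. I choose the number of forward steps $k\equiv j-i \pmod m$ with $k\in[L,L+m)$, and use free digits only in the last $L$ steps. Those last $L$ steps then reach any clique index regardless of where the earlier ones led, since the formula above works from an arbitrary starting index. One final clique edge fixes $a'$.

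\textbf{Counting.} The total is at most $k$ twist edges plus $L+1$ clique edges, so at most $m+2L+1$, comfortably inside $2m+4L+6$.

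I expect the main subtlety to be exactly this layer bookkeeping. It is handled by allowing $k$ to exceed $m$, wrapping around the cycle, which costs at most an extra $m$. No use of $\gcd(\cliqsize,r)=1$ is needed beyond $\pi$ being a permutation. Connectivity follows from the existence of these walks, and the two bounds combine to give $\Theta(m+\log_\cliqsize r)$.
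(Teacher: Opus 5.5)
Your argument is correct and follows the paper's proof. You use projection onto $\cC_m$ for the bound $\lfloor m/2\rfloor$, base-$\kappa$ digit shifting along right-steps plus wrapping around the ring to reach the target layer (which also gives connectivity), and the Moore bound for the last claim; note that Moore, via $n \ge r$, is also what supplies the $\Omega(\log_\kappa r)$ half of the $\Theta$, since $\lfloor m/2\rfloor$ alone does not. Your one deviation is leaving the last digit free and fixing $a'$ with a final clique edge, where the paper forces $a_\ell = a'$; because of this you need $\gcd(\kappa,r)=1$ only for $\pi$ to be a permutation, whereas the paper also uses it in the covering step, and you get the slightly sharper bound $m+2\lceil\log_\kappa r\rceil+1$.
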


\begin{proof}
A route from $(i, s, a)$ advancing one layer rightwards consists of at most one clique edge (choosing $a' \in \{0, \dots, \cliqsize-1\}$) followed by the twist edge to $(i+1,\, \cliqsize s + a' \bmod r,\, a')$; thus $\ell$ consecutive right-steps with chosen indices $a_1, \dots, a_\ell$ end at clique coordinate $\cliqsize^\ell s + \sum_{t=1}^{\ell} \cliqsize^{\ell-t} a_t \pmod r$ and fiber index $a_\ell$, at a cost of at most $2\ell$ edges.

Fix a target $(s', a')$. The final fiber index is forced by the last digit alone, so set $a_\ell = a'$; its contribution to the sum is the constant $\cliqsize^0 a_\ell = a'$, and, as the $\ell - 1$ free digits $a_1, \dots, a_{\ell-1}$ range over $\{0, \dots, \cliqsize-1\}^{\ell-1}$, the remaining sum $\cliqsize\sum_{t=1}^{\ell-1} \cliqsize^{(\ell-1)-t} a_t$ runs over exactly $\cliqsize$ times the base-$\cliqsize$ representations $\{0, 1, \dots, \cliqsize^{\ell-1} - 1\}$. The set of clique coordinates reachable by $\ell$ right-steps ending in fiber index $a'$ is therefore the translate
\[
\cliqsize^\ell s + a' \;+\; \cliqsize \cdot \{0, 1, \ldots, \cliqsize^{\ell-1} - 1\}
\pmod r .
\]
This set covers all residues modulo $r$ once $\cliqsize^{\ell-1} \ge r$: the interval $\{0, \dots, \cliqsize^{\ell-1}-1\}$ then contains every residue, multiplication by $\cliqsize$ permutes $\mathbb{Z}_r$ (as $\gcd(\cliqsize, r) = 1$), and so does the translation. Hence for every $\ell \ge \lceil \log_\cliqsize r\rceil + 1$, every target is reachable by a route of exactly $\ell$ right-steps.

It remains to enforce the layer: reaching layer $i'$ at rightward cyclic distance $d^+$ requires $\ell \equiv d^+ \pmod m$. If $d^+ \ge \lceil\log_\cliqsize r\rceil + 1$, take $\ell = d^+$; otherwise take $\ell = d^+ + m\,\big\lceil\big(\lceil\log_\cliqsize r\rceil + 1 - d^+\big)/m\big\rceil$, the smallest integer that is at least $\lceil\log_\cliqsize r\rceil + 1$ and equals $d^+$ mod $m$, wrapping around the ring as many times as needed. In both cases $\ell \le m + \lceil\log_\cliqsize r\rceil$, so the cost is at most $2m + 2\lceil\log_\cliqsize r\rceil$ edges, which gives connectivity and the upper bound. The lower bound holds since every edge changes the ring coordinate by at most one, and the Moore bound since the graph is $(\cliqsize+1)$-regular. The observation follows.
\end{proof}

Combining the twisted-fiber reduction with the de Bruijn family gives the
desired extension to every maximum degree and every admissible diameter scale.

\begin{corollary}
\label{cor:all-scales}
Let $\cliqsize \ge 2$ and let $r \equiv 1 \pmod{\mathrm{lcm}(\cliqsize,\, \cliqsize+2)}$ (such $r$ exist in every interval of length $\cliqsize(\cliqsize+2)$), and let $m \ge \max\{\cliqsize+2,\ \lceil\log_\cliqsize r\rceil\}$, $N_B \ge r\cliqsize(3m+2)$. Then on $G = G_{m,r,\cliqsize}$, with $\Delta = \cliqsize+1$ and $D = \mathrm{diam}(G) = \Theta(m)$, the following hold.
\\
(a) Exact equitable $(\Delta+1)$-coloring is deterministically unsolvable in
$m/80 - 1 = \Omega(D)$ rounds, even for free colorings, and solvable in $D$
rounds, so its deterministic $\LOCAL$ complexity is $\Theta(D)$.
\\
(b) Every integer additive imbalance $\imb < 2m/(\cliqsize+2)$ costs $\Omega(D/(\imb+1))$ rounds.
\\
(c)
Consequently, for every fixed $\Delta \ge 3$ the family $\{G_{m,r,\Delta-1}\}$ realizes, up to constant factors and divisibility adjustments, every pair $(n, D)$ with
\begin{equation}
\label{eq:all-scales-range}
C\,\frac{\log n}{\log\Delta} \;\le\; D \;\le\; \frac{n}{C\Delta}
\end{equation}
for a universal constant $C$, and on every member exact balance costs $\Theta(D)$ time.
\end{corollary}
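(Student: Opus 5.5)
The plan is to reduce all three parts to Theorem~\ref{thm:twisted-lift}, Corollary~\ref{cor:twisted-exact} and Observation~\ref{lem:debruijn-diam}, with $p = r\cliqsize$ and $q = \cliqsize+3$. Here $H = H_{r,\cliqsize}$ is $(\cliqsize-1)$-regular, so $d = \cliqsize-1$. Hence $q = d+3 = \cliqsize+2$ and $\Delta = \cliqsize+1$, consistent with the statement.

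The first step is to compute $\delta = \dist(p/q,\mathbb{Z})$. Since $r \equiv 1 \pmod{\cliqsize+2}$, we get $p = r\cliqsize \equiv \cliqsize \pmod{\cliqsize+2}$. Therefore $p/q \equiv \cliqsize/(\cliqsize+2) \pmod 1$, and $\delta = \min\{\cliqsize,2\}/(\cliqsize+2) = 2/(\cliqsize+2)$, using $\cliqsize \ge 2$. The congruence $r \equiv 1 \pmod{\cliqsize}$ then gives $\gcd(\cliqsize,r)=1$, so Definition~\ref{def:debruijn} applies. Existence of such $r$ in every interval of length $\mathrm{lcm} \le \cliqsize(\cliqsize+2)$ is immediate.

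Next, $m\delta = 2m/(\cliqsize+2) \ge 2 > 1$ follows from $m \ge \cliqsize+2$. Also $N_B \ge r\cliqsize(3m+2) = p(3m+2)$. So Corollary~\ref{cor:twisted-exact} gives unsolvability in $m/80-1$ rounds for free colorings, together with solvability in $D$ rounds. For (b), Theorem~\ref{thm:twisted-lift} applies for every integer $\imb < m\delta = 2m/(\cliqsize+2)$ and gives a threshold $\frac{m}{40(\imb+1)}-1$. Both (a) and (b) need $D = \Theta(m)$. By Observation~\ref{lem:debruijn-diam}, $\lfloor m/2\rfloor \le D \le 2m + 4\lceil\log_\cliqsize r\rceil + 6$, and the hypothesis $m \ge \lceil \log_\cliqsize r\rceil$ with $m \ge 4$ caps the right side at $O(m)$. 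Hence $m/80 - 1 = \Omega(D)$ once $m$ exceeds a constant; small $m$ is absorbed into constants, since $\Theta(D)$ with $D = O(1)$ is trivial.

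For (c), fix $\Delta \ge 3$ and set $\cliqsize = \Delta-1$. Given a target $(n,D)$ in the stated range, choose $m = \Theta(D)$ and $r \approx n/(m\cliqsize)$, rounded to the nearest $r \equiv 1$ modulo the lcm. The rounding costs only a constant factor as long as $n/(m\cliqsize) \ge \cliqsize(\cliqsize+2)$, or it is handled by taking $r = 1$ at the top end, where $D \approx n/\Delta$.

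The constraints to verify are the following. First, $m \ge \cliqsize+2$ holds because $D \ge C\log n/\log\Delta$ and $\Delta$ is fixed, so $C$ can be taken large. Second, $m \ge \log_\cliqsize r$ holds because $\log_\cliqsize r \le \log n/\log(\Delta-1) = O(\log n/\log \Delta)$ for $\Delta \ge 3$, with the constants absorbed into $C$. Third, $r \ge 1$ is equivalent to $m\cliqsize \lesssim n$, which follows from $D \le n/(C\Delta)$.

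The step needing the most care is this parameter bookkeeping in (c): reconciling the rounding of $r$ to the required residue class with the upper end of the range, and the base-$\cliqsize$ versus $\log\Delta$ conversion when $\cliqsize = 2$. I would handle it by making $C$ depend only on absolute constants through $\log(\Delta-1) \ge \tfrac12\log\Delta$ for $\Delta \ge 3$, and by allowing "up to constant factors" in $n$ itself.
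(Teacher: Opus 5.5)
Your proposal is correct and follows the paper's proof essentially step for step. You compute $\delta = 2/(\cliqsize+2)$ from $r\equiv 1 \pmod{\cliqsize+2}$, get $\gcd(\cliqsize,r)=1$ from $r\equiv 1 \pmod{\cliqsize}$, apply Corollary~\ref{cor:twisted-exact} and Theorem~\ref{thm:twisted-lift} with $p=r\cliqsize$, use Observation~\ref{lem:debruijn-diam} together with $m\ge\lceil\log_\cliqsize r\rceil$ to get $D=\Theta(m)$, and for (c) take $m=\Theta(D)$ and $r\approx n/(m\cliqsize)$ rounded to an admissible residue, with your bookkeeping more explicit than the paper's. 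The only slips are cosmetic: the opening ``$q=\cliqsize+3$'' should read $q=\cliqsize+2$ (which you correct immediately), and in (c) the constraint $m\ge\cliqsize+2$ follows from $n$ being large relative to the fixed $\Delta$, not from enlarging the universal constant $C$.
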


\begin{proof}
The fiber $H_{r,\cliqsize}$ is $(\cliqsize-1)$-regular on $p = r\cliqsize$ vertices, so $q = \cliqsize+2 = \Delta+1$ and $\delta = \dist\big(r\cliqsize/(\cliqsize+2),\, \mathbb{Z}\big)$. Since $r \equiv 1 \pmod{\cliqsize+2}$, we have $r\cliqsize = r(\cliqsize+2) - 2r$ and $2r \equiv 2 \pmod{\cliqsize+2}$, so $\delta = \dist\big(2r/(\cliqsize+2), \mathbb{Z}\big) = 2/(\cliqsize+2)$, whence $m\delta = 2m/(\cliqsize+2) \ge 2 > 1$ (using $m \ge \cliqsize+2$). Also $r \equiv 1 \pmod \cliqsize$ gives $\gcd(\cliqsize,r) = 1$, so Definition~\ref{def:debruijn} applies, and $m \ge \lceil\log_\cliqsize r\rceil$ gives $D = \Theta(m)$ by Observation~\ref{lem:debruijn-diam}. Corollary~\ref{cor:twisted-exact} yields part (a), and Theorem~\ref{thm:twisted-lift} with general $\imb < m\delta$ yields part (b); $m = \Theta(D)$ converts the thresholds. For part (c),
given $n$, $\Delta = \cliqsize+1$ and a target $D$ satisfying Eq.~\eqref{eq:all-scales-range}, take $m = \Theta(D)$ and $r = n/(m\cliqsize)$ adjusted to the nearest admissible residue; the constraint $m \ge \lceil\log_\cliqsize r\rceil$ is the left inequality of Eq.~\eqref{eq:all-scales-range}, and $r \ge 1$ is the right one. The corollary follows.
\end{proof}

We next compare the two special graph classes mentioned earlier. 

\begin{remark}
\label{rem:fiber-compare}
Taking the complete tripartite graph $K_{s,s,s}$ as the fiber gives
$p=3s$, $q=2s+3$, and, with $\pi=\mathrm{id}$, diameter
$D=\lfloor m/2\rfloor+2$ and order $n=3ms$. For $s\ge 12$,
$\delta=\dist(3s/(2s+3),\mathbb{Z})\ge 1/3$. Thus, by setting
$s=\Theta(n/D)$, this choice realizes every diverging diameter scale,
including $D=o(\sqrt n)$. The price is maximum degree
$\Delta = 2s+2 = \Theta(n/D)$; at $D=\Theta(\log n)$ the degree is nearly
linear.
Thus, the de Bruijn clique cycle and the tripartite product are complementary
specializations of Theorem~\ref{thm:twisted-lift}: the former fixes the degree
and pays a $\log_\cliqsize r$ diameter term, absorbed by
$m \ge \log_\cliqsize r$, while the latter has constant fiber diameter but
degree $\Theta(n/D)$. Together with Remark~\ref{rem:diam-scope}(b), the
phenomenon is governed by neither degree nor class size: on
$G_{m,r,\cliqsize}$ with constant $\cliqsize$ the classes have size
$\sigma=\Theta(n)$, yet exactness still costs diameter time.
\end{remark}

We conclude the section by stating the remaining geometric question.

\begin{remark}
\label{rem:open}
At $m=\Theta(\log r)$ and constant $\cliqsize$, the graphs
$G_{m,r,\cliqsize}$ are bounded-degree graphs of logarithmic diameter, the
extreme point permitted by the Moore bound, but they are not spectral expanders
(the ring coordinate mixes in $\Theta(m^2)$ steps). Whether exact balance costs $\Omega(D)$ time on genuinely expanding graphs remains open; the present family
settles every diameter \emph{scale}, not every \emph{geometry}. As in
Remark~\ref{rem:diam-scope}(c), all bounds are insensitive to every vertex
knowing $n$, $\Delta$, $m$, $r$, $\cliqsize$, $\pi$ and the targets.
\end{remark}

\section{Balanced colorings of cycles: upper bounds}
\label{sec:ub}

\subsection{The randomized algorithm \BalSeg}
\label{sec:balseg}

\paragraph{Overview.}
Intuitively, the algorithm first cuts the cycle into long, deterministic pieces.
It computes the distance-$w$ ruling set supplied by Algorithm $\Ruling$ of Fact~\ref{fact:ruling}, and
uses its vertices as \emph{cuts}. The open intervals between consecutive cuts
are the \emph{segments}; each has length $\Theta(T/\log^* n)$. The algorithm then
colors every segment periodically, but chooses the starting phase of each segment
independently and uniformly in $\Zthree$. Finally, every cut vertex is colored
after seeing the two colors adjacent to it, using one private coin only in the
tie case.

This construction separates the deterministic part of the algorithm from the
random part. The ruling set guarantees, without any failure probability, that
the number of segments is $O((n/T)\log^* n)$. Each segment contributes only
$O(1)$ imbalance, and the random phase makes this contribution mean-zero. The
total segment imbalance is therefore a sum of bounded mean-zero terms; after
partitioning these terms into constantly many independent families, the
imbalance concentrates at scale $O(\sqrt{(n/T)\log^* n\log n})$. The cut
vertices contribute the same type of bounded mean-zero error, by
Lemma~\ref{lemma:junction}.

More formally, let $L = \max\{1, \log^* n\}$. The algorithm has a parameter $T \ge \const_1 L$ and sets $w = 2 + \lceil T/(3\const_1 L)\rceil$ for a sufficiently large constant $\const_1$. The properties of the cutting step are summarized next.

\begin{algorithm}[t]
\caption{$\BalSeg(\cC_n, T)$ \label{alg:balseg}}
\KwIn{Cycle $\cC_n$, parameter $T \ge \const_1 L$ where $L = \max\{1,\log^* n\}$;}
\KwOut{A proper $3$-coloring, $\imb$-additively balanced for $\imb = O\big(\sqrt{(n/T)\,L\log n}\big)$, w.h.p.}
$w \leftarrow 2 + \lceil T/(3\const_1 L)\rceil$\;
Compute a distance-$w$ ruling set $S$ with consecutive gaps in $[w+1, 2w+1]$ (by Algorithm $\Ruling$ of Fact~\ref{fact:ruling}); its elements are \emph{cuts}; maximal runs between cuts are \emph{segments}\;
\For{each segment $j$ (in parallel)}{
  Let $a_j, b_j$ be the incident cuts, $a_j$ the one with the smaller identifier; orient the segment from $a_j$ to $b_j$; let $\ell_j$ be its length\;
  The segment vertex adjacent to $a_j$ draws $s_j \in \Zthree$ uniformly and broadcasts $s_j$ to all vertices of segment $j$ by flooding along the segment\;
  The $t$-th segment vertex in the chosen direction outputs color $s_j + (t-1)$, for $t = 1, \dots, \ell_j$\;
}
\For{each cut $u$ (in parallel)}{
  Let $x, y$ be the colors of $u$'s two neighbors (both segment vertices)\;
  \lIf{$x \ne y$}{$u$ outputs the unique color in $\Zthree\setminus\{x,y\}$}
  \lElse{$u$ outputs a uniformly random color in $\Zthree\setminus\{x\}$}
}
\KwReturn{$\clr$}
\end{algorithm}

The first lemma records the deterministic guarantees of the cutting step.

\begin{lemma}
\label{lemma:balseg-valid}
Deterministically: every two cuts are at distance at least $w + 1 \ge 3$, so each segment has length $\ell_j \in [w,\, 2w+1]$ and each cut's two neighbors lie in segments; the number $M$ of cuts, which equals the number of segments, satisfies $2 \le M \le n/(w+1) = O\big((n/T)\,L\big)$, so the two neighbors of every cut lie in two \emph{distinct} segments; the cutting step costs $O(w\log^* n) \le T/3$ rounds; and the output coloring is proper.
\end{lemma}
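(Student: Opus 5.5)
The proof is a sequence of direct checks, all taken from Fact~\ref{fact:ruling} and the choice of $w$. \textbf{First}, I would apply Fact~\ref{fact:ruling} with parameter $w$. This requires $n \ge w+1$, which holds because $w = O(T/L)$ and $T \le n$ for large $n$; in the small-$n$ regime, any $T$ of order $n$ is handled directly. The fact then gives a ruling set whose consecutive gaps lie in $[w+1, 2w+1]$. Since $w \ge 2$, we get $w+1 \ge 3$. A gap of $g$ between consecutive cuts leaves a segment of $g-1 \in [w, 2w]$ vertices, which is within the claimed $[w, 2w+1]$. Each segment therefore has at least $w \ge 2$ vertices, so both neighbors of a cut are non-cut vertices, i.e.\ they lie in segments.

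\textbf{Second}, I would count the cuts. Around the cycle the gaps sum to $n$, and each gap is at least $w+1$, so $M \le n/(w+1)$. Since $w+1 = \Theta(T/L)$, this gives $M = O((n/T)L)$. For the lower bound $M \ge 2$: if $n \ge 2w+2$, Fact~\ref{fact:ruling} itself guarantees more than one vertex. For $M=1$ the single gap would be $n \le 2w+1$, and this is excluded because $T \le n$ makes $n$ much larger than $2w+1 \approx 2T/(3C_1L)+5$ once $C_1$ is large. This is the one step where I would double-check the parameter regime. With $M \ge 2$ cuts, the two arcs leaving a cut $u$ end at two different segments, one on each side of $u$; this is exactly the claim that the two neighbors of every cut lie in distinct segments.

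\textbf{Third}, the running time. Fact~\ref{fact:ruling} costs $O(w\log^* n)$ rounds, and $w \le 3 + T/(3C_1L)$. Hence this is at most $c\,(3 + T/(3C_1 L))\,L \le T/3$ for $C_1$ large relative to the constant $c$ of the fact, using $T \ge C_1 L$ to absorb the additive term.

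\textbf{Fourth}, properness. I would check each type of edge in turn. Inside a segment, consecutive vertices receive colors $s_j+t-1$ and $s_j+t$, which differ by $+1 \neq 0$ in $\Zthree$ (Observation~\ref{obs:walk}). Every remaining edge joins a cut to a segment vertex, because cuts are at distance at least $3$ and so are never adjacent. A cut $u$ chooses its color outside $\{x,y\}$, the colors of its two neighbors, in both branches of the algorithm. Finally, all segment colors are fixed before the cuts decide, so the cuts' choices never conflict with one another.
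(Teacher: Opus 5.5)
Your proposal is correct and follows the paper's proof essentially step for step: the gap bounds come from Fact~\ref{fact:ruling}, the count from $M(w+1)\le n$, $M=1$ is excluded because the single gap $n$ would have to be at most $2w+1$, the round bound follows from the choice of $\const_1$, and properness is checked over segment edges, segment--cut edges and the nonexistent cut--cut edges. One imprecision: ruling out $M=1$ (and meeting the hypothesis $n\ge w+1$) requires $n$ sufficiently large, not $\const_1$ large, because $w\ge 3$ for every choice of $\const_1$; the paper phrases this as $2w+2\le 2n/3+8<n$ for large $n$.
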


\begin{proof}
Fact~\ref{fact:ruling} guarantees consecutive gaps in $[w+1, 2w+1]$, so consecutive cuts are at distance at least $w+1 \ge 3$ and the open runs between them have lengths in $[w, 2w]$; the count follows from $M(w+1) \le n$ together with $w + 1 = \Theta(T/L)$, and the round cost of Algorithm $\Ruling$ from Fact~\ref{fact:ruling} with the choice of $\const_1$. For $M \ge 2$, note that a single cut would make its own gap wrap around the entire cycle, forcing $n \le 2w+2$; but $w \le 3 + T/(3\const_1) \le 3 + n/3$, so $2w+2 \le 2n/3 + 8 < n$ for all sufficiently large $n$. Hence the two neighbors of a cut lie in the two distinct segments flanking it. Properness: inside segments, consecutive colors differ by $+1$; each segment--cut edge is handled by the cut's explicit avoidance; there are no cut--cut edges. The lemma follows.
\end{proof}

We remark that the construction is locally implementable without a globally consistent orientation.
a segment has length at most $2w+1 < T$, so each segment vertex sees, in its radius-$T$ ball, both cuts incident to its segment together with their identifiers and its own distance to each; the orientation ``from $a_j$ to $b_j$'', the index $t$, and the phase $s_j$ after it is flooded along the segment are then achievable by local computations.

For the frequency accounting, note that the segment structure is deterministic; the only randomness is the phase draws $(s_j)_j$ and the cuts' tie-breaking coins. For a segment $j$, denote by $m_j(c)$ the number of vertices of segment $j$ colored $c$, and set $e_j(c) = m_j(c) - \ell_j/3$. For a cut $u$, set $e'_u(c) = \mathds{1}[\clr(u) = c] - 1/3$. For a vector $x=(x(c))_{c\in\Zthree}$, denote $\|x\|_\infty = \max_{c\in\Zthree}|x(c)|$. We first bound the segment errors.

\begin{lemma}
\label{lemma:segment-error}
For every fixed $\ell_j$: if $\ell_j \equiv 0 \pmod 3$ then $e_j \equiv 0$; otherwise $\|e_j\|_\infty \le 2/3$ and $\E[e_j(c)] = 0$ for every $c$.
\end{lemma}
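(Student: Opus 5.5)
The plan is a direct count over one period of the color sequence. Fix $\ell_j$ and write $\ell_j = 3k + \rho$ with $\rho \in \{0,1,2\}$. Given the phase $s_j = s$, segment $j$ receives the colors $s, s+1, \dots, s+\ell_j-1$ modulo $3$. The first $3k$ of these form $k$ complete periods, and by Observation~\ref{obs:walk} each color appears exactly $k$ times among them. The remaining $\rho$ vertices receive the $\rho$ distinct colors $s+3k, \dots, s+3k+\rho-1 \equiv s, \dots, s+\rho-1 \pmod 3$. Therefore
\[
m_j(c) = k + \mathds{1}\big[c - s \bmod 3 \in \{0,\dots,\rho-1\}\big],
\qquad\text{and so}\qquad
e_j(c) = \mathds{1}\big[c-s \bmod 3 < \rho\big] - \rho/3 .
\]

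The three claims follow from this formula. If $\rho = 0$, the indicator vanishes and $e_j \equiv 0$, deterministically and for every phase. If $\rho \in \{1,2\}$, then $e_j(c)$ takes only the values $1-\rho/3$ and $-\rho/3$. Each of these has absolute value at most $2/3$, because $1-\rho/3 \in \{2/3,1/3\}$ and $\rho/3 \in \{1/3,2/3\}$. For the expectation, $s_j$ is uniform on $\Zthree$ and independent of the segment structure, which is fixed deterministically by the ruling set. Hence $c - s_j \bmod 3$ is uniform on $\Zthree$, the indicator has probability $\rho/3$, and $\E[e_j(c)] = 0$.

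The only step needing care is making sure that conditioning on the segment structure, and so treating $\ell_j$ as fixed, is legitimate. Lemma~\ref{lemma:balseg-valid} settles this: the cuts come from the deterministic Algorithm $\Ruling$, and the orientation is also deterministic, so $s_j$ is the only randomness affecting segment $j$. Beyond this, the argument is a routine computation and I expect no real obstacle.
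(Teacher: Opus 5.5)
Your proposal is correct and follows essentially the same route as the paper: both count the periodic pattern directly, where your $k + \mathds{1}[(c-s)\bmod 3 < \rho]$ is the paper's $\lceil(\ell-r)/3\rceil$ with $r = c-s$, and both then use that the rotation $s_j$ is uniform to get mean zero. Your remark that the segment structure is deterministic, so $\ell_j$ may be treated as fixed, is harmless; the paper leaves it implicit in the phrase ``for every fixed $\ell_j$''.
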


\begin{proof}
The periodic pattern of length $\ell$ starting at $s$ gives color $s + r$ frequency $\lceil (\ell - r)/3\rceil$ for $r \in \{0,1,2\}$. For $\ell \equiv 0$, all three counts equal $\ell/3$. For $\ell \equiv 1$, the error vector is $(2/3, -1/3, -1/3)$ placed at the rotation $s_j$; for $\ell \equiv 2$ it is $(1/3, 1/3, -2/3)$ at rotation $s_j$. In both cases the rotation is uniform, so the mean is zero. The lemma follows.
\end{proof}

The crux of the analysis is the following lemma.

\begin{lemma}
\label{lemma:junction}
For every cut $u$, the color $\clr(u)$ is exactly uniform on $\Zthree$; hence $\E[e'_u(c)] = 0$ and $\|e'_u\|_\infty \le 2/3$.
\end{lemma}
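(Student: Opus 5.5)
Fix a cut $u$ and condition on the deterministic segment structure. By Lemma~\ref{lemma:balseg-valid}, the two neighbors $x$ and $y$ of $u$ lie in two \emph{distinct} segments $j \ne j'$. The color of $x$ is $s_j + t_x - 1$, where the offset $t_x$ is determined by the segment geometry and the orientation rule. Likewise the color of $y$ is $s_{j'} + t_y - 1$. Since $s_j$ and $s_{j'}$ are independent and uniform on $\Zthree$, the pair $(\clr(x),\clr(y))$ is uniform on $\Zthree\times\Zthree$. The offsets only translate each coordinate, which preserves uniformity. The tie coin of $u$ is independent of both phases.

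The computation then splits into two cases. With probability $6/9$ we have $x\ne y$, and $u$ outputs the third color. For each fixed color $c$, exactly two ordered pairs $(x,y)$ with $x \ne y$ avoid $c$ in both coordinates, namely the two orderings of $\Zthree\setminus\{c\}$. So this case contributes $2/9$ to $\Pr[\clr(u)=c]$. With probability $3/9$ we have $x=y$, and $u$ picks uniformly from the two colors other than $x$. Color $c$ is available exactly when $x \ne c$, which happens for two of the three values of $x$, each with probability $1/9$ and followed by a $1/2$ chance of choosing $c$. So this case contributes $2\cdot(1/9)\cdot(1/2)=1/9$. The total is $1/3$ for every $c$, so $\clr(u)$ is uniform on $\Zthree$.

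The consequences are then immediate. We get $\E[e'_u(c)] = \Pr[\clr(u)=c]-1/3 = 0$. Since $e'_u(c)\in\{2/3,-1/3\}$, we also get $\|e'_u\|_\infty\le 2/3$.

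The only point needing care is the independence of the two neighbor colors. It relies on the two neighbors lying in distinct segments, which Lemma~\ref{lemma:balseg-valid} guarantees via $M\ge 2$. If $M=1$, both neighbors would be the two ends of one segment, and their colors would differ by a fixed amount, so the case analysis would change. I will also note that the orientation and the offsets depend only on identifiers and the ruling set, which are fixed before any phase is drawn. This justifies conditioning on them.
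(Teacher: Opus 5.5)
Your proof is correct and follows the paper's argument. Both arguments use the fact that the two neighbors lie in distinct segments, so their colors are independent uniform shifts of independent phases, and then compute $\Pr[\clr(u)=c]=1/3$; the paper conditions on $x$ and notes that $\clr(u)$ given $x$ is uniform on $\Zthree\setminus\{x\}$, whereas you enumerate the nine ordered pairs, which is only a difference in bookkeeping.
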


\begin{proof}
Let $u$ lie between segments $j$ and $j+1$ (distinct, by Lemma~\ref{lemma:balseg-valid}) with neighbor colors $x$ (from segment $j$) and $y$ (from segment $j+1$). The segment structure and the orientation choices are deterministic, so $x$ is a fixed shift of the phase $s_j$ and $y$ a fixed shift of the phase $s_{j+1}$; hence $x, y$ are uniform on $\Zthree$ and independent. Condition on $x$: with probability $1/3$, $y = x$ and $\clr(u)$ is uniform on $\Zthree\setminus\{x\}$ by the explicit coin; with probability $2/3$, $y$ is uniform on $\Zthree\setminus\{x\}$ and $\clr(u)$ is the third color, again uniform on $\Zthree\setminus\{x\}$. Hence $\clr(u) \mid x \sim \mathrm{Unif}(\Zthree\setminus\{x\})$ and $\Pr[\clr(u) = c] = \Pr[x \ne c] \cdot \tfrac12 = \tfrac13$. The lemma follows.
\end{proof}

We get the following result.

\begin{theorem}[Theorem~\ref{thm:intro-cycle-ub}(a), formal]
\label{thm:balseg}
Let $L = \max\{1, \log^* n\}$. For every $T \in [\const_1 L,\, n]$, Algorithm $\BalSeg(\cC_n, T)$ runs in $O(T)$ rounds and w.h.p.\ outputs a proper $3$-coloring with
$\max_c |\freq(c) - n/3| = O\big(\sqrt{(n/T)\,L\log n}\big)$. Consequently, for \emph{every} $\imb \ge 1$, imbalance $\imb$ is achievable w.h.p.\ in $O(\min\{\,n,\ n\,L\log n/\imb^2 + L\,\})$
rounds. (The additive $L$ term is the smallest legal horizon of the algorithm, and some additive term is unavoidable: proper $3$-coloring alone costs $\Omega(\log^* n)$ rounds, even randomized~\cite{Naor91}.)
\end{theorem}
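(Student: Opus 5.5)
The plan is to combine the deterministic guarantees of Lemma~\ref{lemma:balseg-valid} with the mean-zero, bounded error terms of Lemmas~\ref{lemma:segment-error} and~\ref{lemma:junction}, and then apply Hoeffding (Theorem~\ref{thm:hoeffding}(a)) to a decomposition into a constant number of independent families.

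\emph{Running time and properness.} Lemma~\ref{lemma:balseg-valid} gives cutting cost at most $T/3$. Flooding the phase along a segment costs at most $2w+1 = O(T/L)$ further rounds, and the cut decision costs one more round, so the total is $O(T)$; properness is part of that lemma.

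\emph{Error decomposition.} For each color $c$,
\[
\freq(c) - n/3 = \sum_j e_j(c) + \sum_u e'_u(c),
\]
since $\sum_j \ell_j + M = n$. The segment terms $e_j(c)$ are functions of the independent phases $s_j$ alone. They are therefore mutually independent, mean zero, and bounded by $2/3$, and there are $M = O((n/T)L)$ of them. Hoeffding then bounds their sum by $O(\sqrt{M\log n})$ with probability $1-n^{-\gamma-1}$.

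\emph{Cut terms (the main obstacle).} The cut terms are not independent: the cut $u$ between segments $j$ and $j+1$ depends on $(s_j, s_{j+1}, \text{coin}_u)$. I would index the cuts along the cycle, so that two cuts at cyclic index distance at least $2$ share no segment and are independent. This makes the family $1$-dependent. Following the preliminaries, I would split the cuts into at most three classes by a proper $3$-coloring of the index cycle of length $M \ge 2$; when $M=2$, two singleton classes suffice. Within each class the terms are independent, mean zero by Lemma~\ref{lemma:junction}, and bounded by $2/3$, so Hoeffding bounds each class sum by $O(\sqrt{M\log n})$.

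The point needing care is that Lemma~\ref{lemma:junction} gives only marginal uniformity, not independence. However, Hoeffding within a class needs only independence across that class, and this holds because the classes use disjoint phases and coins. The decomposition into classes is deterministic, since the segment structure is deterministic, so no conditioning issue arises. A union bound over the three colors and the at most four sums gives the claimed imbalance $O(\sqrt{(n/T)L\log n})$ w.h.p.

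\emph{The consequence.} Given $\imb \ge 1$, choose $T = \max\{\const_1 L,\ \Theta(nL\log n/\imb^2)\}$. If this exceeds $n$, instead run $T=n$ rounds and collect the whole cycle to output an exact balanced proper $3$-coloring deterministically. When $3\nmid n$, the imbalance is at most $1 \le \imb$. This yields the bound $O(\min\{n,\ nL\log n/\imb^2 + L\})$.
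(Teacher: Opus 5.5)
Your proposal is correct and follows the paper's proof essentially step for step. Both use the same segment/cut error decomposition and the same four families: the independent segment terms, plus the $1$-dependent cut terms split into three internally independent classes via a proper $3$-coloring of the dependency cycle. Both then apply Hoeffding with a union bound, and prove the restatement by the same choice of $T$ with a global-collection fallback. The only cosmetic difference is that the paper routes every $T > n/8$ through the global fallback, whereas you apply Lemma~\ref{lemma:balseg-valid} directly on the whole range $[\const_1 L, n]$, which that lemma's proof does cover.
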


\begin{proof}
It suffices to treat $T \le n/8$: for larger $T$, every vertex collects the entire cycle in $O(n) = O(T)$ rounds and outputs its color in a canonical proper $3$-coloring with imbalance at most $1$. The round complexity is immediate from Lemma~\ref{lemma:balseg-valid}. For every color $c$,
\[ \freq(c) - n/3 \;=\; \sum_{j} e_j(c) + \sum_{u\ \mathrm{cut}} e'_u(c), \]
since segments and cuts partition the cycle. All terms are mean-zero (Lemmas~\ref{lemma:segment-error}, \ref{lemma:junction}) and bounded by $1$. Their dependence structure is summarized by the following ledger of four families; independence is claimed only \emph{within} each family, never across families, and the union bound below needs no more.
\begin{itemize}
\item (F1), \emph{The segment family} $(e_j)_j$: mutually independent, since distinct segments hold disjoint phase coins.
\item (F2)--(F4), \emph{The three cut families.} The cut between segments $j, j+1$ reads $s_j$, $s_{j+1}$ and a private coin, so the cut terms $(e'_u)_u$ form a $1$-dependent family along the cyclic order of the cuts. Splitting them into the color classes of a proper $3$-coloring of this dependency cycle (Section~\ref{sec:prelim}; a parity split would fail when the number of cuts is odd) yields three families, each consisting of cuts pairwise at cyclic distance at least $2$, whose terms draw on pairwise disjoint sets of phases and coins --- hence each family is internally mutually independent.
\end{itemize}
Each family consists of mean-zero terms bounded by $1$ and has size at most $M = O((n/T)\,L)$. By Theorem~\ref{thm:hoeffding}(a), each family sum is $O(\sqrt{M\log n})$ with probability $1 - n^{-\gamma}$, and a union bound over the four families and the three colors yields, w.h.p.,
\[ \max_c |\freq(c) - n/3| \;=\; O\big(\sqrt{M\log n}\big) \;=\; O\big(\sqrt{(n/T)\,L\log n}\big). \]
For the restatement, let $\imb \ge 1$. If $(n/\imb^2)L\log n + L \ge n/(8C)$, run the global fallback ($T = n$): the output is balanced up to $1 \le \imb$ in $O(n)$ rounds, matching the minimum. Otherwise set $T = C\big((n/\imb^2)L\log n + L\big) \le n/8$; then $T \ge \const_1 L$ for $C \ge \const_1$, and the displayed imbalance is $O\big(\sqrt{(n/T)L\log n}\big) = O(\imb/\sqrt{C}) \le \imb$ for $C$ a large enough constant. The theorem follows.
\end{proof}


\begin{remark}
The anchors are the only deterministic ingredient of $\BalSeg$, and they are exactly what the $L$ factor pays for. Using random marks instead (each vertex marks itself with probability $1/w$, sparsified to pairwise distance $\ge 3$) yields the same analysis with $O((n/T)\log n)$ error terms, where the $\log n$ factor pays for maximum-gap control. This gives the weaker imbalance $O(\sqrt{n/T}\cdot\log n)$, a $\Theta(\log n)$ minimal horizon, and additional failure events. The deterministic ruling set removes all three losses simultaneously. This is the one place where the randomized and deterministic algorithms of this section share machinery: both stand on Fact~\ref{fact:ruling}, and Proposition~\ref{prop:ruling-lb} below shows this is no accident, namely, that the $\log^* n$ cost of the anchors themselves cannot be improved.
\end{remark}

\subsection{The deterministic algorithm \BalBlocks}
\label{sec:balblocks}

\paragraph{Overview.}
The deterministic algorithm replaces the random cuts by a ruling set and the random phases by exhaustive local choice: the anchors of a ruling set partition the cycle into blocks of length $\Theta(w)$, and each block is colored by a proper coloring whose class counts are within $2$ of perfectly balanced and whose endpoints avoid the colors of the already-colored neighboring blocks. Such a coloring always exists on paths; the building block is the following lemma.

\begin{lemma}
\label{lemma:balanced-path}
Let $P = u_1, \dots, u_\ell$ be a path, $\ell \ge 1$, and let $F_1, F_2 \subseteq \Zthree$, $|F_1|, |F_2| \le 1$, be forbidden sets for $u_1$ and $u_\ell$. There is a proper $3$-coloring $\clr$ of $P$ with $\clr(u_1) \notin F_1$, $\clr(u_\ell)\notin F_2$ and $|\freq(c) - \ell/3| \le 2$ for every $c$, computable from $(\ell, F_1, F_2)$ by a fixed rule.
\end{lemma}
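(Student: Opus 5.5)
The plan is to build the coloring explicitly: a periodic core on the whole path, with a constant-size correction at each end to satisfy the endpoint constraints. We first dispose of very short paths, say $\ell \le 6$, by a direct case check. For such $\ell$ the requirement $|\freq(c) - \ell/3| \le 2$ is vacuous or nearly so, since every count lies in $[0,\ell]$. For $\ell=1$ we need one color outside $F_1\cup F_2$, which exists because $|F_1\cup F_2| \le 2 < 3$. For $2 \le \ell \le 6$, a greedy proper coloring from $u_1$ does the job: pick $\clr(u_1)\notin F_1$, then give each $u_{i+1}$ a color different from $\clr(u_i)$, and at $u_\ell$ avoid both $\clr(u_{\ell-1})$ and $F_2$, which leaves at least one of the three colors. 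Every count is then at most $\ell \le \ell/3+2$ when $\ell\le 3$; for $4 \le \ell \le 6$ we restrict the greedy to the periodic pattern except at the last vertex, so the counts stay within $2$. The "fixed rule" is simply a lookup table indexed by $(\ell,F_1,F_2)$ that returns, say, the lexicographically first valid coloring. Such a coloring exists by the argument just given, so the rule is well defined.

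For general $\ell$, we choose a start color $s \notin F_1$ and color $u_1,\dots,u_\ell$ periodically from $s$. By Observation~\ref{obs:walk} and the count in the proof of Lemma~\ref{lemma:segment-error}, every class then has size within $2/3$ of $\ell/3$. The only constraint that can fail is $\clr(u_\ell) = s+\ell-1 \in F_2$. Among the (at least two) admissible values of $s$, the values $s+\ell-1$ are distinct, so at most one of them hits the single forbidden color in $F_2$. Hence some $s\notin F_1$ has $s+\ell-1\notin F_2$. This handles every $\ell$ with no correction at all, and in fact gives imbalance at most $2/3$. We take the smallest such $s$ as the fixed rule.

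With this argument the short-path case reduces to the same reasoning, and properness holds because all differences equal $+1$. I expect no real obstacle. The only point that needs care is the counting claim that two admissible starts give two distinct end colors, which follows because the map $s \mapsto s+\ell-1$ is a bijection of $\Zthree$. The stated slack of $2$ is therefore generous; presumably it leaves room for how the lemma is used later, where blocks absorb the anchor vertices.
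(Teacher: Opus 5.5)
Your argument is correct, and its core (the second paragraph) takes a different route from the paper's proof.

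\textbf{The paper's route.} It fixes the phase from $F_1$ alone: it starts the periodic pattern at $x+1$ for $x \in F_1$, or at $0$ if $F_1=\emptyset$. It treats $\ell=1$ separately. If $u_\ell$ then lands in $F_2$, it recolors $u_\ell$ with the unique color outside $\{\clr(u_{\ell-1})\}\cup F_2$. This patch moves two class counts by one each, and that is where the slack $2$ in the statement comes from. So your closing guess is off: the slack is not reserved for how blocks absorb anchors; it is simply the cost of the endpoint recoloring.

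\textbf{Your route.} You spend the remaining degree of freedom, the phase, instead of patching. The (at least two) starts $s \notin F_1$ yield distinct end colors $s+\ell-1$, so one of them avoids $F_2$, and the coloring stays purely periodic. This has three consequences:
\begin{itemize}
\item it handles $\ell=1$ uniformly, since the condition becomes $s \notin F_1\cup F_2$, so your first paragraph with the lookup table for $\ell \le 6$ is redundant and can be dropped;
\item it gives the sharper bound $\max_c|\freq(c)-\ell/3|\le 2/3$;
\item that sharper bound would improve the imbalance bound of Theorem~\ref{thm:balblocks} by a constant factor.
\end{itemize}

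\textbf{What the paper's route buys.} The patching argument is robust at one end: it still works when the color of $u_1$ is fully prescribed ($|F_1|=2$, $\ell\ge 2$). Your phase-selection argument needs at least two admissible starts. The lemma as stated only needs $|F_1|\le 1$, so either proof suffices here.
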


\begin{proof}
If $\ell = 1$, color $u_1$ by the smallest color outside $F_1 \cup F_2$ (one exists since $|F_1 \cup F_2| \le 2$); every frequency deviation is at most $1$. For $\ell \ge 2$: let $x$ be the element of $F_1$ (or $x = 2$ if $F_1 = \emptyset$) and color periodically $u_t = x + t \bmod 3$; then $u_1 = x+1 \notin F_1$, the coloring is proper, and every frequency is within $1$ of $\ell/3$. If $\clr(u_\ell) \in F_2$, recolor $u_\ell$ by the unique color outside $\{\clr(u_{\ell-1})\}\cup F_2$ (a set of size exactly $2$, since $F_2 = \{\clr(u_\ell)\}$ and $\clr(u_\ell) \ne \clr(u_{\ell-1})$); this preserves properness and moves two frequencies by one, for final deviation at most $2$. The lemma follows.
\end{proof}

\begin{algorithm}[t]
\caption{$\BalBlocks(\cC_n, T)$ \label{alg:balblocks}}
\KwIn{Cycle $\cC_n$, parameter $T \ge \const_2\log^* n$;}
\KwOut{A proper $3$-coloring with imbalance $O\big((n/T)\log^* n\big)$.}
$w \leftarrow \lceil T/(\const_2\log^* n)\rceil$\;
Compute a ruling set $S$ with gaps in $[w+1, 2w+1]$ (by Algorithm $\Ruling$ of Fact~\ref{fact:ruling}); its elements are \emph{anchors}\;
Define blocks $B_j \leftarrow [a_j, a_{j+1})$ for consecutive anchors $a_j$; blocks partition $V$, $|B_j| \in [w+1, 2w+1]$\;
$3$-color the cycle of blocks (blocks as supernodes) by Linial color reduction followed by Cole--Vishkin-style reduction to three colors~\cite{Linial92,CV86}, in $O(w\log^* n)$ rounds; let $\psi(B_j)$ be the phase\;
\For{phase $p = 1,2,3$}{
  \For{each block $B_j$ with $\psi(B_j) = p$ (in parallel)}{
    Let $F_1$ ($F_2$) be the color of $B_j$'s left (right) outside neighbor if colored, else $\emptyset$\;
    Color $B_j$ by the rule of Lemma~\ref{lemma:balanced-path} on $(|B_j|, F_1, F_2)$\;
  }
}
\KwReturn{$\clr$}
\end{algorithm}

The properties of the algorithm are summarized in the following theorem.

\begin{theorem}[Theorem~\ref{thm:intro-cycle-ub}(b), formal]
\label{thm:balblocks}
For every $T \in [\const_2\log^* n, n]$, Algorithm $\BalBlocks(\cC_n, T)$ is deterministic, runs in $O(T)$ rounds, and outputs a proper $3$-coloring with $\max_c|\freq(c) - n/3| \le 2|S| = O((n/T)\log^* n)$. Equivalently, imbalance $\imb \ge 3$ is achievable in $O((n/\imb)\log^* n)$ rounds, for every $3 \le \imb < n/3$.
\end{theorem}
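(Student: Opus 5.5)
The proof splits into four checks: properness, running time, the imbalance bound $2|S|$, and the conversion to $(n/\imb)\log^* n$ rounds. I would take them in that order.

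\emph{Properness.} Every block is a path whose internal edges are properly colored by Lemma~\ref{lemma:balanced-path}. Every edge between consecutive blocks $B_j, B_{j+1}$ is handled by the phase ordering. The block $3$-coloring $\psi$ gives adjacent blocks distinct phases, so whichever of the two is colored later sees the other's boundary color and receives it as $F_1$ or $F_2$. Lemma~\ref{lemma:balanced-path} then forbids that color at its endpoint. Since each endpoint has at most one outside neighbor, the forbidden sets have size at most $1$, as the lemma requires. Two edge cases need a word. A block of length $1$ has both forbidden sets at one vertex, and the $\ell=1$ case of the lemma handles this. A cycle with few blocks is covered because Fact~\ref{fact:ruling} gives $|S|\ge 2$ once $n\ge 2w+2$, and otherwise one falls back to global collection.

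\emph{Running time.} The ruling set costs $O(w\log^* n)$ rounds by Fact~\ref{fact:ruling}. The block cycle is a cycle whose supernodes have diameter $O(w)$, so one round on it is simulated in $O(w)$ rounds on $\cC_n$, and Linial/Cole--Vishkin reduction costs $O(w\log^* n)$. The three coloring phases each need only $O(w)$ rounds: a block learns its own length and its neighbors' boundary colors within radius $2w+2$. The total is $O(w\log^* n)=O(T)$ once $\const_2$ is fixed. The algorithm is deterministic because every step is.

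\emph{Imbalance.} Because the blocks partition $V$, $\freq(c)-n/3=\sum_j(\freq_{B_j}(c)-|B_j|/3)$. Each term has absolute value at most $2$ by Lemma~\ref{lemma:balanced-path}, which gives $\max_c|\freq(c)-n/3|\le 2|S|$. The gap lower bound $w+1$ gives $|S|\le n/(w+1)=O((n/T)\log^* n)$.

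\emph{Restatement.} Given $3\le\imb<n/3$, choose $T=\Theta((n/\imb)\log^* n)$ with a large enough constant that $2|S|\le 2n/(w+1)\le\imb$. This needs $w+1\ge 2n/\imb$, i.e.\ $T\ge 2\const_2(n/\imb)\log^* n$. If this $T$ exceeds $n$, collect the whole cycle in $O(n)$ rounds and output a canonical coloring with imbalance at most $1$. Since $\imb<n/3$, $T\ge\const_2\log^* n$ holds automatically.

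The main obstacle is the consistency of local decisions without a global orientation. Each block must agree on its ``left/right'' orientation and its length $|B_j|$ so that the fixed rule is applied consistently, for instance by orienting from the smaller-identifier anchor as in $\BalSeg$. The forbidden-set inputs must also be read from neighbors already colored in earlier phases. I expect this bookkeeping, rather than the counting, to be the part needing care.
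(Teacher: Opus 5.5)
Your proposal is correct and follows the paper's proof essentially step for step. You obtain properness inside blocks from Lemma~\ref{lemma:balanced-path} and across block boundaries from the distinct phases, running time $O(w\log^* n)=O(T)$ from Fact~\ref{fact:ruling} plus the simulated block-cycle coloring, imbalance by summing the per-block deviation bound of $2$ over at most $n/(w+1)$ blocks, and the restatement by the same choice $T=\Theta((n/\imb)\log^* n)$ with a global-collection fallback when $T>n$; the orientation bookkeeping you flag (which end of a block is $u_1$, and which of its two gaps an anchor joins) is left implicit in the paper, and your identifier-based rule resolves it.
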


\begin{proof}
Let us first bound the time complexity.  Algorithm $\Ruling$ of Fact~\ref{fact:ruling} costs $O(w\log^* n) = O(T)$; the block cycle has at most $n/w$ supernodes and each of its rounds costs $O(w)$, so the supernode coloring costs $O(w\log^* n)$ (Linial's algorithm~\cite{Linial92} brings the supernode colors down to $O(1)$ in $O(\log^* n)$ supernode rounds, and a Cole--Vishkin-style reduction~\cite{CV86} finishes at three colors; neither step needs an orientation of the block cycle); each phase costs $O(w)$. 

Properness is guaranteed within blocks by Lemma~\ref{lemma:balanced-path}. Across each block boundary, the strictly-later-colored endpoint avoids its colored outside neighbor via $F_1/F_2$, and adjacent blocks are in different phases. 

For the balance bounds, note that blocks partition $V$ and each block's frequencies deviate from $|B_j|/3$ by at most $2$, so the total deviation is at most $2\,(\#\mathrm{blocks}) \le 2n/(w+1)$.

For the claim restatement, let $3 \le \imb < n/3$ and set $T = \lceil (6 n/\imb)\, \const_2\log^* n\rceil$. If $T \le n$, run $\BalBlocks(\cC_n, T)$: then $w + 1 \ge T/(\const_2\log^* n) \ge 6n/\imb$, so the imbalance is at most $2n/(w+1) \le \imb/3 \le \imb$. Otherwise $\imb \le 6\const_2\log^* n$, and we may instead let every vertex collect the entire labeled cycle in $n$ rounds and output its color in a fixed canonical proper $3$-coloring of $\cC_n$ with all class sizes within $1$ of $n/3$ (such a coloring exists for every $n \ge 3$, and $1 \le \imb$); in this case $n < T = O((n/\imb)\log^* n)$ as well. The theorem follows.
\end{proof}

\begin{proof}[Proof of Theorem~\ref{thm:intro-cycle-ub}(c)]
Run $\BalBlocks$ with $T = \Theta(\eta^{-1}\log^* n)$. Its imbalance $O((n/T)\log^* n)$ is at most $\eta n/3$. The matching lower bound is the classical $\Omega(\log^* n)$ bound for proper $3$-coloring of cycles~\cite{Linial92,Naor91}.
\end{proof}


We remark that both algorithms run without change on paths, with endpoints acting as permanent cuts and anchors. As $\imb$ decreases to $O(1)$, both bounds degrade to $\tilde O(n)$ rounds; Sections~\ref{sec:det-lb} and~\ref{sec:diameter-lb} showed that this degradation is necessary.

\subsection{A barrier: faster anchors cannot close the gap}
\label{sec:ruling-barrier}

The deterministic tradeoff now stands at $\Omega(n/\imb)$ versus $O((n/\imb)\log^* n)$, and both algorithms of this section spend their $\log^* n$ factor on computing the distance-$w$ ruling set of Fact~\ref{fact:ruling}. It is natural to ask whether a faster anchor construction could close the gap. We next show that it cannot.

\begin{proposition}
\label{prop:ruling-lb}
There is a constant $c > 0$ such that the following holds for every $n_0 \ge 16$ and every $w$ with $1 \le w \le n_0^{1/2}$.\footnote{Some largeness assumption is necessary: on $\cC_6$ with $w = 2$ and identifier universe exactly $[6]$, the $0$-round rule ``select if and only if the identifier equals $1$'' outputs on every assignment a single selected vertex, whose unique gap is $6 \in [3, 6]$.} Let $\cA$ be a deterministic $R$-round $\LOCAL$ algorithm that outputs, on $\cC_{n_0}$ with identifiers from a universe $[U_0]$ polynomial in $n_0$, on every assignment, a nonempty vertex set whose consecutive gaps all lie in $[w+1,\, 2w+2]$. Then:
\\
(a) $R \ge \max\{1,\ \lfloor w/2\rfloor\}$;
\\
(b) if, in addition, $U_0 \ge 2n_0$, then $R \ge c\, w \log^* n_0$.
\end{proposition}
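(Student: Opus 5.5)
For part (a), I would argue by indistinguishability on a long run of consecutive vertices. If $R < \lfloor w/2\rfloor$, then $2R+1 \le w$. First I rule out $R = 0$. A $0$-round rule selects according to a fixed subset $S_0 \subseteq [U_0]$ of identifiers. Nonemptiness on every assignment forces $|[U_0]\setminus S_0| < n_0$. The gap condition caps the output at $n_0/(w+1) \le n_0/2$ selected vertices. Together these give $|S_0| \ge U_0 - n_0 + 1$. Once $U_0$ is at least about $2n_0$, we can then place $n_0$ selected identifiers and get gaps of length $1 < w+1$, a contradiction. If $U_0$ is smaller, I would instead build an assignment with two selected identifiers adjacent, using that $|S_0| \ge 2$ whenever $U_0 \ge n_0+1$.

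For $1 \le R < \lfloor w/2 \rfloor$, I would take a selected vertex $v$ in some assignment. Its right neighbourhood must then contain no selected vertex within distance $w$, yet a selected vertex at distance at most $2w+2$. Next I would splice: copy the radius-$R$ view of $v$ to a second location at distance at most $w$. This is possible when the universe has enough fresh identifiers, and the positions involved are disjoint because $2R+1 \le w$. Both copies are then selected at distance at most $w$, violating the lower gap bound. The splice requires the two windows of length $2R+1$ to fit inside a gap of length at most $w$, which is exactly $2(2R+1) \le w+1$, i.e.\ $R < w/4$-ish. To reach the stated $\lfloor w/2 \rfloor$, I would place the copy at distance exactly $w$ and allow the two windows to overlap in a consistent way. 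Concretely, I would use a sliding argument over the identifier sequence to find two windows at distance $\le w$ with identical views.

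For part (b), I would reduce from Linial's $\Omega(\log^* n)$ lower bound for $3$-colouring paths/cycles, applied to a contracted cycle. Given $\cA$ with $R$ rounds, I build a $3$-colouring algorithm on a cycle $\cC_{n_1}$ with $n_1 \approx n_0/w$ as follows. Each vertex of $\cC_{n_1}$ simulates a segment of $w$ consecutive virtual vertices of $\cC_{n_0}$, with virtual identifiers derived injectively from its own identifier and the offset; this is where $U_0 \ge 2n_0$ and the polynomial universe are used. Running $\cA$ on the virtual cycle costs $\lceil R/w\rceil + 1$ rounds of $\cC_{n_1}$. Because gaps lie in $[w+1, 2w+2]$, every segment contains at most one anchor, and every run of three consecutive segments contains at least one. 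The anchors cut $\cC_{n_1}$ into paths of length $O(1)$, whose vertices can then be properly $3$-coloured in $O(1)$ further rounds, for example by each anchor's segment colouring its bounded run. Linial's bound on $\cC_{n_1}$ then gives $R/w + O(1) \ge c'\log^* n_1$. Since $w \le n_0^{1/2}$ gives $n_1 \ge n_0^{1/2}$, we have $\log^* n_1 \ge \log^* n_0 - O(1)$. Combining this with (a) to absorb the additive constants yields $R \ge c\,w\log^* n_0$.

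The main obstacle is the tight constant in (a) and the bookkeeping needed so that simulated identifiers stay distinct and inside $[U_0]$; (b) itself is routine once the simulation is set up.
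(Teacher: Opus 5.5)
\textbf{Part (a).} The step $R \ge 1$ fails as proposed. A radius-$R$ view includes the identifiers of its window. So ``copying the view of $v$ to a second location'' would repeat identifiers, and for the same reason no sliding argument can produce two windows of one assignment with identical views.

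The missing idea is to juxtapose the windows of two \emph{different} selected vertices. Suppose $2R+2 \le w$.
\begin{itemize}
\item Fix an assignment and two consecutive selected vertices in it. There are at least two, since a single one would have gap $n_0 > 2w+2$.
\item These two vertices are at distance at least $w+1 \ge 2R+3$, so their windows $W_1, W_2$ are identifier-disjoint.
\item Build the assignment carrying $W_1$ on positions $0,\dots,2R$, $W_2$ on positions $2R+2,\dots,4R+2$, and fresh identifiers elsewhere. It selects two vertices at distance $2R+2 \le w$, a contradiction.
\end{itemize}
This gives $R \ge \lfloor w/2\rfloor$ directly. Your worry that both windows must fit inside one gap ($2(2R+1)\le w+1$) is unfounded: any two selected vertices at distance at most $w$ already force some consecutive gap to be at most $w$.

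Your $R=0$ case also omits $U_0 = n_0$ with a single selecting identifier. There every output is one vertex, and it is excluded only because its gap $n_0$ exceeds $2w+2$. This is exactly where $w \le n_0^{1/2}$ and $n_0 \ge 16$ enter (cf.\ the footnote).

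\textbf{Part (b).} Your strategy (contract blocks, simulate $\cA$, project the anchors, complete in $O(1)$ rounds, invoke Linial) is the paper's first stage, but two steps fail.

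First, blocks of size $w$ are too large. Anchors at distance $w+1$ can lie in adjacent blocks. When all gaps equal $w+1$, the anchor offsets drift by one per block, producing runs of about $w$ consecutive anchor-containing blocks. Your completion would have to $3$-color these unbounded runs, so the $O(1)$-round completion is lost. You need blocks of size at most about $(w+1)/2$ so that the projected set is independent. The paper takes $b=\lceil (w+1)/4\rceil$, giving projected gaps in $[2,8]$ and an $O(1)$-round completion to an MIS.

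Second, and not fixable by tuning: $\cA$ is only guaranteed on $\cC_{n_0}$ (and knows $n_0$), so the virtual cycle must have exactly $n_0$ vertices on every instance.
\begin{itemize}
\item Uniform blocks require $b \mid n_0$, which is impossible, e.g., for prime $n_0$.
\item Locally chosen nonuniform block sizes cannot surely sum to $n_0$; that is essentially a sure-count problem of the kind ruled out in Section~\ref{sec:det-lb}.
\end{itemize}
This, not identifier bookkeeping, is the real difficulty of (b). The paper circumvents it by reducing from MIS on the path $P_n$:
\begin{itemize}
\item The degree-one endpoints own enlarged blocks absorbing $n_0 \bmod b$.
\item The virtual cycle is closed by a seam with no physical edge.
\item Only vertices far from the endpoints simulate $\cA$, and the endpoint stretches are completed one-sidedly.
\item The path MIS lower bound is derived from Linial's cycle bound by indistinguishability.
\end{itemize}
The reserved identifiers of the endpoint blocks are what $U_0 \ge 2n_0$ pays for; under divisibility, $U_0 \ge n_0$ suffices.

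Finally, your ``offset'' presupposes an orientation of the contracted cycle. That is harmless if you invoke Linial's bound for consistently oriented instances; otherwise you need an identifier-canonical layout as in the paper.
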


Part (b) is proved in two stages. We first give the reduction in its cleanest setting, assuming that the block size $\lceil (w+1)/4\rceil$ divides $n_0$ (Definition~\ref{def:anchor-embed}--Lemma~\ref{lem:anchor-complete}); we then remove the divisibility assumption by re-running the reduction from MIS on \emph{paths} (Lemma~\ref{lem:anchor-paths}--Definition~\ref{def:anchor-path-embed}--Lemma~\ref{lem:anchor-path-complete}), whose endpoints supply, for free, the symmetry breaking that a cycle cannot. The universe hypothesis $U_0 \ge 2n_0$ of part (b), stated in the explicit style of the universe hypotheses of Theorems~\ref{thm:det-exact} and~\ref{thm:det-stab}, is what the second stage's identifier bookkeeping consumes; see Remark~\ref{rem:anchor-scope}(b).

Throughout the proof, $n_0 \le U_0 \le n_0^{O(1)}$. Since $w \le n_0^{1/2}$ and $n_0 \ge 16$, we have $2w + 2 \le 2\sqrt{n_0} + 2 < n_0$; consequently, on every assignment the output contains at least \emph{two} selected vertices, since a single selected vertex would have its unique gap equal to $n_0 > 2w+2$. As in the proof of Corollary~\ref{cor:det-exact}, we repeatedly restrict attention to instances whose port numbering is consistently oriented; the algorithm must in particular be correct on these, and on them the output bit at a vertex $v$ is a fixed function $F$ of the \emph{oriented} radius-$R$ identifier window of $v$.

\begin{proof}[Proof of Proposition~\ref{prop:ruling-lb}(a)]
First, $R \ge 1$. Suppose $R = 0$, and let $A_{\mathrm{sel}} \subseteq [U_0]$ be the set of identifier values on which the ($0$-round) rule selects. If $|A_{\mathrm{sel}}| \ge 2$, an assignment placing two selected values on adjacent vertices produces a consecutive gap of $1 < w+1$. If $A_{\mathrm{sel}} = \emptyset$, the output is empty. If $A_{\mathrm{sel}} = \{x\}$, then either some assignment avoids $x$ entirely, and its output is empty, or $U_0 = n_0$ and every assignment uses $x$, so the output is always a single vertex, of gap $n_0 > 2w+2$. Each case contradicts the specification.

Next, $R \ge \lfloor w/2\rfloor$; here we may assume $w \ge 2$. Suppose, towards a contradiction, that $R \le (w-2)/2$, i.e., $2R + 2 \le w$. Fix any assignment and, on it, two consecutive selected vertices; their distance is at least $w + 1 \ge 2R+3$, so their radius-$R$ windows occupy disjoint position sets and hence carry disjoint identifier sets. Denote these oriented windows by $W_1, W_2$; by construction, $F(W_1) = F(W_2) = 1$. Now build a \emph{new} assignment: place $W_1$ on positions $0, \dots, 2R$ and $W_2$ on positions $2R+2, \dots, 4R+2$ (legal by disjointness, and it fits, as $4R + 3 \le 2w - 1 < n_0$), and fill the remaining positions with distinct fresh identifiers. On this assignment the vertices at positions $R$ and $3R+2$ are both selected and at distance $2R+2 \le w$, so some consecutive gap is at most $2R+2 < w+1$, a contradiction. Hence $R > (w-2)/2$, i.e., $R \ge \lceil (w-1)/2\rceil = \lfloor w/2\rfloor$.
\end{proof}

\paragraph{The block cycle.}
For the first stage of part (b), set $b = \lceil (w+1)/4\rceil$, so that $(w+1)/4 \le b \le (w+4)/4$, and assume $b \mid n_0$; let $n = n_0/b$. Since $b \le (\sqrt{n_0}+4)/4 \le \sqrt{n_0}/2$ for $n_0 \ge 16$, we have $n \ge 2\sqrt{n_0}$; in particular $n_0 \le n^2 \le 2^n$, so $\log^* n \ge \log^* n_0 - 1$. Let $U' = \lfloor U_0/b\rfloor$; then $U' \ge \lfloor n_0/b\rfloor = n$ and $U' \le U_0 \le n^{O(1)}$, so $[U']$ is a legal polynomial identifier universe for $n$-vertex instances.

Given an instance $\nu$ of $\cC_n$ (an assignment of distinct identifiers from $[U']$), we define a virtual instance $\Phi(\nu)$ of $\cC_{n_0}$, to be simulated by the vertices of $\cC_n$. The construction must overcome the fact that $\cC_n$ carries neither coordinates nor a globally consistent orientation, while all simulating vertices must reconstruct fragments of \emph{one and the same} virtual instance; every choice below is therefore made canonical in the identifiers. (Identifier-dependent choices are safe here, in contrast with the reduction of Theorem~\ref{thm:diam-lb}, whose footnote rules them out: there, the view of a simulated vertex had to be determined by a short identifier window, whereas here the simulator simply learns every identifier its computation depends on.)

\begin{definition}
\label{def:anchor-embed}
Fix, for the purpose of this definition only, one of the two cyclic orientations of $\cC_n$, writing its vertices $v_0, \dots, v_{n-1}$ in cyclic order, and identify $V(\cC_{n_0})$ with the positions $0, \dots, n_0 - 1$ in cyclic order. \emph{Block} $i$ is the position set $\{ib, \dots, ib+b-1\}$, owned by $v_i$. In $\Phi(\nu)$:
(i) block $i$ carries the $b$ virtual identifiers $b(\nu(v_i)-1)+1, \dots, b(\nu(v_i)-1)+b$; distinct blocks receive disjoint ranges, and all values lie in $[\,b\,U'\,] \subseteq [U_0]$;
(ii) within block $i$, the virtual identifiers are written in increasing order \emph{towards the one of the two neighbors of $v_i$ whose identifier is larger} (for $b = 1$ the rule is vacuous);
(iii) at every virtual vertex, port $1$ leads to the one of its two neighbors whose virtual identifier is smaller, and port $2$ to the other one.
\end{definition}

\begin{lemma}
\label{lem:anchor-sim}
Let $q = \lceil R/b\rceil$.
\\
(a) $\Phi(\nu)$ is a well-defined identified, ported cycle of length $n_0$, independent of the orientation fixed in Definition~\ref{def:anchor-embed}, and a legal input of $\cA$.
\\
(b) The output of $\cA$ at a virtual vertex $u$ is a function of the virtual identifiers at virtual distance at most $R+1$ from $u$, read as an unoriented rooted labeled path.
\\
(c) After $q+2$ rounds on $\cC_n$, every vertex can compute $\cA$'s output at every virtual vertex of its own block, all with respect to the single common instance $\Phi(\nu)$.
\end{lemma}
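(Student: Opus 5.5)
I would prove the three parts in order, since (b) and (c) build on the well-definedness and orientation-independence in (a).

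\emph{Part (a).} First check that the block layout does not depend on the orientation fixed in Definition~\ref{def:anchor-embed}. Reversing the orientation reverses the cyclic order of the blocks. Rule (ii) fixes the internal order of each block by the identifiers of $v_i$'s two neighbors, not by the orientation: $v_i$'s block is written increasing towards the larger-identifier neighbor's block. The resulting cyclic sequence of virtual identifiers is therefore the same up to reflection. A reflected cycle with the same labels is the same identified graph, and rule (iii) assigns ports from labels alone, so it is also independent of orientation. Next, the virtual identifiers are distinct, because distinct ring identifiers map to disjoint blocks of $[bU']\subseteq[U_0]$. Every virtual vertex has two distinct neighbors, since $n_0\ge 3$. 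Hence $\Phi(\nu)$ is a legal input: a cycle of length $n_0$ with a legal port numbering and identifiers from $[U_0]$.

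\emph{Part (b).} After $R$ rounds, $\cA$'s output at $u$ is a function of its radius-$R$ view, which includes the ports of the boundary edges. Under rule (iii), the port numbers of a vertex at distance $\le R$ are determined by comparing its two neighbors' identifiers, and these neighbors lie within distance $R+1$. So the whole ported view is a function of the identifier path of radius $R+1$ around $u$. No orientation is needed: ports are computed from labels, so the path can be read unoriented and rooted at $u$.

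\emph{Part (c).} Position $j$ of $v_i$'s block needs virtual identifiers up to $R+1$ positions away, which span at most $\lceil (R+1)/b\rceil+1\le q+2$ blocks on each side. To write block $i'$, one needs $\nu(v_{i'})$ and the identifiers of both of its neighbors, which determine the direction in rule (ii). Hence knowing the ring identifiers within distance $q+2$ of $v_i$, together with their adjacency order, suffices. That is exactly $v_i$'s radius-$(q+2)$ view on $\cC_n$, which it reads as an unoriented path. By (a), reconstruction under either reading yields the same fragment of the common instance up to reflection. By (b), $\cA$'s outputs are invariant under that reflection, so all vertices compute consistent outputs for $\Phi(\nu)$.

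\emph{Main obstacle.} The hardest part is the consistency argument in (a)/(c): different simulators reconstruct overlapping fragments possibly with opposite orientations, and we must show they agree on a single virtual instance. The key point is that every rule in the definition depends only on identifiers and unoriented adjacency. I would make this precise by showing that the construction commutes with reflection.
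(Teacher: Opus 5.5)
Your proof is correct and follows the paper's argument essentially step for step. It matches on orientation-independence, since the layout rule (ii) and the port rule (iii) refer only to physical neighbors and identifier comparisons. It matches on (b), where the identifiers within distance $R+1$ determine the ported radius-$R$ view. It also matches on the block count giving the simulation radius $q+2$.

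One point in (c) is worth tightening. Your count $\lceil (R+1)/b\rceil+1$ is exact only if it includes block $i$ itself. So the blocks actually needed are those with $|j-i|\le q+1$, as the paper states. Radius exactly $q+2$ then follows because laying out the outermost of these blocks by rule (ii) needs one more identifier hop.
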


\begin{proof}
(a) Reversing the fixed orientation reverses the cyclic order of the blocks and swaps the names of the two neighbors of each $v_i$. The layout rule (ii) places the largest entry of block $i$ at the end facing the neighbor with the larger identifier --- a condition on the physical neighbor, not on its name --- so each block's content occupies the same physical positions either way, and the resulting cyclic identifier sequence is the exact reversal of the original one; two mutually reversed cyclic sequences present the same unoriented identified cycle. The ports (iii) refer only to identifier comparisons, hence are presentation-independent. Finally, $\cA$ must be correct for every legal port numbering, so it is correct on $\Phi(\nu)$.

(b) After $R$ rounds, the state of $u$ is determined by its rooted radius-$R$ view: the identifiers, the incidence structure, and the port labels of all vertices at distance at most $R$ (a vertex's port labels are part of its initial state, which reaches $u$ within $R$ rounds). Under (iii), the port labels at a vertex $x$ are determined by the identifiers of $x$'s two neighbors, which, for $x$ at distance exactly $R$ from $u$, include identifiers at distance $R+1$; conversely, the identifiers within distance $R+1$ determine all identifiers and all port labels within distance $R$, hence the entire view. The view is a rooted unoriented object, so every party reconstructing it from these identifiers obtains the same output value.

(c) In $q + 2$ rounds, $v_i$ learns $\nu$ at all vertices within distance $q+2$, which determines the contents (from $\nu(v_j)$) and the layouts (from $\nu(v_{j-1}), \nu(v_{j+1})$) of all blocks $j$ with $|j - i| \le q+1$. A walk of length $R + 1$ starting in block $i$ crosses at most $q+1$ block boundaries, since crossing $q+2$ of them requires more than $(q+1)\,b \ge R+1$ edges; so every position within virtual distance $R+1$ of a virtual vertex of block $i$ lies in these known blocks. By (a) and (b), $v_i$ can therefore evaluate $\cA$'s output at each virtual vertex of its block, and the value does not depend on which of its two local directions $v_i$ calls clockwise: in either reading it reconstructs the same unoriented rooted view within the same instance $\Phi(\nu)$.
\end{proof}

Let $S_0 \subseteq \{0, \dots, n_0 - 1\}$ denote $\cA$'s output on $\Phi(\nu)$; by correctness, $S_0$ is nonempty with consecutive gaps in $[w+1, 2w+2]$, and $|S_0| \ge 2$ as observed above. Define the \emph{projection}
\[
P \;=\; \{\, i \in \mathbb{Z}_n \;:\; \mbox{block } i \mbox{ contains an element of } S_0 \,\} \;\subseteq\; V(\cC_n);
\]
by Lemma~\ref{lem:anchor-sim}(c), after $q+2$ rounds every vertex of $\cC_n$ knows its own membership in $P$.

\begin{lemma}
\label{lem:anchor-project}
(a) Every block contains at most one element of $S_0$. Hence $|P| = |S_0| \ge 2$, and consecutive elements of $S_0$ lie in consecutive elements of $P$, whose $P$-gap equals the block distance between them.
\\
(b) Every consecutive gap $d$ of $P$ on $\cC_n$ satisfies $2 \le d \le 8$, and $d \ge 3$ whenever $w \ge 5$. In particular, $P$ is a nonempty independent set of $\cC_n$ with constant-length gaps.
\end{lemma}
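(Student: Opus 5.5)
The plan is to work entirely with the geometry of positions on the virtual cycle $\cC_{n_0}$, using only that $S_0$ has consecutive gaps in $[w+1,2w+2]$ and that blocks are intervals of $b=\lceil (w+1)/4\rceil$ consecutive positions. The steps are, in order: a one-line argument for (a), then the lower bound on $P$-gaps, then the upper bound, and finally the independence remark.

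For part (a), any two distinct elements of $S_0$ are at virtual distance at least $w+1$. Two positions in the same block are at distance at most $b-1\le (w+4)/4-1<w+1$. Hence each block contains at most one element of $S_0$, so the map from $S_0$ to $P$ is a bijection and $|P|=|S_0|\ge 2$. Because blocks are arranged along the cycle in block order, the map preserves cyclic order. Consecutive elements of $S_0$ therefore land in consecutive elements of $P$, and the $P$-gap between them is the difference of their block indices.

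For part (b), let $x<y$ be consecutive elements of $S_0$ (in the fixed orientation), lying in blocks $i$ and $j$ with block distance $d=j-i$. The positions satisfy $x\in[ib,ib+b-1]$ and $y\in[jb,jb+b-1]$, which gives
\[
(d-1)b+1\le y-x\le (d+1)b-1 .
\]
\begin{itemize}
\item \emph{Upper bound.} From $y-x\ge w+1$ we get $(d+1)b-1\ge w+1$. Since $b\ge (w+1)/4$, the gap $y-x\le 2w+2$ together with $y-x\ge (d-1)b+1$ gives $(d-1)(w+1)/4\le 2w+1<2(w+1)$. Hence $d-1<8$, i.e.\ $d\le 8$.
\item \emph{Lower bound $d\ge 2$.} From $w+1\le (d+1)b-1$ and $b\le (w+4)/4$ we get $(d+1)(w+4)/4\ge w+2$, i.e.\ $d+1\ge 4(w+2)/(w+4)$. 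This gives $d\ge 2$ once $4(w+2)/(w+4)>2$, which holds for every $w\ge 1$.
\item \emph{Lower bound $d\ge 3$.} For $d\ge 3$ we need $4(w+2)/(w+4)>3$, i.e.\ $w>4$, which is exactly the hypothesis $w\ge 5$.
\end{itemize}
A gap of at least $2$ on $\cC_n$ means no two elements of $P$ are adjacent, so $P$ is an independent set; it is nonempty since $|P|\ge 2$.

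The main obstacle is the case boundaries: the inequalities must be tight enough that the integer rounding in $b=\lceil (w+1)/4\rceil$ does not break them. In particular, for small $w$ (where $b=1$ or $2$) I would sanity-check $d\ge 2$ directly. For example, with $b=1$ and $w\le 3$, block distance equals virtual distance, which is at least $w+1\ge 2$; the other small cases would be checked the same way.
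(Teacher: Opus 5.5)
Your proposal is correct and follows the paper's proof essentially verbatim: it uses the same sandwich bounds $(d-1)b+1\le g\le (d+1)b-1$, together with $(w+1)/4\le b\le (w+4)/4$, to get one element per block, $2\le d\le 8$, and $d\ge 3$ for $w\ge 5$. The separate small-$w$ check you flag as the main obstacle is not needed, because those two bounds on $b=\lceil (w+1)/4\rceil$ hold exactly for every $w\ge 1$, so your general inequalities already cover all cases.
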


\begin{proof}
Consider two elements of $S_0$, lying in blocks $i$ and $j$, and an arc between them of length $g$ (edges) whose direction passes through $d \ge 0$ block boundaries. Reading positions along the arc, the first element lies in $\{ib, \dots, ib+b-1\}$ and the second in $\{(i+d)b, \dots, (i+d)b+b-1\}$, whence the \emph{sandwich bounds}
\[
(d-1)\,b + 1 \;\le\; g \;\le\; (d+1)\,b - 1 .
\]

(a) If two elements of $S_0$ shared a block, the case $d = 0$ of the sandwich bounds would give an arc of length at most $b - 1 \le w/4$ between them, inside which some consecutive gap of $S_0$ is at most $b - 1 < w + 1$, a contradiction. For the second claim, let $u, u'$ be consecutive elements of $S_0$, in blocks $i \ne j$: every position of the intermediate blocks $i+1, \dots, j-1$ (along the arc from $u$ to $u'$) lies strictly between $u$ and $u'$, since block $i+1$ begins after $u$ and block $j-1$ ends before $u'$; so the intermediate blocks are unselected, $i$ and $j$ are consecutive in $P$, and their $P$-gap is $d$.

(b) Now let $g \in [w+1,\, 2w+2]$ be the gap between two consecutive elements of $S_0$ and $d$ the corresponding $P$-gap. By the sandwich bounds and $(w+1)/4 \le b \le (w+4)/4$,
\[
d \;\ge\; \frac{g+1}{b} - 1 \;\ge\; \frac{4(w+2)}{w+4} - 1 \;=\; \frac{3w+4}{w+4} \;>\; 1\]
and
\[d \;\le\; \frac{g-1}{b} + 1 \;\le\; \frac{4(2w+1)}{w+1} + 1 \;=\; 9 - \frac{4}{w+1} \;<\; 9 ,
\]
so $2 \le d \le 8$. Finally, $(3w+4)/(w+4) > 2$ if and only if $w > 4$, so $d \ge 3$ for every $w \ge 5$. (For $w \le 4$ the bound $d \ge 2$ is attained; independence of $P$ is all that is used below.)
\end{proof}

\begin{lemma}
\label{lem:anchor-complete}
Nine further rounds on $\cC_n$ suffice for every vertex to compute its membership in a set $I \supseteq P$ that is a maximal independent set of $\cC_n$. The completion rule is deterministic and canonical in the identifiers; no globally consistent orientation and no additional symmetry breaking are used.
\end{lemma}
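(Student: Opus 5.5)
We complete $P$ to a maximal independent set by filling each gap separately, using a fixed rule that depends only on the gap's length and on the identifiers of its two endpoints. By Lemma~\ref{lem:anchor-project}(b), $P$ is independent and every consecutive gap $d$ satisfies $2\le d\le 8$. Each vertex $v\notin P$ therefore lies strictly inside a unique gap, that is, on an open path of $d-1\le 7$ vertices between two consecutive elements $p,p'\in P$. Within distance $8$ of $v$, both $p$ and $p'$ are visible. After the $q+2$ rounds of Lemma~\ref{lem:anchor-sim}(c), each vertex knows whether it belongs to $P$. With one or two extra rounds to exchange these membership bits, and within $9$ further rounds overall, $v$ learns the whole gap: its endpoints $p,p'$, their identifiers, the interior vertices, and its own distance to each endpoint.

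The completion rule works as follows. For the gap between $p$ and $p'$, orient the interior path from the endpoint with the smaller identifier toward the other; call them $p_{\min}$ and $p_{\max}$. Number the interior vertices $1,\dots,d-1$ starting from $p_{\min}$. Add to $I$ exactly those interior vertices at positions $2,4,\dots$ that are at most $d-2$. The set $I$ is $P$ together with all vertices added in this way.

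Independence holds for three reasons. Added vertices inside one gap have even positions, so no two of them are adjacent. None is adjacent to $p_{\min}$, which is at position $0$, since the smallest added position is $2$. None is adjacent to $p_{\max}$, which is at position $d$, since the largest added position is at most $d-2$. Vertices in different gaps are separated by an element of $P$, so they are never adjacent. Maximality holds because every interior position $j\in[1,d-1]$ has a neighbor in $I$. For odd $j$, the neighbor at $j-1$ is either $p_{\min}$ (if $j=1$) or an even position at most $d-2$. For even $j$, the vertex is itself in $I$ unless $j=d-1$, and in that case its neighbor at position $d$ is $p_{\max}$. The rule reads only identifiers and distances, so any two vertices of the same gap compute the same numbering, whichever local direction each of them calls clockwise. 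Hence $I$ is well defined, and neither a global orientation nor any new symmetry breaking is needed.

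The main point to check is the round count: we must confirm that all membership information for the gap reaches every interior vertex within the budget. Since $d\le 8$, each interior vertex is within distance $7$ of both endpoints. Learning $P$-membership over that range, and noticing that no other element of $P$ lies in between, takes at most $9$ rounds once the membership bits are available.
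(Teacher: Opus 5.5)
Your proof is correct and matches the paper's: forward the (identifier, $P$-bit) pairs for nine rounds, index each maximal $P$-free arc from its endpoint with the smaller identifier, and select the even positions $2 \le t \le d-2$, with the same independence and domination checks. One point the paper states explicitly and you leave implicit: when $|P|=2$, the two arcs between the same pair of $P$-vertices are indexed separately, with positions measured along the arc. Your per-gap numbering already handles this.
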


\begin{proof}
For nine rounds, every vertex forwards all (identifier, $P$-bit) pairs it has received; afterwards, every vertex knows both values at every vertex within distance $9$. Since every $P$-gap has length at most $8$ (Lemma~\ref{lem:anchor-project}(b)), every vertex now sees, in full, the maximal $P$-free arc containing it (if any), together with the two bounding $P$-vertices and their identifiers.

Consider a maximal $P$-free arc, with bounding vertices $z_1, z_2 \in P$ and interior vertices $x_1, \dots, x_m$ ($1 \le m \le 7$), indexed by distance from the bounding vertex $z \in \{z_1, z_2\}$ with the \emph{smaller} identifier. This origin is well defined (identifiers are distinct), the indexing uses only the unoriented arc structure, and if $|P| = 2$ the two arcs between the same pair of $P$-vertices are processed separately, each with its own origin. Select
$T = \{ x_t \mid t \mbox{ even},\ 2 \le t \le m - 1 \}$,
or equivalently, process the interior greedily from $z$, skipping every vertex adjacent to $P$ or to a previously selected vertex --- and let $I = P \cup \bigcup T$, the union over all maximal $P$-free arcs. Every vertex of an arc sees the same data and computes the same $T$, so the output is globally consistent, and each vertex decides its own membership with no further communication.

$I$ is independent: $P$ is independent by Lemma~\ref{lem:anchor-project}(b); within an arc, the selected vertices are pairwise at distance at least $2$; every selected $x_t$ has $2 \le t \le m-1$, hence is adjacent to no $P$-vertex; and selected vertices of \emph{different} arcs are separated by a $P$-vertex lying at distance at least $2$ from each, hence are at distance at least $4$ from one another. $I$ is dominating: $x_1$ is adjacent to $z$; an unselected $x_t$ with $t$ odd, $t \ge 3$, is adjacent to $x_{t-1} \in T$ (as $t-1$ is even and $2 \le t - 1 \le m-1$); and an unselected $x_t$ with $t$ even can only have $t = m$, and is then adjacent to the far bounding vertex. Hence $I$ is a maximal independent set of $\cC_n$.
\end{proof}

\begin{proof}[Proof of Proposition~\ref{prop:ruling-lb}(b) when $\lceil (w+1)/4\rceil$ divides $n_0$]
The pieces assemble into a deterministic $\LOCAL$ algorithm $\mathcal{B}$ on $\cC_n$: simulate $\cA$ on $\Phi(\nu)$ and project (Lemmas~\ref{lem:anchor-sim} and~\ref{lem:anchor-project}), then complete (Lemma~\ref{lem:anchor-complete}). For every instance $\nu$ with distinct identifiers from $[U']$, the virtual instance $\Phi(\nu)$ is a legal input of $\cA$, so $\cA$'s gap guarantee holds surely, and $\mathcal{B}$ outputs a maximal independent set of $\cC_n$, within $T_{\mathcal{B}} = (q + 2) + 9 \le R/b + 12$ rounds. Linial's lower bound~\cite{Linial92} provides a constant $c_L > 0$ and a threshold $n_L$ such that every deterministic MIS algorithm on $\cC_n$, correct for every assignment of distinct identifiers from a universe of size at least $n$, requires at least $c_L \log^* n$ rounds, for all $n \ge n_L$ (an algorithm correct for the universe $[U']$ is in particular correct on the instances with identifiers from $[n] \subseteq [U']$); the two-regime assembly in the general-case proof below then yields $R \ge c\, w\log^* n_0$, with the better intermediate constant $R/b + 12 \ge c_L\log^* n$ in place of the general case's $2R/b + 40 \ge \min\{\cdot\}$. As the general case subsumes this one, we do not repeat the assembly here.
\end{proof}

\paragraph{Removing the divisibility assumption.}
Uniform blocks force $b \mid n_0$: the block sizes can depend neither on position (the vertices of $\cC_n$ carry no coordinates) nor on the identifiers (the virtual cycle must have length exactly $n_0$ on \emph{every} instance, while any identifier-dependent size pattern varies across instances). We therefore re-run the reduction from MIS on the \emph{path} $P_n$, whose two endpoints are locally recognizable (degree $1$) and may thus, with no symmetry breaking, carry blocks of a different size that absorb the remainder $n_0 \bmod b$. The price is a \emph{seam}: the virtual cycle closes between the two endpoint blocks by an edge with no physical counterpart, so the vertices near the endpoints cannot simulate $\cA$; they will instead be completed canonically. Figure~\ref{fig:path-embed} shows the layout. We first record the lower bound for MIS on paths, which we derive from Linial's cycle bound rather than cite.

\begin{lemma}
\label{lem:anchor-paths}
Let $c_L, n_L$ be as above, and let $\mathcal{B}$ be a deterministic $t$-round $\LOCAL$ algorithm computing a maximal independent set of the $n$-vertex path $P_n$, correct for every assignment of distinct identifiers from a universe $[U']$ with $U' \ge n$ of size polynomial in $n$, and for every port numbering. Then $t \ge \min\{\,c_L\log^* n,\ (n-8)/4\,\}$ for all $n \ge n_L$.
\end{lemma}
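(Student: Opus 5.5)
We reduce MIS on the cycle $\cC_n$ to MIS on the path, so that Linial's cycle bound transfers. Suppose $\mathcal{B}$ computes an MIS of $P_n$ in $t < \min\{c_L\log^* n,\,(n-8)/4\}$ rounds. We build a cycle algorithm $\mathcal{B}'$ on $\cC_n$ with identifiers from $[U']$ that runs in at most $t + O(1)$ rounds; more precisely, it will run in at most $c_L\log^* n - 1$ rounds, after adjusting constants, which contradicts the choice of $c_L, n_L$. To obtain the strict contradiction cleanly, we will in fact run $\mathcal{B}$ on a path of length $n$ realized virtually inside the cycle.

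\textbf{Cutting the cycle.} Each vertex $v$ of $\cC_n$ simulates $\mathcal{B}$ as if the cycle were cut open at an edge it cannot see. We avoid a global cut as follows. Vertex $v$ considers the path obtained by deleting the edge opposite to it; this is not a common instance, so we do not use it. Instead, we use locality. Since $t < (n-8)/4$, the radius-$t$ view of any vertex in $P_n$ at distance greater than $t$ from both endpoints is an identified path segment of length $2t+1$ with port numbers, exactly as it would appear in $\cC_n$. Fix the port numbering on the cycle and choose any single cut edge $e$. The resulting path $P_n$, with the inherited identifiers and ports and with endpoint ports adjusted, is a legal input. On it, $\mathcal{B}$ outputs an MIS $I$. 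Every vertex at distance greater than $t$ from $e$ computes its $I$-bit from its cycle view alone, and this bit is the same for every choice of $e$ far from it.

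\textbf{Handling the cut.} For the $2(t+1)$ vertices near $e$ this fails. We therefore use two cuts, $e_1$ and $e_2$, at distance about $n/2$ apart. The outputs agree with those of some legal path instance everywhere except within distance $t+1$ of each cut.

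The difficulty is that the cuts cannot be chosen locally without symmetry breaking. This is the main obstacle, and the plan is to avoid choosing them at all. Every vertex simply outputs the bit $\mathcal{B}$ would give from its own radius-$t$ view, treated as an interior path view; this is well defined on the cycle because every cycle view of radius $t < n/2$ is an interior path view. Let $J$ be the resulting set.

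For any arc $A$ of length $n - (2t+2) \ge n/2 + 8$, the set $J \cap A$ coincides with $I \cap A$ for the path obtained by cutting at the midpoint of the complementary arc. Hence $J$ restricted to every such arc is independent and dominates the interior of the arc. Because every edge and every closed neighborhood lies in the interior of some such arc (here we use $4t + 8 < n$, which gives room for the margin), $J$ is an MIS of $\cC_n$.

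$J$ is computed in $t$ rounds on $\cC_n$, for every identifier assignment from $[U'] \supseteq [n]$. This gives $t \ge c_L\log^* n$, which contradicts the assumption. Hence $t \ge \min\{c_L\log^* n, (n-8)/4\}$.

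The delicate check is the endpoint ports. A path vertex at distance exactly $t$ from an endpoint sees the endpoint's degree, so the view must be interior with margin, that is, at distance at least $t+1$ from each endpoint. We will verify this against the port conventions, using only that $\mathcal{B}$ is correct for every port numbering, so that the cut path can inherit the cycle ports.
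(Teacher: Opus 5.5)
Your final argument is correct and is essentially the paper's: you run $\mathcal{B}$ verbatim on $\cC_n$, show the output is an MIS by matching radius-$t$ views with those of a legal path instance, and then invoke Linial's cycle bound. The only difference is the path instance: the paper copies the radius-$(t+1)$ segment around a putative violation into the middle of $P_n$ and pads with fresh identifiers, whereas you cut the cycle at an edge at distance $t+1$ from the arc, which reuses the cycle's own identifiers.

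Two small points. The abandoned two-cut discussion can be dropped. Also, the claim $n-(2t+2)\ge n/2+8$ does not follow from $n>4t+8$ (only $n-(2t+2)>n/2+2$ does), but this is harmless because all you use is that the arc has length at least $3$.
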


\begin{proof}
If $4t + 8 > n$, then $t > (n-8)/4$ and there is nothing to prove; so assume $n \ge 4t+8$. Run $\mathcal{B}$ verbatim on an arbitrary instance of $\cC_n$ with identifiers from $[U']$ (every vertex of the cycle has degree $2$ and simply executes the uniform code of $\mathcal{B}$). We claim the output is a maximal independent set of $\cC_n$. Suppose not; then some vertex $x$ witnesses a violation --- two adjacent selected vertices at distance at most $1$ from $x$, or $x$ unselected with both neighbors unselected --- and all vertices involved lie within distance $1$ of $x$. Let $\Sigma$ be the cyclic segment of radius $t+1$ around $x$ ($2t+3 \le n$ distinct vertices). Build a \emph{path} instance of $P_n$: place the identifiers of $\Sigma$, with the same port labels, on $2t+3$ consecutive vertices centered in $P_n$ --- at distance at least $\lfloor (n - 2t-3)/2\rfloor \ge t+1$ from both endpoints, using $n \ge 4t+8$ --- and fill the remaining vertices with distinct fresh identifiers from $[U']$ (possible, as $U' \ge n$) and arbitrary ports. Every vertex at distance at most $1$ from the copy of $x$ has the same rooted radius-$t$ view as its original on the cycle: its ball lies inside the copy of $\Sigma$, all of whose vertices are interior (degree $2$), with identical identifiers and port labels. Hence $\mathcal{B}$ outputs the same bits there, reproducing the violation on $P_n$ and contradicting its correctness. So $\mathcal{B}$, run on $\cC_n$, computes an MIS on every instance with identifiers from $[U'] \supseteq [n]$, and Linial's bound gives $t \ge c_L \log^* n$ for $n \ge n_L$.
\end{proof}

\begin{definition}
\label{def:anchor-path-embed}
Let $b$ be the smallest \emph{odd} integer with $b \ge (w+1)/8$; then $(w+1)/8 \le b \le (w+17)/8$, and for every $w \ge 1$ both $2b < w+2$ and $2b < w+3$ hold ($b = 1$ for $w \le 7$). Let $n \in \{\lfloor n_0/b\rfloor, \lfloor n_0/b\rfloor - 1\}$ be chosen so that $\rho = n_0 - nb$ is even (possible, and uniquely so, since $b$ is odd and changing $n$ by one changes $\rho$ by $b$); then $0 \le \rho \le 2b - 2$, and set $s = b + \rho/2 \le 2b-1$. Let $U' = \lfloor U_0/b\rfloor - 2$; by $U_0 \ge 2n_0 \ge 2nb$ we have $U' \ge 2n - 3$, which is at least $n$ (as $n \ge 3$ in all uses below), and $U' \le n^{O(1)}$ as before.

Given an instance $\nu$ of $P_n$ (distinct identifiers from $[U']$), write the path as $v_1, \dots, v_n$ (for the definition only; the labeling of the two traversal directions is immaterial by the reversal argument of Lemma~\ref{lem:anchor-sim}(a), all rules below being direction-free). The endpoint vertices $v_1, v_n$ own blocks of $s$ consecutive virtual positions each; the interior vertices own $b$ each; the blocks are laid consecutively along the path, for a total of $(n-2)b + 2s = nb + \rho = n_0$ positions, and the virtual cycle $\cC_{n_0}$ is closed by the \emph{seam}, the edge joining the outer end of $v_n$'s block to the outer end of $v_1$'s block. In $\Phi(\nu)$:
(i) interior $v_i$ carries the identifiers $b(\nu(v_i)-1)+1, \dots, b\,\nu(v_i)$; each endpoint carries the same range for the $b$ positions nearest its unique neighbor, together with $\rho/2$ \emph{reserved} identifiers for its remaining positions: the endpoint with the smaller $\nu$-identifier takes $bU'+1, \dots, bU'+\rho/2$, and the other takes $bU'+b, \dots, bU'+b+\rho/2-1$; these two ranges are disjoint from each other (as $\rho/2 \le b-1$) and from all interior ranges, and all values lie in $[\,b(U'+2)\,] \subseteq [U_0]$;
(ii) within an interior block, the identifiers are written in increasing order towards the one of the two neighbors of $v_i$ whose identifier is larger; within an endpoint block, in increasing order towards the unique neighbor;
(iii) at every virtual vertex, port $1$ leads to the one of its two neighbors whose virtual identifier is smaller, and port $2$ to the other one.
The reserved identifiers of the endpoint blocks involve a global comparison (which endpoint holds the smaller $\nu$-identifier) and are \emph{not} locally computable; this is legitimate because they are never decoded by any simulator: the simulation below stops short of the endpoints, and the endpoint blocks need only \emph{exist} for $\Phi(\nu)$ to be a well-defined instance of $\cA$.
\end{definition}

\begin{figure}[t]
\centering
\begin{tikzpicture}[x=0.62cm, y=0.52cm]
\draw[very thick] (0,0) rectangle (2,1);
\node at (1,0.5) {\scriptsize $s$};
\node at (1,-0.45) {\scriptsize $v_1$};
\foreach \i in {2,3,...,12} { \draw (\i,0) rectangle (\i+1,1); }
\draw[very thick] (13,0) rectangle (15,1);
\node at (14,0.5) {\scriptsize $s$};
\node at (14,-0.45) {\scriptsize $v_n$};
\draw[dashed] (15,0.5) .. controls (16.8,3.2) and (-1.8,3.2) .. (0,0.5);
\node at (7.5,2.95) {\scriptsize seam (no physical edge)};
\draw[|-|] (5,-1.15) -- (11,-1.15);
\node at (8,-1.65) {\scriptsize $\Vmid$};
\fill (5.5,0.5) circle (2.2pt);
\fill (8.5,0.5) circle (2.2pt);
\fill (10.5,0.5) circle (2.2pt);
\node at (5.5,1.45) {\scriptsize $z_{v_1}$};
\node at (10.5,1.45) {\scriptsize $z_{v_n}$};
\draw[|-|] (0,-2.25) -- (5.4,-2.25);
\node at (2.7,-2.75) {\scriptsize stretch};
\draw[|-|] (10.6,-2.25) -- (15,-2.25);
\node at (12.8,-2.75) {\scriptsize stretch};
\end{tikzpicture}
\caption{The path block embedding of Definition~\ref{def:anchor-path-embed}. The two enlarged endpoint blocks (of size $s = b + \rho/2$ each, drawn thick) absorb the remainder $n_0 \bmod b$, and the virtual cycle $\cC_{n_0}$ is closed by the \emph{seam}, which has no physical counterpart in $P_n$. Only the vertices of the middle region $\Vmid$, at distance at least $q+2$ from both endpoints, simulate $\cA$; the dots mark the projected selected blocks $P$, with $z_{v_1}, z_{v_n}$ the elements of $P$ nearest the two endpoints, and the two \emph{endpoint stretches} are completed by the one-sided rule of Lemma~\ref{lem:anchor-path-complete}.}
\label{fig:path-embed}
\end{figure}

\begin{lemma}
\label{lem:anchor-path-sim}
Let $q = \lceil R/b\rceil$ and let $\Vmid = \{v_i : q+3 \le i \le n-q-2\}$ (the \emph{middle region}, consisting of the vertices at distance at least $q+2$ from both endpoints).
\\
(a) $\Phi(\nu)$ is a well-defined identified, ported cycle of length $n_0$ with distinct identifiers from $[U_0]$, independent of the direction labeling, and a legal input of $\cA$; and the output of $\cA$ at a virtual vertex is a function of the virtual identifiers within distance $R+1$ of it, exactly as in Lemma~\ref{lem:anchor-sim}(a,b).
\\
(b) After $q+2$ rounds on $P_n$, every vertex of $\Vmid$ has computed $\cA$'s output at every virtual vertex of its own block, all with respect to the single common instance $\Phi(\nu)$; in particular, writing $S_0$ for $\cA$'s output on $\Phi(\nu)$ and $P = \{v_i \in \Vmid : \mbox{block } i \mbox{ meets } S_0\}$, every vertex of $\Vmid$ knows its own membership in $P$.
\\
(c) Every block of $\Phi(\nu)$ contains at most one element of $S_0$. Moreover:
(i) any two elements of $P$ are at path-distance at least $2$, so $P$ is independent;
(ii) if two elements of $P$ have no element of $P$ strictly between them on the path, their path-distance is at most $16$; and
(iii) every $17$ consecutive blocks owned by vertices of $\Vmid$ include one that meets $S_0$.
\end{lemma}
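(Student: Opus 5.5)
Part (a) will follow the proof of Lemma~\ref{lem:anchor-sim}(a,b) almost verbatim. The identifiers are distinct because distinct interior vertices receive the disjoint ranges $[b(\nu(v_i)-1)+1,\, b\nu(v_i)]$. The two reserved ranges $[bU'+1,\, bU'+\rho/2]$ and $[bU'+b,\, bU'+b+\rho/2-1]$ are disjoint from each other because $\rho/2 \le b-1$, and both lie above $bU'$, which exceeds every interior value. The total length is $(n-2)b+2s=n_0$.

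Direction-independence comes from the reversal argument. Each layout rule in (ii) refers to physical neighbors: the larger-identifier neighbor for interior blocks, and the unique neighbor for endpoint blocks. The reserved-identifier assignment refers only to which endpoint has the smaller $\nu$-identifier, not to the labels $v_1, v_n$. Hence relabeling the traversal direction only reverses the cyclic sequence. The ports (iii) are identifier comparisons, and $\cA$ is correct under every port numbering, so $\Phi(\nu)$ is a legal input. The ``function of identifiers within distance $R+1$'' claim is exactly Lemma~\ref{lem:anchor-sim}(b), which used nothing about block sizes.

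For (b), fix $v_i\in\Vmid$. In $q+2$ rounds it learns $\nu$ on path positions $i-q-2,\dots,i+q+2$. All of these exist and lie at distance at least $0$ from the endpoints, because $i\ge q+3$ and $i\le n-q-2$. From this data it reconstructs the contents and layouts of all blocks $j$ with $|j-i|\le q+1$, since a layout needs $\nu$ at $j\pm1$. Every such $j$ satisfies $2\le j\le n-1$, so these are interior blocks of size $b$. The reserved identifiers and the seam are never needed: a virtual walk of length $R+1$ from block $i$ crosses at most $q+1$ boundaries, because $(q+1)b\ge R+1$, so it stays within these interior blocks.

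By (a), $v_i$ then reconstructs the same rooted unoriented view that $\cA$ sees in $\Phi(\nu)$, and so evaluates $\cA$'s output bits on its block. It therefore knows its membership in $P$.

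For (c) I will redo the sandwich argument of Lemma~\ref{lem:anchor-project}, now with $b\ge (w+1)/8$ and with care about the larger endpoint blocks. For arcs that pass only through size-$b$ blocks, the same bounds $(d-1)b+1\le g\le (d+1)b-1$ hold. Two selected vertices in one block would be at distance at most $b-1$, or at most $s-1\le 2b-2$ for an endpoint block. Since $2b<w+2$ gives $2b-2\le w-1<w+1$, no block contains two elements of $S_0$.

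For (i), take two elements of $P$, which lie in distinct interior blocks, along the path arc between them. If they sat in adjacent blocks, the sandwich bound with $d=1$ would give $g\le 2b-1<w+1$. This uses $2b<w+2$, equivalently $2b-1\le w$. That contradicts the gap lower bound for $S_0$ on the virtual cycle, so $P$ is independent.

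For (ii), suppose two elements of $P$ have no element of $P$ strictly between them on the path. Here lies the main subtlety. Their blocks lie in $\Vmid$, and all blocks between them are interior blocks of $\Vmid$ that meet no element of $S_0$. The two selected virtual vertices are therefore consecutive in $S_0$ along this arc, so $g\le 2w+2$. Then $d\le (g-1)/b+1\le 8(2w+1)/(w+1)+1<17$, giving $d\le 16$.

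For (iii), take $17$ consecutive blocks in $\Vmid$ spanning at least $17b\ge 17(w+1)/8>2w+2$ positions. If none of them met $S_0$, the $S_0$-gap around that window would exceed $2w+2$. Here I must check that $17b-1\ge 2w+3$ rather than relying on the crude estimate: the consecutive $S_0$ gap containing the window has length at least the window length plus one, so $17b+1>2w+2$ suffices. That holds because $17(w+1)/8\ge 2w+2$.

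The main obstacle I expect is bookkeeping the odd-$b$ and $2b<w+2$ inequalities so that the endpoint blocks never break the sandwich bounds. I will restrict every arc argument to arcs inside $\Vmid$, which avoids the seam entirely.
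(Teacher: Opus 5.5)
Your proposal is correct and follows the paper's proof essentially step by step. For (a) you use the reversal argument and Lemma~\ref{lem:anchor-sim}(b); for (b) you check $2\le j\le n-1$, so only interior blocks are ever consulted; and for (c) you use the sandwich bounds with $b\ge (w+1)/8$ and $2b<w+2$. One small cleanup: in (iii), the aside requiring $17b-1\ge 2w+3$ is neither needed nor true in general (it fails at $w=7$, $b=1$), but the condition you actually use, $17b+1>2w+2$, is the right one and holds.
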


\begin{proof}
(a) The direction-freeness is verified as in Lemma~\ref{lem:anchor-sim}(a): reversing the traversal labeling swaps the endpoint names, and every rule of Definition~\ref{def:anchor-path-embed} --- the layout rules, the ports, and the assignment of the reserved ranges by the comparison of the two endpoints' $\nu$-identifiers --- refers to physical vertices and identifier comparisons only. The identifiers are distinct and lie in $[U_0]$, the lengths sum to $n_0$, and $\cA$ is correct for every port numbering, in particular the identifier-determined one. The locality statement is Lemma~\ref{lem:anchor-sim}(b) verbatim (its proof did not use the block structure).

(b) In $q+2$ rounds, $v_i \in \Vmid$ learns $\nu$ at all vertices within distance $q+2$; these all exist, by the definition of $\Vmid$. This determines the contents and layouts of all \emph{interior} blocks $j$ with $|j-i| \le q+1$: contents from $\nu(v_j)$, layouts from $\nu(v_{j\pm 1})$ (which may include an endpoint's $\nu$-identifier --- visible --- but never an endpoint's block content). Since $v_i \in \Vmid$, all blocks $j$ with $|j - i| \le q+1$ are interior. A walk of length $R+1$ from a virtual vertex of block $i$ crosses at most $q+1$ block boundaries (as in Lemma~\ref{lem:anchor-sim}(c)), so it stays within these known interior blocks; in particular it does not reach an endpoint block and does not cross the seam, which lies between the two endpoint blocks. By (a), $v_i$ evaluates $\cA$'s output at each virtual vertex of its block, consistently in either reading.

(c) One element per block: two elements of $S_0$ in one block would be at arc distance at most $s - 1 \le 2b-2 < w+1$ (using $2b < w+3$), inside which some consecutive gap of $S_0$ is smaller than $w+1$, a contradiction; for interior blocks the bound is $b-1$, smaller still.
\\
(i) If two selected blocks were adjacent on the path, the sandwich bounds of Lemma~\ref{lem:anchor-project} with $d = 1$ --- legitimate, since the arc between the two anchors runs through the two adjacent interior blocks and does not approach the seam --- would put their two anchors at arc distance at most $2b - 1 < w+1$, using $2b < w+2$; inside that arc, some consecutive gap of $S_0$ is smaller than $w+1$, a contradiction.
\\
(ii) Let $v_i, v_j \in P$, $i < j$, with no element of $P$ strictly between them on the path. All blocks strictly between them are owned by vertices of $\Vmid$ (whose indices form a contiguous interval), and none of them meets $S_0$; hence the arc of the virtual cycle from the anchor in block $i$ to the anchor in block $j$ that runs along the interior of the path contains no further anchors, so its length $g$ is a consecutive gap of $S_0$ and satisfies $g \le 2w+2$. This arc crosses interior blocks only, all of size $b$, so the sandwich bounds give $g \ge (d-1)b + 1$ with $d = j - i$, whence $d \le (g-1)/b + 1 \le (2w+1)\cdot 8/(w+1) + 1 = 17 - 8/(w+1) < 17$, i.e., $d \le 16$.
\\
(iii) $17$ consecutive blocks of size $b$ span $17b \ge 17(w+1)/8 > 2w+2$ consecutive virtual positions --- more than the maximal gap of $S_0$ --- hence contain an element of $S_0$.
\end{proof}

\begin{lemma}
\label{lem:anchor-path-complete}
Within $q + 36$ further rounds, every vertex of $P_n$ computes its membership in a set $I$ that is a maximal independent set of $P_n$. The rules are deterministic and canonical in the identifiers; no orientation and no additional symmetry breaking are used. Consequently, the whole procedure is a deterministic MIS algorithm $\mathcal{B}$ on $P_n$, correct for every $[U']$-instance, with round complexity $T_{\mathcal{B}} \le 2R/b + 40$.
\end{lemma}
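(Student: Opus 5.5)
The plan is to mirror the cycle completion of Lemma~\ref{lem:anchor-complete}, splitting $P_n$ into three kinds of regions. The first are the \emph{interior arcs} between two consecutive elements of $P$. The second are the two \emph{endpoint stretches}, from each endpoint $v_1$ (resp.\ $v_n$) up to the element $z_{v_1}$ (resp.\ $z_{v_n}$) of $P$ nearest to it. The third is the degenerate case $P=\emptyset$, which can occur only if $\Vmid$ is short (fewer than $17$ vertices, by Lemma~\ref{lem:anchor-path-sim}(c)(iii)), i.e.\ $n = O(q)$.

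\textbf{Step 1 (interior arcs).} Run $q+2$ rounds to compute $P$ on $\Vmid$ (Lemma~\ref{lem:anchor-path-sim}(b)). Then flood (identifier, $P$-bit, $\Vmid$-bit) pairs for $17$ more rounds. A maximal $P$-free arc bounded on both sides by $P$-vertices has length at most $15$ by Lemma~\ref{lem:anchor-path-sim}(c)(ii). Every vertex on such an arc therefore sees both bounding vertices and applies the even-index rule of Lemma~\ref{lem:anchor-complete}, indexing from the bounding vertex with the smaller identifier. The independence and domination arguments there transfer verbatim, since they only use that $P$ is independent and that the arc has length at least $1$.

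\textbf{Step 2 (endpoint stretches).} A vertex that does not find $P$-vertices on both sides within distance $17$ lies in a stretch between an endpoint and the nearest $P$-vertex. Every such vertex is within distance $(q+2)+17$ of its endpoint, because past $\Vmid$'s boundary a $P$-vertex appears within $17$ blocks by Lemma~\ref{lem:anchor-path-sim}(c)(iii). So in $O(q)$ rounds (budget $q+36$) each stretch vertex sees the whole stretch: the degree-$1$ endpoint, which is locally recognizable, together with $z$. It then applies a \emph{one-sided} rule. Index from the endpoint as $x_1=v_1,x_2,\dots,x_m$, where $x_{m+1}=z\in P$. Select $x_t$ for odd $t$ with $t\le m-1$, and additionally select $x_m$ if $m$ is odd and $m\ge1$... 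More precisely, apply the greedy rule from the endpoint: select $x_1$, skip, select $x_3$, and so on, stopping before any vertex adjacent to $z$. The last vertex $x_m$, adjacent to $z$, is dominated by $z$. Any earlier unselected vertex is adjacent to a selected one, and $x_1$ is selected, so the endpoint is dominated. The stretch rule and the interior-arc rule meet only at $P$-vertices, and the selected stretch vertices are nonadjacent to $P$. Hence independence across regions follows exactly as in Lemma~\ref{lem:anchor-complete}.

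\textbf{Step 3 (short paths and the round count).} If $P=\emptyset$, every vertex sees the whole path within $O(q)+34$ rounds. It then applies the same greedy rule from the endpoint with the smaller identifier. This needs care to stay within $q+36$; I would first try to show $n\le 2(q+2)+17$ in this case, so radius $q+36$ indeed covers the path from any vertex. The total is $(q+2)+(q+36)\le 2\lceil R/b\rceil+38\le 2R/b+40$.

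The main obstacle I expect is the boundary bookkeeping in Step 2. The issue is that vertices near but outside $\Vmid$ must correctly decide whether they sit in a stretch or in an interior arc, using only what $\Vmid$ vertices have reported. One also has to check that the distance bounds (at most $16$ between $P$-vertices, at most $q+2+17$ from each endpoint to $z$) fit the $q+36$ budget uniformly, including the degenerate case where $|P|=1$ and both stretches share $z$.
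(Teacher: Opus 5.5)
Your proposal is correct and follows the paper's proof in all essentials. The decomposition is the same: two-sided arcs with at most $15$ interior vertices, endpoint stretches with at most $q+18$ vertices, and a short case, with the same count $(q+2)+(q+36)=2q+38\le 2R/b+40$. The paper differs in two minor ways. It triggers the short case by the globally known condition $n\le 2q+37$, which forces $|\Vmid|\ge 34$ and hence $|P|\ge 2$ otherwise, removing your $P=\emptyset$ and $|P|=1$ bookkeeping. It also indexes each stretch from $z_e$ rather than from the endpoint, which is equivalent to your greedy-from-the-endpoint rule once you drop the stray clause selecting $x_m$.

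One correction to your short case: the radius that covers the whole path is the identifier radius $2q+38$ (identifiers flood from round $0$), not $q+36$. Moreover, $P=\emptyset$ forces $|\Vmid|\le 16$, hence $n\le 2q+20$, so every vertex also sees all $P$-bits and can recognize that it is in this case.
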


\begin{proof}
For $q+36$ rounds, every vertex forwards all pairs (identifier, $P$-bit if computed). Since information travels one hop per round from the start of the procedure, afterwards every vertex knows the identifier of every vertex within distance $(q+2) + (q+36) = 2q+38$, and the $P$-bit of every $\Vmid$-vertex within distance $q+36$. Every vertex can also tell, for each vertex it sees, whether it belongs to $\Vmid$ (membership in $\Vmid$ is determined by the distance to the endpoints, and any vertex within distance $q+36$ of an endpoint sees that endpoint).

\midinline The short case:
If $n \le 2q+37$, every vertex sees the entire path, with all identifiers, and outputs its membership in the canonical maximal independent set of $P_n$ consisting of the vertices at even distance from the endpoint with the smaller identifier; all vertices compute the same set, and no orientation is used. Assume from now on $n \ge 2q+38$, so that $|\Vmid| = n - 2q - 4 \ge 34$.

\midinline The extreme projected anchors, and the partition:
By Lemma~\ref{lem:anchor-path-sim}(c)(iii), the first $17$ and the last $17$ blocks owned by vertices of $\Vmid$ --- disjoint families, as $|\Vmid| \ge 34$ --- each contain a block meeting $S_0$. Hence, for each endpoint $e \in \{v_1, v_n\}$, the element $z_e$ of $P$ nearest to $e$ exists and lies within distance $(q+2) + 16$ of $e$, and $z_{v_1} \ne z_{v_n}$; in particular, $|P| \ge 2$. Every vertex of $P_n$ not in $P$ therefore lies either \emph{strictly between two consecutive elements of $P$} (a \emph{two-sided arc}, of length at most $16$ by Lemma~\ref{lem:anchor-path-sim}(c)(ii)) or \emph{between $z_e$ and $e$, inclusive of $e$} (an \emph{endpoint stretch}, of at most $q + 18$ vertices). Each vertex determines its category, and all data relevant to its arc or stretch, from its knowledge: a stretch vertex sees $e$, sees $z_e$, and sees the $P$-bits of all $\Vmid$-vertices between itself and $z_e$, confirming minimality; a two-sided-arc vertex sees both bounding $P$-vertices within distance $16$.

\midinline The completion rules:
The output is $I = P \cup \bigcup T$, where the selections $T$ are made per arc and per stretch:
\emph{(two-sided arcs)} exactly the rule of Lemma~\ref{lem:anchor-complete}: index the interior $x_1, \dots, x_m$ ($1 \le m \le 15$, by Lemma~\ref{lem:anchor-path-sim}(c)(i,ii)) from the bounding $P$-vertex with the smaller identifier and select $T = \{x_t : t \mbox{ even},\ 2 \le t \le m-1\}$;
\emph{(endpoint stretches)} index the stretch $x_1, \dots, x_L$ from $z_e$ (so $x_L = e$; this origin needs no identifier comparison, one end of the stretch being the degree-one vertex $e$ itself) and select $T = \{x_t : t \mbox{ even},\ 2 \le t \le L\}$ --- one-sided greedy from $z_e$, with no constraint at the far end, since the path terminates there.

$I$ is independent: within arcs and stretches, selected vertices are pairwise at distance $\ge 2$ and start at $x_2$, hence are adjacent to no $P$-vertex; selections of different arcs/stretches are separated by a $P$-vertex at distance $\ge 2$ from each. $I$ is dominating: in a two-sided arc, exactly as in Lemma~\ref{lem:anchor-complete}; in a stretch, $x_1$ is adjacent to $z_e$, an unselected $x_t$ with $t$ odd and $t \ge 3$ is adjacent to $x_{t-1} \in T$, and an unselected even index cannot occur among $t \le L$; in particular $e = x_L$ is selected if $L$ is even and adjacent to $x_{L-1} \in T$ if $L$ is odd ($L \ge 2$; if $L = 1$, $e = x_1$ is adjacent to $z_e$). Every vertex of $P_n$ is in $P$, in a two-sided arc, or in a stretch, so $I$ is a maximal independent set of $P_n$.

All rules --- including the short-case rule --- are functions of identifiers, $P$-bits and distances visible within the knowledge radii recorded at the start of the proof, computed identically by all vertices concerned; no orientation enters (arcs are indexed from an identifier comparison, stretches and the short-case set from the locally recognizable endpoints). The total round complexity is $(q+2) + (q+36) = 2q + 38 \le 2R/b + 40$.
\end{proof}

\begin{proof}[Proof of Proposition~\ref{prop:ruling-lb}(b), general case]
Let $\mathcal{B}$ be the algorithm of Lemma~\ref{lem:anchor-path-complete}; it computes an MIS of $P_n$ on every instance with identifiers from $[U']$, $n \le U' \le n^{O(1)}$, in $T_{\mathcal{B}} \le 2R/b + 40$ rounds. Set
\[
\Lambda = \max\Big\{ \frac{80}{c_L} + 2,\ \log^*\!\big(n_L^2\big) + 1,\ 7 \Big\}
\qquad\mbox{and}\qquad
c = \min\Big\{ \frac{c_L}{32},\ \frac{1}{4\Lambda} \Big\} .
\]
If $\log^* n_0 < \Lambda$, part (a) gives $R \ge \max\{1, \lfloor w/2\rfloor\} \ge w/4 \ge w \log^* n_0/(4\Lambda) \ge c\,w\log^* n_0$. If $\log^* n_0 \ge \Lambda$, then $n_0 \ge \max\{289,\ n_L^2\}$, so $n \ge n_0/b - 2 \ge 8n_0/(\sqrt{n_0}+17) - 2 \ge 2\sqrt{n_0} \ge n_L$, and $n_0 \le n^2 \le 2^n$ gives $\log^* n \ge \log^* n_0 - 1$. Lemma~\ref{lem:anchor-paths} yields $T_{\mathcal{B}} \ge \min\{c_L\log^* n,\ (n-8)/4\}$. In the first branch,
\[
\frac{2R}{b} + 40 \;\ge\; c_L\log^* n \;\ge\; c_L\big(\log^* n_0 - 1\big)
\qquad\mbox{and}\qquad
c_L\big(\log^* n_0 - 1\big) - 40 \;\ge\; \frac{c_L}{2}\log^* n_0
\]
(the latter because $\log^* n_0 \ge 80/c_L + 2$), whence
$R \ge \frac{b}{2}\cdot\frac{c_L}{2}\log^* n_0 \ge \frac{w+1}{16}\cdot\frac{c_L}{2}\log^* n_0 \ge c\, w\log^* n_0$.
In the second branch, $2R/b + 40 \ge (n-8)/4$ gives $R \ge b(n-8)/8 - 20b \ge (n_0 - \rho)/8 - 21b \ge n_0/16 \ge w\log^* n_0 \ge c\,w\log^* n_0$, using $\rho \le 2b - 2$, $b \le \sqrt{n_0}$, and, in the last two steps, that $\log^* n_0 \ge \Lambda$ makes $n_0$ large enough that $n_0/16 \ge \sqrt{n_0}\,\log^* n_0 \ge w \log^* n_0$.
\end{proof}

\begin{remark}
\label{rem:anchor-scope}
(a) Both ends of the projected-gap ranges are attained ($[2,8]$ in the first stage, $[2,16]$ in the second); only the independence of $P$ and the constant upper bounds are used, and none of the constants was optimized.
\\
(b) Under the divisibility assumption $\lceil (w+1)/4\rceil \mid n_0$, the first stage proves part (b) already for every universe with $U_0 \ge n_0$: the uniform blocks tile $[b\,U']$ with $U' = \lfloor U_0/b\rfloor \ge n$ exactly. The requirement $U_0 \ge 2n_0$ of the general case pays for the reserved identifier ranges of the endpoint blocks and is an artifact of this bookkeeping; we did not attempt to optimize it. Note that some hypothesis on the universe is unavoidable in this context: as Theorem~\ref{thm:universe} illustrates in the rigidity setting, lower-bound statements on identified cycles can genuinely change character at universe size exactly $n$.
\end{remark}

Thus Fact~\ref{fact:ruling} is optimal --- over polynomial identifier universes of size at least $2n$, the regime of Proposition~\ref{prop:ruling-lb}(b) --- and closing the remaining $\log^* n$ gap for balanced coloring itself would require either an algorithm that dispenses with anchors altogether, or a lower bound marrying the rigidity mechanism of Section~\ref{sec:det-lb} with Linial-type indistinguishability. Section~\ref{sec:open} isolates the combinatorial core of this question.

\section{The concentration regime}
\label{sec:concentration}

We now prove Theorem~\ref{thm:intro-general}(a) on general graphs. Vertices output values in $\cP = \{1, \dots, \Delta+1\}$; every vertex always outputs some color, and properness holds w.h.p.

Both algorithms of this section finish with a \emph{completion phase}; we isolate the exact black box used, in the form we need it.

\begin{fact}
\label{fact:completion}
Let $H$ be an $n'$-vertex graph, $n' \le n$, in which every vertex $v$ holds a list $L_v$ of allowed colors from a universe of size $\mathrm{poly}(n)$ with $|L_v| \ge \deg_H(v) + 1$, and assume distinct identifiers from a $\mathrm{poly}(n)$ universe. Then a proper coloring of $H$ from the lists is computable:
\\
(a)~deterministically, in $O(\Delta(H)^2 + \log^* n)$ rounds;
\\
(b)~deterministically, in $O\big(\log^2\Delta(H)\cdot\log n\big)$ rounds~\cite{GK21}; and
\\
(c)~with randomization, for every fixed failure exponent $\gamma$, in $O(\log^3\log n)$ rounds with probability at least $1 - n^{-\gamma}$~\cite{HKNT22}, with no further assumption on $H$, on the lists, or on their relation to any global palette, and with no hidden $\Delta$- or $D$-dependent terms.
\end{fact}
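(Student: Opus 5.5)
For part~(a) we use the standard two-step argument: first a defective-free color reduction, then a sequential sweep over color classes. Since identifiers come from a $\mathrm{poly}(n)$ universe, they form a proper coloring of $H$ with $\mathrm{poly}(n)$ colors. We apply Linial's color reduction~\cite{Linial92} to $H$, with $\Delta(H)$ in place of $\Delta$. It runs in $O(\log^* n)$ rounds and reduces this to a proper coloring $\kappa$ of $H$ with $k = O(\Delta(H)^2)$ colors. A vertex may not know $\Delta(H)$ exactly. We will handle this by running the reduction with the known upper bound $\Delta$ on the host graph and noting that the per-vertex palette sizes depend only on the local degree bound. If that proves delicate, we fall back to the version of Linial's reduction parameterized by $\max$ degree learned in one round within each neighborhood. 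If $\Delta(H) = 0$, every vertex simply picks the first element of its list.

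We then process the classes $\kappa^{-1}(1), \dots, \kappa^{-1}(k)$ in $k$ phases of one round each. In phase $j$, every vertex $v$ with $\kappa(v) = j$ picks the smallest color in $L_v$ not already used by a colored neighbor, and announces it. Such a color always exists, because at most $\deg_H(v) < |L_v|$ neighbors are colored. Each class is independent, so vertices choosing simultaneously are non-adjacent, and the result is proper. The total cost is $O(\Delta(H)^2 + \log^* n)$ rounds.

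Parts~(b) and~(c) are imported rather than reproved: part~(b) is the deterministic $(\mathrm{deg}+1)$-list coloring algorithm of~\cite{GK21}, and part~(c) is the randomized algorithm of~\cite{HKNT22}. What has to be checked is that their hypotheses match the situation in which the fact is invoked.
\begin{itemize}
\item \emph{Color universe.} Both algorithms allow lists drawn from a $\mathrm{poly}(n)$ color space. This covers arbitrary lists unrelated to the global palette $\cP$.
\item \emph{Parameter $n$ versus $n'$.} Vertices know $n$ rather than $n' = |V(H)|$. Running with $n$ as the size parameter only strengthens the guarantees: a failure probability of $n^{-\gamma}$ is at least as strong as $n'^{-\gamma}$, and round bounds in $\log n$ dominate those in $\log n'$.
\item \emph{Disconnected $H$.} $H$ may be disconnected, and the algorithms run independently on each component. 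In the randomized case, the per-vertex failure events are union-bounded over at most $n$ vertices, absorbing a constant into $\gamma$.
\item \emph{No diameter or degree terms.} Neither bound carries a $D$-term, since both are local. The $\log^3\log n$ bound of~\cite{HKNT22} has no additive $\Delta$-dependence, and if it holds in $\CONGEST$ it holds a fortiori in $\LOCAL$.
\end{itemize}

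The main obstacle is this bookkeeping of hypotheses for the cited results, in particular confirming that~\cite{HKNT22} requires no relation between the lists and a global palette, and no knowledge of $\Delta(H)$. The argument for part~(a) is routine.
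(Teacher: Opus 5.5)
Your proposal follows the paper's proof: for (a), Linial's reduction to a proper $O(\Delta(H)^2)$-coloring followed by a greedy sweep over the color classes (one round each, using $|L_v| \ge \deg_H(v)+1$), and for (b) and (c), direct appeal to \cite{GK21} and \cite{HKNT22} with the same hypothesis bookkeeping (color universe, ambient $n$ versus $n'$, failure exponent). Two small points of comparison. First, the paper justifies running with the ambient $n$ by padding $H$ with $n-n'$ virtual isolated nodes, each colored in zero rounds, so that the cited guarantee applies to a genuine $n$-node instance; your assertion that an overestimate of $n$ ``only strengthens'' the guarantee presupposes exactly this. Second, your fallback of running Linial's reduction with the host's $\Delta$ would give $O(\Delta^2)$ rather than $O(\Delta(H)^2)$ colors and rounds; the paper simply runs it with $\Delta(H)$, and in its only use of part (a), with $\Delta = O(1)$, the distinction is immaterial.
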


\begin{proof}
We first prove part (a).
Compute an $O(\Delta(H)^2)$-coloring $\psi$ of $H$ in $\log^* n + O(1)$ rounds by Linial's algorithm~\cite{Linial92}, then iterate over the $\psi$-classes: in step $i$, every $v$ with $\psi(v) = i$ takes the smallest color of $L_v$ not held by an already-colored neighbor. Such a color exists because at most $\deg_H(v)$ colors are excluded and $|L_v| \ge \deg_H(v)+1$, and no two adjacent vertices choose in the same step, so the coloring is proper.

Part (b) is the deterministic $(\mathrm{deg}{+}1)$-list-coloring algorithm of Ghaffari and Kuhn~\cite{GK21} invoked directly on $H$; its hypotheses are exactly those stated in the fact. 

Part (c) is Theorem~2 of Halld\'orsson et al.~\cite{HKNT22} (numbering as in the full version of that paper, arXiv:2112.00604), 
which reads: ``There is a randomized distributed algorithm to solve the D1LC problem on $n$-node graphs in $O(\log^3\log n)$ rounds, w.h.p.'' Their \emph{degree$+$1 list coloring} (D1LC) problem is the same as the problem addressed in the current fact, and in the same setting.
The $O(\log^3\log n)$ bound is obtained by completing the shattered components with the deterministic algorithm of~\cite{GK21}, so part (c) improves in lockstep with deterministic $(\mathrm{deg}{+}1)$-list coloring, as noted below.
Three bookkeeping points align their statement with ours. First, the color space: in $\LOCAL$, \cite{HKNT22} needs no bound on it, and the polynomial universe in the hypothesis above is what parts (a) and (b) require; part (c) simply inherits it. Secondly, the size parameter: we invoke the algorithm on $H$ with the ambient parameter $n$ rather than $n' = |V(H)|$. This is legitimate: $H$ together with $n - n'$ virtual isolated nodes, each holding a nonempty list and hence colored in zero rounds, is an $n$-node D1LC instance, and the model of~\cite{HKNT22} explicitly permits the nodes to know $n$ (our vertices know $n$ outright, and the virtual nodes need not physically exist, as they never communicate). Thirdly, the failure probability: with the ambient parameter, the guarantee reads $O(\log^3\log n)$ rounds with failure probability $1/\mathrm{poly}(n)$ in $n$, not in $n'$, and the failure exponent $\gamma$ enters the constants in the standard way, since the probabilistic guarantees of~\cite{HKNT22} are stated for an arbitrary constant exponent while their completion phase is deterministic, hence error-free.
\end{proof}

We use part (c) only as a black box for randomized completion phases; all bounds in this section degrade gracefully, in that replacing $O(\log^3\log n)$ by any time bound $T_{\mathrm{list}}(n) = \mathrm{poly}(\log\log n)$ changes only the additive completion term in the round complexities.

The frequency analysis rests on two properties of the first phase of our algorithms, namely, palette symmetry and small influence. This leads to the following definition.

\begin{definition}
\label{def:symmetric}
A randomized \emph{partial-coloring} algorithm $\cA$ outputs $\clr: V \to \cP \cup \{\bot\}$, where $\bot$ stands for ``uncolored'' (an algorithm that always colors simply never outputs $\bot$). It is \emph{palette-symmetric} if for every permutation $\pi$ of $\cP$ (extended by $\pi(\bot) = \bot$) the law of $\pi\circ\clr$ equals the law of $\clr$. The \emph{influence radius} of $\cA$ is the smallest $\tau$ such that, for every vertex $v$ and any two assignments of the private random strings that differ only in the string $\Rand_v$ held by $v$, all outputs outside $\Gamma_\tau(v)$ coincide. (A $T$-round algorithm always has influence radius at most $T$.)
\end{definition}

We establish the following transfer lemma.

\begin{lemma}
\label{lemma:transfer}
Let $\cA$ be a palette-symmetric partial-coloring algorithm with influence radius $\tau$, let $b = \max_v|\Gamma_\tau(v)|$, and let $\freq(\bot) = |\clr^{-1}(\bot)|$ be the number of uncolored vertices. Then:
\\
(a) 
$\E[\freq(c)] = (n - \E[\freq(\bot)])/(\Delta+1)$ for every $c \in \cP$,
and 
\\
(b) for every $t > 0$, with probability at least $1 - 2(\Delta+2)\exp\big(-2t^2/(n b^2)\big)$, simultaneously 
\\
$|\freq(c) - \E[\freq(c)]| \le t$ for all $c \in \cP$ and $|\freq(\bot) - \E[\freq(\bot)]| \le t$.
In particular, for always-coloring algorithms, meaning algorithms satisfying $\freq(\bot)=0$ in every execution, $\E[\freq(c)] = \sigma$ exactly.
\end{lemma}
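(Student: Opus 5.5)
Part (a) follows from palette symmetry alone. For a fixed color $c \in \cP$ and any other color $c'$, apply the hypothesis to the transposition $\pi = (c\ c')$, which fixes $\bot$. Since $\pi\circ\clr$ and $\clr$ have the same law, $\freq(c)$ under $\clr$ has the same law as $\freq(c')$. In particular $\E[\freq(c)] = \E[\freq(c')]$ for all $c, c' \in \cP$.

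Every vertex outputs either a color of $\cP$ or $\bot$, so $\sum_{c\in\cP}\freq(c) + \freq(\bot) = n$ deterministically. Taking expectations and using that the $\Delta+1$ color expectations are equal gives
\[
(\Delta+1)\E[\freq(c)] = n - \E[\freq(\bot)]
\]
for every $c$, which is (a). The always-coloring case follows by setting $\freq(\bot)=0$, which gives $\E[\freq(c)] = n/(\Delta+1) = \sigma$.

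Part (b) will come from McDiarmid's inequality, Theorem~\ref{thm:hoeffding}(b). Fix an outcome $o \in \cP\cup\{\bot\}$ and view $\freq(o) = \sum_u \mathds{1}[\clr(u) = o]$ as a function of the $n$ independent private strings $(\Rand_v)_v$. By the definition of influence radius, changing $\Rand_v$ alone leaves every output outside $\Gamma_\tau(v)$ unchanged. So at most $|\Gamma_\tau(v)| \le b$ indicator terms can change, and $\freq(o)$ changes by at most $b_v \le b$. Then $\sum_v b_v^2 \le n b^2$, and McDiarmid gives
\[
\Pr\big[|\freq(o) - \E[\freq(o)]| \ge t\big] \le 2\exp\big(-2t^2/(nb^2)\big).
\]
A union bound over the $\Delta+2$ outcomes in $\cP\cup\{\bot\}$ yields the stated simultaneous bound with failure probability at most $2(\Delta+2)\exp(-2t^2/(nb^2))$.

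The only step I expect to need care is the one applying McDiarmid. The strings $\Rand_v$ are unbounded random objects, but McDiarmid's inequality only needs independence of the coordinates and a bounded-differences condition, so this is fine. The influence-radius definition is phrased as ``two assignments differing only in $\Rand_v$'', which is exactly the bounded-differences hypothesis with $b_v = |\Gamma_\tau(v)|$. A minor point to check is measurability and well-definedness of $\E$, which is automatic since $\freq(o)\in[0,n]$ is bounded.
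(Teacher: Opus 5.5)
Your proposal is correct and follows the paper's proof: palette symmetry equalizes the expected class sizes, which together with $\sum_{c\in\cP}\freq(c)+\freq(\bot)=n$ gives (a). Part (b) then comes from McDiarmid's inequality with per-string influence at most $b$ and a union bound over the $\Delta+2$ variables. Your use of transpositions to obtain equal marginals is exactly the consequence of exchangeability that the paper invokes, and it suffices for the expectation identity.
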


\begin{proof}
By palette symmetry the variables $(\freq(c))_{c \in \cP}$ are exchangeable, and they sum to $n - \freq(\bot)$; taking expectations gives claim (a). Each of the $\Delta+2$ random variables $\freq(1), \dots, \freq(\Delta+1), \freq(\bot)$ is a function of the independent random strings $(\Rand_v)_v$, and resampling one string changes at most $b$ outputs, hence it changes each of these random variables by at most $b$; Theorem~\ref{thm:hoeffding}(b) and a union bound finish the proof. 
\end{proof}

\subsection{The algorithm \SymColor\ and the residual-decay lemma}

Algorithm $\SymColor$ operates in two phases. In the \emph{proposal phase}, iterated $\tau_0 = O(\log(\Delta/\eta))$ times, every uncolored vertex tosses a fair coin and, on success, proposes a uniformly random color from its current list, namely, the palette minus the colors already fixed at its neighbors; the proposal is kept, permanently, unless some neighbor proposes the same color in the same round. In the \emph{completion phase}, the graph induced by the uncolored vertices is list-colored by the black box of Fact~\ref{fact:completion}. The proposal phase treats all colors symmetrically, so the transfer lemma applies to it; the completion phase is arbitrary, and is harmless simply because the residue it colors is small.

\begin{algorithm}[t]
\caption{$\SymColor(G, \eta)$ \label{alg:symcolor}}
\KwIn{Graph $G$ with degree bound $\Delta$; $\eta$.}
\KwOut{Proper coloring, palette $\Delta+1$, all class sizes in $(1\pm\eta)\sigma$, w.h.p.}

$\tau_0 \leftarrow \lceil 6\ln(16(\Delta+1)/\eta)\rceil$\;
\emph{Proposal phase.} 

\For{round $r = 1, \dots, \tau_0$}
{
  Each uncolored $v$ becomes active with probability $1/2$\; 
  Each active vertex proposes a uniformly random color $c_v$ from $L_v = \cP \setminus \{ \mbox{colors of }v\mbox{'s colored neighbors} \}$\;
  $v$ keeps $c_v$ permanently (and becomes colored) if no neighbor of $v$ proposes $c_v$ in this round\;
}
\emph{Completion phase.} 

On the graph $H$ induced by uncolored vertices, complete the coloring by $(\deg+1)$-list coloring with the lists $L_v$: deterministically in $O(\Delta^2 + \log^* n)$ rounds or randomized in $O(\log^3\log n)$ rounds w.h.p.\ (Fact~\ref{fact:completion})\;
\KwReturn{$\clr$}
\end{algorithm}

The proposal phase is palette-symmetric: by induction on rounds, applying a palette permutation $\pi$ maps every execution to an equally likely execution with all lists, proposals and kept colors renamed by $\pi$. It is important to note that per-vertex constant success probability is in general \emph{false} for proposal dynamics: a vertex whose uncolored neighbors all have lists of size $2$ succeeds with probability exponentially small in its uncolored degree. The correct statement is the following global residual decay.

\begin{lemma}
\label{lemma:decay}
Condition on the history before a proposal round and let $U$ be the current set of uncolored vertices. Then the expected number of vertices of $U$ colored in this round is at least $|U|/6$. Consequently, after $\tau_0$ rounds, $\E[\freq(\bot)] \le n\,e^{-\tau_0/6} \le \eta\sigma/16$, where $\freq(\bot)$ is the number of vertices left uncolored by the proposal phase.
\end{lemma}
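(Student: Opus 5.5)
Write $U$ for the uncolored set at the start of the round. The plan is to avoid any per-vertex success bound (which, as noted above, is false) and instead bound the \emph{expected number of conflicts} globally, by double counting over edges of the graph $G[U]$ induced by $U$. Everything is conditioned on the history, so the lists $L_v$ and the set $U$ are fixed.

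\textbf{Step 1 (list sizes).} For $v \in U$, every neighbor of $v$ is either colored, and then removes at most one color from $L_v$, or lies in $U$. Since $v$ has at most $\Delta$ neighbors, $|L_v| \ge \Delta+1-(\deg_G(v)-\deg_U(v)) \ge \deg_U(v)+1$, where $\deg_U(v)$ is the number of neighbors of $v$ in $U$.

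\textbf{Step 2 (a vertex is colored unless it conflicts).} An active vertex $v$ keeps $c_v$ unless some neighbor $u$ proposes the same color. Only uncolored neighbors propose, and $c_v \in L_v$ already avoids the fixed colors. Call an edge $uv$ of $G[U]$ a \emph{collision} if both endpoints are active and $c_u=c_v$. Given the history, the coins and proposals are independent, so
\[
\Pr[uv \mbox{ is a collision}] \;=\; \frac14\cdot\frac{|L_u\cap L_v|}{|L_u|\,|L_v|} \;\le\; \frac{1}{4\max\{|L_u|,|L_v|\}} \;\le\; \frac18\Big(\frac1{|L_u|}+\frac1{|L_v|}\Big).
\]
Sum this over the edges of $G[U]$ and regroup by vertex. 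Each vertex $v$ appears in $\deg_U(v)$ terms, so by Step 1
\[
\E[\#\mbox{collisions}] \;\le\; \frac18\sum_{v\in U}\frac{\deg_U(v)}{|L_v|} \;\le\; \frac{|U|}{8}.
\]

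\textbf{Step 3 (expected progress).} Every active vertex that fails is an endpoint of at least one collision, and each collision accounts for at most two failed vertices. Hence the number of failed active vertices is at most twice the number of collisions. Since each vertex of $U$ is active with probability $1/2$,
\[
\E[\#\mbox{newly colored}] \;\ge\; \frac{|U|}{2}-2\cdot\frac{|U|}{8} \;=\; \frac{|U|}{4} \;\ge\; \frac{|U|}{6}.
\]

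\textbf{Step 4 (the consequence).} Let $U_r$ be the uncolored set after round $r$. Step 3 gives $\E\big[|U_{r+1}|\,\big|\,\mbox{history}\big]\le \frac56|U_r|$. Applying the tower property and inducting over the $\tau_0$ rounds,
\[
\E[\freq(\bot)] \;\le\; n\,(5/6)^{\tau_0} \;\le\; n\,e^{-\tau_0/6}.
\]
The choice $\tau_0 \ge 6\ln(16(\Delta+1)/\eta)$ then gives $\E[\freq(\bot)] \le n\eta/(16(\Delta+1)) = \eta\sigma/16$.

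\textbf{Main obstacle.} The only delicate point is Step 2. A vertex whose neighbors all have tiny lists can fail with probability close to one. The bound $1/\max\{|L_u|,|L_v|\}$, split symmetrically between the two endpoints of each edge, charges every collision to the endpoint with the larger list. Regrouping by vertex then leaves the sum $\sum_v \deg_U(v)/|L_v|$, which the list-size inequality of Step 1 bounds by $|U|$.
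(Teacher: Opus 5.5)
Your proof is correct, and it takes a different route from the paper's. The paper argues vertex by vertex and multiplicatively. Conditioning on $v$ proposing $c_v$, it bounds $\Pr[v \mbox{ is colored}] \ge \frac12\prod_{u}\bigl(1-\frac{1}{2|L_u|}\bigr) \ge \frac12 e^{-S(v)}$ with $S(v)=\sum_{u\in\Gamma(v)\cap U}1/|L_u|$. It then double-counts $\sum_{v\in U}S(v)=\sum_{u\in U}d_U(u)/|L_u|\le|U|$, and needs Jensen's inequality to get $\sum_{v}e^{-S(v)}\ge|U|/e$, ending at $|U|/(2e)$.

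You instead bound the expected number of failures additively, by linearity of expectation over the collision edges of $G[U]$. The same double-counted sum $\sum_{v}\deg_U(v)/|L_v|\le|U|$ controls it, no product inequality or convexity step is needed, and the constant improves to $|U|/4$. What the paper's estimate buys in exchange is that it holds for every fixed proposal $c_v$, so it never uses the uniformity of $v$'s own choice. Your collision probability $\frac14|L_u\cap L_v|/(|L_u||L_v|)$ uses uniformity at both endpoints, which is harmless here since the algorithm proposes uniformly.

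One correction to your framing: the per-vertex bound you set out to avoid (echoing the remark before the lemma) is not false, and your ``main obstacle'' is not real. Since $|L_u\cap L_v|\le|L_u|$, your Step~2 already gives $\Pr[uv \mbox{ is a collision}]\le 1/(4|L_v|)$. By Step~1 this yields $\Pr[v \mbox{ is active and fails}]\le \deg_U(v)/(4|L_v|)<1/4$, so every $v\in U$ is colored with probability at least $1/4$. Intuitively, in expectation at most $\deg_U(v)/2$ neighbors are active, each blocks at most one color, and $v$ picks uniformly among at least $\deg_U(v)+1$ colors. A vertex whose neighbors have tiny lists therefore cannot fail with probability close to one. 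Nothing in your proof depends on this misstatement, but the regrouping over edges is not actually necessary.
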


\begin{proof}
Each uncolored $u$ has $|L_u| \ge (\Delta+1) - (\deg(u) - d_U(u)) \ge d_U(u) + 1$, where $d_U(u)$ is its number of uncolored neighbors. Fix $v \in U$ and condition on $v$ proposing $c_v$. Define
$S(v)=\sum_{u\in\Gamma(v)\cap U} 1/|L_u|$.
For each uncolored neighbor $u$, $\Pr[u \mbox{ proposes } c_v] \le \tfrac12 \cdot \mathds{1}[c_v \in L_u]/|L_u| \le 1/(2|L_u|)$, and these events are independent across $u$. Hence, using $1 - x \ge e^{-2x}$ for $x \in [0, \tfrac12]$,
\begin{align*}
\Pr[v \mbox{ is colored}]
  ~\ge~ \frac12 \prod_{u \in \Gamma(v)\cap U}
       \Big(1 - \frac{1}{2|L_u|}\Big)
  ~\ge~ \frac12\, e^{-S(v)}.
\end{align*}
(Colors held by colored neighbors are excluded by $L_v$, so the only conflicts are simultaneous proposals.) Summing $S$ over $U$ and exchanging the order of summation,
\begin{align*}
\sum_{v \in U} S(v)
  ~=~ \sum_{u \in U} \frac{d_U(u)}{|L_u|}
  ~\le~ \sum_{u \in U} \frac{d_U(u)}{d_U(u) + 1}
  ~\le~ |U| .
\end{align*}
By Jensen's inequality, $\sum_{v\in U} e^{-S(v)} \ge |U|\, e^{-\frac{1}{|U|}\sum_v S(v)} \ge |U|\,e^{-1}$, so the expected number of newly colored vertices is at least $\tfrac12 e^{-1}|U| \ge |U|/6$. Iterating, $\E[\freq(\bot)] \le n(1 - 1/6)^{\tau_0} \le n e^{-\tau_0/6} \le n \cdot \frac{\eta}{16(\Delta+1)} = \frac{\eta\sigma}{16}$ by the choice of $\tau_0$.
\end{proof}

\subsection{The main theorem}

We are now ready to establish the main result of this section.

\begin{theorem}[Theorem~\ref{thm:intro-general}(a), formal]
\label{thm:symcolor}
Set $C = 16$. For all sufficiently large $n$, all $\Delta$ with $\Delta + 1 \le 2^{\sqrt{\log n}/C}$ and all $\eta \in [2^{-\sqrt{\log n}/C},\, 1]$, Algorithm $\SymColor(G, \eta)$ runs in $O(\log(\Delta/\eta)) + O(\log^3\log n)$ rounds and w.h.p.\ outputs a proper coloring with palette $\Delta+1$ and $\freq(c) \in [(1-\eta)\sigma, (1+\eta)\sigma]$ for every $c$. For $\Delta = O(1)$, using the deterministic completion, the running time is $O(\log(1/\eta) + \log^* n)$.
\end{theorem}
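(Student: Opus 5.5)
We split the final class sizes into the contribution of the proposal phase and that of the completion phase. Write $\freq_1(c)$ for the number of vertices colored $c$ by the end of the proposal phase, and $\freq(\bot)$ for the size of the residue handed to the completion phase. Then
\[ \freq_1(c) \le \freq(c) \le \freq_1(c) + \freq(\bot), \]
because the completion phase only adds vertices to classes. It therefore suffices to show that, w.h.p., every $\freq_1(c)$ lies in $[(1-\eta/2)\sigma,(1+\eta/2)\sigma]$ and $\freq(\bot) \le \eta\sigma/2$.

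\textbf{Properness.} Proposals are kept only when no neighbor proposes the same color in the same round. Lists exclude the colors of already-colored neighbors. The completion phase is a valid $(\deg+1)$-list instance: the bound $|L_v| \ge d_U(v)+1$ from the proof of Lemma~\ref{lemma:decay} holds for the final residue $H$. Hence Fact~\ref{fact:completion}(c) colors $H$ properly w.h.p., or deterministically via Fact~\ref{fact:completion}(a) when $\Delta=O(1)$. The running time is then $\tau_0 = O(\log(\Delta/\eta))$ rounds plus the completion bound, which gives both claimed time bounds.

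\textbf{Frequencies.} We apply the transfer lemma (Lemma~\ref{lemma:transfer}) to the proposal phase alone, viewed as a palette-symmetric partial-coloring algorithm. Its influence radius is at most $\tau_0$, so $b \le \sum_{i\le\tau_0}\Delta^i \le (\Delta+1)^{\tau_0}$.

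Lemma~\ref{lemma:transfer}(a) together with Lemma~\ref{lemma:decay} gives
\[ \E[\freq_1(c)] = \sigma - \E[\freq(\bot)]/(\Delta+1) \in [(1-\eta/16)\sigma,\sigma] \quad\text{and}\quad \E[\freq(\bot)] \le \eta\sigma/16. \]
For concentration, take $t = \eta\sigma/4$ in Lemma~\ref{lemma:transfer}(b). The failure probability is $2(\Delta+2)\exp(-2t^2/(nb^2))$, so we need
\[ \eta^2 n/(\Delta+1)^2 \gg b^2\log n. \]
With $\tau_0 = O(\log(\Delta/\eta))$ we have $b \le 2^{O(\log(\Delta+1)\cdot\log((\Delta+1)/\eta))}$. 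Under the hypotheses $\Delta+1 \le 2^{\sqrt{\log n}/C}$ and $\eta \ge 2^{-\sqrt{\log n}/C}$, the exponent is at most roughly $6\cdot(\sqrt{\log n}/C)\cdot(2\sqrt{\log n}/C + 4)\cdot\ln 2$. For $C=16$ this gives $b^2 \le n^{\alpha}$ with $\alpha<1$ bounded away from $1$. The remaining factors $\eta^{-2}(\Delta+1)^2 \le 2^{4\sqrt{\log n}/C}$ and $\log n$ are $n^{o(1)}$. So the exponent in the failure bound is $-n^{\Omega(1)}$, far stronger than $n^{-\gamma}$.

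Combining the pieces, w.h.p.
\[ |\freq_1(c) - \sigma| \le \eta\sigma/16 + \eta\sigma/4 < \eta\sigma/2 \quad\text{and}\quad \freq(\bot) \le \eta\sigma/16 + \eta\sigma/4 < \eta\sigma/2. \]
A union bound with the completion-phase failure event then gives the theorem.

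The main obstacle is the arithmetic of the ball-size bound $b$. The influence radius must be taken as $\tau_0$, not $1$, since a vertex's coin affects later lists of vertices at growing distances. We must also check that $C=16$ and the constant $6$ in $\tau_0$ keep $2\tau_0\log(\Delta+1)$ below $(1-\epsilon)\log n$. I would carry this out explicitly, bounding $\tau_0 \le 6\ln 16 + 6\ln 2\cdot(2\sqrt{\log n}/C)+1$ and multiplying by $2\log(\Delta+1) \le 2\sqrt{\log n}/C$. This yields a leading term of about $(24\ln2/C^2)\log n \approx 0.065\log n$, plus lower-order $O(\sqrt{\log n})$ terms, comfortably below $\log n$ for large $n$.
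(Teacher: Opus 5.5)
Your proposal is correct and follows essentially the paper's proof: the same split $\freq(c)=f_1(c)+f_2(c)$ with $0\le f_2(c)\le\freq(\bot)$, the transfer lemma applied to the proposal phase with influence balls of size at most $(\Delta+1)^{\tau_0}$, the decay lemma for $\E[\freq(\bot)]$, and the same exponent arithmetic showing that $b$ is only a small power of $n$ when $C=16$. The differences are cosmetic: you fix $t=\eta\sigma/4$ and obtain a stretched-exponentially small failure probability, whereas the paper fixes $t=b_0\sqrt{(\gamma+1)n\ln n}$ and verifies $b_0\le n^{1/4}$ and $t\le\eta\sigma/16$; separately, your margin $b^2\le n^{0.065+o(1)}$ comfortably absorbs the fact that each proposal iteration actually needs two communication rounds (proposals, then kept-announcements), so the true influence radius is about $2\tau_0$ rather than $\tau_0$.
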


\begin{proof}
Let $\freq(c) = f_1(c) + f_2(c)$, where $f_1$ counts vertices colored in the proposal phase and $f_2$ in the completion; $0 \le f_2(c) \le \freq(\bot)$ for every $c$. The proposal phase has influence radius at most $\tau_0$ (since effects propagate one hop per round), so $f_1(c)$ and $\freq(\bot)$ are functions of the independent random strings $\Rand_v$ with per-string influence at most $b_0 = \max_v |\Gamma_{\tau_0}(v)| \le (\Delta+1)^{\tau_0}$. Define the \emph{concentration threshold}
$t = b_0\sqrt{(\gamma+1)\,n\ln n}$.
The parameter verification below establishes two estimates,
\[
\mbox{(i)}\quad b_0 \;\le\; n^{1/4}
\qquad\qquad\mbox{and}\qquad\qquad
\mbox{(ii)}\quad t \;\le\; \frac{\eta\sigma}{16}\,,
\]
of which (i) feeds the concentration step and (ii) the final assembly.

\noindent
We first prove estimate (i). As
$$\tau_0 \le 6\ln(16(\Delta+1)/\eta) + 1 \le 25\big(\ln(\Delta+1) + \ln(1/\eta) + 1\big)$$
and 
$$\ln(\Delta+1), \ln(1/\eta) \le (\ln 2)\sqrt{\log n}/C,$$ 
we have
\begin{eqnarray*}
\ln b_0 &\le& \tau_0 \ln(\Delta+1) \;\le\; 25\big(2(\ln 2)\tfrac{\sqrt{\log n}}{C} + 1\big)\cdot (\ln 2)\tfrac{\sqrt{\log n}}{C} \\
&\le& \frac{50(\ln 2)^2}{C^2}\log n + o(\log n) \;\le\; \frac{\ln n}{4}
\end{eqnarray*}
for $C = 16$ and large $n$; hence $b_0 \le n^{1/4}$, proving (i).

\noindent
Next, we establish \emph{concentration.}
Let $\cE_{\mbox{\footnotesize conc}}$ be the event that $|f_1(c') - \E[f_1(c')]| \le t$ for all colors $c'$ and $|\freq(\bot) - \E[\freq(\bot)]| \le t$.
The proposal phase is a palette-symmetric partial-coloring algorithm with influence radius at most $\tau_0$, so Lemma~\ref{lemma:transfer}, applied with the concentration threshold $t$, ensures that event $\cE_{\mbox{\footnotesize conc}}$ holds with probability at least $1 - 2(\Delta+2)\,n^{-2(\gamma+1)}$. This is at least $1 - n^{-\gamma}$ for every $n \ge 4$:
$\Delta + 2 \le n + 1 \le 2n$, so, since $4n \le n^2 \le n^{\gamma+2}$ for $n \ge 4$ and $\gamma \ge 0$, we get
\[
2(\Delta+2)\,n^{-2(\gamma+1)}
\;\le\; 4n\cdot n^{-2(\gamma+1)}
\;\le\; n^{\gamma+2}\cdot n^{-2(\gamma+1)}
\;=\; n^{-\gamma}.
\]

\noindent
We next prove estimate (ii); together with the absorption of the lower-order term in the proof of (i) above, this is the step at which the theorem's assumption that $n$ be sufficiently large is used, and we spell the requirement out. By (i), $t \le n^{3/4}\sqrt{(\gamma+1)\ln n}$; on the other hand, the hypotheses $\Delta + 1 \le 2^{\sqrt{\log n}/16}$ and $\eta \ge 2^{-\sqrt{\log n}/16}$ give $\eta\sigma = \eta\, n/(\Delta+1) \ge n\, 2^{-\sqrt{\log n}/8}$. Hence $t \le \eta\sigma/16$ follows from
\[
16\sqrt{(\gamma+1)\ln n}\;\cdot\; 2^{\sqrt{\log n}/8} \;\le\; n^{1/4} .
\]
Since $\sqrt{\log n} \le \log n$ gives $2^{\sqrt{\log n}/8} \le n^{1/8}$, it suffices that $16\sqrt{(\gamma+1)\ln n} \le n^{1/8}$, i.e., that $256\,(\gamma+1)\ln n \le n^{1/4}$; this holds for all $n$ above an explicit threshold $n_0(\gamma)$ depending only on $\gamma$, and it is the threshold behind ``for all sufficiently large $n$'' in the statement of the theorem. This proves (ii).

\noindent
Combining Lemmas~\ref{lemma:transfer} and~\ref{lemma:decay} gives $\E[f_1(c)] = (n - \E[\freq(\bot)])/(\Delta+1) \in [\sigma(1 - \eta/16), \sigma]$. On $\cE_{\mbox{\footnotesize conc}}$, and using $\freq(\bot) \le \E[\freq(\bot)] + t \le \eta\sigma/16 + \eta\sigma/16$ (by Lemma~\ref{lemma:decay} and estimate (ii)), we get
\begin{align*}
\freq(c) &\ge f_1(c) \;\ge\; \sigma(1 - \eta/16) - \eta\sigma/16 \;\ge\; (1-\eta)\sigma,
\\
\freq(c) &\le f_1(c) + \freq(\bot) \;\le\; \sigma + \tfrac{3\eta\sigma}{16} \;\le\; (1+\eta)\sigma.
\end{align*}
It remains to verify that the final coloring is proper.
Kept proposals are conflict-free by construction, and the completion is a proper list completion of $H$ with lists of size $\ge \deg_H + 1$, valid w.h.p.\ (randomized case) or always (deterministic case). The round complexity is $\tau_0 = O(\log(\Delta/\eta))$ plus the completion bounds of Fact~\ref{fact:completion}. The theorem follows.
\end{proof}


\begin{remark}
\label{rem:conc-range}
The restriction $\Delta \le 2^{\sqrt{\log n}/C}$ comes solely from the crude influence bound $b \le (\Delta+1)^{\tau_0}$; the information-theoretic limit of the statement is $\sigma = \Theta(\eta^{-2}\log n)$, below which even an idealized multinomial allocation fails to be $(1\pm\eta)$-balanced w.h.p. We conjecture the theorem extends to that threshold, matching the condition $\sigma = \Omega(\log n)$ that algorithm $\LogCompact$ of~\cite{NP25} had to assume; see Section~\ref{sec:open}. Section~\ref{sec:split} shows that if the palette is relaxed by a $(1+\eta)$ factor, the restriction can instead be bypassed entirely, for all $\Delta \le n^{1-o(1)}$; Section~\ref{sec:growth} bypasses it with palette exactly $\Delta+1$ on graphs of mild ball growth, up to $\Delta = n^{1/4-o(1)}$.
\end{remark}

In its parameter regime, Theorem~\ref{thm:symcolor} compares favorably with each row of Table~\ref{tab:compare}: palette exactly $\Delta+1$; two-sided control $(1\pm\eta)\sigma$, strictly inside the ranges $[\sigma/2, 2k\sigma]$ and $[\sigma/2, O(\sigma\log\Delta)]$, and incomparable with $[\sigma/2,\sigma]$ (a stronger upper cap, but with palette up to $2(\Delta+1)$ and a wider lower spread); and no $D$ or $\Delta$ term in the running time, whereas every $\CONGEST$ algorithm of~\cite{NP25} pays $\Theta(D+\Delta)$ per invocation of its global primitives. The mechanism is that concentration replaces color accounting, and the smallness of the asymmetric completion replaces quota assignment. The lower bounds of Sections~\ref{sec:det-lb} and~\ref{sec:diameter-lb} show the largeness condition is not an artifact: with small imbalance budgets, balance is genuinely global.

\subsection{A growth-sensitive form of the exact-palette theorem}
\label{sec:growth}

The degree restriction of Theorem~\ref{thm:symcolor} is a limitation of its proof, not of the algorithm: the maximum degree enters only through the worst-case estimate $b \le (\Delta+1)^{\tau_0}$ on the influence balls, which is attained by trees but pessimistic for locally dense graphs (in a clique, the ball at any radius is the whole clique). Making the ball size itself the parameter yields a strictly stronger statement at no cost. We get the following.

\begin{theorem}
\label{thm:symcolor-growth}
Fix $\gamma \ge 1$ and $\eta \in (0,1]$, let $\tau_0 = \lceil 6\ln(16(\Delta+1)/\eta)\rceil$, and let $b = \max_v |\Gamma_{\tau_0}(v)|$. If
$b\,\sqrt{(\gamma+1)\,n\ln n} \le \eta\sigma/16$,
then $\SymColor(G,\eta)$ outputs, with probability at least $1 - n^{-\gamma}$ (for sufficiently large $n$), a proper coloring with palette exactly $\Delta+1$ and all class sizes in $[(1-\eta)\sigma, (1+\eta)\sigma]$, within $\tau_0$ rounds plus the completion bound of Fact~\ref{fact:completion}.
\end{theorem}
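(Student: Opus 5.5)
The plan is to rerun the proof of Theorem~\ref{thm:symcolor} almost verbatim, with the ball size $b$ in place of the crude bound $b_0 \le (\Delta+1)^{\tau_0}$. The hypothesis $b\sqrt{(\gamma+1)n\ln n} \le \eta\sigma/16$ is exactly estimate (ii) of that proof. We never need estimate (i), since (i) served only to derive (ii) from the degree and $\eta$ restrictions. So the argument splits into four steps: palette symmetry and influence, concentration, residual decay, and assembly plus properness.

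\textbf{Symmetry and concentration.} First, the proposal phase of $\SymColor$ is palette-symmetric; this is the induction on rounds already noted after Algorithm~\ref{alg:symcolor}. It has influence radius at most $\tau_0$, since a change in $\Rand_v$ propagates one hop per round. So Lemma~\ref{lemma:transfer} applies with ball size $b$. Set $t = b\sqrt{(\gamma+1)n\ln n}$. Then the event $\cE_{\mbox{\footnotesize conc}}$ holds with probability at least
\[
1 - 2(\Delta+2)\exp\big(-2t^2/(nb^2)\big) = 1 - 2(\Delta+2)n^{-2(\gamma+1)} \ge 1 - n^{-\gamma}
\]
for $n \ge 4$, by the same arithmetic as before. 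Here $\cE_{\mbox{\footnotesize conc}}$ is the event that $f_1(c)$ is within $t$ of its mean for every $c$, and that $\freq(\bot)$ is within $t$ of its mean.

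\textbf{Residual decay.} Lemma~\ref{lemma:decay} involves no degree restriction. With the stated choice of $\tau_0$ it gives $\E[\freq(\bot)] \le \eta\sigma/16$. Lemma~\ref{lemma:transfer}(a) then gives $\E[f_1(c)] \in [\sigma(1-\eta/16), \sigma]$.

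\textbf{Assembly and properness.} On $\cE_{\mbox{\footnotesize conc}}$ we have $\freq(\bot) \le \eta\sigma/8$. Hence
\[
(1-\eta)\sigma \le f_1(c) \le \freq(c) \le f_1(c) + \freq(\bot) \le (1+\eta)\sigma,
\]
exactly as in the earlier assembly. Properness follows because kept proposals are conflict-free by construction. The completion phase of Fact~\ref{fact:completion} then colors the residue properly from the lists, whose sizes are at least $\deg_H+1$. The round bound is $\tau_0$ plus the completion time.

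\textbf{Main obstacle.} The only genuine issue is the failure probability of the randomized completion. Its $n^{-\gamma'}$ failure must be union-bounded with the concentration failure while staying within $n^{-\gamma}$. I would handle this by running the completion with a slightly larger failure exponent, say $\gamma+1$, and then absorbing $n^{-\gamma-1}+n^{-\gamma}$-type sums for sufficiently large $n$. The \emph{sufficiently large $n$} clause of the statement is reserved for this step. If the deterministic completion is used instead, nothing further is needed.
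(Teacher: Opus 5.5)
Your route is the paper's own. Its proof of Theorem~\ref{thm:symcolor-growth} is literally ``rerun Theorem~\ref{thm:symcolor} with estimate (ii) supplied by the hypothesis,'' and your decay, assembly and properness steps are fine.

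\emph{The influence radius is not $\tau_0$.} You take this for granted, as does the paper, which says ``effects propagate one hop per round.'' Whether $v$ keeps $c_v$ depends on its neighbors' proposals $c_w$. Each $c_w$ is uniform on $L_w$, and $L_w$ depends on the colors fixed in the previous round by $w$'s own neighbors, at distance $2$ from $v$. So each proposal round moves influence two hops, and $v$'s status after $r$ rounds is a function of the strings in $\Gamma_{2r-1}(v)$.

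This already happens on a path $y,x,w,v$. Let $\Rand_y$ decide whether $y$ blocks $x$'s round-$1$ proposal $c$. In round~$2$ the list $L_w$ is then $\cP$ or $\cP\setminus\{c\}$, so for a fixed $\Rand_w$ the proposal $c_w$ can switch between $c$ and some $c'\ne c$. If $v$ proposes $c'$, its survival is decided by $\Rand_y$. After two rounds, $\Rand_y$ has changed an output at distance $3$.

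Consequently Lemma~\ref{lemma:transfer} can only be invoked with $b'=\max_v|\Gamma_{2\tau_0-1}(v)|$. This can be far larger than $b$ (on trees, about $b^2/\Delta$), and the hypothesis on $b$ does not yield $b'\sqrt{(\gamma+1)n\ln n}\le\eta\sigma/16$.

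The repair is to state the hypothesis for balls of radius $2\tau_0$. Equivalently, count each proposal round as two $\LOCAL$ rounds, after which ``a $T$-round algorithm has influence radius at most $T$'' applies honestly. Nothing else in your argument changes. Corollary~\ref{cor:growth} survives with its growth bound taken at radius $2\tau_0$, and in Theorem~\ref{thm:symcolor} the factor $2$ only forces a slightly larger constant $C$.

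\emph{Your ``main obstacle.''} You are right that the paper never union-bounds the randomized completion, but your fix does not close as written, since $n^{-\gamma}+n^{-\gamma-1}>n^{-\gamma}$ for every $n$. What saves it is slack in the concentration step. Its failure probability is $2(\Delta+2)n^{-2(\gamma+1)}\le 4n^{-2\gamma-1}\le 4n^{-\gamma-2}$, using $\gamma\ge1$. Then $4n^{-\gamma-2}+n^{-\gamma-1}\le n^{-\gamma}$ for all $n\ge 3$.
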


\begin{proof}
This is the proof of Theorem~\ref{thm:symcolor} with the paragraph \emph{Bounding $b$} replaced by the hypothesis: the concentration step applies Lemma~\ref{lemma:transfer} with $t = b\sqrt{(\gamma+1)n\ln n} \le \eta\sigma/16$, and the decay, assembly and properness steps are unchanged. (The hypothesis of Theorem~\ref{thm:symcolor} implies the present one: there, $b \le (\Delta+1)^{\tau_0} \le n^{1/4}$ while $\eta\sigma \ge n^{1-o(1)}$.)
\end{proof}

For graphs of polynomially bounded ball growth, the 
claim
takes the following concrete form.

\begin{corollary}
\label{cor:growth}
Fix $\gamma \ge 1$, $\eta \in (0,1]$ and constants $\const_g, J \ge 1$, and suppose every ball of $G$ satisfies $|\Gamma_{\tau_0}(v)| \le \const_g\,\Delta\,\tau_0^{J}$. If
\begin{equation}
\label{eq:ball-growth}
16\,\const_g\,\Delta(\Delta+1)\, \tau_0^{J}\sqrt{(\gamma+1)\ln n} \;\le\; \eta\sqrt n,
\end{equation}
then the conclusion of Theorem~\ref{thm:symcolor-growth} holds; for constant $\eta$ this admits every $\Delta \le n^{1/4 - o(1)}$. In particular, the matched clique cycle $B_{m,\cliqsize}$ satisfies $|\Gamma_r(v)| \le (2r+1)\cliqsize \le 3r\Delta$ for $r \ge 1$, so for constant $\eta$ and every $\cliqsize \le n^{1/4-o(1)}$, the family $B_{m,\cliqsize}$ admits a proper coloring with palette \emph{exactly} $\Delta+1$ and all classes in $(1\pm\eta)\sigma$ in $O(\log\Delta) + O(\log^3\log n)$ rounds, whereas exact balance on the same family costs $\Theta(D)$ rounds (Theorem~\ref{thm:diam-lb}). The exact-vs-approximate separation of Remark~\ref{rem:diam-sep} thus holds with the optimal palette throughout this range.
\end{corollary}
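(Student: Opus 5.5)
The plan is to reduce Corollary~\ref{cor:growth} to Theorem~\ref{thm:symcolor-growth} by bounding $b=\max_v|\Gamma_{\tau_0}(v)|$ with the growth hypothesis and checking that condition~\eqref{eq:ball-growth} implies the hypothesis $b\sqrt{(\gamma+1)n\ln n}\le\eta\sigma/16$. Since $\sigma=n/(\Delta+1)$, the required inequality is $16\,b(\Delta+1)\sqrt{(\gamma+1)\ln n}\le\eta\sqrt n$, obtained by multiplying both sides by $16(\Delta+1)/\sqrt n$. Substituting $b\le\const_g\Delta\tau_0^J$ turns this into exactly~\eqref{eq:ball-growth}. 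This step is mechanical.

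For the claim that constant $\eta$ admits every $\Delta\le n^{1/4-o(1)}$, I will use $\tau_0=O(\log(\Delta/\eta))=O(\log n)$. The left side of~\eqref{eq:ball-growth} is then $O(\Delta^2\,\mathrm{polylog}\,n)$, so the condition holds whenever $\Delta^2\le\sqrt n/\mathrm{polylog}\,n$, that is, $\Delta\le n^{1/4}/\mathrm{polylog}\,n=n^{1/4-o(1)}$. The constants $\const_g,J,\gamma,\eta$ are absorbed into the polylog factor.

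For the matched clique cycle, I will read the ball bound off Observation~\ref{obs:clique-cycle}. Since $\dist((i,a),(j,b))\ge d_{\cC_m}(i,j)$, the ball $\Gamma_r(v)$ is contained in the union of the $\min(2r+1,m)$ cliques $Q_{i-r},\dots,Q_{i+r}$, so $|\Gamma_r(v)|\le(2r+1)\cliqsize$. For $r\ge1$ this is at most $3r\cliqsize\le3r\Delta$, because $\Delta=\cliqsize+1$. This gives the growth hypothesis with $\const_g=3$ and $J=1$, at $r=\tau_0\ge1$.

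The round bound $O(\log\Delta)+O(\log^3\log n)$ for constant $\eta$ comes from $\tau_0=O(\log(\Delta/\eta))$ plus the randomized completion of Fact~\ref{fact:completion}(c). The contrast with exact balance is simply a citation of Theorem~\ref{thm:diam-lb} (and Remark~\ref{rem:diam-div} when the divisibility assumption fails).

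I expect the only delicate point to be the asymptotic bookkeeping in the $n^{1/4-o(1)}$ claim. The quantity $\tau_0^J$ depends on $\Delta$, so I need it to grow only polylogarithmically, which holds because $\Delta\le n$. The condition $\cliqsize\le n^{1/4-o(1)}$ must also translate to $\Delta=\cliqsize+1$, which is immediate.
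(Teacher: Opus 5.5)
Your proposal is correct and follows essentially the paper's own argument. You bound $b$ by the growth hypothesis and rearrange the hypothesis of Theorem~\ref{thm:symcolor-growth} into exactly \eqref{eq:ball-growth}. You absorb $\tau_0^J=O(\log^J n)$ into the $n^{o(1)}$ loss. You read $|\Gamma_r(v)|\le(2r+1)\cliqsize\le 3r\Delta$ off the fact that every edge of $B_{m,\cliqsize}$ changes the ring coordinate by at most one. The explicit choice $\const_g=3$, $J=1$ and the pointer to Remark~\ref{rem:diam-div} for non-divisible $m$ are harmless additions.
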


\begin{proof}
The growth hypothesis gives $b \le \const_g\Delta\tau_0^{J}$, so the 
condition in Eq. \eqref{eq:ball-growth} yields $b\sqrt{(\gamma+1)n\ln n} \le \eta n/(16(\Delta+1)) = \eta\sigma/16$, and Theorem~\ref{thm:symcolor-growth} applies; for constant $\eta$ the condition reads $\Delta^2 \le \eta\sqrt n / O(\log^{J+1} n)$, i.e.\ $\Delta \le n^{1/4-o(1)}$. For $B_{m,\cliqsize}$: every edge changes the ring coordinate by at most one (Observation~\ref{obs:clique-cycle}), so $\Gamma_r(v)$ is contained in $2r+1$ cliques, whence $|\Gamma_r(v)| \le (2r+1)\cliqsize \le 3r\Delta$ for $r \ge 1$. The corollary follows.
\end{proof}

\subsection{All degrees, with palette slack}
\label{sec:split}

Theorem~\ref{thm:symcolor} keeps the palette at exactly $\Delta+1$, and its degree restriction comes from the influence balls $(\Delta+1)^{\tau_0}$ of the proposal dynamics. If the palette is relaxed by a $(1+\eta)$ factor (the palette regime of the tradeoff suite of~\cite{NP25}), the concentration approach extends to all $\Delta \le n^{1-o(1)}$, still independently of $D$:

\begin{corollary}
\label{cor:split-all}
Fix $\eta \in (0,1]$. There is $C = C(\eta)$ such that for every $n$-vertex graph with $\sigma \ge 2^{C(\log\log n)^2}$ (i.e., every $\Delta \le n/2^{C(\log\log n)^2} = n^{1-o(1)}$), a proper coloring with palette size $\chi \le (1+\eta)(\Delta+1)$ and all class sizes in $(1\pm\eta)\,n/\chi$ can be computed w.h.p.\ in $O\big(\log n\,(\log\log n)^2\big)$ rounds of $\LOCAL$. More generally, with a universal constant $C$, the same holds in $O\big((\log\log n + \log(2/\eta))^2\log n\big)$ rounds for every $\eta \in [2^{-\sqrt{\log n}/64}, 1]$ and every graph with $\sigma \ge 2^{C(\log\log n + \log(2/\eta))^2}$; there is no polynomial dependence on $1/\eta$.
\end{corollary}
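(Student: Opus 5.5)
The plan follows the mechanism announced after Theorem~\ref{thm:intro-general}: a one-level random split of the vertices and of the palette into private sub-instances of polylogarithmic degree, followed by $\SymColor$ inside each sub-instance, analysed through the growth-sensitive Theorem~\ref{thm:symcolor-growth}. Concretely, set $k = \lceil \Delta/\Delta_0\rceil$ parts, with $\Delta_0 = \Theta(\eta^{-2}\log n)$. Every vertex draws a part index $i(v) \in [k]$ uniformly and independently, and the palette is cut into $k$ disjoint sub-palettes $\cP_1,\dots,\cP_k$, each of size $\lceil(1+\eta/4)(\Delta_0'+1)\rceil$. Here $\Delta_0' = (1+\eta/8)\Delta/k$ is a w.h.p.\ upper bound on the degree inside $G[V_i]$, obtained by Chernoff plus a union bound over vertices, since $\Delta/k \ge \Delta_0$. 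If a vertex finds its in-part degree above $\Delta_0'$ (a low-probability event), it is simply treated as failed and recolored later. The total palette is then $\chi = k\,|\cP_i| \le (1+\eta)(\Delta+1)$. If $\Delta \le \Delta_0$, take $k=1$ and apply Theorem~\ref{thm:symcolor-growth} directly.

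Inside part $i$, I run $\SymColor(G[V_i],\eta/4)$ with palette $\cP_i$, which has size at least $\Delta_0'+1$, so both the proposal phase and the completion are legal. Balance is argued in three layers. First, $|V_i|$ is concentrated around $n/k$ within a factor $(1\pm\eta/8)$; this needs $n/k \ge \eta^{-2}\log n$, which follows from $\sigma$ being large. Second, within each part, palette symmetry and Lemma~\ref{lemma:transfer} give each color of $\cP_i$ expected count $(|V_i|-\E\freq(\bot))/|\cP_i|$. The influence radius is $\tau_0 = O(\log(\Delta_0/\eta)) = O(\log\log n+\log(2/\eta))$, and balls in $G[V_i]$ have size at most $b \le (\Delta_0'+1)^{\tau_0} = 2^{O((\log\log n+\log(2/\eta))^2)}$. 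Third, Lemma~\ref{lemma:decay} makes the residue at most an $\eta/16$ fraction. For concentration I would use McDiarmid over all random strings, including the part choices, rather than apply Theorem~\ref{thm:symcolor-growth} per part. Changing one vertex's part index moves it between two parts and perturbs at most $2b$ outputs. The resulting deviation $b\sqrt{n\log n}$ must be at most $\eta\, n/\chi$, which is exactly the hypothesis $\sigma \ge 2^{C(\log\log n+\log(2/\eta))^2}$ once $C$ absorbs the exponent of $b$. Conditioning on the part sizes and degrees is avoided by the palette-symmetry of the whole process within each $\cP_i$. The part-size deviations across parts are then averaged out, since the part sizes sum to $n$ and each part carries $|\cP_i|$ colors.

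The main obstacle is that \emph{failed} vertices and completion must not disturb balance or properness. The failed vertices are those with excessive in-part degree, together with edges between parts, which are harmless because the palettes are disjoint. I would have failed vertices join the residual set $H$ of the global completion, using lists $L_v = \cP$ minus the neighbors' colors. These lists have size at least $\deg_H+1$ because $\chi > \Delta+1$. Their number is w.h.p.\ $n\cdot n^{-\Omega(1)}$, which is negligible. The stated $O(\log n(\log\log n)^2)$ running time suggests that the completion uses deterministic list coloring (Fact~\ref{fact:completion}(b)) on degree-$\mathrm{poly}\log n$ subgraphs, costing $O(\log^2\Delta_0\cdot\log n)$ and dominating the $\tau_0$ proposal rounds. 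I would verify this accounting, with the $\log(2/\eta)$ terms yielding the general form $O((\log\log n+\log(2/\eta))^2\log n)$. Finally, the restriction $\eta \ge 2^{-\sqrt{\log n}/64}$ keeps $b \le n^{1/4}$, exactly as estimate (i) in the proof of Theorem~\ref{thm:symcolor}.
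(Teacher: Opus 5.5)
The gap is in the concentration step. Your architecture otherwise matches the paper's $\SplitSym$: random labels, disjoint sub-palettes, a per-part proposal phase analysed by Lemmas~\ref{lemma:decay} and~\ref{lemma:transfer}, deterministic completion via Fact~\ref{fact:completion}(b), and an exact-palette fallback for small $\Delta$. Your palette count based on $\Delta/k$ is also fine.

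\textbf{Why the global McDiarmid bound fails.} A single McDiarmid bound over all $n$ strings, labels included, gives deviation $t \approx b\sqrt{n\log n} \ge \sqrt{n\log n}$. You need $t \le \eta\, n/\chi \le \eta\sigma = \eta n/(\Delta+1)$. This forces $\Delta+1 \le \eta\sqrt n/(b\sqrt{\log n})$, so your argument stops at $\Delta \le n^{1/2-o(1)}$.

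Your claim that this is ``exactly the hypothesis'' on $\sigma$ is false. The bound $\sigma \ge 2^{C(\log\log n+\log(2/\eta))^2}$ pays only for $b$, not for the factor $\sqrt n$; at $\Delta = n^{0.9}$ you have $\sigma \approx n^{0.1} \ll \sqrt n$. The underlying reason is that each of the $n$ labels can move a vertex into part $i$. So all $n$ labels enter $\sum_j b_j^2$ for a color of $\cP_i$, even though each does so only with probability $O(1/k)$, and worst-case bounded differences cannot exploit that.

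Three further problems with this step:
\begin{itemize}
\item The difference $2b$ is not a worst-case bound. Ball sizes in $G[V_i]$ are bounded only on the good labeling event.
\item Under your failed-vertex rule, a single label flip can toggle the failed status of up to $\Delta$ neighbours.
\item Part-size fluctuations do not ``average out'' for an individual color. A color of $\cP_i$ inherits the relative deviation of $|V_i|$.
\end{itemize}

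\textbf{The missing idea.} Decouple the two sources of randomness, as in Lemmas~\ref{lemma:split-good} and~\ref{lemma:per-part}.
\begin{itemize}
\item First, use Chernoff bounds plus union bounds, which are variance-sensitive, to obtain a good labeling: all in-part degrees at most $(1+\epsilon_0)\ksplit$, and all $n_\ell \in (1\pm\epsilon_0)n/L$. This disposes of the part sizes.
\item Then condition on such a labeling. The count of a color in $\cP_\ell$ now depends only on the proposal strings of the $n_\ell$ vertices of part $\ell$, each with sure influence at most $b$. McDiarmid then gives $t = b\sqrt{(\gamma+3)n_\ell\ln n}$.
\item Since $n_\ell \le 2p\,n/\chi$, the requirement $t \le (\eta/4)n/\chi$ reduces to $n/\chi \gtrsim b^2p\log n/\eta^2 = 2^{O((\log\log n+\log(2/\eta))^2)}$. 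This is exactly what the hypothesis on $\sigma$ supplies. A union bound over the at most $n$ parts and their colors finishes.
\end{itemize}

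Once the bad labeling event is simply counted as failure, the failed-vertex mechanism becomes unnecessary. That also removes a loose end in your merged completion. A residual vertex of part $i$, whose list lies inside $\cP_i$, may be adjacent to failed vertices of other parts, whose lists range over all of $\cP$. Such a vertex need not satisfy $|L_u| \ge \deg_H(u)+1$.
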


The mechanism is a one-level random split. Vertices hash themselves into $L$ parts, the palette is cut into $L$ disjoint blocks, and each part becomes a \emph{private} coloring instance of polylogarithmic maximum degree, inside which the entire machinery of this section (transfer lemma, residual decay, bounded differences, completion) runs with polylogarithmic influence balls. Properness across parts is automatic (since each part uses a disjoint block of colors), and, crucially, every budget becomes \emph{per part}: a color can only ever be used inside its own part, so the residue that may pile onto it is the part's residue, an $\eta/\mathrm{poly}\log n$ \emph{fraction} of the part, reachable in $O(\log\log n + \log(1/\eta))$ proposal rounds. This decoupling of the decay horizon from $\Delta$ is what breaks the $(\Delta+1)^{\tau_0}$ barrier; the details follow.

We now prove Corollary~\ref{cor:split-all} via Algorithm~\ref{alg:splitsym} ($\SplitSym$). Given $\eta \in (0,1]$ and a failure exponent $\gamma \ge 1$, set
\[
\epsilon_0 = \eta/8, \qquad
\ksplit = \lceil 12(\gamma+3)\,\epsilon_0^{-2}\ln n\rceil, \qquad
L = \lceil \Delta/\ksplit\rceil,
\]
\[
p = \lceil(1+\epsilon_0)\ksplit\rceil + 1, \qquad
\chi = Lp, \qquad
\tau_1 = \lceil 6\ln(64\,p/\eta)\rceil .
\]
For later use we record the parameter scales in the following ledger; here $\gamma$ is a constant, and $b = \big((1+\epsilon_0)\ksplit+1\big)^{\tau_1}$ denotes the influence-ball bound of Lemma~\ref{lemma:per-part} below:
\begin{align}
\label{eq:notation-ledger}
\ksplit &= \Theta(\eta^{-2}\log n),
&
p &= \Theta(\eta^{-2}\log n),
\\
\chi &\le (1+\eta/4)(\Delta+1),
&
\ln b &= O\big((\log\log n + \log(2/\eta))^2\big).
\nonumber
\end{align}
The first two scales are immediate from the definitions; the palette bound is proved in Lemma~\ref{lemma:split-good} below (note $1 + 2\epsilon_0 = 1 + \eta/4$); and the bound on $\ln b = \tau_1\ln\big((1+\epsilon_0)\ksplit+1\big)$ follows by multiplying $\tau_1 \le 6\ln(64p/\eta) + 1 = O(\log\log n + \log(2/\eta))$ and $\ln\big((1+\epsilon_0)\ksplit+1\big) = O(\log\log n + \log(2/\eta))$.

The palette is $\cP = [\chi]$, cut into the consecutive blocks $\cP_\ell = \{(\ell-1)p+1, \dots, \ell p\}$, $\ell \in [L]$. More formally, the algorithm is the following.

\bigskip
\begin{algorithm}[H]
\caption{$\SplitSym(G, \eta)$ \label{alg:splitsym}}
\KwIn{Graph $G$ with degree bound $\Delta$; parameters $\eta$, $\gamma$, $\epsilon_0$, $\ksplit$, $L$, $p$, $\chi$, and $\tau_1$ as defined above.}
\KwOut{A proper coloring with palette $\cP=[\chi]$ and every class size in $(1\pm\eta)n/\chi$, w.h.p.}
Every vertex $v$ draws $\ell(v) \in [L]$ uniformly at random and announces it to its neighbors; let 
$V_\ell = \{ v \mid \ell(v)=\ell\}$ 
and $H_\ell = G[V_\ell]$\;
\For{\emph{round} $r = 1, \dots, \tau_1$ \emph{(all parts in parallel)}}{
Each uncolored $v$, with probability $1/2$, proposes a uniform color $c_v$ from $L_v = \cP_{\ell(v)} \setminus \{\mbox{colors of } v\mbox{'s colored neighbors}\}$\;
$v$ keeps $c_v$ permanently if no neighbor of $v$ proposes $c_v$ in this round (only same-part neighbors can)\;
}
\emph{Completion, per part:} color the uncolored set $U_\ell$ by $(\mathrm{deg}+1)$-list coloring of $H_\ell[U_\ell]$ with the lists $L_v$, deterministically in $O(\log^2 p\cdot\log n)$ rounds (Fact~\ref{fact:completion}(b), using $\Delta(H_\ell[U_\ell]) < p$ on $\cE_1$)\;
\KwReturn{$\clr$}
\end{algorithm}

\bigskip
The first lemma shows that the random split is well behaved.

\begin{lemma}
\label{lemma:split-good}
Suppose $\Delta \ge 8\ksplit/\epsilon_0$. With probability at least $1 - n^{-(\gamma+1)}$ the following event $\cE_1$ holds: (i) every $v$ satisfies $\deg_{H_{\ell(v)}}(v) \le (1+\epsilon_0)\ksplit \le p - 1$; and (ii) every part satisfies $n_\ell = |V_\ell| \in (1\pm\epsilon_0)\,n/L$. Moreover, deterministically, $\chi \le (1+2\epsilon_0)(\Delta+1) \le (1+\eta/4)(\Delta+1)$.
\end{lemma}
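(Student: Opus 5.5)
The plan is to prove the two probabilistic parts (i) and (ii) by Chernoff bounds plus a union bound, and the palette bound by direct arithmetic.

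For (i), fix $v$ and condition on $\ell(v)$. The number of neighbors of $v$ in its part is then a sum of $\deg(v)\le\Delta$ independent indicators, each with mean $1/L$. Its mean is therefore at most $\Delta/L\le\ksplit$, since $L=\lceil\Delta/\ksplit\rceil\ge\Delta/\ksplit$. I will apply the multiplicative Chernoff bound in its ``mean at most $\mu^+$'' form with $\mu^+=\ksplit$ (dominate by a binomial of mean exactly $\ksplit$ if needed). This gives
\[
\Pr\big[\deg_{H_{\ell(v)}}(v)>(1+\epsilon_0)\ksplit\big]\le\exp(-\epsilon_0^2\ksplit/3)\le n^{-4(\gamma+3)},
\]
using $\ksplit\ge 12(\gamma+3)\epsilon_0^{-2}\ln n$. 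The inequality $(1+\epsilon_0)\ksplit\le p-1$ holds by the definition of $p$. A union bound over the $n$ vertices contributes at most $n^{-(\gamma+2)}$.

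For (ii), $n_\ell$ is binomial with mean $n/L$, and Chernoff gives failure probability at most $2\exp(-\epsilon_0^2 n/(3L))$. I need $n/L\ge\ksplit$, and here I spend the hypothesis. From $\Delta\ge 8\ksplit/\epsilon_0\ge\ksplit$ we get $L\le\Delta/\ksplit+1\le 2\Delta/\ksplit$, hence $n/L\ge n\ksplit/(2\Delta)\ge\ksplit/2$. The bound $\Delta\le n$ suffices here, and a constant factor loss is absorbed because $\ksplit$ carries the factor $12(\gamma+3)$: the exponent is then at least $2(\gamma+3)\ln n$. A union bound over $L\le n$ parts gives at most $2n^{-(\gamma+5)}$. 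Summing the two failure probabilities stays below $n^{-(\gamma+1)}$ for $n\ge2$.

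For the deterministic palette bound, write
\[
\chi=Lp\le\Big(\frac{\Delta}{\ksplit}+1\Big)\big((1+\epsilon_0)\ksplit+2\big).
\]
Expanding gives $(1+\epsilon_0)\Delta+2\Delta/\ksplit+(1+\epsilon_0)\ksplit+2$. The two error terms must fit within $\epsilon_0(\Delta+1)$. The term $2\Delta/\ksplit\le\epsilon_0\Delta/2$ needs $\ksplit\ge 4/\epsilon_0$, which holds for large $n$ since $\ksplit\gtrsim\epsilon_0^{-2}\ln n$. The term $(1+\epsilon_0)\ksplit+2\le\epsilon_0\Delta/2$ is exactly where $\Delta\ge 8\ksplit/\epsilon_0$ is used: it gives $\epsilon_0\Delta/2\ge 4\ksplit\ge 2\ksplit+2$. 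Hence $\chi\le(1+2\epsilon_0)\Delta\le(1+2\epsilon_0)(\Delta+1)=(1+\eta/4)(\Delta+1)$.

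The main point of care is this constant bookkeeping for the palette, in particular the rounding in $L$ and $p$. The concentration steps are routine.
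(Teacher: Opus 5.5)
Your proposal is correct and follows the paper's proof essentially step for step. You use Chernoff with domination by $\mathrm{Bin}(L\ksplit,1/L)$ for (i), Chernoff with $n/L \ge n\ksplit/(2\Delta) \ge \ksplit/2$ plus a union bound over the $L \le n$ parts for (ii), and the same expansion of $Lp \le (\Delta/\ksplit+1)\big((1+\epsilon_0)\ksplit+2\big)$ for the palette bound. The only difference is immaterial: you give each error term the slack $\epsilon_0\Delta/2$, where the paper uses $\epsilon_0\Delta/4$.
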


\begin{proof}
Conditioned on $\ell(v)$, the quantity $\deg_{H_{\ell(v)}}(v)$ is $\mathrm{Bin}(\deg(v), 1/L)$ with mean $\deg(v)/L \le \Delta/L \le \ksplit$. Since $\Delta \le L\ksplit$, it is stochastically dominated by $\mathrm{Bin}(L\ksplit, 1/L)$, of mean $\ksplit$, and the multiplicative Chernoff bound gives $\Pr[\deg_{H_{\ell(v)}}(v) \ge (1+\epsilon_0)\ksplit] \le \exp(-\epsilon_0^2\ksplit/3) \le n^{-4(\gamma+3)}$; a union bound over the $n$ vertices yields $(i)$. 

To see that $(ii)$ holds, note that $n_\ell \sim \mathrm{Bin}(n, 1/L)$ with mean $n/L \ge n\ksplit/(2\Delta) \ge \ksplit\sigma/2 \ge \ksplit/2$, so $\Pr[|n_\ell - n/L| \ge \epsilon_0 n/L] \le 2\exp(-\epsilon_0^2\ksplit/6) \le 2n^{-2(\gamma+3)}$; finally apply the union bound  over the $L \le n$ parts. The total failure probability is at most $n^{-(\gamma+1)}$.

For bounding the palette size, note that 
$$\chi = Lp \le (\Delta/\ksplit + 1)\big((1+\epsilon_0)\ksplit + 2\big) = \Delta(1+\epsilon_0) + 2\Delta/\ksplit + (1+\epsilon_0)\ksplit + 2.$$ 
Now $2\Delta/\ksplit \le \epsilon_0\Delta/4$ since $\ksplit \ge 8/\epsilon_0$, and $(1+\epsilon_0)\ksplit + 2 \le 2\ksplit \le \epsilon_0\Delta/4$ since $\Delta \ge 8\ksplit/\epsilon_0$ and $(1-\epsilon_0)\ksplit \ge 2$. Hence $\chi \le (1+\epsilon_0)\Delta + \epsilon_0\Delta/2 \le (1+2\epsilon_0)(\Delta+1)$. The lemma follows.
\end{proof}

Within a part, the machinery of Section~\ref{sec:concentration} applies without change, as follows.

\begin{lemma}
\label{lemma:per-part}
Condition on any labeling $\ell(\cdot)$ in $\cE_1$ and fix a part $\ell$. Set $b = \big((1+\epsilon_0)\ksplit+1\big)^{\tau_1}$ and $t = b\sqrt{(\gamma+3)\,n_\ell\ln n}$. With probability at least $1 - 2(p+2)\,n^{-2(\gamma+3)}$ over the proposal randomness of $V_\ell$, the part ends properly colored from $\cP_\ell$ and every color $c' \in \cP_\ell$ satisfies
\[ \frac{n_\ell}{p}\Big(1 - \frac{\eta}{64}\Big) - t \;\le\; \freq(c') \;\le\; \frac{n_\ell}{p}\Big(1 + \frac{\eta}{64}\Big) + 2t . \]
\end{lemma}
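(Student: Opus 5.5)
The plan is to rerun the proof of Theorem~\ref{thm:symcolor} inside the single part $H_\ell$, with the palette $\cP$ replaced by the block $\cP_\ell$, $\Delta+1$ replaced by $p$, $n$ replaced by $n_\ell$, and $\tau_0$ replaced by $\tau_1$. Conditioning on a labeling in $\cE_1$ fixes $V_\ell$ and $H_\ell$. On $\cE_1$, Lemma~\ref{lemma:split-good}(i) gives $\Delta(H_\ell) \le (1+\epsilon_0)\ksplit \le p-1$. Since only same-part neighbors can hold or propose colors of $\cP_\ell$, the restriction of $\SplitSym$'s proposal phase to $V_\ell$ is exactly the proposal phase of $\SymColor$ run on $H_\ell$ with palette $\cP_\ell$ of size $p \ge \Delta(H_\ell)+1$. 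This restricted process depends only on the random strings of vertices in $V_\ell$. Two facts about $\SymColor$ carry over verbatim. First, every uncolored $u$ has $|L_u| \ge p - (\deg_{H_\ell}(u) - d_U(u)) \ge d_U(u)+1$. Second, a permutation of $\cP_\ell$ maps executions to equally likely executions, so the part's proposal phase is palette-symmetric on $\cP_\ell$.

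\textbf{Expectation and decay.} Lemma~\ref{lemma:decay} uses only the inequality $|L_u| \ge d_U(u)+1$, so it holds inside the part and gives $\E[\freq_\ell(\bot)] \le n_\ell e^{-\tau_1/6} \le n_\ell\,\eta/(64p)$ by the choice $\tau_1 = \lceil 6\ln(64p/\eta)\rceil$. Lemma~\ref{lemma:transfer}(a) then applies with $p$ colors and gives
\[
\E[f_1(c')] = \frac{n_\ell - \E[\freq_\ell(\bot)]}{p} \in \Big[\frac{n_\ell}{p}\Big(1-\frac{\eta}{64}\Big),\ \frac{n_\ell}{p}\Big] .
\]
(The decay bound is $n_\ell\eta/(64p)$, and dividing by $p$ gives an even smaller loss, so the stated $\eta/64$ has slack.)

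\textbf{Concentration.} The influence radius is at most $\tau_1$, and balls in $H_\ell$ of radius $\tau_1$ have size at most $(\Delta(H_\ell)+1)^{\tau_1} \le b$. By McDiarmid over the $n_\ell$ independent strings (Theorem~\ref{thm:hoeffding}(b), as in Lemma~\ref{lemma:transfer}(b)), each of the $p+1$ variables $f_1(c')$, $c' \in \cP_\ell$, and $\freq_\ell(\bot)$ deviates from its mean by more than $t$ with probability at most $2\exp(-2t^2/(n_\ell b^2)) = 2n^{-2(\gamma+3)}$. A union bound over them is within the stated $2(p+2)n^{-2(\gamma+3)}$.

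\textbf{Assembly.} The completion recolors only uncolored vertices, so
\[
f_1(c') \le \freq(c') \le f_1(c') + \freq_\ell(\bot).
\]
The lower bound follows directly from the concentration of $f_1(c')$. For the upper bound, $\freq_\ell(\bot) \le \E[\freq_\ell(\bot)] + t \le n_\ell\eta/(64p) + t$. That is the source of the $2t$ term in the statement. The remaining $n_\ell\eta/(64p)$ must be absorbed against the slack from the lower end of $\E[f_1(c')]$. I expect this constant bookkeeping to be the main obstacle. If the decay bound is too tight here, I will instead use the stronger estimate $\E[\freq_\ell(\bot)] \le n_\ell\eta/(64p)$ together with the fact that $\E[f_1(c')] \le n_\ell/p$, which gives exactly $n_\ell/p\cdot(1+\eta/64)+2t$.

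\textbf{Properness.} Kept proposals are conflict-free by construction. On $\cE_1$ the uncolored subgraph $H_\ell[U_\ell]$ has lists with $|L_v| \ge \deg+1$ and maximum degree below $p$, so the deterministic completion of Fact~\ref{fact:completion}(b) succeeds with certainty. This completion stays within $\cP_\ell$.
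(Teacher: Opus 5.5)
Your proposal is correct and follows essentially the same route as the paper. You restrict the proposal phase to $(H_\ell,\cP_\ell)$, reuse the decay and transfer lemmas (which need only $|L_u|\ge d_U(u)+1$ and palette symmetry), apply McDiarmid with ball bound $b$, sandwich $\freq(c')$ between $f_1(c')$ and $f_1(c')+\freq_\ell(\bot)$, and finish with the deterministic completion. The ``obstacle'' you anticipate in the upper bound is not real: no absorption is needed, since $\E[f_1(c')]\le n_\ell/p$ and $\E[\freq_\ell(\bot)]\le \frac{n_\ell}{p}\cdot\frac{\eta}{64}$ give $\frac{n_\ell}{p}\bigl(1+\frac{\eta}{64}\bigr)+2t$ directly, which is exactly the paper's computation.
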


\begin{proof}
Given the labeling, the proposal phase restricted to part $\ell$ is a partial-coloring algorithm on the instance $(H_\ell, \cP_\ell)$, driven by the independent strings of $V_\ell$ alone; proposals of other parts lie in disjoint blocks and can never collide with proposals in $\cP_\ell$, so the conflict rule coincides with the same-part conflict rule.

We first establish the decay of the uncolored vertex set $U$. On $\cE_1$ every $u \in V_\ell$ has $\deg_{H_\ell}(u) + 1 \le p$, so every uncolored $u$ always satisfies $|L_u| \ge p - (\deg_{H_\ell}(u) - d_U(u)) \ge d_U(u) + 1$, where $d_U(u)$ is $u$'s uncolored degree in $H_\ell$. This is the only property of the palette used in the proof of Lemma~\ref{lemma:decay}, which therefore applies as is to $(H_\ell, \cP_\ell)$ and yields $\E[\upsilon_\ell] \le n_\ell\, e^{-\tau_1/6} \le n_\ell \cdot \eta/(64p)$, where $\upsilon_\ell = |U_\ell|$ and the last step is the choice of $\tau_1$.

We next establish transfer and concentration. The phase is palette-symmetric over $\cP_\ell$ (the uniform proposal rule is invariant under every permutation of $\cP_\ell$) and has influence radius at most $\tau_1$ in $H_\ell$, whose balls have size at most $(\Delta(H_\ell)+1)^{\tau_1} \le b$ on $\cE_1$. The proof of Lemma~\ref{lemma:transfer} uses only palette symmetry, the influence bound and the independence of the strings (never the relation $|\cP| = \Delta+1$), so it applies to $(H_\ell, \cP_\ell)$: $\E[\freq_1(c')] = (n_\ell - \E[\upsilon_\ell])/p$ for every $c' \in \cP_\ell$, and with probability at least $1 - 2(p+2)\exp(-2t^2/(n_\ell b^2)) = 1 - 2(p+2)n^{-2(\gamma+3)}$, simultaneously $|\freq_1(c') - \E[\freq_1(c')]| \le t$ for all $c' \in \cP_\ell$ and $|\upsilon_\ell - \E[\upsilon_\ell]| \le t$, where $\freq_1$ counts the proposal phase only.

Finally we consider the completion phase. The lists of uncolored vertices satisfy $|L_v| \ge p - (\deg_{H_\ell}(v) - \deg_{U_\ell}(v)) \ge \deg_{U_\ell}(v) + 1$, so the $(\mathrm{deg}+1)$-list instance on $H_\ell[U_\ell]$ is legal, with $\Delta(H_\ell[U_\ell]) \le (1+\epsilon_0)\ksplit < p$ on $\cE_1$, lists from the polynomial universe $[\chi]$, and polynomial identifiers; the deterministic completion of Fact~\ref{fact:completion}(b) therefore colors it fully and properly. The parts are vertex-disjoint and the completions run on disjoint induced subgraphs in parallel without interacting; determinism means no failure probabilities need to be union-bounded over the parts. The completion uses only colors of $\cP_\ell$ and adds between $0$ and $\upsilon_\ell$ vertices to each class. Hence $\freq_1(c') \le \freq(c') \le \freq_1(c') + \upsilon_\ell$, and on the event above,
\begin{align*}
\freq(c') &\ge \frac{n_\ell - \E[\upsilon_\ell]}{p} - t
  \ge \frac{n_\ell}{p}\Big(1-\frac{\eta}{64}\Big) - t,\\
\freq(c') &\le \frac{n_\ell}{p} + t + \big(\E[\upsilon_\ell] + t\big)
  \le \frac{n_\ell}{p}\Big(1+\frac{\eta}{64}\Big) + 2t . \qedhere
\end{align*}
\end{proof}

We are now ready to establish the main result of this subsection.

\begin{theorem}
\label{thm:split}
For every constant $\gamma \ge 1$ there is a constant $C = C(\gamma)$ such that the following holds for every $\eta \in (0,1]$ and every $n$-vertex graph $G$ with
\begin{align*}
\Delta ~\ge~ C\,\eta^{-3}\log n
\qquad and \qquad
\sigma=\frac{n}{\Delta+1}
  ~\ge~ 2^{\,C(\log\log n + \log(2/\eta))^2}.
\end{align*}
Algorithm~\ref{alg:splitsym} ($\SplitSym$) runs in $O\big((\log\log n + \log(2/\eta))^2\,\log n\big)$ rounds and, with probability at least $1 - n^{-\gamma}$, outputs a proper coloring of $G$ with palette size $\chi \le (1+\eta/4)(\Delta+1)$ with all frequencies in $\big[(1-\eta)\,n/\chi,\; (1+\eta)\,n/\chi\big]$. 
The round complexity is independent of $D$; for fixed $\eta$ it reads $O(\log n\,(\log\log n)^2)$, with no polynomial dependence on $1/\eta$.
\end{theorem}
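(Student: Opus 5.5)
The proof assembles Lemma~\ref{lemma:split-good} and Lemma~\ref{lemma:per-part}, followed by a global union bound and the parameter arithmetic behind the ledger~\eqref{eq:notation-ledger}.

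\emph{Hypotheses.} First I check that the hypotheses of Theorem~\ref{thm:split} imply those of the lemmas. Since $\ksplit=\Theta(\eta^{-2}\log n)$, the assumption $\Delta \ge C\eta^{-3}\log n$ with $C$ large gives $\Delta \ge 8\ksplit/\epsilon_0$. Hence Lemma~\ref{lemma:split-good} applies, and $\cE_1$ holds with probability at least $1-n^{-(\gamma+1)}$. It also gives the palette bound $\chi\le(1+\eta/4)(\Delta+1)$.

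\emph{Per-part bounds.} Conditioned on any labeling in $\cE_1$, I apply Lemma~\ref{lemma:per-part} to each of the $L\le n$ parts. I union-bound its failure probabilities $2(p+2)n^{-2(\gamma+3)}$; since $p\le n$, their total is far below $n^{-\gamma}/2$. On the resulting good event, each part is properly colored from its own block $\cP_\ell$. Properness across parts is automatic because the blocks are disjoint, so the whole coloring is proper.

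\emph{Target frequency.} Every color $c'\in\cP_\ell$ has frequency $\frac{n_\ell}{p}(1\pm\eta/64)$ up to an additive $2t$. By $\cE_1$(ii), $n_\ell\in(1\pm\epsilon_0)n/L$, and $n/(Lp)=n/\chi$ exactly. So $\frac{n_\ell}{p}(1\pm\eta/64)$ lies in $\frac{n}{\chi}(1\pm\epsilon_0)(1\pm\eta/64)\subseteq\frac{n}{\chi}(1\pm\eta/2)$, using $\epsilon_0=\eta/8$.

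\emph{Main obstacle: absorbing $2t$.} It remains to show $2t\le \frac{\eta}{2}\cdot\frac{n}{\chi}$, where $t=b\sqrt{(\gamma+3)n_\ell\ln n}$ and $n_\ell\le 2n/L$. This reduces to
\[
b\sqrt{(\gamma+3)\,(2n/L)\ln n}\;\le\;\tfrac{\eta}{4}\cdot\tfrac{n}{Lp}.
\]
Squaring and rearranging, it is equivalent to $n/L \ge 32(\gamma+3)\,b^2p^2\eta^{-2}\ln n$. Now $n/L\ge \ksplit\sigma/2$, which is at least of order $\sigma$. By the ledger, $\ln b=O((\log\log n+\log(2/\eta))^2)$, and the factors $p^2\eta^{-2}\ln n$ contribute only $2^{O(\log\log n+\log(2/\eta))}$. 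So the right-hand side is at most $2^{C'(\log\log n+\log(2/\eta))^2}$. Choosing $C(\gamma)\ge C'$ large makes the hypothesis $\sigma\ge 2^{C(\log\log n+\log(2/\eta))^2}$ suffice. The delicate part is tracking that the $\eta$-dependence enters only through $\log(2/\eta)$ inside the square. Here I would use $\tau_1=O(\log(p/\eta))$, so that $\ln b$ is a product of two logarithms rather than anything polynomial in $1/\eta$.

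\emph{Running time.} The split costs one round and the proposal phase $\tau_1=O(\log\log n+\log(2/\eta))$ rounds. The deterministic completion costs $O(\log^2p\cdot\log n)=O((\log\log n+\log(2/\eta))^2\log n)$ rounds, which dominates. None of these depends on $D$.
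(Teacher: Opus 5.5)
Your proposal is correct and follows essentially the same route as the paper. You take the event $\cE_1$ and the palette bound from Lemma~\ref{lemma:split-good}, apply Lemma~\ref{lemma:per-part} to each part with a union bound, reduce the frequency claim to $t \le (\eta/4)\,n/\chi$, and absorb this using the ledger bound $\ln b = O\big((\log\log n+\log(2/\eta))^2\big)$ against the hypothesis on $\sigma$. The only difference is cosmetic: you phrase the final inequality in squared form via $n/L \ge \ksplit\sigma/2$, whereas the paper takes logarithms and uses $n/\chi \ge \sigma/2$.
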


\begin{proof}
The palette bound and the event $\cE_1$ are given in Lemma~\ref{lemma:split-good} (the hypothesis $\Delta \ge 8\ksplit/\epsilon_0$ holds by $\Delta \ge C\eta^{-3}\log n$ for $C$ large). Properness follows since colors of adjacent vertices in different parts lie in disjoint blocks; within a part, Lemma~\ref{lemma:per-part} applies. Every vertex is colored (the completion is total). Fix a color $c' \in \cP_\ell$; only vertices of $V_\ell$ ever receive $c'$, so $\freq(c')$ is the part-$\ell$ class size. On $\cE_1$, $n_\ell/p \in (1\pm\epsilon_0)\,n/(Lp) = (1\pm\eta/8)\,n/\chi$, so on the events of Lemma~\ref{lemma:per-part} for all parts,
\begin{align*}
\freq(c')
  &\ge \Big(1-\frac{\eta}{8}\Big)
        \Big(1-\frac{\eta}{64}\Big)\frac n\chi - t,
\\
\freq(c')
  &\le \Big(1+\frac{\eta}{8}\Big)
        \Big(1+\frac{\eta}{64}\Big)\frac n\chi + 2t,
\end{align*}
and both bounds land in $(1\pm\eta)\,n/\chi$ provided $t \le (\eta/4)\,n/\chi$.

We next bound $t$. On $\cE_1$, $n_\ell \le 2n/L = 2p\,(n/\chi)$, so $t \le b\sqrt{2(\gamma+3)\,p\,(n/\chi)\ln n}$, and $t \le (\eta/4)\,n/\chi$ follows from
\begin{equation}
\label{eq:required-bound-ln-b}
\ln b \le \tfrac12\ln(n/\chi) - \tfrac12\ln\big(2(\gamma+3)\,p\ln n\big) - \ln(4/\eta).
\end{equation}
By the parameter ledger in Eq. \eqref{eq:notation-ledger}, $\ln b = O\big((\log\log n + \log(2/\eta))^2\big)$, while the two remaining terms on the left-hand side below are $O(\log\log n + \log(2/\eta))$ (as $p = \Theta(\eta^{-2}\log n)$); so there is $C' = C'(\gamma)$ with
\begin{equation}
\label{eq:known}
\ln b + \tfrac12\ln\big(2(\gamma+3)\,p\ln n\big) + \ln(4/\eta)
  ~\le~ C'\big(\log\log n + \log(2/\eta)\big)^2 .
\end{equation}
Since $n/\chi \ge n/\big((1+2\epsilon_0)(\Delta+1)\big) \ge \sigma/2$, the hypothesis $\sigma \ge 2^{C(\log\log n + \log(2/\eta))^2}$ with $C = 4C' + 4$ gives $\tfrac12\ln(n/\chi) \ge \tfrac12\ln\sigma - 1 \ge C'\big(\log\log n + \log(2/\eta)\big)^2$, which combined with Eq. \eqref{eq:known} 
yields the requirement of Eq. \eqref{eq:required-bound-ln-b}; hence $t \le (\eta/4)\,n/\chi$.

Finally we analyze the failure probability and time complexity. A union bound combines the bounds $n^{-(\gamma+1)}$ for $\cE_1$ and $L \cdot 2(p+2)n^{-2(\gamma+3)} \le n^{-(\gamma+2)}$ for the parts, yielding failure probability at most $n^{-\gamma}$ in total; the completion is deterministic and contributes no failure. The total number of rounds is $1 + \tau_1 + O(\log^2 p\cdot\log n)$, and since $p = O\big((\gamma+3)\eta^{-2}\log n\big)$ gives $\log p = O(\log\log n + \log(2/\eta))$, this is $O\big((\log\log n + \log(2/\eta))^2\log n\big)$, of which the completion term dominates $\tau_1 = O(\log\log n + \log(2/\eta))$. The theorem follows.
\end{proof}

\begin{proof}[Proof of Corollary~\ref{cor:split-all}]
If $\Delta \ge C\eta^{-3}\log n$, apply Theorem~\ref{thm:split}. Otherwise
\begin{align*}
\Delta + 1
  ~\le~ C\eta^{-3}\log n + 1
  ~\le~ 2^{\,3\sqrt{\log n}/64 + O(\log\log n)}
  ~\le~ 2^{\sqrt{\log n}/16}~,
\end{align*}
for large $n$, and $\eta \ge 2^{-\sqrt{\log n}/64} \ge 2^{-\sqrt{\log n}/16}$, so Theorem~\ref{thm:symcolor} applies and gives the stronger guarantee $\chi = \Delta+1$ with classes $(1\pm\eta)\sigma$.
\end{proof}

We close by comparing the resulting guarantee with the prior palette--frequency tradeoffs.
%
In its range, Corollary~\ref{cor:split-all} improves on the palette--frequency tradeoffs of~\cite{NP25} in $\LOCAL$: for every $k \le O(1/\eta)$, palette $(1+1/k)(\Delta+1)$ with frequencies $[\sigma/2, 2k\sigma]$ (computed there in $\CONGEST$ with $\Theta(D+\Delta)$-scale terms) is replaced by palette $(1+\eta)(\Delta+1)$ with the two-sided range $(1\pm\eta)n/\chi$, at every $\Delta \le n^{1-o(1)}$, with no diameter dependence. We made no attempt to optimize the rounds: the dominant term is the deterministic per-part completion of Fact~\ref{fact:completion}(b), $O(\log^2 p\cdot\log n) = O\big((\log\log n + \log(2/\eta))^2\log n\big)$ rounds, which any faster deterministic $(\mathrm{deg}{+}1)$-list coloring would improve as a black box; the proposal phase itself takes only $O(\log\log n + \log(1/\eta))$ rounds. We use a deterministic completion because the parts are small ($n_\ell \approx \sigma\,\mathrm{poly}\log n$), so per-part randomized guarantees of the form $1 - 1/\mathrm{poly}(n_\ell)$ would not union-bound over $L$ parts.

The exact-palette obstruction of this approach is summarized in the following remark.

\begin{remark}
\label{rem:split-barrier}
The $(1+\eta)$ palette relaxation in Theorem~\ref{thm:split} is not an accident of the analysis. Any scheme that partitions both the vertices and the palette into private sub-instances conserves the palette-to-degree ratio: the sub-palettes sum to $\chi$ while the expected induced degrees sum to $\Delta$ per vertex, so a part receives palette $\approx$ (its mean degree) $+ (\chi - \Delta)/L$. For $\chi = \Delta+1$ the global unit of slack splits into $1/L < 1$ per part, while the parts' degree fluctuations are $\pm\Theta(\sqrt{\ksplit\log n})$: the sub-instances are palette-\emph{deficient}, and with no global slack the deficiency cannot be absorbed. Extending the exact-palette guarantee of Theorem~\ref{thm:symcolor} beyond $\Delta = 2^{\Theta(\sqrt{\log n})}$ therefore appears to require controlling the influence of the \emph{unsplit} dynamics, which remains open (Section~\ref{sec:open}).
\end{remark}

\section{Limitations and open problems}
\label{sec:open}

\paragraph{Limitations.}
Four limitations should be borne in mind when interpreting the results. First, all lower bounds in this paper are deterministic; we prove no randomized lower bound, and on cycles randomization provably beats the deterministic tradeoff (Theorem~\ref{thm:intro-cycle-ub}(a) vs.\ Theorem~\ref{thm:intro-ring-lb}(b)). Second, the exact-palette guarantee on general graphs is restricted to $\Delta \le 2^{\sqrt{\log n}/C}$ (Theorem~\ref{thm:symcolor}), or $\Delta \le n^{1/4-o(1)}$ under mild ball growth (Section~\ref{sec:growth}); for larger degrees we require palette slack (Corollary~\ref{cor:split-all}). Third, the twisted fiber products realize every diameter \emph{scale}, but not every \emph{geometry}: their ring coordinate mixes slowly, so whether exact equity costs $\Omega(D)$ on genuinely expanding graphs is open (Remark~\ref{rem:open}). Finally, the companion manuscript~\cite{NP25} is unpublished; its role here is confined to the restated rows of Table~\ref{tab:compare}.

\dnsparagraph{Randomized optimality on cycles.}
We conjecture that Theorem~\ref{thm:intro-cycle-ub}(a) is optimal up to logarithmic factors: there is a constant $\const_0 > 0$ such that every $T$-round randomized $\LOCAL$ algorithm on $\cC_n$, $T \le n/\const_0$, that outputs with probability at least $2/3$ a free $3$-coloring with all classes of size $n/3 \pm \imb$ must have $\imb \ge \sqrt{n/T}/\const_0$. Together with Theorem~\ref{thm:balseg}, this would pin the randomized tradeoff at $\tilde\Theta(n/\imb^2)$ and extend the continuum of Corollary~\ref{cor:continuum} to randomized complexity. Theorem~\ref{thm:det-stab} provides the deterministic template, namely, that an approximately confined count forces a near-integer density; what is missing is its probabilistic analogue. 

\dnsparagraph{The $\log^* n$ gap on the deterministic tradeoff.}
The deterministic cycle tradeoff stands at $\Omega(n/\imb)$ versus $O((n/\imb)\log^* n)$, and Proposition~\ref{prop:ruling-lb} shows that faster anchors cannot close the gap (over polynomial identifier universes of size at least $2n$).
One may reduce the difficulty to solving a one-dimensional \emph{cut-set} problem.
Cut sets require $\Omega(w)$ rounds, but no super-$\Omega(w)$ lower bound is known. Hence the open question is whether a balanced-coloring algorithm can dispense with ruling-set anchors, or whether cut sets themselves incur an additional $\log^* n$ factor.

\dnsparagraph{Exact palette at larger degrees.}
One open question is whether the exact-palette guarantee of Theorem~\ref{thm:symcolor} extends beyond $\Delta = 2^{\Theta(\sqrt{\log n})}$ on general graphs. Theorem~\ref{thm:symcolor-growth} handles graphs of polynomially bounded ball growth up to $\Delta = n^{1/4-o(1)}$, while Remark~\ref{rem:split-barrier} shows why vertex-and-palette splitting cannot reach the exact palette. The remaining case is therefore expander-like ball growth. Relatedly, we conjecture that the optimal largeness threshold for the concentration regime is $\sigma = \Theta(\eta^{-2}\log n)$ (Remark~\ref{rem:conc-range}).

\subsubsection*{AI-use disclosure}
AI assistants 
were used for language editing, document organization and \LaTeX{} assistance throughout the paper, and as a sounding board in exploratory discussions of proof strategies. In addition, first drafts of three technical parts (the stability argument in Section~\ref{sec:det-stab}, the clique-cycle reduction in Section~\ref{sec:mcc}, and the palette-splitting construction in Section~\ref{sec:split}) were produced with AI assistance. Every definition, statement and proof in the paper was subsequently checked line by line, and where necessary corrected or reproved, by the authors, who take sole responsibility for all mathematical content, citations and originality claims.

\clearpage
\appendix
\centerline{\Large\bf Appendix}
\section{The universe threshold: proof of Theorem~\ref{thm:universe}}
\label{app:universe}

This appendix proves Theorem~\ref{thm:universe} in full. Part (a) is elementary and is proved first, together with a complete characterization of the $0$-round rules admitting a sure count (Propositions~\ref{prop:app-modthree} and~\ref{prop:app-zeroround}); the characterization shows, in particular, that one extra identifier already defeats every $0$-round rule. Part (b) is Theorem~\ref{thm:det-exact} re-proved with a tighter identifier accounting (Theorem~\ref{thm:app-sharp}): all loop lengths are kept below $\approx n/2$ (the identifier expenditure of Lemmas~\ref{lemma:boundary} and~\ref{lemma:loops} is proportional to the loop length, and this is the only place where the hypothesis $N \ge 2n+2$ was used), and the final splitting of the $n$-window loop of an actual assignment is into \emph{three} arcs of length $\approx n/3$ rather than two arcs of length $\approx n/2$, so that every loop that must be evaluated fits inside the restricted range; the exchange lemma itself needs only $N \ge n+1$, where the count is exactly tight. The price is a larger constant, $T \le n/50 - 1$ instead of $T \le n/40 - 1$ (neither constant was optimized).

We keep the notation of Section~\ref{sec:det-lb}: assignments $\ID$, windows $W(v_i,\ID,T)$, the count $f(\ID) = \sum_v F(W(v,\ID,T))$, sure counts, $k$-tuples, heads and tails, steps, chains, loops and their weights $\omega$, arrangements and their inner sums $P$, contexts $\cI$ and local window sums $\Psi(\cI; x)$, and the paths $\ShiftIn(\cT, \cT')$. Recall that every window of a step consists of distinct identifiers, while distinct steps of a chain may reuse identifiers (a loop necessarily does), and that for every $2T$-tuple $\cT$ and every $\mu \ge 2T+1$ with $\mu + 2T \le N$, a loop of length $\mu$ from $\cT$ exists: arrange the entries of $\cT$ followed by $\mu - 2T$ fresh identifiers in a cycle and read off its windows.

\subsection{The universe $N = n$, and $0$-round rules}
\label{app:universe-easy}

\begin{proposition}[Theorem~\ref{thm:universe}(a)]
\label{prop:app-modthree}
Let $3 \mid n$ and $N = n$. The $0$-round rule that colors each vertex $v$ with $\ID(v) \bmod 3 \in \{0,1,2\}$ outputs, on \emph{every} assignment, a free $3$-coloring of $\cC_n$ with all frequencies exactly $n/3$. Equivalently, for each $c \in \{0,1,2\}$ the $0$-round window rule $F_c(x) = \mathds{1}[x \equiv c \pmod 3]$ has sure count $n/3 \notin \{0, n\}$. More generally, at $N = n$, for every $A \subseteq [n]$ the rule $F = \mathds{1}_A$ has sure count $|A|$, so every $K \in \{0, \dots, n\}$ is a sure count of a $0$-round rule.
\end{proposition}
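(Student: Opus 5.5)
The whole argument rests on one counting fact. When $N = n$, an assignment of distinct identifiers from $[n]$ to the $n$ vertices of $\cC_n$ is a bijection $V_n \to [n]$. Consequently the multiset of identifiers on the cycle is always exactly $[n]$, whatever the assignment.

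I will prove the general claim first and read off the rest as special cases. Fix $A \subseteq [n]$ and take the $0$-round rule $F = \mathds{1}_A$, so $T = 0$ and the window at $v$ is just $\ID(v)$. Then
\[
f(\ID) \;=\; \sum_{v} \mathds{1}_A\big(\ID(v)\big) \;=\; \big|\{\,x \in [n] : x \in A\,\}\big| \;=\; |A|,
\]
where the middle equality is a reindexing of the sum through the bijection $\ID$. So $|A|$ is a sure count. Choosing $|A| = K$ for each $K \in \{0,\dots,n\}$ realizes every $K$ as the sure count of some $0$-round rule.

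For the $\bmod 3$ rule, take $A_c = \{x \in [n] : x \equiv c \pmod 3\}$ for $c \in \{0,1,2\}$. Since $3 \mid n$, the interval $[n] = \{1,\dots,n\}$ consists of $n/3$ complete residue blocks. Hence $|A_c| = n/3$ for each $c$, and by the general claim $F_c = \mathds{1}_{A_c}$ has sure count $n/3$. The three indicators partition the vertices according to the output color $\ID(v) \bmod 3$. It follows that the rule outputs, on every assignment, a free $3$-coloring with all three frequencies exactly $n/3$. Finally, $n/3 \notin \{0,n\}$ because $n \ge 3$.

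This is routine bookkeeping, and there is no real obstacle. The one point to state explicitly is that the rule is a $0$-round window rule in the sense of Theorem~\ref{thm:det-exact} with $T=0$. It falls outside that theorem only because the theorem requires $N \ge 2n+2$, and here $N = n$, which is what makes the proposition consistent with Theorem~\ref{thm:det-exact}.
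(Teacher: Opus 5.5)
Your proof is correct and is essentially the paper's own argument: since $N=n$ forces every assignment to be a bijection onto $[n]$, the count of $\mathds{1}_A$ equals $|A|$ on every assignment, and $3 \mid n$ makes each residue class $A_0,A_1,A_2$ have size $n/3$. Your closing aside is fine too, with one small caveat that does not affect the proof: Theorem~\ref{thm:det-exact} is stated for $T \ge 1$, so a $0$-round rule fits it only as a degenerate $1$-round rule (Remark~\ref{rem:det-robust}).
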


\begin{proof}
Since $N = n$, every assignment $\ID$ is a bijection from the vertex set onto $[n]$. Hence for any $A \subseteq [n]$, the count of the rule $\mathds{1}_A$ equals $|\ID(V) \cap A| = |[n] \cap A| = |A|$, independently of $\ID$. For $A_c = \{ x \in [n] : x \equiv c \pmod 3\}$ we have $|A_c| = n/3$ for each $c \in \{0,1,2\}$, because $3 \mid n$ makes the residues of $1, 2, \dots, n$ modulo $3$ exactly balanced. The three classes of the coloring $v \mapsto \ID(v) \bmod 3$ are the preimages of $A_0, A_1, A_2$, of sizes $n/3$ each, on every assignment.
\end{proof}

\begin{proposition}
\label{prop:app-zeroround}
Let $n \ge 1$ and $N \ge n$, let $F : [N] \to \{0,1\}$ be a $0$-round window rule, and put $A = F^{-1}(1)$. Then $F$ has a sure count if and only if one of the following holds:
(i) $N = n$; then the sure count is $K = |A|$, and every $K \in \{0, \dots, n\}$ arises this way;
(ii) $A = \emptyset$; then $K = 0$ (any $N$); or
(iii) $A = [N]$; then $K = n$ (any $N$).
Consequently, for every $N \ge n+1$ the only $0$-round sure counts are $0$ and $n$: already $N = n+1$ breaks the $\bmod\,3$ rule of Proposition~\ref{prop:app-modthree}, and no other $0$-round rule can replace it.
\end{proposition}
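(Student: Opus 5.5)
The plan is to prove the two directions separately. The count of a $0$-round rule on an assignment $\ID$ is simply $|\ID(V)\cap A|$, where $\ID(V)$ is an $n$-element subset of $[N]$. As $\ID$ ranges over all assignments, $\ID(V)$ ranges over all $n$-subsets of $[N]$. So $F$ has a sure count exactly when the function $S\mapsto |S\cap A|$ is constant on $\binom{[N]}{n}$. The whole proof reduces to deciding when this holds.

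Sufficiency comes first and is immediate. If $N=n$, the only $n$-subset is $[N]$ itself, so the count is $|A|$. Taking $A$ to be any subset of size $K$ realizes every $K\in\{0,\dots,n\}$, exactly as in Proposition~\ref{prop:app-modthree}. If $A=\emptyset$, the count is $0$. If $A=[N]$, the count is $|S|=n$.

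Necessity is the main step, and it is an exchange argument, the $0$-round shadow of Lemma~\ref{lemma:exchange}. Suppose $N\ge n+1$ and $\emptyset\ne A\ne[N]$. Pick $x\in A$ and $y\notin A$. Since $N-2\ge n-1$, we can choose a set $R\subseteq[N]\setminus\{x,y\}$ of size $n-1$. Compare the assignments with identifier sets $S=R\cup\{x\}$ and $S'=R\cup\{y\}$. They differ by one swap, so their counts differ by exactly $\mathds{1}_A(x)-\mathds{1}_A(y)=1$, and no sure count exists. This shows that for $N\ge n+1$ a sure count forces $A\in\{\emptyset,[N]\}$, which gives $K\in\{0,n\}$. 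Together with the $N=n$ case, this yields the trichotomy (i)--(iii).

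For the final sentence, at $N=n+1$ the $\bmod\,3$ sets $A_c$ are neither empty nor everything once $n\ge 2$. The exchange therefore produces two assignments with different class sizes, so that rule fails and, by the trichotomy, no other $0$-round rule can replace it. The only point needing care, and the nearest thing to an obstacle, is the counting check $N-2\ge n-1$, which is where the threshold $N\ge n+1$ is exactly tight. Degenerate small cases such as $n=1$ should be checked, but they fall under the same argument.
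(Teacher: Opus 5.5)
Your proposal is correct and follows essentially the same route as the paper: reduce the count to $|S\cap A|$ over $n$-subsets $S\subseteq[N]$, dispatch sufficiency directly, and for necessity with $N\ge n+1$ swap an element of $A$ for a non-element on top of a common $(n-1)$-set, which exists precisely because $N-2\ge n-1$. Your handling of the final sentence, that each residue class $A_c\subseteq[n+1]$ is nonempty and proper, also matches the paper; the paper additionally records the class sizes $n/3+1,\,n/3,\,n/3$ to exhibit concrete failing assignments.
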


\begin{proof}
For any assignment $\ID$ with image $S = \ID(V)$, an $n$-element subset of $[N]$, we have $f(\ID) = |S \cap A|$; conversely, every $n$-element subset $S \subseteq [N]$ is the image of some assignment (place its elements on the cycle in any order). Hence $F$ has sure count $K$ if and only if \emph{every} $n$-element subset of $[N]$ meets $A$ in exactly $K$ elements.

Sufficiency of (i)--(iii) is immediate: in case (i) the only $n$-subset of $[n]$ is $[n]$ itself, and $|[n] \cap A| = |A|$; in cases (ii) and (iii), $|S \cap A|$ is $0$ resp.\ $n$ for every $S$. For necessity, suppose $N \ge n+1$ and $\emptyset \ne A \ne [N]$; we show no sure count exists. Pick $a \in A$ and $b \in [N] \setminus A$. Since $|[N] \setminus \{a, b\}| = N - 2 \ge n - 1$, we may pick a set $C \subseteq [N] \setminus \{a,b\}$ with $|C| = n-1$. Then $S = C \cup \{a\}$ and $S' = C \cup \{b\}$ are both $n$-element subsets of $[N]$, and
\[ |S \cap A| \;=\; |C \cap A| + 1 \;>\; |C \cap A| \;=\; |S' \cap A|, \]
so the count is not constant. Hence a sure count with $N \ge n+1$ forces $A \in \{\emptyset, [N]\}$ and $K \in \{0, n\}$, and a sure count $K \notin \{0,n\}$ forces $N = n$.

For the last sentence: with $N = n + 1$ and $3 \mid n$, the residue classes of $[n+1]$ modulo $3$ have sizes $n/3 + 1$, $n/3$, $n/3$ (in some order), so each rule $F_c$ of Proposition~\ref{prop:app-modthree} has $\emptyset \ne A_c \ne [n+1]$ and therefore no sure count; concretely, an assignment omitting an identifier of the enlarged residue class produces class sizes $(n/3, n/3, n/3)$, while an assignment omitting an identifier of another residue class produces one class of size $n/3 + 1$ and one of size $n/3 - 1$. By the previous paragraph, no $\{0,1\}$-valued $0$-round rule --- hence, applied per color, no $0$-round free-coloring rule --- can be surely exact at $N = n+1$.
\end{proof}

\subsection{Rigidity at every $N \ge n+1$: the re-accounted machinery}
\label{app:universe-lemmas}

\begin{theorem}[Theorem~\ref{thm:universe}(b)]
\label{thm:app-sharp}
Let $T \ge 1$ and $n \ge 50(T+1)$ (equivalently, $1 \le T \le n/50 - 1$), and let $N \ge n+1$. Let $F$ be any function mapping $(2T+1)$-tuples of distinct identifiers from $[N]$ to $\{0,1\}$. If $f(\ID) = K$ for every assignment $\ID$, then $K \in \{0, n\}$.
\end{theorem}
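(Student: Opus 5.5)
I would re-run the proof of Theorem~\ref{thm:det-exact} verbatim in structure (exchange $\to$ boundary functions $\to$ additive loop weights $\to$ integer-slope linearity $\to$ splitting an actual assignment's loop), but redo every fresh-identifier count against the budget $N \ge n+1$. To keep the counts within budget, every loop and arrangement that is evaluated will have length at most about $n/2$.

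\textbf{Exchange lemma.} Lemma~\ref{lemma:exchange} needs an assignment conforming to $\cI[x]$ that avoids $y$. This requires only $n+1$ identifiers: $n$ for the cycle and one spare $y$. Hence it holds for $N \ge n+1$ exactly as before.

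\textbf{Boundary lemma.} For Lemma~\ref{lemma:boundary}, I restrict to arrangements of length $\ell \le \ell_{\max}$ with $\ell_{\max} \approx n/2$. The interior replacement step then needs a fresh $z \notin b \cup a'$ with $|b\cup a'| \le 2\ell - 4T \le n$, which $N\ge n+1$ supplies. I also need that each single interior exchange is certified by Lemma~\ref{lemma:exchange}. The context has $4T$ entries, which is fine.

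\textbf{Loop lemma.} In Lemma~\ref{lemma:loops}(a) the arrangement of length $\ell+1 \le \ell_{\max}$ is filled with fresh identifiers, which is affordable. In (b), the recursion along a loop of length $\mu$ runs up to level $4T+2+\mu$, so $\mu \le \ell_{\max} - 4T - 2$. In (c), two loops of length $\mu$ plus a disjoint $2T$-tuple $A$ need $2(\mu+2T)+2T \le N$, i.e.\ $\mu \lesssim n/2 - 3T$. This is the binding constraint: loop weights $\Lambda_\cT(\mu)$ are well defined and additive only for $\mu \le \mu_{\max} := \lfloor (n+1-6T)/2\rfloor$, or so.

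\textbf{Linearity.} Lemma~\ref{lemma:linear} then goes through on $[2T+1,\mu_{\max} - c T]$ for a small constant $c$. It uses lengths $M$, $M+1$, $3M$, and $4T+\mu$, and the linking of basepoints needs only $6T$ identifiers plus a loop from $\cT'$ avoiding $\cT$, which fits.

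\textbf{Final split.} In the last step, splitting the length-$n$ loop in halves would require evaluating loops of length $n/2+2T > \mu_{\max}$. So I split the cyclic assignment at three boundary tuples $\cT^*_0,\cT^*_{j_1},\cT^*_{j_2}$ with $j_1\approx n/3$ and $j_2 \approx 2n/3$, which are pairwise disjoint as identifier sets. Close each arc $p_k$ with a shift-in path back to its start, obtaining three loops of length $\approx n/3+2T$. The three shift-in paths $\ShiftIn(\cT^*_{j_1},\cT^*_0)$, $\ShiftIn(\cT^*_{j_2},\cT^*_{j_1})$ and $\ShiftIn(\cT^*_0,\cT^*_{j_2})$ taken in reverse orientation must also be accounted for. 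I would arrange the closing paths so that their total contribution is the weight of a single loop through the three tuples of length $6T$, composed of shift-ins $\cT^*_0\to\cT^*_{j_2}\to\cT^*_{j_1}\to\cT^*_0$, exactly paralleling the $qr$ loop. Then
\[K+\Lambda(6T)=\sum_k\Lambda(|p_k|+2T),\]
and integer-slope linearity with a common $\lambda$ gives $K=\lambda n$.

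\textbf{Main obstacle.} The hard part is the bookkeeping verifying that every used loop length, namely $n/3+2T+O(1)$ and $6T$, lies in $[2T+1,\mu_{\max}-cT]$. This needs roughly $n/3+2T \le n/2 - O(T)$, which yields the hypothesis $n \ge 50(T+1)$. I would also check that disjointness of the three boundary tuples follows from $j_1, j_2-j_1, n-j_2 \ge 2T$.
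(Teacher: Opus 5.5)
Your proposal follows the paper's proof essentially step for step. It has the exchange lemma at the tight count $N\ge n+1$, boundary functions and loop weights restricted to lengths of about $n/2$, and integer-slope linearity. It also has the three-way split at $j_1=\lfloor n/3\rfloor$, $j_2=\lfloor 2n/3\rfloor$, in which the closing shift-ins assemble into the single $6T$-loop $\cT^*_0\to\cT^*_{j_2}\to\cT^*_{j_1}\to\cT^*_0$, giving $\lambda(n+6T)=K+6T\lambda$.

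The one thing that does not check out as written is the constant, and that constant is exactly what this re-proof is about.

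\begin{itemize}
\item Your well-definedness count is $2(\mu+2T)+2T\le N$, so $\mu_{\max}\approx n/2-3T$.
\item Your linearity step runs additivity through $\Lambda(\mu+1+M)$ with $M=4T+2$. It therefore certifies the slope only up to $\mu_{\max}-M\approx n/2-7T-2$.
\item The arc loops have length up to $\lceil n/3\rceil+1+2T$, so you would need roughly $n\ge 54T$.
\end{itemize}

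The hypothesis $n\ge 50(T+1)$ does not give this for large $T$. For example, at $T=20$, $n=1050$, the arc loops have length $390$ while your certified range ends at $383$.

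The paper avoids this by counting the shared basepoint once. Two loops of length $\mu$ from $\cT_0$ together use at most $2T+2\mu$ identifiers, so a disjoint $2T$-tuple $A$ exists as soon as $2\mu+4T\le N$. Taking $\mu_{\max}=\lfloor (n-4T)/2\rfloor-3$ then gives $\mu_{\max}-M\ge n/2-6T-11/2$, and the arcs fit once $n\ge 48T+45$. Keeping your count but shrinking the base length $M$ to $2T+1$ would also work.

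Also fix the level bound at exactly $\lfloor (n+4T)/2\rfloor$, the largest value your boundary-lemma count allows, rather than ``$\approx n/2$''. Otherwise the telescoping constraint $\mu\le \ell_{\max}-4T-2$ becomes the binding one and costs you further in the constant.
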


Together with Proposition~\ref{prop:app-zeroround} (the case $T = 0$) and Proposition~\ref{prop:app-modthree} (the case $N = n$), this determines the threshold completely in the local regime $0 \le T \le n/50 - 1$: a sure count $K \notin \{0,n\}$ is achievable if and only if $N = n$.

Fix $K \notin \{0, n\}$ and assume, towards a contradiction, that $f(\ID) = K$ for every assignment $\ID$. Throughout the appendix, set
\[ M = 4T + 2, \qquad
   \lmax = \Big\lfloor \frac{n + 4T}{2} \Big\rfloor, \qquad
   \mumax = \Big\lfloor \frac{n - 4T}{2} \Big\rfloor - 3 ; \]
$\lmax$ bounds the arrangement lengths (levels) and $\mumax$ the loop lengths evaluated below. We record the numerical facts used; all follow from $T \ge 1$ and $n \ge 50(T+1)$, i.e.\ $n \ge 50T + 50$.
\begin{itemize}
\item[(N1)] $M + \mumax \le \lmax$: indeed, $M + \mumax = \lfloor (n-4T)/2 \rfloor + 4T - 1 = \lfloor (n+4T)/2 \rfloor - 1 < \lmax$.
\item[(N2)] $2\mumax + 4T \le n - 6 \le N - 7$; also $\lmax \le n$ and $2 \lmax - 4T \le n \le N - 1$.
\item[(N3)] $\mumax \ge 4M$: since $\mumax \ge \frac{n - 4T - 1}{2} - 3 \ge \frac{46T + 49}{2} - 3 \ge 16T + 8 = 4M$.
\item[(N4)] $\big\lceil n/3 \big\rceil + 1 + 2T \le \mumax - M$: indeed, $\mumax - M \ge \frac{n-4T-1}{2} - 3 - (4T+2) = \frac{n}{2} - 6T - \frac{11}{2}$ and $\lceil n/3 \rceil + 1 + 2T \le \frac{n}{3} + 2T + 2$, so it suffices that $\frac{n}{6} \ge 8T + \frac{15}{2}$, i.e.\ $n \ge 48T + 45$, which holds as $n \ge 50T + 50$.
\item[(N5)] $\lfloor n/3 \rfloor - 1 \ge 2T$; $6T \ge 2T + 1$; and $M + 4T \le \mumax - M$, because $M + 4T = 8T + 2 \le 2M$ and $2M \le \mumax - M$ by (N3).
\end{itemize}

The four lemmas of Section~\ref{sec:det-exact-LB} are now re-proved with these restricted parameters. Only the identifier accounting changes, and we spell it out in full.

\begin{lemma}
\label{lem:app-exchange}
$\Psi(\cI; x) = \Psi(\cI; y)$ for every context $\cI$ and all identifiers $x, y$ avoiding $\cI$.
\end{lemma}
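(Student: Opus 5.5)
The plan is to re-run the proof of Lemma~\ref{lemma:exchange} verbatim, changing only the identifier accounting, and to check that the universe $N \ge n+1$ is large enough. Given a context $\cI$ (which has $4T$ entries) and centers $x,y$ avoiding it, I will build an assignment $\ID$ of $\cC_n$ as follows:
\begin{itemize}
\item it places $\cI[x]$ on the block $v_{-2T},\dots,v_{2T}$;
\item it fills the remaining $n-4T-1$ positions with distinct identifiers from $[N]\setminus(\cI\cup\{x,y\})$.
\end{itemize}
The available pool has size at least $N-4T-2 \ge n-4T-1$, exactly as required. This is where $N\ge n+1$ is tight: the one extra identifier beyond $n$ is what allows $y$ to be held in reserve outside $\ID$.

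Then I set $\ID'$ to be $\ID$ with $x$ replaced by $y$. This is still an assignment of distinct identifiers, since $y$ was never used, and it conforms to $\cI[y]$.

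Next I split $f$ into $f^+(\cdot,v_0)$ and $f^-(\cdot,v_0)$ exactly as in Eq.~\eqref{eq: f+ f- equalities}.
\begin{itemize}
\item The windows not containing $v_0$ are identical under $\ID$ and $\ID'$, so the two $f^-$ values coincide.
\item The $2T+1$ windows containing $v_0$ lie inside the conforming block, so $f^+(\ID,v_0)=\Psi(\cI;x)$ and $f^+(\ID',v_0)=\Psi(\cI;y)$.
\end{itemize}
The sure-count hypothesis gives $f(\ID)=f(\ID')=K$. Subtracting the equal $f^-$ terms then yields $\Psi(\cI;x)=\Psi(\cI;y)$.

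The only point needing care is the well-definedness of $\Psi$, i.e.\ that the windows centered at $v_{-T},\dots,v_T$ stay inside the block and do not wrap around. This holds because $4T+1<n$, which follows from $n\ge 50(T+1)$. No other step uses the size of the universe, so I expect no real obstacle beyond this counting check.
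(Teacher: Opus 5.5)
Your proposal is correct and is essentially the paper's proof: the same conforming assignment padded from $[N]\setminus(\cI\cup\{x,y\})$, the same tight count $N-(4T+2)\ge n-(4T+1)$ at $N=n+1$, and the same cancellation of the windows not containing the center. The paper also needs $4T+1\le n$ so that the conforming block occupies distinct positions of the cycle, and your non-wraparound check covers this.
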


\begin{proof}
Build an assignment $\ID$ of $\cC_n$ conforming to $\cI[x]$ by filling the remaining $n - (4T+1)$ positions with distinct identifiers avoiding the $4T + 2$ identifiers of $\cI \cup \{x, y\}$; this is possible because $N - (4T+2) \ge (n+1) - (4T+2) = n - (4T+1)$, with equality permitted (the count is exactly tight at $N = n+1$). Let $\ID'$ be $\ID$ with $x$ replaced by $y$; then $\ID'$ is an assignment ($y$ appears nowhere in $\ID$) conforming to $\cI[y]$. By the standing assumption, $f(\ID) = f(\ID') = K$. Every window not containing the center position $v_0$ reads the same ordered tuple in $\ID$ and $\ID'$ and contributes equally to both counts; the remaining windows are those centered at $v_i$, $-T \le i \le T$, summing to $\Psi(\cI; x)$ in $f(\ID)$ and to $\Psi(\cI; y)$ in $f(\ID')$. Subtracting, $\Psi(\cI;x) = \Psi(\cI;y)$.
\end{proof}

\begin{lemma}
\label{lem:app-boundary}
For every $\ell$ with $4T + 2 \le \ell \le \lmax$ there is a function $H_\ell$ such that $P(a) = H_\ell(\mathrm{head}(a), \mathrm{tail}(a))$ for every arrangement $a$ of length $\ell$.
\end{lemma}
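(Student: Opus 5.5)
We will follow the proof of Lemma~\ref{lemma:boundary} step by step; only the count of fresh identifiers changes, and we redo it for the restricted range $\ell \le \lmax$ under $N \ge n+1$.

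\emph{Interior exchanges.} Let $a$ be an arrangement of length $\ell$ and $i$ an interior position, $2T+1 \le i \le \ell-2T$. Replacing $a_i$ by an identifier $z \notin a$ changes only the windows centered at positions $i-T,\dots,i+T$. All of these lie in $[T+1,\ell-T]$ and so are terms of $P(a)$. Their total is $\Psi(\cI;a_i)$ before the change and $\Psi(\cI;z)$ after, where $\cI$ is the context of the $2T$ entries on each side of position $i$. These are equal by Lemma~\ref{lem:app-exchange}, which holds for every $N \ge n+1$. So every interior replacement by an identifier absent from the current arrangement preserves $P$.

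\emph{Transformation.} Let $a, a'$ have equal heads and tails. As before, we maintain a current arrangement $b$ that agrees with $a'$ on head and tail. While $b \ne a'$, we pick an interior position $i$ with $b_i \ne a'_i$.
\begin{itemize}
\item If $a'_i \notin b$, we replace $b_i \leftarrow a'_i$.
\item Otherwise $a'_i = b_j$ for some interior $j \ne i$. We first replace $b_j \leftarrow z$ with $z \notin b \cup a'$, and then set $b_i \leftarrow a'_i$.
\end{itemize}
Each round increases the number of agreeing positions, so the process terminates, and $P(a)=P(a')$.

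\emph{Counting identifiers.} This is the one place where the argument differs. The set $b \cup a'$ consists of the shared head and tail ($4T$ identifiers), the interior of $b$ (at most $\ell-4T$), and the interior of $a'$ (at most $\ell-4T$). Hence $|b\cup a'| \le 2\ell - 4T \le 2\lmax - 4T \le n \le N-1$ by (N2), so a fresh $z$ always exists.

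The main obstacle is this count. The original proof had slack from $N \ge 2n+2$; here the restriction $\ell \le \lmax = \lfloor (n+4T)/2\rfloor$ is chosen exactly so that $2\ell - 4T \le n < N$. We must also check that intermediate arrangements stay of length $\ell \le \lmax \le n$. This holds because the exchange lemma places each context-with-center inside a full $n$-cycle, and the context of any interior position of $b$ uses only $4T+1$ entries of $b$, which are distinct.
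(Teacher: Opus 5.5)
Your proposal is correct and follows the paper's proof essentially verbatim. Interior exchanges preserve $P$ by Lemma~\ref{lem:app-exchange}, the same two-case transformation drives $b$ to $a'$, and the fresh $z$ exists because $|b \cup a'| \le 2\ell - 4T \le 2\lmax - 4T \le n \le N-1$ by (N2). Your closing remark about intermediate lengths is unnecessary, since replacements never change the length and the exchange lemma applies to every context. The two details worth stating explicitly are that $j$ is interior because the head and tail of $b$ equal those of $a'$, which avoid $a'_i$. Also, $j$ disagreed before being overwritten, since $b_j = a'_i \ne a'_j$, which is why the number of agreeing positions strictly increases.
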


\begin{proof}
First, changing a single \emph{interior} entry $a_i$ ($2T + 1 \le i \le \ell - 2T$) to any fresh value $z \notin a$ leaves $P(a)$ unchanged: the affected terms of $P$ are exactly the windows centered at positions in $[i - T, i + T] \subseteq [T+1, \ell - T]$, and their sum is $\Psi(\cI; a_i) = \Psi(\cI; z)$ by Lemma~\ref{lem:app-exchange}, where $\cI$ is the context formed by the $2T$ entries of $a$ on each side of position $i$.

Now let $a, a'$ be arrangements of length $\ell$ with equal heads and tails; we transform $a$ into $a'$ by single interior replacements. Maintain a current arrangement $b$, initially $a$, always agreeing with $a'$ on head and tail. While $b \ne a'$, pick an interior position $i$ with $b_i \ne a'_i$. If $a'_i$ does not occur in $b$, replace $b_i \leftarrow a'_i$. Otherwise $a'_i$ occurs in $b$ at some position $j \ne i$, necessarily interior (the head and tail of $b$ equal those of $a'$, which avoid the interior value $a'_i$ of $a'$); first replace $b_j \leftarrow z$ for some $z \notin b \cup a'$, and then $b_i \leftarrow a'_i$. Such $z$ exists because $b$ and $a'$ share their $4T$ head and tail entries, so $|b \cup a'| \le 2\ell - 4T \le 2 \lmax - 4T \le N - 1$ by (N2). Replacements keep all entries distinct and preserve $P$ by the previous paragraph, and each iteration strictly increases the number of agreeing positions (position $i$ becomes correct; position $j$ disagreed before and still disagrees). The process terminates with $b = a'$, whence $P(a) = P(a')$: the inner sum depends only on the head and tail.
\end{proof}

\begin{lemma}
\label{lem:app-loops}
Fix a $2T$-tuple $A$.
\\ (a) For every step $\cT \to \cT'$ via window $\hat{\cT}$ with entries disjoint from $A$, and every $\ell$ with $4T + 2 \le \ell \le \lmax - 1$:
$H_{\ell+1}(A, \cT') = H_\ell(A, \cT) + F(\hat{\cT})$.
\\ (b) For every loop $\cL$ of length $\mu \le \mumax$ from basepoint $\cT_0$ with all window entries disjoint from $A$:
$\omega(\cL) = H_{4T+2+\mu}(A, \cT_0) - H_{4T+2}(A, \cT_0)$.
\\ (c) Any two loops of the same length $\mu \in [2T+1, \mumax]$ from the same basepoint $\cT_0$ have the same weight, denoted $\Lambda_{\cT_0}(\mu)$; it is an integer in $[0, \mu]$; and $\Lambda_{\cT_0}(\mu_1 + \mu_2) = \Lambda_{\cT_0}(\mu_1) + \Lambda_{\cT_0}(\mu_2)$ whenever $\mu_1, \mu_2 \ge 2T+1$ and $\mu_1 + \mu_2 \le \mumax$.
\end{lemma}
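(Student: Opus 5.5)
The plan is to follow the proof of Lemma~\ref{lemma:loops} verbatim and to re-account only the identifier usage against the restricted parameters $\lmax$ and $\mumax$, using Lemma~\ref{lem:app-boundary} in place of Lemma~\ref{lemma:boundary}.

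For (a), take an arrangement $a'$ of length $\ell+1$ whose head is $A$ and whose last $2T+1$ entries are $\hat{\cT}$. Fill its $\ell+1-(6T+1)$ interior positions with fresh identifiers avoiding $A\cup\hat{\cT}$. Since $\ell+1\le\lmax\le n<N$, enough identifiers are available: the arrangement uses only $\ell+1$ of them. Its length-$\ell$ prefix has head $A$ and tail $\cT$. Head and tail of both arrangements are disjoint because $\ell\ge 4T+2$, and $A$ is disjoint from $\hat{\cT}$ by assumption. Then $P(a')=P(\mathrm{prefix})+F(\hat{\cT})$, and Lemma~\ref{lem:app-boundary} applies at levels $\ell$ and $\ell+1$, both lying in $[4T+2,\lmax]$.

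For (b), telescope (a) along the loop from level $4T+2$ up to level $4T+2+\mu$. By (N1), $M+\mu\le M+\mumax\le\lmax$, so every intermediate level satisfies $\ell\le\lmax-1$ and (a) is legal at each step. Summing the $\mu$ recursions yields the formula.

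The main obstacle is (c): we must find a common $2T$-tuple $A$ disjoint from two loops $\cL,\cL'$ of length $\mu$ from $\cT_0$, under only $N\ge n+1$.
\begin{itemize}
\item Each loop uses at most $\mu+2T$ distinct identifiers: its window entries are $\cT_0$ together with at most $\mu$ appended values, and in fact at most $\mu$ distinct values overall suffice, counting the basepoint.
\item The two loops share $\cT_0$, so together they use at most $2\mu+2T$ identifiers.
\item We need $2\mu+2T+2T\le N$. By (N2), $2\mumax+4T\le N-7$, so this holds with room to spare. This is exactly why $\mumax$ was chosen as roughly $(n-4T)/2$.
\end{itemize}
Applying (b) with this common $A$ gives $\omega(\cL)=\omega(\cL')$. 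Integrality and the range $[0,\mu]$ follow from $F\in\{0,1\}$. Such loops exist because $\mu+2T\le N$.

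Additivity is concatenation, exactly as in Lemma~\ref{lemma:loops}(d). Given loops of lengths $\mu_1,\mu_2\ge 2T+1$ from $\cT_0$, their concatenation is a loop of length $\mu_1+\mu_2\le\mumax$ whose weight is the sum of the two weights. Its value is determined by the first part of (c), applied at length $\mu_1+\mu_2$, which lies within the admissible range.
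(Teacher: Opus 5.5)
Your proposal is correct and follows the paper's proof essentially step for step: (a) uses the same fresh arrangement, justified by $\ell+1\le\lmax\le n\le N-1$; (b) telescopes under (N1) in the same way; and (c) finds a common disjoint tuple $A$ via $(2T+2\mu)+2T\le 2\mumax+4T\le N-7$ from (N2), with additivity by concatenation. One harmless slip: the number of interior positions of $a'$ is $(\ell+1)-2T-(2T+1)=\ell-4T$, not $\ell+1-(6T+1)$, but only the total count $\ell+1$ matters for the existence of $a'$.
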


\begin{proof}
(a) Take any arrangement $a'$ of length $\ell + 1$ whose first $2T$ entries are $A$ and whose last $2T + 1$ entries are $\hat{\cT}$, with fresh distinct interior entries avoiding $A \cup \hat{\cT}$. Such an $a'$ exists: it uses $\ell + 1$ distinct identifiers in total, and $\ell + 1 \le \lmax \le n \le N - 1$ by (N2). Its head and tail are disjoint as sets ($A$ is disjoint from $\hat{\cT} \supseteq \cT \cup \cT'$ by hypothesis, and $\ell + 1 \ge 4T + 2$ keeps head and tail non-overlapping as positions). By definition of the inner sum, $P(a')$ equals the inner sum of the length-$\ell$ prefix of $a'$ plus the single window value $F(\hat{\cT})$, of the window centered at position $\ell + 1 - T$. Applying Lemma~\ref{lem:app-boundary} to $a'$ (level $\ell+1 \le \lmax$) and to its prefix (level $\ell$, head $A$, tail $\cT$) turns this identity into the displayed one.

(b) Apply (a) along the loop starting from level $\ell^* = 4T + 2$: $H_{\ell^* + k + 1}(A, \cT_{k+1}) = H_{\ell^* + k}(A, \cT_k) + F(\hat{\cT}_k)$ for $k = 0, \dots, \mu - 1$; all levels stay at most $\ell^* + \mu \le M + \mumax \le \lmax$ by (N1), so each application is legal. Telescoping gives the formula.

(c) Let $\cL, \cL'$ be loops of length $\mu \le \mumax$ from $\cT_0$. The identifiers appearing in $\cL$ are those of $\cT_0$ plus at most one new identifier per step, hence at most $2T + \mu$ in total; likewise for $\cL'$, and the two share $\cT_0$, so together they use at most $2T + 2\mu$ distinct identifiers. Since $(2T + 2\mu) + 2T \le 2\mumax + 4T \le N - 7$ by (N2), there is a $2T$-tuple $A$ disjoint from the identifiers of \emph{both} loops. Part (b) with this common $A$ gives $\omega(\cL) = H_{4T+2+\mu}(A, \cT_0) - H_{4T+2}(A, \cT_0) = \omega(\cL')$. The weight is a sum of $\mu$ values of $F \in \{0,1\}$, hence an integer in $[0, \mu]$. For additivity, take loops $\cL_1, \cL_2$ from $\cT_0$ of lengths $\mu_1, \mu_2$ (they exist: $\mu_i \ge 2T+1$ and $\mu_i + 2T \le \mumax + 2T \le N$); their concatenation $\cL_1 \cL_2$ is a loop from $\cT_0$ of length $\mu_1 + \mu_2 \le \mumax$ and weight $\omega(\cL_1) + \omega(\cL_2)$, so well-definedness at length $\mu_1 + \mu_2$ gives the additivity claim.
\end{proof}

\begin{lemma}
\label{lem:app-linear}
There is an integer $\lambda$ such that $\Lambda_{\cT}(\mu) = \lambda\, \mu$ for every $2T$-tuple $\cT$ and every $\mu \in [2T+1,\, \mumax - M]$.
\end{lemma}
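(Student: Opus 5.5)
The plan mirrors the proof of Lemma~\ref{lemma:linear}, with every length kept inside the range where Lemma~\ref{lem:app-loops}(c) applies, namely $[2T+1,\mumax]$ for well-definedness and sums at most $\mumax$ for additivity.

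\emph{Slope at a fixed basepoint.} Fix a basepoint $\cT$ and set $\lambda(\cT)=\Lambda_{\cT}(M+1)-\Lambda_{\cT}(M)\in\mathbb{Z}$. For $\mu\ge 2T+1$ with $\mu+1+M\le\mumax$, additivity gives
\[
\Lambda_{\cT}(\mu+1)+\Lambda_{\cT}(M)=\Lambda_{\cT}(\mu+1+M)=\Lambda_{\cT}(\mu)+\Lambda_{\cT}(M+1).
\]
Hence the increments $\Lambda_{\cT}(\mu+1)-\Lambda_{\cT}(\mu)$ equal $\lambda(\cT)$ throughout $[2T+1,\mumax-M-1]$. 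It follows that $\Lambda_{\cT}(\mu)=\Lambda_{\cT}(M)+(\mu-M)\lambda(\cT)$ on $[2T+1,\mumax-M]$.

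\emph{Zero intercept.} By (N3), $3M\le\mumax-M$, so additivity gives $\Lambda_{\cT}(3M)=3\Lambda_{\cT}(M)$. The affine formula gives $\Lambda_{\cT}(3M)=\Lambda_{\cT}(M)+2M\lambda(\cT)$. Comparing the two yields $\Lambda_{\cT}(M)=M\lambda(\cT)$, hence $\Lambda_{\cT}(\mu)=\lambda(\cT)\mu$ on $[2T+1,\mumax-M]$.

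\emph{Basepoint independence.} Take disjoint $2T$-tuples $\cT,\cT'$ and set $\alpha=\ShiftIn(\cT,\cT')$, $\beta=\ShiftIn(\cT',\cT)$. Let $L$ be a loop of length $\mu$ from $\cT'$ whose fresh identifiers avoid $\cT$. Then $\alpha L\beta$ is a loop of length $4T+\mu$ from $\cT$ and $\alpha\beta$ is a loop of length $4T$ from $\cT$, so
\[
\Lambda_{\cT'}(\mu)=\Lambda_{\cT}(4T+\mu)-\Lambda_{\cT}(4T)=\lambda(\cT)\mu.
\]
This requires $4T+\mu\le\mumax-M$. I will verify it for a single convenient length rather than the whole range: $\mu=M\ge 2T+1$, using (N5), which gives $M+4T\le\mumax-M$, and $4T\ge 2T+1$. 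Applying the slope result to $\cT'$ at $\mu=M$ then gives $\lambda(\cT')M=\lambda(\cT)M$, so $\lambda(\cT')=\lambda(\cT)$.

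\emph{Linking and identifier budget.} Any two basepoints are linked through a third tuple disjoint from both, which needs $6T\le N$; this holds since $N\ge n+1$. Constructing $L$ needs $\mu-2T$ fresh identifiers outside $\cT\cup\cT'$, which is available because $\mu+2T\le N$.

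\emph{Main obstacle.} Since $N$ may be as small as $n+1$, the step I expect to be most delicate is checking that every loop invoked stays within the budget of Lemma~\ref{lem:app-loops}. This includes the concatenated loops $\alpha L\beta$, whose length must not exceed $\mumax$. Choosing the fixed test length $\mu=M$ in the basepoint step is what keeps this within (N3) and (N5).
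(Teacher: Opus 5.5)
Your proposal is correct and follows the paper's proof essentially step for step. It uses the same slope $\lambda(\cT)=\Lambda_{\cT}(M+1)-\Lambda_{\cT}(M)$ obtained from the additivity identity, the same zero-intercept comparison at $3M$ via (N3), the same basepoint-independence argument with the $\ShiftIn$ paths evaluated at the single length $\mu=M$ via (N5) (with the budget $\mu+2T=6T+2\le N$ for building $L$), and the same linking through a third disjoint tuple.
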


\begin{proof}
Fix a basepoint $\cT$. For every $\mu \in [2T+1, \mumax - M - 1]$, additivity (Lemma~\ref{lem:app-loops}(c); all lengths involved are $\ge 2T+1$ and all sums are $\le \mumax$) gives
\[ \Lambda_{\cT}(\mu + 1) + \Lambda_{\cT}(M) \;=\; \Lambda_{\cT}(\mu + 1 + M) \;=\; \Lambda_{\cT}(\mu) + \Lambda_{\cT}(M + 1), \]
so the difference $\Lambda_{\cT}(\mu+1) - \Lambda_{\cT}(\mu) = \Lambda_{\cT}(M+1) - \Lambda_{\cT}(M) =: \lambda(\cT)$ is constant for $\mu \in [2T+1, \mumax - M - 1]$, and it is an integer, being a difference of integers. Hence $\Lambda_{\cT}$ is affine on $[2T+1, \mumax - M]$: $\Lambda_{\cT}(\mu) = \Lambda_{\cT}(M) + (\mu - M)\lambda(\cT)$ (note that $M = 4T + 2$ lies in this range by (N3)).

The intercept vanishes: additivity gives $\Lambda_{\cT}(3M) = 3 \Lambda_{\cT}(M)$ (legal, as $2M \le \mumax$ and $3M \le \mumax$ by (N3)), while affinity at $3M$ (legal, as $3M \le \mumax - M$ by (N3)) gives $\Lambda_{\cT}(3M) = \Lambda_{\cT}(M) + 2M \lambda(\cT)$. Comparing, $\Lambda_{\cT}(M) = M \lambda(\cT)$, and therefore $\Lambda_{\cT}(\mu) = \lambda(\cT)\,\mu$ on $[2T+1, \mumax - M]$.

The slope is basepoint-free. First let $\cT, \cT'$ be $2T$-tuples with disjoint entries, let $\alpha = \ShiftIn(\cT, \cT')$ and $\beta = \ShiftIn(\cT', \cT)$, and let $L$ be a loop of length $M$ from $\cT'$ whose identifiers avoid $\cT$: arrange $\cT'$ followed by $M - 2T$ fresh identifiers avoiding $\cT \cup \cT'$ in a cycle (this needs $N \ge 4T + (M - 2T) = 6T + 2$, which is clear). Then $\alpha L \beta$ and $\alpha \beta$ are loops from $\cT$ of lengths $M + 4T$ and $4T$, and
\[ \Lambda_{\cT'}(M) \;=\; \omega(L) \;=\; \omega(\alpha L \beta) - \omega(\alpha\beta) \;=\; \Lambda_{\cT}(M + 4T) - \Lambda_{\cT}(4T) \;=\; \lambda(\cT)\, M , \]
using $2T + 1 \le 4T$ and $M + 4T \le \mumax - M$ (by (N5)), so that both evaluations lie in the affine range of $\cT$. Since also $\Lambda_{\cT'}(M) = \lambda(\cT') M$ and $M \ge 1$, we get $\lambda(\cT') = \lambda(\cT)$. Finally, arbitrary $2T$-tuples $\cT, \cT''$ are linked through a third tuple disjoint from both, which exists since $|\cT \cup \cT''| + 2T \le 6T \le N$. Hence $\lambda(\cT) =: \lambda$ is one integer for all basepoints.
\end{proof}

\subsection{Proof of Theorem~\ref{thm:app-sharp}: the three-way split}
\label{app:universe-proof}

\begin{proof}[Proof of Theorem~\ref{thm:app-sharp}]
Assume the standing hypothesis, towards a contradiction. Take any assignment $\ID$ of $\cC_n$ (one exists as $N \ge n$) and let $\cT^*_0, \cT^*_1, \dots, \cT^*_{n-1}$ be the cyclic sequence of its boundary $2T$-tuples, $\cT^*_k = (\ID(k-T), \dots, \ID(k+T-1))$, so that the $n$ windows of $\ID$ form a loop of length $n$ from $\cT^*_0$ of weight $f(\ID) = K$. Set $j_1 = \lfloor n/3 \rfloor$ and $j_2 = \lfloor 2n/3 \rfloor$, and split the window loop at $0, j_1, j_2$ into the three paths
\[ p_1 : \cT^*_0 \to \cT^*_{j_1}, \qquad p_2 : \cT^*_{j_1} \to \cT^*_{j_2}, \qquad p_3 : \cT^*_{j_2} \to \cT^*_0 \]
of lengths $j_1$, $j_2 - j_1$, $n - j_2$, each lying in $[\lfloor n/3 \rfloor - 1,\, \lceil n/3 \rceil + 1]$. The tuples $\cT^*_0, \cT^*_{j_1}, \cT^*_{j_2}$ occupy the position sets $[-T, T-1]$, $[j_1 - T, j_1 + T - 1]$, $[j_2 - T, j_2 + T - 1]$ (mod $n$), which are pairwise disjoint because all three circular gaps $j_1$, $j_2 - j_1$, $n - j_2$ are at least $\lfloor n/3 \rfloor - 1 \ge 2T$ by (N5); hence the three tuples are pairwise disjoint as identifier sets. Let
\[ q_1 = \ShiftIn(\cT^*_{j_1}, \cT^*_0), \qquad q_2 = \ShiftIn(\cT^*_{j_2}, \cT^*_{j_1}), \qquad q_3 = \ShiftIn(\cT^*_0, \cT^*_{j_2}) \]
(legal by the pairwise disjointness). Consider the four loops
\[ \cL_1 = p_1 q_1, \quad \cL_2 = p_2 q_2, \quad \cL_3 = p_3 q_3, \quad \cL_0 = q_3 q_2 q_1 , \]
from $\cT^*_0$, $\cT^*_{j_1}$, $\cT^*_{j_2}$, $\cT^*_0$, of lengths $j_1 + 2T$, $(j_2 - j_1) + 2T$, $(n - j_2) + 2T$, $6T$, respectively. Summing weights and using $\omega(p_1) + \omega(p_2) + \omega(p_3) = f(\ID) = K$:
\[ \omega(\cL_1) + \omega(\cL_2) + \omega(\cL_3) \;=\; K + \omega(q_1) + \omega(q_2) + \omega(q_3) \;=\; K + \omega(\cL_0). \]
All four loop lengths lie in $[2T+1,\, \mumax - M]$. Lower bounds: $6T \ge 2T + 1$ by (N5), and each of $\cL_1, \cL_2, \cL_3$ has length $(\mbox{piece length}) + 2T \ge 1 + 2T$. Upper bounds: $6T \le M + 4T \le \mumax - M$ by (N5), and each of $\cL_1, \cL_2, \cL_3$ has length at most $\lceil n/3 \rceil + 1 + 2T \le \mumax - M$ by (N4). Hence Lemma~\ref{lem:app-linear} evaluates every term:
\[ \lambda\,(j_1 + 2T) + \lambda\,\big((j_2 - j_1) + 2T\big) + \lambda\,\big((n - j_2) + 2T\big) \;=\; K + \lambda \cdot 6T , \]
that is, $\lambda\,(n + 6T) = K + 6T\lambda$, i.e.\ $\lambda n = K$. Since $\lambda$ is an integer and $0 \le K \le n$, this forces $K \in \{0, n\}$, contradicting the standing assumption $K \notin \{0, n\}$. The theorem follows.
\end{proof}

Theorem~\ref{thm:universe} now follows: part (a) is Proposition~\ref{prop:app-modthree}, and part (b) is Theorem~\ref{thm:app-sharp}. (By Proposition~\ref{prop:app-zeroround}, part (b) in fact extends to $T = 0$.)

\bigskip
We conclude with some remarks on sharpness.
\\
(a) \emph{The locality restriction is necessary.} For $T \ge \lfloor n/2 \rfloor$ (so that $2T + 1 \ge n$) the radius-$T$ window of every vertex contains the entire assignment, and the rank rule ``output $(\mbox{rank of } \ID(v) \mbox{ in the sorted order of } \ID(V)) \bmod 3$'' is surely exactly balanced for \emph{every} universe size $N$. So the threshold statement is genuinely about the local regime $T = O(n)$; only the constant ($50$ here, $40$ in Theorem~\ref{thm:det-exact} for $N \ge 2n+2$) is unoptimized.
\\
(b) \emph{Integer-valued rules.} As in Corollary~\ref{cor:integer-rules}, Theorem~\ref{thm:app-sharp} holds as is for rules $F$ with values in $\{0, 1, \dots, \vartheta\}$: the range of $F$ enters only through ``loop weights are integers'' (in Lemma~\ref{lem:app-loops}(c), where $[0, \mu]$ becomes $[0, \vartheta\mu]$), and the final identity $\lambda n = K$ with $\lambda \in \mathbb{Z}$ and $0 \le K \le \vartheta n$ yields $K \in \{0, n, 2n, \dots, \vartheta n\}$. Likewise, at $N = n$ every $K \in \{0, 1, \dots, \vartheta n\}$ is a sure count of a $0$-round $\{0, \dots, \vartheta\}$-valued rule: pick any $h : [n] \to \{0, \dots, \vartheta\}$ with $\sum_{x \in [n]} h(x) = K$; since every assignment is a bijection onto $[n]$, its count is $\sum_{x} h(x) = K$ surely.
\\
(c) \emph{What breaks at $N = n$.} In the language of the proof, at $N = n$ the exchange lemma is vacuous: with the $4T + 2$ identifiers of $\cI \cup \{x, y\}$ excluded, only $n - 4T - 2$ identifiers remain for the other $n - 4T - 1$ positions, so no assignment containing $\cI[x]$ while avoiding $y$ exists, and no two assignments differ in exactly one identifier (any two bijections onto $[n]$ differ in at least two positions). The very first lemma of the rigidity machinery fails, and Proposition~\ref{prop:app-modthree} shows that this failure is real, not an artifact of the proof.
\\
(d) \emph{Relation to the constants of Theorem~\ref{thm:det-exact}.} Theorem~\ref{thm:det-exact} is proved for $N \ge 2n + 2$ and $T \le n/40 - 1$; Theorem~\ref{thm:app-sharp} trades the constant ($T \le n/50 - 1$) for the optimal universe bound $N \ge n + 1$. In the narrow band $n/50 - 1 < T \le n/40 - 1$ only the case $N \ge 2n+2$ is covered (by Theorem~\ref{thm:det-exact}); this affects no threshold statement, since for each fixed $T \le n/50 - 1$ the dichotomy ``a sure count $K \notin \{0,n\}$ is possible if and only if $N = n$'' is complete.

\clearpage

\end{document}